\def\QED{\mbox{\rule[0pt]{1.5ex}{1.5ex}}}
\def\endproof{\hspace*{\fill}~\QED\par\endtrivlist\unskip}
 \newenvironment{proofof}[1]{\vspace*{5mm} \par \noindent
         \quad{\it Proof of #1:\hspace{2mm}}}{\endproof}
\def\Sp{\mathop{\rm Sp}\nolimits}
\def\argmax{\mathop{\rm argmax}}
\def\argmin{\mathop{\rm argmin}}
\def\mix{\mathop{\rm mix}}
\def\complex{\mathbb{C}}
\def\real{\mathbb{R}}
\def\Z{\mathbb{Z}}
\def\SU{\mathop{\rm SU}}
\def\SO{\mathop{\rm SO}}
\def\su{\mathop{\mathfrak{su}}}
\def\U{\mathop{\rm U}}
\def\cov{\mathop{\rm cov}}
\def\odd{\mathop{\rm odd}}
\def\even{\mathop{\rm even}}
\def\sgn{\mathop{\rm sgn}}
\def\ce{\mathop{\rm ce}}
\def\se{\mathop{\rm se}}
\newcommand{\bR}{\mathbb{R}}
\newcommand{\bZ}{\mathbb{Z}}
\def\rE{{\rm E}}
\def\Pr{{\rm Pr}}
\def\rank{{\rm rank}}
\newcommand{\Tr}{{\rm Tr}\,}
\newcommand{\lleq}{\mathrel{\mathpalette\gl@align<}}
\newcommand{\ggeq}{\mathrel{\mathpalette\gl@align>}}
\newcommand{\gl@align}[2]{%\lower.1ex
\vbox{\baselineskip\z@skip\lineskip\z@
\ialign{$\m@th#1\hfil##\hfil$\crcr#2\crcr{}_{{}_{(=)}}\crcr}}}
\def\Label#1{\label{#1}\ [\ \text{#1}\ ]\ }
\def\Label{\label}
\begin{document}

\title{{Fourier Analytic Approach to Quantum Estimation of Group Action}}
\titlerunning{Fourier Analytic Approach to Quantum Estimation}

\author{Masahito Hayashi$^{1,2}$}
\institute{$^{1}$~Graduate School of Mathematics, Nagoya University, Japan. \\
$^{2}$~Centre for Quantum Technologies, National University of Singapore, Singapore. \\
\email{masahito@math.nagoya-u.ac.jp}}
\authorrunning{Masahito Hayashi}

\date{Received:}
%\communicated{name}

\maketitle

\begin{abstract}
This article proposes a unified method to estimation of group action
by using the inverse Fourier transform of the input state. 
The method provides optimal estimation for 
commutative and non-commutative group with/without energy constraint.
The proposed method can be applied to projective representations of
non-compact groups as well as of compact groups.
This paper addresses the optimal estimation 
of $\bR$, $\U(1)$, $\SU(2)$, $\SO(3)$, and $\bR^2$ with Heisenberg representation
under a suitable energy constraint.
\end{abstract}

%\keywords{Group representation; estimation; Fourier transform; dual space}

\section{Introduction}
In quantum theory,
%it is often that the dynamics can be described by a projective unitary representation of a group.
the reversible dynamics of a system is often described by an element in a projective unitary representation of a group.
In this case, the unitary acting on the real quantum system 
reflects important physical parameters.
Therefore, we can estimate these physical parameters by
estimating the true unitary among a given projective unitary representation of a group.
Indeed, it is known that estimation of unitary has a square speed up
over the state estimation in quantum case.
However, only the limited case of estimation of unitaries has been solved\cite{Bu,Lu,Ba,Chi,Ha1,Chi2,Im,Ha2}.
Other case of estimation of unitaries has not been solved
while their Fisher information has been calculated\cite{Im2}.
Indeed, several researchers consider that
the Fisher information describes the attainable limit of 
the precision of the estimation of unitary\cite{Fuji,Im2,Giovannetti1,Giovannetti2,1,2,3,4}.
However, as was pointed in \cite{Ha3,Ha2},
it does not give the attainable bound of precision of the estimation of unitary.

The first studies \cite{Bu,Lu} treated the phase estimation, which is essentially the estimation of the representation of $\U(1)$.
Next, the estimation of $\SU(2)$ was studied \cite{Ba,Chi,Ha1}.
Chiribella et al \cite{Chi2} established a general theory of estimation of 
unitary representation of a compact group.
Chiribella \cite{Chi3} extended the result to the case of 
projective representations.
Kahn \cite{Kahn} applied this result to the case of $\SU(d)$.
These studies showed that the estimation error behaves as $\frac{C}{n^2}$
when $n$ is the number of tensor products of the representation.
We often call this phenomena the square speed up.
For a real implementation,
the energy of the input state might be a more important factor
than the available number of tensor products.
However, many existing studies do not address 
the optimal estimation with an energy constraint for the input state.
This paper deals with this kind of optimization problem.

On the other hand, Imai et al \cite{Im} 
treated phase estimation by using Fourier analysis.
In the estimation of action of finite group, 
the minimum error probability has been shown by \cite{Chi4,Chi5,AH}, 
and that with the projective representation case by \cite{TH}.
In the case of non-compact groups,
the estimation of group action has been formulated by Holevo \cite{Hol,Holevo}
when the input state is fixed.
However, the optimization of input state has been not resolved.
That is, there is no general theory 
of estimation of group action for non-compact groups.
In fact, the Fourier transform can be generalized to the case of
a non-compact group $G$,
whose generalized version is often called Plancherel transform.
In topological group theory, 
a locally compact Hausdorff topological group is called {\it unimodular}
when the left invariant measure is equal to the right invariant measure.
Further, when a unimodular group satisfies an additional condition,
it is called Type I.
In fact,
Fourier transform can be defined for a Type I group \cite{Fuhr,Folland}.
In this paper, we extend the concept of Type I group 
and the Fourier transform
to the case of projective representation with a fixed factor system.
In this case, we focus on the set $\hat{G}$ of irreducible  representations.
Under this method, the input state $\phi$
can be written as
totally square summable (integrable) matrices on irreducible representation spaces.
The inverse Fourier transform is given as
the unitary operator from the input state $\phi$
to the square integrable function on $G$, which can be regarded as an element of $L^2(G)$.
Hence, using the Fourier transform, 
we derive a general optimization result for estimation of a group.
In this formula, the minimum error can be written as
the minimum of the average error 
under the distribution 
given as the square integral of the inverse Fourier transform of the input pure state.
Then, we recover existing general results 
for finite groups and compact groups by \cite{Chi4,Chi5,AH,TH}
from our obtained general result.

Further, 
when the input system is infinite-dimensional,
it is natural to restrict the energy of the input state.
This constraint is also needed even in the finite-dimensional case, as is mentioned before.
However, the optimal estimation of group action with this type constraint has not been studied sufficiently with a general framework even in the compact case.
Using the Fourier transform, this paper gives a general result for this problem for a Type I group.
The merit of the obtain general result is to decrease the freedom of optimization.
That is, thanks to these results,
it is enough to treat the case when the measurement is a specific measurement
and the input is pure state.
These result reduce our optimization problem to the optimization 
with respect to input pure states.
Further, these results enable us to apply the known result of Fourier analysis
because these results clarify the relation with Fourier analysis.

In addition, 
we can consider the case when we can choose the input state probabilistically
as well as the case when
we choose only one input state.
When an arbitrary entangled state is available as the input state,
there is no difference between two schemes.
However, the relation between both is not so simple when there is 
restriction for available entanglement in the input state.
We treat this problem carefully, and show that there is no difference 
even when there is restriction for available entanglement in the input state.
However, we cannot show the same equivalence when 
there is an energy constraint.

Applying these general results,
we treat 
the estimation of actions of several concrete groups with and without energy constraint.
%These examples are treated by using uncertainty relations as is summarized in Table \ref{zu1}.
Firstly, we treat the case of commutative groups,
in which, the input state can be written as a function of 
the weight space.
We address the estimation of the action of real numbers $\real$ with energy constraint for the input state.
Then, we proceed to the same problem 
when the support of the input state belongs to the positive numbers.
We also discuss 
the estimation of the action of real numbers $\real$ 
when the support of the input state is limited to an interval.
These cases are treated by combination of the obtained general results 
and respective uncertainty relations. %indicated in Table \ref{zu1}.
We also discuss the estimation of action of integers $\bZ$.

Further, we treat the estimation of action of the one-dimensional unitary group $\U(1)$
under the constraint of the support of the input state
as well as the constraint of the energy of the input state.
Under a suitable energy constraint and a suitable error function, 
the problem can be converted to
the eigenvalue problem of the specific periodic differential equation,
Mathieu equation.
Then, the optimal input state can be constructed from Mathieu function, which is
the solution of Mathieu equation with the minimum eigenvalue among periodic even functions.
Further, when the constraint energy is sufficiently large,
the optimal input state converges to a Gaussian state,
which is the wave function of the vacuum state.
As a byproduct, we derive an uncertainty relation for the wave function on the unit circle.

Next, we proceed to the non-commutative case.
First, we treat the estimation of
the action of 
the two-dimensional special unitary group $\SU(2)$ and 
the three-dimensional special orthogonal group $\SO(3)$
under the constraint of the available irreducible representation of the input state
as well as the constraint of the energy of the input state.
In fact, usually we consider the estimation of $\SU(2)$ with the gate fidelity
in the standard tensor product representation.
However, 
the standard tensor product representation of $\SU(2)$
can be regarded as at least a projective representation of $\SO(3)$.
Further, 
the gate fidelity can distinguish all elements in $\SO(3)$, but cannot in $\SU(2)$.
So, this paper treats them as a projective representation of $\SO(3)$.
Hence,
we discuss the estimation of $\SU(2)$ by using the trace of an element of $\SU(2)$
as error criterion when 
the input state is given as
the super position of a representation of $\SO(3)$ and
a projective representation of $\SO(3)$,
which cannot be regarded as a projective representation of $\SO(3)$.
Under the constraint of the available irreducible representation of the input state,
we can derive the optimal input state
in a similar way to the case of $\U(1)$. 
Under a suitable energy constraint and a suitable error function, 
similar to the case of $\U(1)$,
the problem can be converted to
the eigenvalue problem of Mathieu equation.
The case of $\SU(2)$ is different from the case of $\U(1)$
in that the solution can be derived from the minimum eigenvalue among periodic odd functions. 
Then, the optimal input state can be constructed 
as a superposition of maximally entangled state over irreducible representation.
The coefficients of superposition is given from the inverse Fourier transform of 
another type of Mathieu function, 
which is the solution of Mathieu equation with the minimum eigenvalue among periodic odd functions.
Further, when the constraint energy is sufficiently large,
The coefficients of superposition of the optimal input state converges to 
the wave function of the single photon state.
As a byproduct, we derive an uncertainty relation for the wave function on the 3-dimensional sphere, which is isomorphic to $\SU(2)$.

Next, we treat the case of $\SO(3)$.
Under the constraint of the available irreducible representation of the input state,
the asymptotically optimal estimation was derived by \cite{Ba,Chi,Ha1}.
However, they did not derive the exact form of the optimal estimation. 
In the case of truly projective representation of $\SO(3)$,
we can exactly derive the optimal input state
in a similar way to the cases of $\U(1)$ and $\SU(2)$. 
However, in the case of representation of $\SO(3)$,
we exactly derive the optimal input state in a way slightly different from the cases of $\U(1)$ and $\SU(2)$. 
Under a suitable energy constraint and a suitable error function, 
similar to the case of $\SU(2)$,
the problem can be converted to
the eigenvalue problem of Mathieu equation.
In the case of representation of $\SO(3)$,
the solution can be derived from the minimum eigenvalue among anti-periodic odd functions,
and
In the case of truly projective representation of $\SO(3)$,
the solution can be derived from the minimum eigenvalue among periodic odd functions.
Then, the optimal input state can be constructed 
as a superposition of maximally entangled state over irreducible representation
in a way similar to the case of $\SU(2)$.

We also show that
the asymptotically optimal performance under the energy constraint
can be physically realized 
by a repetition of the same input state and the individual measurement
in the cases of $\U(1)$, $\SU(2)$, and $\SO(3)$.
Since these methods require less entangled states (no entangled state in the case of $\U(1)$),
they give practical constructions.

When we can use so many tensor product systems, % is available,
it is natural to restrict the average energy given by the total angular momentum of the input state.
However, 
the optimal estimation under this type of energy constraint
has not been studied. 
This paper treats the asymptotic behavior of this type optimization 
by using the above mentioned result with respect to $\U(1)$ with 
energy constraint and positivity constraint of the weight.

As by product, 
we can show the limiting distribution of
the outcome of the measurement corresponding 
to the irreducible decomposition in the $n$-fold tensor representation in the qubit system
when the true state is the completely mixed state.
In fact, 
when the true state is the $n$-fold tensor state of a non-completely-mixed state,
it is known that the limiting distribution is a Gaussian distribution \cite{qubit2,qubit3,qubit4,qubit5}.
In the case of the completely mixed state,
we derive the limiting distribution, which is different from the Gaussian distribution. 

Finally, we treat Heisenberg representation of $\real^2$
as a typical example of non-compact and non-commutative representation
by employing the uncertainty relation on $\bR^2$ and the general result based on the Fourier transform.
Under this representation, we 
give the minimum error of the estimation of action of group
when the average energy of the input state is fixed.
In this derivation, the Fourier analytic approach plays an important role.

The remaining parts are organized as follows.
Firstly, we summarize the typical obtained results with our typical energy constraint and
its application to uncertainty relations in Section \ref{s1-1},
which are a part of our obtained result.
In Section \ref{s2-1}, we introduce two schemes of the estimation of 
unknown group action.
In Section \ref{s2}, we give a formulation of the estimation of 
unknown group action.
In Section \ref{s4},
we derive a general formula
for minimum error as Theorems \ref{th3} and \ref{t6-24-1}
without and with an energy constraint as main theorems by using Plancherel theorem.
In Section \ref{s5}, we give their proofs.
Section \ref{s6} clarifies the relation between our theorem and the existing 
result for the case of finite groups by \cite{Chi4,Chi5,AH,TH}.
That is, this section explains how to recover the 
existing result for the case of finite group.
Section \ref{s7} treats the relation between our theorem and the existing 
result for the case of compact groups by \cite{Chi2,Chi3}.
The remaining sections discuss the concrete examples.
Sections \ref{s8}, \ref{s9}, \ref{s10}, \ref{s11}, \ref{s11b}, and \ref{s12}
treat the estimation of the action of
$\real$, $\bZ$, $\U(1)$, $\SU(2)$, $\SO(3)$, and the Heisenberg representation of $\real^2$,
respectively.
%????Finally, we consider the reasonability of the square speed up in the estimation of unitary in Section \ref{s13}.

Appendix \ref{asB} summarizes the knowledges of Mathieu equation and  
Mathieu function, which play essential roles in the case of $\U(1)$, $\SU(2)$, and $\SO(3)$.
Appendices \ref{a1} and \ref{a5-18} are devoted for technical lemmas.

\section{Summary of obtained results with energy constraints}\Label{s1-1}
Here, we summarize the typical obtained results with our typical energy constraint 
and its application to uncertainty relations
as follows although our obtained results cover more general setups.

\vspace{0.5em}
\noindent{\it {Estimation of the location sift operation $\real$:}}
Firstly, let us consider 
the estimation of the location sift operation $x \in \real$.
In this case, any irreducible representation can be written 
as $x \mapsto e^{xp i}$ with the momentum $p \in \hat{\real}$ with 
$\hat{\real}=\real$.
Hence, any representation can be written 
as the unitary 
$U_x :=\int_{-\infty}^{\infty} e^{xp i} |p\rangle \langle p| d p$
on $L^2(\real)$.
In this case, 
the input state can be written as a square integrable function $\phi$ on 
the momentum space $\hat{\real}$.
When we apply the estimator $M(d \hat{x})$, which is a POVM, 
we obtain the output distribution $\langle \phi| U_x^\dagger  M(d ) U_x|\phi \rangle$.
 
Now, we consider the energy constraint on the momentum space $\hat{\real}$
as
$ \int_{-\infty}^{\infty} p^2 |\phi(p)|^2 \frac{d p}{\sqrt{2\pi}} \le E$,
which can be regarded as a constraint for the kinetic energy.
When we adopt the mean square error
${\cal D}(M,\phi):=
\int_{-\infty}^{\infty} 
(\hat{x}-x)^2 \langle \phi| U_x^\dagger  M(d \hat{x}) U_x|\phi \rangle$,
our problem can be formulated as the minimization problem:
\begin{align}
\min_{M,\phi} 
\{{\cal D}(M,\phi) |
 \int_{-\infty}^{\infty} p^2 |\phi(p)|^2 \frac{d p}{\sqrt{2\pi}} \le E \}
=\frac{8}{E},
\end{align}
which can be shown by employing the conventional minimum uncertainty relation
as Theorem \ref{T3-13-11}.
The optima input state is given by a Gaussian wave function.
Due to the central limit theorem,
the Gaussian wave function can be approximated by
the tensor product $\phi^{\otimes n}$
of an arbitrary pure state $\phi$.
In this case, the optimal coefficient of the first order can be attained 
by the maximum likelihood estimator with $n$ repeated applications 
of a proper covariant measurement to the system with the single copy input $\phi$.

\vspace{0.5em}
\noindent{\it {Estimation of the periodic location sift operation $\U(1)$:}}
Next, we consider the estimation of the location sift operation 
with the periodic condition.
In this case, the action can be described as the action $e^{\theta i} \in \U(1)$.
Then, any irreducible representation can be written 
as $\theta \mapsto e^{\theta k i}$ with the momentum $k \in \hat{U(1)}$ with 
$\hat{\U(1)}=\Z$.
Hence, any representation can be written 
as the unitary 
$U_\theta :=\oplus_{k=-\infty}^{\infty} e^{\theta k i} |k\rangle \langle k| $
on $L^2(\Z)$.
The input state can be written as a square integrable function $\phi$ on 
the momentum space $\hat{\U(1)}=\Z$.
Now, we consider the energy constraint on the momentum space $\hat{\U(1)}$
as $ \sum_{k=-\infty}^{\infty} k^2 |\phi(k)|^2 \le E$.
Similarly the output distribution is written as 
$\langle \phi| U_\theta^\dagger  M(d \hat{\theta}) U_\theta|\phi \rangle$
with the the estimator $M(d \hat{\theta})$.
When we adopt the error
${\cal D}(M,\phi):=
\int_{-\infty}^{\infty} 
(1-\cos (\hat{\theta}-\theta)) 
\langle \phi| U_\theta^\dagger  M(d \hat{\theta}) U_\theta|\phi \rangle$,
our problem can be formulated as the 
minimization problem:
\begin{align}
\min_{M,\phi} 
\Biggl\{{\cal D}(M,\phi) \Biggl|
 \sum_{k=-\infty}^{\infty} k^2 |\phi(k)|^2 \le E \Biggr\}
&=
\max_{s>0}\frac{sa_0(\frac{2}{s})}{4} +1- sE \nonumber\\
&\cong 
 \frac{1}{8E} - \frac{1}{128 E^2}
\hbox{ as } E \to \infty,
\end{align}
where $a_0$ is a function related to the Mathieu function,
and is defined in Appendix \ref{asB}. 
The above relations are shown as Theorem \ref{5-10-13} and \eqref{5-10-4}.

Further, the optimal coefficient of the first order can be attained 
by the following method.
The input state is the tensor product $\phi^{\otimes n}$
of an arbitrary pure state $\phi$.
We apply a proper covariant measurement to the system with the single copy input $\phi$.
Finally, we apply the maximum likelihood estimator for $n$ repeated applications 
of the above measurement.

\vspace{0.5em}
\noindent{\it {Estimation of the action $\SO(3)$ and $\SU(2)$:}}
Next, we consider the estimation of the rotating action $g \in \SO(3)$.
In this case, any irreducible representation can be written 
as $g \mapsto U_{\lambda,g}$ on the irreducible representation 
space ${\cal H}_\lambda$
with the maximum weight $\lambda \in \hat{\SO(3)}$.
Hence, any representation can be written 
as the unitary 
$U_g :=\oplus_{\lambda \in \hat{\SO(3)}} U_{\lambda,g} $
on $\oplus_{\lambda \in \hat{\SO(3)}}{\cal U}_\lambda \otimes {\cal U}_\lambda^*$,
where ${\cal U}_\lambda^*$ is the dual space of ${\cal U}_\lambda$.
In this case, 
the input state can be written as a square integrable function $\phi$ 
on $\oplus_{\lambda \in \hat{\SO(3)}}{\cal U}_\lambda \otimes {\cal U}_\lambda^*$.
When we apply the estimator $M(d \hat{g})$, 
we obtain the output distribution 
$\langle \phi| U_g^\dagger  M(d \hat{g}) U_g|\phi \rangle$.
 
Now, we consider the energy constraint 
as
$ 
\langle \phi| 
\oplus_{\lambda \in \hat{\SO(3)}} 
\lambda (\lambda +1) I_{\lambda} |\phi \rangle\le E$,
where $I_\lambda$ is the projection to the space ${\cal U}_\lambda \otimes {\cal U}_\lambda^*$,
by using the Casimir operator, which is natural in the relation with the angular momentum.
When we adopt the error
${\cal D}(M,\phi):=
\int_{-\infty}^{\infty} 
\frac{1}{4}(4- |\Tr g^{-1} \hat{g}|^2)
\langle \phi| U_x^\dagger  M(d \hat{x}) U_x|\phi \rangle$
with use of the gate fidelity $\frac{1}{4}|\Tr g^{-1} \hat{g}|^2$,
our problem can be formulated as the 
minimization problem:
\begin{align}
\min_{M,\phi} 
\Biggl\{{\cal D}(M,\phi) \Biggl|
\langle \phi| 
\bigoplus_{\lambda \in \hat{\SO(3)}} 
\lambda (\lambda +1) I_{\lambda} |\phi \rangle\le E \Biggr\} 
&=
\max_{s>0}\frac{sa_1(\frac{2}{s})}{4} +1- s(E+\frac{1}{4}) \nonumber \\
&\cong \frac{9}{8E} - \frac{81}{128 E^2}
\end{align}
as $E \to \infty$,
where $a_1$ is a function related to the Mathieu function,
and is defined in Appendix \ref{asB}. 
The above relations are shown as Theorem \ref{T3-13-3cx} and \eqref{5-10-4e2}.

Further, the optimal coefficient of the first order can be attained 
by the method given in the case of $\U(1)$.
A similar result can be shown when 
we consider the projective representation of $\SO(3)$.

For $\SU(2)$,
we adopt the error
${\cal D}(M,\phi):=
\int_{-\infty}^{\infty} 
(1-\frac{1}{2} \Tr g^{-1} \hat{g})
\langle \phi| U_x^\dagger  M(d \hat{x}) U_x|\phi \rangle$.
Then, our problem can be formulated as the 
minimization problem:
\begin{align}
\min_{M,\phi} 
\Biggl\{{\cal D}(M,\phi) \Biggl|
\langle \phi| 
\bigoplus_{\lambda \in \hat{\SU(2)}} 
\lambda (\lambda +1) I_{\lambda} |\phi \rangle\le E \Biggr\}
&=
\max_{s>0}\frac{sb_2(\frac{8}{s})}{16} +1- s(E+\frac{1}{4}) \nonumber \\
&\cong \frac{9}{32E} - \frac{7 \cdot 3^3}{2^11 E^2}
\end{align}
a $E \to \infty$,
where $b_2$ is a function related to the Mathieu function,
and is defined in Appendix \ref{asB}. 
The above relations are shown as Theorem \ref{T3-13-3b} and \eqref{5-10-4b}.

\vspace{0.5em}
\noindent{\it {Estimation of the action of the Heisenberg representation:}}
Finally, we consider the action of the Heisenberg representation $x=(x_1,x_2) \in \real^2$.
In this case, the irreducible representation 
is the equivalent with 
the Heisenberg representation $x \mapsto U_{x}$ on $L^2(\real)$
when we fix the commutation relation.
Then,
the input state can be written as a square integrable operator $\phi$ 
on $L^2(\real)$, which is a pure state on $L^2(\real)\otimes L^2(\real)$.
When we apply the estimator $M(d \hat{x})$, 
we obtain the output distribution 
$\langle \phi| U_x^\dagger  M(d \hat{x}) U_x|\phi \rangle$.
Now, we consider the energy constraint as
$ \langle \phi| (Q^2+P^2)\otimes I |\phi \rangle\le E$.
When we adopt the mean square error
${\cal D}(M,\phi):=
\int_{-\infty}^{\infty} 
( \hat{x}_1-x_1)^2+( \hat{x}_2-x_2)^2
\langle \phi| U_x^\dagger  M(d \hat{x}) U_x|\phi \rangle$,
our problem can be formulated as the 
minimization problem:
\begin{align}
\min_{M,\phi} 
\{{\cal D}(M,\phi) |
 \langle \phi| (Q^2+P^2)\otimes I |\phi \rangle\le E
\}
=\frac{1}{2E},
\end{align}
which can be shown by reducing the problem to the minimum uncertainty relation on 
the two-dimensional space
as Theorem \ref{Te4-23}

\vspace{0.5em}
\noindent{\it Uncertainty relations on 
$S^1$ and $S^3$:}
Using the relation
$S^1\cong \U(1)$ and $S^3\cong \SU(2)$, 
we derive uncertainty relations on $S^1$ and $S^3$.
Given $\varphi \in L^2(S^1)$, we focus on the relation between 
$\Delta_{\varphi}^2(\cos Q, \sin Q):=
\Delta_{\varphi}^2 \cos Q +\Delta_{\varphi}^2 \sin Q$
and $\Delta_{\varphi}^2 P$, where
$ \Delta_{\varphi}^2 X:= 
\langle \varphi | X^2 |\varphi \rangle -\langle \varphi | X |\varphi \rangle^2$.
Then, as is shown in Theorem \ref{T5-12-16}, 
we obtain 
\begin{align}
\min_{\varphi \in L^2_n(S^1)}
\{
\Delta_{\varphi}^2(\cos Q, \sin Q)|
\Delta_{\varphi}^2 P \le E \}
&=
\max_{s>0} 1-(sE-\frac{s a_0(\frac{2}{s})}{4})^2  \nonumber \\
& \cong 
\frac{1}{4E}-\frac{1}{32 E^2}
\hbox{ as } E \to \infty,
\end{align}
where $L^2_n(\Omega)$ is the set of normalized functions of $L^2(\Omega)$.
Given $\varphi \in L^2(S^3)$, we focus on the relation between 
$\Delta_{\varphi}^2 \vec{Q}:=\sum_{j=0}^3 \Delta_{\varphi}^2 Q_j $
and 
$\Delta_{\varphi}^2 \vec{P}:=\sum_{j=1}^3 \Delta_{\varphi}^2 P_j $,
where $P_j$ is the momentum operator for the $i$-th direction of $\sigma_j$ via the 
relation $S^3\cong \SU(2)$.
Then, as is shown in Theorem \ref{t5-29-1}, we obtain 
\begin{align}
\min_{\varphi \in L^2_n(S^3)}
\{
\Delta_{\varphi}^2\vec{Q}|
\Delta_{\varphi}^2 \vec{P} \le E \}
&=
1- (\min_{s>0} s(E+\frac{1}{4})-\frac{s b_2(\frac{8}{s})}{16})^2  \nonumber \\
&\cong 
\frac{9}{16E} -\frac{5 \cdot 3^3}{2^9 E^2}
\hbox{ as } E \to \infty.
\end{align}

\section{Estimation schemes of group action}\Label{s2-1}
We focus on a group $G$ 
acting on the Hilbert space ${\cal H}$ of our interest.
That is, we treat a projective unitary representation $f$ of $G$ over ${\cal H}$.
Our aim is estimating the unknown unitary $f(g)$ under the assumption that $g \in G$.
For this purpose, we can choose the input state $\rho$ and 
the output measurement, which is described by the POVM $M$ over the Hilbert space ${\cal H}$.
Since the aim of the measurement is the estimation of the element of $g \in G$,
the POVM $M$ takes values in the group $G$.
We describe the set of the above kinds of POVMs by ${\cal M}(G)$.
Hence, our estimator is given as a pair of an input state $\rho \in {\cal S}({\cal H})$ and a POVM $M$,
where ${\cal S}({\cal H})$ is the set of density operators on ${\cal H}$.
There are two kinds of extensions for this setting.
As the first extension, we allow to input a state entangled with the other system ${\cal H}_R$
and to apply an joint measurement between the output system and the other system
${\cal H}_R$
as Fig. \ref{f2}.
As the second extension given in  Fig. \ref{f1}, 
we choose the input state $\rho_i$ with the probability $p_i$ for $i=1, \ldots$
and 
choose the output POVM $M_i$ depending on the input state $\rho_i$.
Indeed, if we treat the representation space 
${\cal H}\otimes {\cal H}_R$, 
the first extension can be treated as the original setting.
The first extension (Fig. \ref{f2}) covers the second extension (Fig. \ref{f1})
when there is no restriction for the size of allowable entanglement in the initial state in the first extension (Fig. \ref{f2})
as follows.
Let ${\cal H}$ be the original input system
and ${\cal H}_R$ be the system spanned by $|i\rangle$.
Then, we choose the input state 
$\sum_i p_i \rho_i \otimes |i \rangle \langle i |$ on ${\cal H}\otimes {\cal H}_R$
and the POVM $M[\{M_i\}](\hat{g}):= \sum_i M_i(\hat{g}) \otimes |i \rangle \langle i |$ on ${\cal H}'\otimes {\cal H}_R$.
Hence, the second extension (Fig. \ref{f1}) is included in the first extension (Fig. \ref{f2}) with sufficient large entanglement.
However, the second extension has less choices than the first extension
and the second extension has larger choices than original setting.
Hence, we need to treat the second extension as a different setting.

\begin{figure}[htbp]
\begin{center}
\scalebox{0.6}{\includegraphics[scale=0.8]{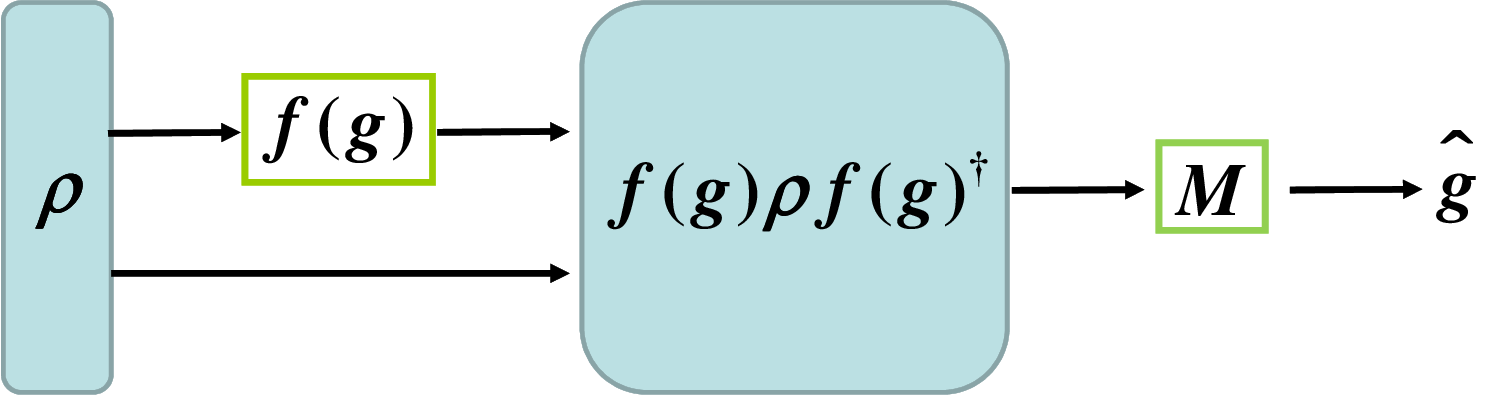}}
\end{center}
\caption{
Strategy for estimating the unknown group action $g$
with an entangled input}
\Label{f2}
\end{figure}%

\begin{figure}[htbp]
\begin{center}
\scalebox{0.6}{\includegraphics[scale=0.8]{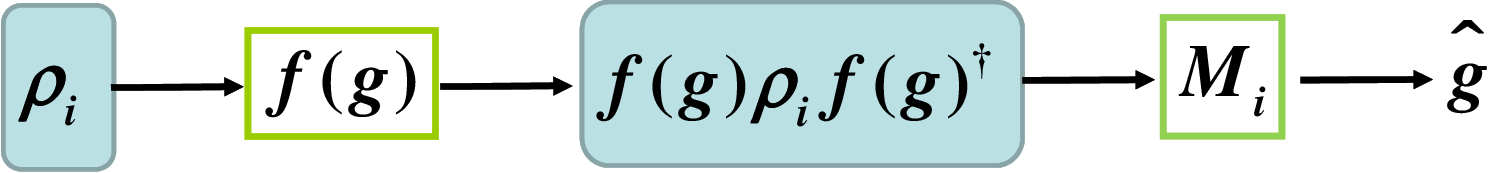}}
\end{center}
\caption{
Stochastic strategy for estimating the unknown group action $g$}
\Label{f1}
\end{figure}%

Thus, we have the following two schemes
for a given projective unitary representation $f$ of $G$ over ${\cal H}$
as follows.
\begin{itemize}
\item[\bf Scheme 1]
We input a state $\rho$ in the system ${\cal H}$.
We apply a measurement corresponding to a POVM $M$ on ${\cal H}$
after the unitary evolution $\rho \mapsto f(g) \rho f(g)^\dagger$.
In this scheme, we can use a state $\rho$ and a POVM $M$ on ${\cal H}$.

\item[\bf Scheme 2]
In the above scheme,
we choose the input state $\rho_i$ with the probability $p_i$ for $i=1, \ldots$
and 
choose the output POVM $M_i$ depending on the input state $\rho_i$.
The choices of the input and the measurement are
abbreviated to $(p_i,\rho_i)$ and $(M_i)$.
\end{itemize}

Then, we will discuss Schemes 1 and 2.
Subsection \ref{s2b} addresses Scheme 1 
with the optimization with respect to the POVM $M$
under the fixed choice of the input state $\rho$
in an arbitrary group $G$.
Subsection \ref{s2c} extends the analysis to Scheme 2.

\section{Formulations of estimation of group action}\Label{s2}
\subsection{Estimation with fixed input}\Label{s2b}
In order to treat the first scheme with the fixed input state $\rho$, 
we focus on the risk function $R$
depending on the true value $g$ and the estimate $\hat{g}$.
Then, when the true value is $g$, the average error is given as
\begin{align}
{\cal D}_{R,g}(\rho, M):=
\int_G R(g,\hat{g}) \Tr f(g) \rho f(g)^\dagger M(d\hat{g}).
\end{align}
Given a prior distribution $\nu$ for $g$ over $G$,
we can define the Bayesian error:
\begin{align}
{\cal D}_{R,\nu}(\rho,M):=
\int_G {\cal D}_{R,g}(\rho,M) \nu(dg).
\end{align}
Hence, our aim is finding a pair of 
the input state $\rho $ and POVM $M \in {\cal M}(G)$
minimizing ${\cal D}_{R,\nu}(\rho, M) $. 

As an alternative criterion, we optimize the worst case as
\begin{align}
{\cal D}_{R}(\rho,M):=
\max_G {\cal D}_{R,g}(\rho,M),
\end{align}
which is called the mini-max criterion.
%In this criterion, the optimal value is given by
%\begin{align}
%{\cal D}_{R}(M):=
%\end{align}

Since the difference between $g$ and $\hat{g}$ is thought to be the same as that between
$g'g$ and $g'\hat{g}$,
we assume the left invariant condition in the following:
\begin{align}
R(g,\hat{g})=
R(g'g,g'\hat{g}),\quad \forall g,\hat{g},g' \in G.\Label{9-12-1}
\end{align}
%In the following, 
%On the other hand, 
According to Holevo\cite{Hol},
as an important class of POVMs,
we introduce a covariant POVM.
In the original formulation, he treats the estimation of a homogeneous space.
Since the group with the left action can be regarded as a homogeneous space,
we can apply his general method to our problem.
Hence, the right invariance in (\ref{9-12-1}) is not needed for its application.
A POVM $M$ taking values in $G$ is called {\it covariant} 
with respect to the projective representation $f$
when
\begin{align}
f(g) M(B) f(g)^\dagger = M(g B).
\end{align}
Holevo\cite{Hol} defined the concept ``covariant POVM''
 for a general homogeneous space.
The group $G$ can be regarded as a special case of homogeneous spaces.
We describe the set of covariant POVMs by ${\cal M}_{\cov}(G)$.
For any covariant POVM $M \in {\cal M}_{\cov}(G)$,
the average error ${\cal D}_{R,g}(\rho, M)$ does not depend on the true value $g$.
Hence, we obtain
\begin{align}
{\cal D}_{R,g}(\rho, M)= {\cal D}_{R,\nu}(\rho, M)={\cal D}_{R}(\rho, M).
\end{align}

In the following, we denote the left invariant measure of the group $G$
by $\mu_G$. 
When $G$ is compact, $\mu_G$ is chosen to be the probability measure.
Then, we obtain the following theorem,
which is called quantum Hunt-Stein theorem\cite{Hol}.
\begin{lemma}
When the risk function $R$ is invariant
and $G$ is compact and Hausdorff,
we obtain
\begin{align}
& \min_{M \in {\cal M}(G)} 
{\cal D}_{R,\mu_G}(\rho,M)
=
\min_{M \in {\cal M}(G)} 
{\cal D}_{R}(\rho,M) \\
= &
\min_{M \in {\cal M}_{\cov}(G)} 
{\cal D}_{R,\mu_G}(\rho,M) 
=
\min_{M \in {\cal M}_{\cov}(G)} 
{\cal D}_{R}(\rho,M).
\end{align}
\end{lemma}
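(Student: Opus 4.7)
The plan is to follow the classical Hunt-Stein averaging argument: starting from an arbitrary POVM, symmetrize it against the Haar measure to obtain a covariant POVM with no larger risk, then close the chain of four inequalities.

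First I would define, for any $M \in {\cal M}(G)$, the averaged POVM
\begin{align}
\bar{M}(B) := \int_G f(g)^\dagger M(gB) f(g)\, \mu_G(dg).
\end{align}
The positivity of $\bar{M}(B)$ follows from positivity of the integrand, and compactness gives $\mu_G(G)=1$, so $\bar{M}(G) = \int_G f(g)^\dagger I f(g)\, \mu_G(dg)=I$. To verify covariance $f(g')\bar{M}(B)f(g')^\dagger = \bar{M}(g'B)$, I would substitute $h = gg'^{-1}$ inside the integral. The projective cocycle $\omega$ appears only through phases of the form $f(hg') = \omega(h,g')^{-1} f(h) f(g')$, but these phases are modulus one scalars that cancel when sandwiching $M(gB)$ between conjugate factors. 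The change of variable then goes through because on a compact Hausdorff group the left invariant Haar measure is automatically also right invariant.

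Next I would compute ${\cal D}_{R,g}(\rho,\bar{M})$. Using cyclicity of the trace to move $f(h)^\dagger,f(h)$ outside of $\rho$, again invoking the projective phases canceling in the sandwich $f(h)f(g)\rho f(g)^\dagger f(h)^\dagger = f(hg)\rho f(hg)^\dagger$, and applying the left invariance $R(g,h^{-1}\hat g')=R(hg,\hat g')$ together with the change of variable $h \mapsto hg$ (valid by bi-invariance of $\mu_G$), I would obtain
\begin{align}
{\cal D}_{R,g}(\rho,\bar{M}) = \int_G {\cal D}_{R,hg}(\rho,M)\,\mu_G(dh) = {\cal D}_{R,\mu_G}(\rho,M).
\end{align}
In particular this is independent of $g$, which also confirms a posteriori that $\bar{M}$ is covariant as expected, and gives ${\cal D}_R(\rho,\bar{M}) = {\cal D}_{R,\mu_G}(\rho,\bar{M}) = {\cal D}_{R,\mu_G}(\rho,M)$.

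Finally I would assemble the four equalities. The trivial inclusion ${\cal M}_{\cov}(G)\subseteq{\cal M}(G)$ gives $\min_{{\cal M}(G)} {\cal D}_{R,\mu_G} \le \min_{{\cal M}_{\cov}(G)} {\cal D}_{R,\mu_G}$, while the averaging $M \mapsto \bar{M}$ gives the reverse inequality. The inequality ${\cal D}_{R,\mu_G}(\rho,M) \le {\cal D}_R(\rho,M)$ (average is at most maximum) yields $\min {\cal D}_{R,\mu_G} \le \min {\cal D}_R$ over ${\cal M}(G)$, and applied within ${\cal M}_{\cov}(G)$ it is an equality since covariant POVMs have constant risk. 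Combining gives all four minima equal. The main obstacle, which I expect to be mostly bookkeeping, will be verifying that the projective phases cancel consistently in both the covariance computation and the risk computation, together with confirming that bi-invariance of $\mu_G$ on the compact group $G$ legitimizes the change of variables used in both steps.
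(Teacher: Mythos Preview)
Your averaging argument is correct and is exactly the classical quantum Hunt--Stein proof. The paper does not give its own proof of this lemma; it simply states it as the ``quantum Hunt--Stein theorem'' and cites Holevo~\cite{Hol}, so there is nothing to compare against beyond noting that your proposal is the standard argument behind that citation.
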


However, when $G$ is not compact, it has no invariant probability measure.
In this case, the above theorem can be generalized to the following way\cite{Bog,Oz}.

\begin{lemma}\Label{th3-11}
When the risk function $R$ is left invariant
and
$G$ is locally compact and Hausdorff,
we obtain
\begin{align}
\min_{M \in {\cal M}(G)} 
{\cal D}_{R}(\rho,M)
=
\min_{M \in {\cal M}_{\cov}(G)} 
{\cal D}_{R}(\rho,M).
\end{align}
\end{lemma}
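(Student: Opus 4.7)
The plan is to construct, from an arbitrary POVM $M\in\cM(G)$, a covariant POVM $M^*\in\cM_{\cov}(G)$ whose worst-case error satisfies $\cD_R(\rho,M^*)\le\cD_R(\rho,M)$; since the inclusion $\cM_{\cov}(G)\subset\cM(G)$ gives the opposite inequality for the minima trivially, this yields the stated equality. The construction is the operator-valued analogue of the classical Hunt--Stein trick: in the compact case one would simply put $M^*(B):=\int_G f(g)M(g^{-1}B)f(g)^\dagger\,d\mu_G(g)$ and be done. When $G$ is merely locally compact this integral diverges because $\mu_G(G)=\infty$, so the Haar average must be replaced by an invariant-mean construction, which requires amenability of $G$ (the substantive hypothesis implicit in the cited works \cite{Bog,Oz}).

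Concretely, I fix a left-invariant mean $m$ on $L^\infty(G)$. For each bounded Borel set $B\subset G$ and each trace-class operator $T$ on $\cH$, the function $g\mapsto\Tr T f(g)M(g^{-1}B)f(g)^\dagger$ is bounded, so the prescription $\Tr T M^*(B):=m\bigl(g\mapsto\Tr T f(g)M(g^{-1}B)f(g)^\dagger\bigr)$ determines a bounded operator $M^*(B)$ on $\cH$ by Riesz duality. Positivity of $M^*(B)$ follows from positivity of each integrand and of $m$; normalization $M^*(G)=I$ follows from $m(1)=1$ and $M(G)=I$; and covariance $f(g')M^*(B)f(g')^\dagger=M^*(g'B)$ follows by the substitution $g\mapsto g'g$ inside the mean together with its left invariance. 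The worst-case error estimate then comes directly from left invariance of $R$: writing $M_g:=f(g)M(g^{-1}\cdot)f(g)^\dagger$, property \eqref{9-12-1} gives $\cD_{R,g_0}(\rho,M_g)=\cD_{R,gg_0}(\rho,M)\le\cD_R(\rho,M)$ uniformly in $g$, so monotonicity of $m$ yields $\cD_{R,g_0}(\rho,M^*)\le\cD_R(\rho,M)$ for every $g_0\in G$; covariance of $M^*$ makes the left-hand side independent of $g_0$ and equal to $\cD_R(\rho,M^*)$.

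The main obstacle I anticipate is purely measure-theoretic: an invariant mean is only finitely additive, whereas a POVM must be $\sigma$-additive in $B$. This gap has to be closed either by realizing $m$ as a weak-$*$ cluster point of normalized Haar averages over a F\o lner-type sequence of compact sets (so that $\sigma$-additivity survives in the limit via dominated convergence applied pointwise in $g$) or by exploiting the local-compactness and Hausdorff hypotheses to upgrade the finitely additive operator-valued set function to a regular Borel POVM. Either route requires nontrivial input beyond the quantum-mechanical formalism and is the substantive content of the non-compact quantum Hunt--Stein results in \cite{Bog,Oz}; I would invoke their construction as a black box rather than re-derive the amenability technology from scratch.
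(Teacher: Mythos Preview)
The paper does not give its own proof of this lemma: it simply cites it as the non-compact quantum Hunt--Stein theorem from \cite{Bog,Oz}. Your sketch is exactly the standard argument behind those references --- average an arbitrary POVM over the group to obtain a covariant one, then use left invariance of $R$ to bound the worst-case risk --- and you have correctly identified both the implicit amenability hypothesis and the finitely-additive-to-$\sigma$-additive gap as the genuine technical content that the cited works supply. One minor slip: with your definition $M_g(B)=f(g)M(g^{-1}B)f(g)^\dagger$ the change of variables gives $\cD_{R,g_0}(\rho,M_g)=\cD_{R,g^{-1}g_0}(\rho,M)$ rather than $\cD_{R,gg_0}(\rho,M)$, but since you are bounding by the supremum over all true parameters this does not affect the conclusion.
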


Hence, in the following, 
in order to treat our problem without the compactness condition,
we treat the minimization
\begin{align}
%\min_{\rho} 
\min_{\rho \in {\cal S}({\cal H})}
\min_{M \in {\cal M}_{\cov}(G)} {\cal D}_{R}(\rho,M),
\end{align}
where ${\cal S}({\cal H})$ is the set of densities on ${\cal H}$.
That is, we can restrict our measurement into covariant measurements without loss of generality.
Given an input mixed state $\rho=\sum_i p_i |\phi_i \rangle \langle \phi_i|$,
any measurement $M$ satisfies
\begin{align}
{\cal D}_{R,g}(\rho,M)
=
\sum_i p_i {\cal D}_{R,g}(|\phi_i \rangle \langle \phi_i|,M).
\end{align}
Hence, any covariant measurement $M$ satisfies
\begin{align}
{\cal D}_{R}(\rho,M)
=
\sum_i p_i {\cal D}_{R}(|\phi_i \rangle \langle \phi_i|,M).
\end{align}
%Thus, we can restrict our input state into pure states without loss of generality.

\subsection{Estimation with probabilistic input}\Label{s2c}
Next, we extend the above discussion to Scheme 2.
For this purpose, we apply the above discussion to the case with 
the Hilbert space ${\cal H}\otimes {\cal H}_R$
and the input state $\sum_{i} p_i \rho_i \otimes |i\rangle \langle i|$.
Then, for any POVM $M$ on ${\cal H}\otimes {\cal H}_R$,
we define the POVM $M'$ on ${\cal H}\otimes {\cal H}_R$
and the POVMs $M_i$ on ${\cal H}$ 
satisfying that
\begin{align}
M[\{M_i\}]
(\hat{g})
= \sum_{i=1}^l P_i M(\hat{g}) P_i,
\end{align}
where $P_i$ is the projection $I_{{\cal H}} \otimes |i \rangle \langle i| $.
Then, we obtain
\begin{align}
{\cal D}_{R,g}(\sum_{i} p_i \rho_i \otimes |i\rangle \langle i|, M)
=
{\cal D}_{R,g}(\sum_{i} p_i \rho_i \otimes |i\rangle \langle i|, M[\{M_i\}]
)
=
\sum_i p_i 
{\cal D}_{R,g}( \rho_i , M_i).
\end{align}
Combining Lemma \ref{th3-11}, we obtain the following lemma.
\begin{lemma}
When the risk function $R$ is left invariant
and
$G$ is locally compact and Hausdorff,
we obtain
\begin{align}
\min_{M \in {\cal M}(G)} 
\sum_i p_i 
{\cal D}_{R,g}( \rho_i , M_i)
=
\min_{M_i \in {\cal M}_{\cov}(G)} 
\sum_i p_i 
{\cal D}_{R,g}( \rho_i , M_i).
\end{align}
\end{lemma}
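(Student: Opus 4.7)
The plan is to reduce the statement to Lemma \ref{th3-11} applied on the enlarged Hilbert space ${\cal H}\otimes {\cal H}_R$ introduced immediately above. I would set $\sigma := \sum_{i} p_i \rho_i \otimes |i\rangle\langle i|$ and equip ${\cal H}\otimes {\cal H}_R$ with the projective unitary representation $f \otimes I_{{\cal H}_R}$. Since the risk function $R$ remains left invariant under this enlarged action and $G$ is locally compact Hausdorff, Lemma \ref{th3-11} applies directly on this enlarged system, yielding
\begin{align*}
\min_{M \in {\cal M}(G)} {\cal D}_R(\sigma, M)
= \min_{M \in {\cal M}_{\cov}(G)} {\cal D}_R(\sigma, M),
\end{align*}
where both sides now range over POVMs on ${\cal H}\otimes {\cal H}_R$.

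Next I would use the identity displayed immediately before the lemma,
\begin{align*}
{\cal D}_{R,g}(\sigma, M) = {\cal D}_{R,g}(\sigma, M[\{M_i\}]) = \sum_{i} p_i {\cal D}_{R,g}(\rho_i, M_i),
\end{align*}
to translate both sides of the Hunt-Stein equality into the Scheme 2 language. For the unconstrained side this is immediate: every joint POVM $M$ on the enlarged space projects to a family $\{M_i\}$ of POVMs on ${\cal H}$ with identical risk, and conversely every family $\{M_i\}$ is realised as $M[\{M_i\}]$. For the covariant side I would verify that covariance transfers through the projection: since $P_i = I_{{\cal H}} \otimes |i\rangle\langle i|$ commutes with $f(g)\otimes I_{{\cal H}_R}$, covariance of $M$ forces
\begin{align*}
(f(g)\otimes I) P_i M(\hat g) P_i (f(g)\otimes I)^\dagger = P_i M(g\hat g) P_i,
\end{align*}
which, read on the $i$-th block $P_i M(\hat g) P_i = M_i(\hat g)\otimes |i\rangle\langle i|$, gives $f(g) M_i(\hat g) f(g)^\dagger = M_i(g\hat g)$; so each $M_i$ is covariant on ${\cal H}$. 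Conversely, any family of covariant POVMs $\{M_i\}$ assembles into a covariant joint POVM $M[\{M_i\}]$ on the enlarged space via the very same formula.

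The main delicate point I expect is the bookkeeping for the projection step $M \mapsto \sum_i P_i M P_i$: one must check that the normalisation condition $\int M_i(d\hat g) = I_{{\cal H}}$ holds, which follows from $\int M(d\hat g) = I_{{\cal H}\otimes {\cal H}_R}$ by restricting to the $i$-th block to obtain $\int P_i M(d\hat g) P_i = P_i = I_{{\cal H}}\otimes |i\rangle\langle i|$, together with positivity of each $M_i(\hat g)$ (inherited from $M$). Once the POVM bookkeeping and the two-way covariance transfer above are in place, the Hunt-Stein equality on the enlarged space translates term-by-term into the claimed equality between the minimisation over arbitrary joint measurements and the minimisation over covariant families $\{M_i\}$, establishing the lemma.
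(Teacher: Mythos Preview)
Your proposal is correct and follows exactly the approach the paper takes: the paper's entire proof is the single phrase ``Combining Lemma \ref{th3-11}, we obtain the following lemma,'' applied to the enlarged space ${\cal H}\otimes{\cal H}_R$ with the block-diagonal state $\sigma$, together with the displayed identity ${\cal D}_{R,g}(\sigma,M)=\sum_i p_i{\cal D}_{R,g}(\rho_i,M_i)$. You have simply spelled out the bookkeeping (covariance transfer through the projections $P_i$ in both directions, and POVM normalisation) that the paper leaves implicit.
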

Hence,
in the following, 
in order to treat our problem without the compactness condition,
we treat the minimization
\begin{align}
%\min_{\rho} 
\min_{\{p_i\}}
\min_{\rho \in {\cal S}({\cal H})}
\min_{M \in {\cal M}_{\cov}(G)} \sum_i p_i
{\cal D}_{R}(\rho_i,M).
\end{align}

Next, we characterize covariant POVMs.
%because we regard the group $G$ as the homogeneous space concerning the left action. 
It is known that any covariant measurement $M$ can be described by using a positive semi-definite
operator $T$ such that \cite{Hol}
\begin{align}
M(B) =\int_B f(g) T f(g)^\dagger \mu_G(dg) .\Label{h2}
\end{align}
Conversely, the above kind of operator $T$ satisfies
\begin{align}
I  =\int_G f(g) T f(g)^\dagger \mu_G(dg)  . \Label{h1}
\end{align}
When a positive semi-definite $T$ satisfies (\ref{h1}),
it gives a covariant measurement by (\ref{h2}), which is denoted by $M_T$.

\section{Analysis with irreducible decomposition}\Label{s4}
For a further analysis for general locally compact topological group, we employ the decomposition by irreducible representation spaces.
For this purpose, 
we prepare several notations and a condition for group.
In the following, we assume that the group $G$ is a {\it unimodular} group, i.e.,
$G$ is a locally compact Hausdorff topological group
and its left invariant measure $\mu_G$ is equal to its right invariant measure.

For a given projective representation $f$ of the group $G$,
we have the relation
\begin{align}
f(g)f(g')= e^{i\theta(g,g')}f(gg')
\end{align}
for $g,g' \in G$.
The set ${\cal L}:=\{e^{i\theta(g,g')}\}_{g,g' \in G}$ of complex numbers is called 
the factor system.
In particular, 
we call the factor system $\{1\}_{g,g' \in G}$ the trivial factor system
and denote it by ${\cal E}$.
The irreducible representation depends on the factor system ${\cal L}$.
In the following, we denote the set of symbols of irreducible projective representation of $G$ 
with the factor system ${\cal L}$
by $\hat{G}[{\cal L}]$.
For any label $\lambda \in \hat{G}[{\cal L}]$,
we denote the irreducible space corresponding to $\lambda$
by ${\cal U}_\lambda$,
and its irreducible representation by $f_{\lambda}$.
In the case of no factor system, i.e., the case of representation,
we denote the set $\hat{G}[{\cal L}]$ by $\hat{G}$.
When the group is simply connected,
any projective representation can be reduced to usual representation.
Then, since $\hat{G}[{\cal L}]$ does not depends on the factor system ${\cal L}$,
we denote $\hat{G}[{\cal L}]$ by $\hat{G}$.

When $G$ is compact, all of irreducible spaces ${\cal U}_{\lambda}$ are finite-dimensional.
When $G$ is not compact, there might be infinite-dimensional irreducible spaces ${\cal U}_{\lambda}$.
In this case, we define the generalized dimension as follows.
When the integral
\begin{align}
\int_G |\langle \phi | f_\lambda(g) | \phi \rangle|^2  \mu_G(d g)
\Label{3-10-1}
\end{align}
is finite for a normalized vector $\phi \in {\cal U}_{\lambda}$,
the integral (\ref{3-10-1}) does not depend on $\phi \in {\cal U}_{\lambda}$
because $G$ is a unimodular group.
Letting $d_{\lambda}$ be the inverse of the integral (\ref{3-10-1}),
we have
\begin{align}
I = d_{\lambda} \int_G f_\lambda(g) \rho f_\lambda(g)^\dagger \mu_G(d g)
\end{align}
for an arbitrary state $\rho$ on ${\cal U}_{\lambda}$.
%In the following, we assume that  the integral (\ref{3-10-1}) is finite for any $\lambda \in \Lambda({\cal H})$.
In the compact case, 
since $\mu_G$ is a probability measure,
the generalized dimension $d_\lambda$ coincides with $\dim {\cal U}_\lambda$.

For any projective representation $f$ of a group $G$ on the Hilbert space ${\cal H}$, 
we define the commutant and the double commutant as follows. 
\begin{align}
f(G)'&:= \{A | f(g)A =Af(g),~\forall g \in G \} \\
f(G)''&:= \{A | BA =AB,~\forall B \in f(G)' \}.
\end{align}
Now, we introduce an important class of topological groups \cite[p.206]{Folland}.
\begin{definition}
A locally compact Hausdorff topological group $G$ is called type I
with the factor system ${\cal L}$
if $G$ is unimodular and satisfies the following condition.
For a unitary projective representation $f$ of 
on a Hilbert space ${\cal H}$, 
the set $f(G)' \cap f(G)''$ is the set of the constant operators on ${\cal H}$
if and only if $f$ is a direct sum of copies of a irreducible representation.
\end{definition}
The concept of `Type I' is closely related to Type I in von Neumann algebra \cite[p.206]{Folland}.
The original definition of `Type I' \cite[p.206]{Folland} 
is based on unitary representations, i.e., the case of the trivia factor system ${\cal E}$.
However, we employ the concept of `Type I' based on 
unitary projective representations with the factor system ${\cal L}$.
For example, any compact group is Type I with the trivial factor system ${\cal E}$ \cite[Example 1]{Folland}.
Since any projective representation of a compact group can be regarded
as a representation of its universal covering group,
any compact group is Type I with any factor system ${\cal L}$.
Any commutative group is also Type I with the trivial factor system ${\cal E}$ \cite[Example 2]{Folland}.
Also the group $\real^{2d}$ is Type I with the factor system given by the Heisenberg representation \cite[Example 3]{Folland}.
In the following, we assume that the group 
$G$ and the factor system ${\cal L}$ satisfies 
$G$ is type I with the factor system ${\cal L}$.

%In order to characterize a positive semi-definite $T$ satisfying (\ref{h1}),
For a given projective representation $f$ of the group $G$ to the Hilbert space ${\cal H}$
with the factor system ${\cal L}$,
we can make the irreducible decomposition as follows \cite[Theorem 3.24]{Fuhr}.
\begin{align}
{\cal H}=\oplus_{\lambda \in \Lambda({\cal H})} {\cal U}_\lambda \otimes 
{\cal V}_{\lambda}
%\complex^{n_\lambda}, 
\Label{e4-21-1}
\end{align}
where 
${\cal V}_{\lambda}$
is the space describing the multiplicity of the irreducible space ${\cal U}_\lambda$, i.e.,
the group $G$ acts only on ${\cal U}_\lambda$ but not on ${\cal V}_{\lambda}$.
Here, $\Lambda({\cal H})$ is defied as a subset of $\hat{G}[{\cal L}]$ by
$\Lambda({\cal H}):= \{\lambda \in \hat{G}[{\cal L}] | \hbox{The space }{\cal H} \hbox{ contains }{\cal U}_\lambda .\}$.
For a pure state $|\phi\rangle \langle\phi|$,
the family of output states 
$\{f(g)|\phi\rangle \langle\phi|f(g)^\dagger \}_{g \in G}$
belongs to a subspace
$\oplus_{\lambda \in \Lambda} {\cal U}_\lambda \otimes 
{\cal V}_{\lambda}'$,
where the dimension ${\cal V}_{\lambda}'$ is $
\min\{\dim {\cal V}_{\lambda}, \dim {\cal U}_\lambda\}$.
%Hence, in the following, we assume that $\dim {\cal V}_{\lambda} \le \dim {\cal U}_\lambda$.
Then, choosing an inclusion ${\cal V}_{\lambda}' \subset {\cal U}_\lambda^*$,
we have $|\phi \rangle \in {\cal H}_{\Lambda}$.
In fact, 
denoting the dual space of ${\cal V}_{\lambda}$ by ${\cal V}_{\lambda}^*$,
we can regarded 
a linear map $A$ from ${\cal V}_{\lambda}^*$ to ${\cal U}_{\lambda}$
as an element of the entangled space ${\cal U}_{\lambda}\otimes {\cal V}_{\lambda}$.
In this correspondence, we denote the entangled state by
$|A\rangle \rangle \in {\cal U}_{\lambda}\otimes {\cal V}_{\lambda}$.

Now, we fix a subset $\Lambda \subset \hat{G}[{\cal L}]$
such that the integral (\ref{3-10-1}) is finite for any $\lambda \in \Lambda$.
As a typical case, we focus on the following representation space:
\begin{align}
{\cal K}_\Lambda:= \oplus_{\lambda \in \Lambda}
 {\cal U}_\lambda \otimes {\cal U}_\lambda^*.
\end{align}

In order to employ the Fourier analysis, 
we identify 
the space ${\cal U}_\lambda \otimes {\cal U}_\lambda^*$
with the space of the Hilbert Schmidt operators on ${\cal U}_\lambda$.
By depending on the factor system ${\cal L}$,
the Fourier transform (Plancherel transform) with the factor system ${\cal L}$
is defined as a map ${\cal F}_{{\cal L}}$ from $L^2(G)$ to 
$\oplus_{\lambda \in \hat{G}[{\cal L}]} 
{\cal U}_\lambda \otimes {\cal U}_\lambda^*$
as follows.
Given $\varphi \in L^2(G)$,
we define
\begin{align}
({\cal F}_{{\cal L}} [\varphi])_\lambda:= 
\int_G f_\lambda (g) \varphi(g) \mu_G(dg).
\end{align}

Then, we have the following characterization\cite[Theorem 3.31]{Fuhr}\cite[Section 7.5]{Folland}.
\begin{proposition}[Plancherel Theorem]\Label{th4-24-1}
When $G$ is type I with the factor system ${\cal L}$,
there is a measure $\mu_{\hat{G}[{\cal L}]}$ on $\hat{G}[{\cal L}]$
such that
\begin{align}
\int_G |\varphi(g)|^2 \mu_G(d g)
=
\int_{\hat{G}[{\cal L}]} 
\|({\cal F}_{{\cal L}} [\varphi])_\lambda\|^2
 \mu_{\hat{G}[{\cal L}]}(d \lambda)
\end{align}
for $\varphi\in L^2(G)$,
where
\begin{align}
\|({\cal F}_{{\cal L}} [\varphi])_\lambda\|^2
:=
\Tr ({\cal F}_{{\cal L}} [\varphi])_\lambda
({\cal F}_{{\cal L}} [\varphi])_\lambda^\dagger.
\end{align}
\end{proposition}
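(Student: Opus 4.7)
The plan is to reduce the statement to the classical Plancherel theorem for Type I unimodular groups (as given in Fuhr, Theorem 3.31, or Folland, Section 7.5) by passing to a central extension that converts ${\cal L}$-projective representations into ordinary unitary representations.

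First, I would construct the central extension $\tilde{G}$ of $G$ by $\U(1)$ associated with the cocycle ${\cal L}$. As a set $\tilde{G} := G \times \U(1)$, equipped with the multiplication $(g,z)\cdot(g',z') := (gg',\, zz' e^{i\theta(g,g')})$. Associativity is equivalent to the $2$-cocycle identity that $\{e^{i\theta(g,g')}\}$ satisfies as a factor system of a projective representation. With the product topology and right-invariant Haar measure $\mu_G \otimes \mu_{\U(1)}$, the group $\tilde{G}$ is a locally compact Hausdorff topological group. Unimodularity of $\tilde{G}$ follows from unimodularity of $G$ together with compactness (hence unimodularity) of $\U(1)$. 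The Type I property of $\tilde{G}$ is equivalent to the hypothesis that $G$ is Type I with the factor system ${\cal L}$: an irreducible unitary representation of $\tilde{G}$ on which the central $\U(1)$ acts as the identity character $z \mapsto z$ is precisely an irreducible projective representation of $G$ with factor system ${\cal L}$, so the commutant/double commutant condition translates directly.

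Second, I would establish the bijection
\[
\hat{G}[{\cal L}] \;\longleftrightarrow\; \{\,\tilde{\lambda} \in \hat{\tilde{G}} : \tilde{f}_{\tilde{\lambda}}(e,z)=z\cdot I\,\},
\qquad \tilde{f}_{\lambda}(g,z) := z\, f_\lambda(g),
\]
and lift $\varphi \in L^2(G,\mu_G)$ to $\tilde\varphi \in L^2(\tilde G, \mu_{\tilde G})$ by $\tilde\varphi(g,z) := \bar{z}\,\varphi(g)$. Since $\int_{\U(1)}|\bar z|^2 \mu_{\U(1)}(dz) = 1$, Fubini gives $\|\tilde\varphi\|_{L^2(\tilde G)}^2 = \|\varphi\|_{L^2(G)}^2$. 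A direct computation against $\tilde f_{\tilde\lambda}$ shows that the $\tilde{G}$-Fourier transform $({\cal F}\tilde\varphi)_{\tilde\lambda}$ vanishes unless $\tilde\lambda$ corresponds to some $\lambda \in \hat{G}[{\cal L}]$, in which case it equals $({\cal F}_{{\cal L}}\varphi)_\lambda$ (up to a normalization absorbed into $\mu_{\hat{G}[{\cal L}]}$).

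Third, I would apply the classical Plancherel theorem to $\tilde G$, obtaining a Plancherel measure $\mu_{\hat{\tilde{G}}}$ with
\[
\|\tilde\varphi\|_{L^2(\tilde G)}^2 = \int_{\hat{\tilde G}} \|({\cal F}\tilde\varphi)_{\tilde\lambda}\|^2 \,\mu_{\hat{\tilde G}}(d\tilde\lambda).
\]
By the vanishing observation the integrand is supported on the subset of $\hat{\tilde G}$ indexed by $\hat{G}[{\cal L}]$, so restricting $\mu_{\hat{\tilde G}}$ to this subset and transporting it via the bijection produces a measure $\mu_{\hat{G}[{\cal L}]}$ on $\hat{G}[{\cal L}]$ with the required property.

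The main obstacle is the careful measurability/Borel-structure bookkeeping needed in the third step: the subset of $\hat{\tilde G}$ on which the central $\U(1)$ acts by $z\mapsto z$ must be a Borel (in fact clopen with respect to the Mackey Borel structure) subset of $\hat{\tilde G}$, and the transport of measure must respect the Fell topology. This is where the Type I hypothesis is essential, because only then is $\hat{\tilde G}$ a standard Borel space on which Mackey's decomposition along the center makes sense; once this is in place, the remaining identifications of Fourier transforms and measures are routine.
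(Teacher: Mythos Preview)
The paper does not give its own proof of this proposition: it simply cites the classical Plancherel theorem for Type~I unimodular groups (F\"uhr, Theorem~3.31; Folland, \S7.5) and remarks that ``it can be trivially extended to the case with a fixed factor system~${\cal L}$.'' Your central-extension argument is precisely the standard way to make that one-line remark rigorous, and the overall strategy is correct.

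One point deserves care. You assert that $\tilde G$ is Type~I because $G$ is Type~I with factor system ${\cal L}$. But the unitary dual of $\tilde G$ fibers over $\hat{\U(1)}=\mathbb Z$: the piece over the character $z\mapsto z^n$ is $\hat G[{\cal L}^n]$, i.e.\ projective representations of $G$ with factor system ${\cal L}^n$. The hypothesis of the proposition controls only the fiber $n=1$, so it does not directly give Type~I for $\tilde G$ in the ordinary sense. This is not fatal: your lifted function $\tilde\varphi(g,z)=\bar z\,\varphi(g)$ lies in the $\chi_1$-isotypic component of $L^2(\tilde G)$ under the central $\U(1)$-action, and the Plancherel decomposition respects this grading, so you only ever need the Type~I property on the slice $\hat G[{\cal L}]\subset\hat{\tilde G}$. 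Either state this restriction explicitly (invoking Mackey's analysis of central extensions to justify the fiberwise Plancherel identity), or note that for the paper's applications ${\cal L}$ arises from a representation of a covering group, in which case all powers ${\cal L}^n$ are simultaneously Type~I and the issue disappears. With that adjustment your argument goes through and supplies exactly the details the paper omits.
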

In fact, Plancherel Theorem given in \cite[Theorem 3.31]{Fuhr}\cite[Section 7.5]{Folland} is based on representations.
It can be trivially extended to the case with a fixed factor system ${\cal L}$.

The measure $\mu_{\hat{G}[{\cal L}]}$ on $\hat{G}[{\cal L}]$
is called {\it Plancherel measure}.
In order to understand the meaning of Proposition \ref{th4-24-1},
we define the norm for 
$
|\phi\rangle=\oplus_{\lambda \in \hat{G}[{\cal L}]} |\phi_\lambda
\rangle\rangle\in
\oplus_{\lambda \in \hat{G}[{\cal L}]} 
{\cal U}_\lambda \otimes {\cal U}_\lambda^*$
as follows.
\begin{align}
\| \phi \|^2:=
\int_{\hat{G}[{\cal L}]} \|\phi_\lambda\|^2
 \mu_{\hat{G}[{\cal L}]}(d \lambda).
\end{align}
Here, we define
the space
$L^2(\hat{G}[{\cal L}])
:=\{ \phi \in
\oplus_{\lambda \in \hat{G}[{\cal L}]} 
{\cal U}_\lambda \otimes {\cal U}_\lambda^*
| \| \phi \|< \infty\}$,
and 
denote the subsets of normalized vectors in $L^2(G)$ and $L^2(\hat{G}[{\cal L}])$
by $L^2_n(G)$ and $L^2_n(\hat{G}[{\cal L}])$, respectively.
Then, we find that
the map ${\cal F}_{{\cal L}}$ is a unitary map 
from $L^2(G)$ to $L^2(\hat{G}[{\cal L}])$.
By using the Plancherel measure $\mu_{\hat{G}[{\cal L}]}$,
the inverse Fourier transform 
${\cal F}_{{\cal L}}^{-1}$
from $L^2(\hat{G}[{\cal L}])$ to $L^2(G)$ is given as
\begin{align}
{\cal F}_{{\cal L}}^{-1}[\phi](g) := 
\int_{\lambda \in \hat{G}[{\cal L}]} \Tr f_\lambda (g)^\dagger \phi_\lambda 
 \mu_{\hat{G}[{\cal L}]}(d \lambda)
\end{align}
for $|\phi\rangle= \oplus_{\lambda}|\phi_\lambda \rangle\rangle\in L^2(\hat{G}[{\cal L}])$.

When the group $G$ is compact,
the Plancherel measure $\mu_{\hat{G}[{\cal L}]}$ is discrete.
That is, we have
\begin{align}
\| \phi \|^2=
\sum_{\lambda \in \hat{G}[{\cal L}]} 
%\dim {\cal U}
d_{\lambda}
\|\phi_\lambda\|^2.\Label{e-4-22-2}
\end{align}
Then, 
the inverse Fourier transform 
${\cal F}_{{\cal L}}^{-1}$
from $L^2(\hat{G}[{\cal L}])$ to $L^2(G)$ is given as
\begin{align}
{\cal F}_{{\cal L}}^{-1}[\phi](g) = 
\sum_{\lambda \in \hat{G}[{\cal L}]} {d_\lambda}
\Tr f_\lambda (g)^\dagger \phi_\lambda 
\Label{e-4-22-3}
\end{align}
for $|\phi\rangle= \oplus_{\lambda} |\phi_\lambda \rangle\rangle\in L^2(\hat{G}[{\cal L}])$. 
The relations \eqref{e-4-22-2} and \eqref{e-4-22-3} hold when the set $\hat{G}[{\cal L}]$ is discrete.

In order to give a typical covariant POVM,
we define a vector ${\cal I}_\lambda :=\sum_{i} |e_i\rangle \langle e_i | $
for CONS $\{e_i\}$ of ${\cal U}_\lambda$.
Then, we obtain an element $|{\cal I}_\lambda\rangle \rangle
\in {\cal U}_\lambda \otimes {\cal U}_\lambda^*$.
Then, we define
$|{\cal I}\rangle := \sum_{\lambda} |{\cal I}_\lambda\rangle\rangle $.
Here, the group $G$ acts only on the first space ${\cal U}_{\lambda}$
under the representation space ${\cal U}_\lambda \otimes {\cal U}_\lambda^*$.
Now, for a given representation space ${\cal H}$, 
we define a subset
${\cal K}_{{\cal H}} $
of ${\cal K}_{\Lambda({\cal H})}$ by
\begin{align}
{\cal K}_{{\cal H}}:= 
\{
\phi=\oplus_{\lambda \in \Lambda({\cal H})}
|\phi_\lambda \rangle \rangle
\in {\cal K}_{\Lambda({\cal H})}|
\rank \phi_\lambda \le \dim {\cal V}_\lambda
\}.\Label{6-1-1}
\end{align}
where $\rank \phi_\lambda$ is defined as a map from ${\cal U}_\lambda$
to ${\cal V}_\lambda^*$ by including the infinity.

Here, we can show the following lemma.
\begin{lemma}\Label{L4-21}
When the measure 
$\mu_{\hat{G}[{\cal L}]}(\Lambda({\cal H}))$ is zero,
there is no covariant POVM on ${\cal H}$.
%Then, we have
%\begin{align*}
%&
%\min_{\rho \in {\cal S}({\cal H})} 
%\min_{M \in {\cal M}(G)} {\cal D}_{R}(\rho,M) 
%=
%\min_{\rho \in {\cal S}({\cal H})} 
%\min_{M \in {\cal M}_{\cov}(G)} {\cal D}_{R}(\rho,M) 
%=& \min_{ \{p_i\} } \min_{\rho_i \in {\cal S}({\cal H})} 
%\min_{M_i \in {\cal M}(G)} \sum_i p_i {\cal D}_{R}(\rho_i,M_i) \\
%=\min_{ \{p_i\} }\min_{\rho_i \in {\cal S}({\cal H})} 
%\min_{M_i \in {\cal M}_{\cov}(G)} \sum_i p_i {\cal D}_{R}(\rho_i,M_i) 
%=\infty.
%\end{align*}
\end{lemma}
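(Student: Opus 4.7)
The plan is to proceed by contradiction via Plancherel's theorem (Proposition \ref{th4-24-1}). Assume a covariant POVM exists on ${\cal H}$; by \eqref{h2} it has the form $M_T$ with $T\ge 0$ on ${\cal H}$ satisfying $\int_G f(g) T f(g)^\dagger \mu_G(dg) = I_{\cal H}$. Since $I_{\cal H}\neq 0$, the operator $T$ is nonzero and admits a spectral decomposition $T=\sum_i t_i |\psi_i\rangle\langle\psi_i|$ with some $t_i>0$.

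First, for any unit vector $\phi\in{\cal H}$, evaluating the covariance identity on $\phi$ yields
\begin{align}
1 = \langle\phi|I_{\cal H}|\phi\rangle = \sum_i t_i \int_G |\langle\phi|f(g)|\psi_i\rangle|^2 \mu_G(dg).
\end{align}
Finiteness of the sum together with $1>0$ forces some matrix coefficient $c(g):=\langle\phi|f(g)|\psi_i\rangle$ to belong to $L^2(G)$ with $\|c\|_{L^2(G)}>0$.

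Second, I would show that the Plancherel transform ${\cal F}_{\cal L}[c]$ is concentrated on $\Lambda({\cal H})$. Decomposing $\phi=\oplus_{\lambda\in\Lambda({\cal H})}|A_\lambda\rangle\rangle$ and $\psi_i=\oplus_{\lambda\in\Lambda({\cal H})}|B_\lambda\rangle\rangle$ via \eqref{e4-21-1} expresses $c(g)$ as a sum over $\lambda\in\Lambda({\cal H})$ of matrix coefficients of $f_\lambda$, and Schur orthogonality within the type I Plancherel framework gives $({\cal F}_{\cal L}[c])_{\lambda'}=0$ for $\mu_{\hat{G}[{\cal L}]}$-almost every $\lambda'\notin\Lambda({\cal H})$. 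Applying Plancherel's isometry together with $\mu_{\hat{G}[{\cal L}]}(\Lambda({\cal H}))=0$,
\begin{align}
0<\|c\|_{L^2(G)}^2 = \int_{\hat{G}[{\cal L}]} \|({\cal F}_{\cal L}[c])_\lambda\|^2 \mu_{\hat{G}[{\cal L}]}(d\lambda) = \int_{\Lambda({\cal H})} \|({\cal F}_{\cal L}[c])_\lambda\|^2 \mu_{\hat{G}[{\cal L}]}(d\lambda) = 0,
\end{align}
the desired contradiction.

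The hard part is the second step: rigorously establishing the support claim in the general type I non-compact setting. When the direct sum indexed by $\Lambda({\cal H})$ is infinite and individual matrix coefficients of $f_\lambda$ may fail to lie in $L^2(G)$ for $\lambda$ that is not square-integrable, one cannot simply apply Schur orthogonality term by term. Instead, the clean route is to invoke the Plancherel decomposition of the left regular representation of $G$---under which the $\lambda$-isotypic subspace of $L^2(G)$ is canonically identified with the $L^2$-closure of matrix coefficients of $f_\lambda$---to conclude that any $L^2(G)$ matrix element of $f$ built from vectors in ${\cal H}$ is Fourier-supported in $\Lambda({\cal H})$.
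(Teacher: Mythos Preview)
Your approach is correct and close in spirit to the paper's, though the contradiction is packaged differently. Both arguments rest on the same Plancherel-theoretic fact you flag as the ``hard part'': a matrix coefficient $g\mapsto\langle\phi|f(g)|\psi\rangle$ built from vectors in ${\cal H}$ has Fourier transform supported on $\Lambda({\cal H})$. The paper, however, never introduces $T$ or an off-diagonal coefficient. It works solely with the diagonal coefficient $c(g)=\langle\phi|f(g)|\phi\rangle$ for an arbitrary unit $\phi\in{\cal H}$, writes the Plancherel identity
\[
\int_G|c(g)|^2\,\mu_G(dg)=\int_{\Lambda({\cal H})}|\langle\langle\phi_\lambda|\phi_\lambda\rangle\rangle|^2\,\mu_{\hat G[{\cal L}]}(d\lambda),
\]
and then applies Cauchy--Schwarz on the right-hand side against the constant function $1$ to get the quantitative bound
\[
\Bigl\|\int_G f(g)|\phi\rangle\langle\phi|f(g)^\dagger\,\mu_G(dg)\Bigr\|\;\ge\;\int_G|c(g)|^2\,\mu_G(dg)\;\ge\;\frac{1}{\mu_{\hat G[{\cal L}]}(\Lambda({\cal H}))}.
\]
When the denominator vanishes this forces the averaged operator to be unbounded for \emph{every} unit $\phi$, which immediately rules out any positive $T$ satisfying \eqref{h1}. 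Your route instead extracts from the covariance identity a matrix coefficient with finite positive $L^2$-norm and shows that norm must be zero. Your version makes the support claim explicit and avoids the formal $0/0$ that appears in the Cauchy--Schwarz step when the measure is exactly zero; the paper's version is shorter, needs no spectral decomposition of $T$, and yields a quantitative inequality that remains informative when $\mu_{\hat G[{\cal L}]}(\Lambda({\cal H}))>0$.
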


When the group $G$ is not compact,
there is a possibility that
the set $\Lambda({\cal H})$ has zero measure under the 
Plancherel measure $\mu_{\hat{G}[{\cal L}]}$.
In this case, as is shown in Lemma \ref{L4-21},
we can perform no proper estimation.
Hence, for a proper estimation,
we have to prepare the Hilbert space ${\cal H}$ such that
$\mu_{\hat{G}[{\cal L}]}(\Lambda({\cal H}))>0$.
In the following, we assume that
any Hilbert space ${\cal H}$ satisfies 
the above condition and has the form (\ref{e4-21-1}).
Then, we employ the following inner product for $\Lambda=\Lambda({\cal H})$.
\begin{align}
\langle  \phi| \phi' \rangle%_{\Lambda({\cal H})} 
:=
\int_{\Lambda({\cal H})}
\langle \phi_{\lambda}| \phi_{\lambda}'\rangle 
\mu_{\hat{G}[{\cal L}]}(d \lambda).
\end{align}
Then, we can define the POVM $M_{|{\cal I}\rangle \langle {\cal I}|}$
on the quantum system ${\cal K}_\Lambda$.
Using the notation of the inverse Fourier transform, 
we have
\begin{align}
& {\cal D}_{R}(|\phi \rangle \langle \phi|,M_{|{\cal I}\rangle \langle {\cal I}|})
=\int_{G}
R(e,\hat{g}) |\langle {\cal I}| f(\hat{g})^\dagger|\phi \rangle%_{\Lambda}
|^2 
\mu_G(d \hat{g}) \nonumber \\
=&\int_{G}
R(e,\hat{g}) |\langle {\cal I}| f(\hat{g})^\dagger|\phi \rangle%_{\Lambda}
|^2 
\mu_G(d \hat{g}) \nonumber \\
=&\int_{G}
R(e,\hat{g}) 
| \int_{\Lambda} \Tr f(\hat{g})^\dagger \phi_{\lambda} \mu_{\hat{G}[{\cal L}]}
(d \lambda)|^2 
\mu_G(d \hat{g}) \nonumber \\
=&\int_{G}
R(e,\hat{g}) 
| {\cal F}_{{\cal L}}^{-1}[\phi] (\hat{g})|^2 
\mu_G(d \hat{g}), 
\Label{th7}
\end{align}
which is simplified to ${\cal D}_{R}(|\phi \rangle)$.
Hence, the output distribution can be written by using the inverse 
Fourier transform.

%Then, when an operator $T$ satisfies (\ref{h1}) and has the form $\sum_{k}|\eta_k \rangle \rangle \langle \langle \eta_k|$,
%\begin{align}
%\sum_{k} \eta_{k,\lambda}^\dagger \eta_{k,\lambda}
%\end{align}

Then, we obtain the following theorem.
\begin{theorem}\Label{th3}
Let $f$ be 
a projective representation of a unimodular group $G$ 
to a Hilbert space ${\cal H}$.
%We assume that the integral (\ref{3-10-1}) with $\lambda$ is finite for any $\lambda \in \Lambda = \Lambda ({\cal H})$.
Then, we obtain
\begin{align*}
&
%\min_{\rho \in {\cal S}({\cal H})} 
%\min_{M \in {\cal M}(G)} {\cal D}_{R}(\rho,M) 
%=
\min_{\rho \in {\cal S}({\cal H})} 
\min_{M \in {\cal M}_{\cov}(G)} {\cal D}_{R}(\rho,M) 
%=& \min_{ \{p_i\} } \min_{\rho_i \in {\cal S}({\cal H})} 
%\min_{M_i \in {\cal M}(G)} \sum_i p_i {\cal D}_{R}(\rho_i,M_i) \\
=
\min_{ \{p_i\} }
\min_{\rho_i \in {\cal S}({\cal H})} 
\min_{M_i \in {\cal M}_{\cov}(G)} \sum_i p_i {\cal D}_{R}(\rho_i,M_i) \\
=&
\min_{|\phi \rangle \in {\cal K}_{{\cal H},n}} 
{\cal D}_{R}(|\phi \rangle),
%{\cal D}_{R}(|\phi \rangle \langle \phi|,M_{|{\cal I}\rangle \langle {\cal I}|}).
\end{align*}
where 
${\cal K}_{{\cal H},n}$ is the set of normalized vectors in ${\cal K}_{{\cal H}}$.
\end{theorem}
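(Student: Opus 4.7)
The plan is to reduce both Scheme~1 and Scheme~2 to a single optimization over ${\cal K}_{{\cal H},n}$ by combining Holevo's parameterization of covariant POVMs (equations~(\ref{h2})--(\ref{h1})) with the irreducible decomposition~(\ref{e4-21-1}), then identifying the resulting output density as the squared modulus of an inverse Fourier transform via Proposition~\ref{th4-24-1}. First I would apply Lemma~\ref{th3-11} and its probabilistic analogue to restrict to covariant POVMs $M = M_T$ in both schemes. The equality of Scheme~1 and Scheme~2 is then immediate: taking $p_1 = 1$ shows the Scheme~2 minimum cannot exceed the Scheme~1 minimum, while $\sum_i p_i {\cal D}_R(\rho_i, M_i) \ge \min_i {\cal D}_R(\rho_i, M_i)$ together with the feasibility of each $(\rho_i, M_i)$ for Scheme~1 yields the reverse inequality.

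Next, spectrally decompose $\rho = \sum_l \lambda_l |\phi_l\rangle\langle\phi_l|$ and $T = \sum_k |\psi_k\rangle\langle\psi_k|$, and under~(\ref{e4-21-1}) write $|\phi_l\rangle = \oplus_\lambda |A_{l,\lambda}\rangle\rangle$ and $|\psi_k\rangle = \oplus_\lambda |B_{k,\lambda}\rangle\rangle$ with $A_{l,\lambda}, B_{k,\lambda} : {\cal V}_\lambda^* \to {\cal U}_\lambda$ via the operator-state correspondence. A direct computation gives
\begin{align*}
p(\hat g) = \Tr\bigl(\rho f(\hat g) T f(\hat g)^\dagger\bigr) = \sum_{l,k} \lambda_l \Bigl|\sum_\lambda \Tr f_\lambda(\hat g)^\dagger A_{l,\lambda} B_{k,\lambda}^\dagger \Bigr|^2.
\end{align*}
Setting $C_{l,k,\lambda} := A_{l,\lambda} B_{k,\lambda}^\dagger \in {\cal U}_\lambda \otimes {\cal U}_\lambda^*$, the bound $\rank C_{l,k,\lambda} \le \min(\rank A_{l,\lambda}, \rank B_{k,\lambda}) \le \dim {\cal V}_\lambda$ places $|C_{l,k}\rangle := \oplus_\lambda |C_{l,k,\lambda}\rangle\rangle$ inside ${\cal K}_{\cal H}$ per~(\ref{6-1-1}); after absorbing the Plancherel measure $\mu_{\hat G[{\cal L}]}$ into a rescaled $|\phi_{l,k}\rangle$, Proposition~\ref{th4-24-1} turns the inner sum into ${\cal F}_{\cal L}^{-1}[\phi_{l,k}](\hat g)$, matching the form~(\ref{th7}).

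Finally, the total probability $\int p(\hat g)\mu_G(d\hat g) = \Tr\bigl(\rho \int f T f^\dagger d\hat g\bigr) = 1$ (by~(\ref{h1})) translates via Plancherel into $\sum_{l,k} \lambda_l \|\phi_{l,k}\|^2 = 1$, so
\begin{align*}
{\cal D}_R(\rho, M_T) = \sum_{l,k} w_{l,k}\, {\cal D}_R(|\tilde\phi_{l,k}\rangle) \ge \min_{|\phi\rangle \in {\cal K}_{{\cal H},n}} {\cal D}_R(|\phi\rangle),
\end{align*}
with weights $w_{l,k} := \lambda_l \|\phi_{l,k}\|^2$ summing to $1$ and $|\tilde\phi_{l,k}\rangle := \phi_{l,k}/\|\phi_{l,k}\| \in {\cal K}_{{\cal H},n}$, giving the $(\ge)$ direction. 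For the reverse $(\le)$, given $|\phi\rangle \in {\cal K}_{{\cal H},n}$, take the singular value decomposition $\phi_\lambda = U_\lambda \Sigma_\lambda V_\lambda^\dagger$ and choose $B_\lambda = \sqrt{d_\lambda}\,W_\lambda$ for $W_\lambda$ a partial isometry enforcing~(\ref{h1}) via Schur orthogonality, setting $A_\lambda$ as the complementary factor; the rank bound on $\phi_\lambda$ guarantees the factorization fits within ${\cal V}_\lambda^* \to {\cal U}_\lambda$. I expect the main technical obstacle to be the scaling bookkeeping: threading the generalized dimension $d_\lambda$ and the Plancherel measure $\mu_{\hat G[{\cal L}]}$ through to reconcile the Hilbert-Schmidt normalization of $C_{l,k,\lambda}$ with the Plancherel-norm normalization defining ${\cal K}_{{\cal H},n}$ requires careful attention to conjugate spaces and to the factor system ${\cal L}$.
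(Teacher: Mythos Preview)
Your proposal is correct and follows essentially the same route as the paper's proof. The paper likewise decomposes $T=\sum_k|\eta_k\rangle\langle\eta_k|$, forms the vectors $|x_k\rangle=\oplus_\lambda|\phi_\lambda\eta_{k,\lambda}^\dagger\rangle\rangle$ (your $C_{l,k}$), invokes the rank bound $\rank(\phi_\lambda\eta_{k,\lambda}^\dagger)\le\dim{\cal V}_\lambda$ to place them in ${\cal K}_{\cal H}$, and uses Plancherel to identify the output density with $|{\cal F}_{\cal L}^{-1}[x_k]|^2$; for the reverse inequality the paper constructs an explicit inclusion $P_\lambda:{\cal V}_\lambda'^*\hookrightarrow{\cal U}_\lambda$ playing the same role as your SVD-based factorization.
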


When $G$ is not compact,
the representation space ${\cal H}=\oplus_{\lambda\in S}{\cal U}_{\lambda}\otimes {\cal V}_{\lambda}$ 
might be infinite-dimensional.
In this case, it is difficult to prepare an arbitrary state on the Hilbert space ${\cal H}$ as the initial state.
Hence, it is natural to restrict the average energy for the input state.
That is,
we consider a positive semi-definite self-adjoint operator $H_\lambda$ on the respective space ${\cal U}_{\lambda}$
and a given constant $E$,
and
we assume the condition for the initial state $\rho$
\begin{align}
\Tr H \rho \le E , \Label{6-20-3}
\end{align}
by using the Hamiltonian $H$ with the form
\begin{align}
H:=\bigoplus_{\lambda \in \Lambda({\cal H})} H_\lambda\otimes I .
\end{align}
This condition is meaningful even in the compact case
when there is a restriction for energy
while all of irreducible representation space in $\hat{G}[{\cal L}]$ can be prepared.
When the initial state 
is given by the pure state
$|\phi\rangle=\oplus_\lambda | \phi_\lambda \rangle\rangle \in L^2(\hat{G}[{\cal L}])$,
the above condition can be simplified to
\begin{align}
\int_{S} 
\Tr H_\lambda \phi_\lambda \phi_\lambda^\dagger 
\mu_{\hat{G}[{\cal L}]}(d \lambda)
\le E \Label{6-20-4}.
\end{align}

In Scheme 2, the constraint for the input choice $(p_i,\rho_i)$ is given as
\begin{align}
\sum_i p_i \Tr H \rho_i
\le E . \Label{6-20-3b}
\end{align}
When the states $\rho_i$ 
is given by the pure state
$|\phi_i\rangle=\oplus_\lambda | \phi_{\lambda,i} \rangle \rangle\in L^2(\hat{G}[{\cal L}])$,
the above condition can be simplified to
\begin{align}
\sum_i p_i
\int_{S} 
\Tr H_\lambda \phi_{\lambda,i} \phi_{\lambda,i}^\dagger 
\mu_{\hat{G}[{\cal L}]}(d \lambda)
\le E \Label{6-20-4b}.
\end{align}

Then, using 
the function
\begin{align}
\kappa(E):=
\min_{\phi \in {\cal K}_{{\cal H},n}} 
\{ {\cal D}_R(|\phi\rangle)| \langle \phi | H |\phi\rangle \le E\},
\end{align}
we can show the following theorem.

\begin{theorem}\Label{t6-24-1}
The relations
\begin{align}
& 
\kappa(E)
\ge 
\min_{\rho \in {\cal S}({\cal H})} 
\min_{M \in {\cal M}_{\cov}(G)} 
\{{\cal D}_R(\rho,M)| 
\Tr H \rho \le E
\}
\nonumber \\
%\min_{ \{p_i\} } \min_{\rho_i \in {\cal S}({\cal H})} 
%\min_{M_i \in {\cal M}(G)} 
%\{\sum_i p_i {\cal D}_{R}(\rho_i,M_i) | (\ref{6-20-3b})\hbox{holds.}\}
%\nonumber  \\
%=&
\ge & \min_{ \{p_i\} }
\min_{\rho_i \in {\cal S}({\cal H})} 
\min_{M_i \in {\cal M}_{\cov}(G)} 
\{\sum_i p_i {\cal D}_{R}(\rho_i,M_i) | 
\sum_i p_i \Tr H \rho_i \le E
\} \nonumber\\
= &
\min_{ \{p_i,E_i\} }
\{\sum_i p_i \kappa(E_i)| \sum_i p_i E_i=E \}
 \Label{6-20-5-c}
\end{align}
hold.
In particular, 
when
the function $\kappa(E)$ is convex,
the relations
\begin{align}
&
%\min_{\rho \in {\cal S}({\cal H})} \min_{M \in {\cal M}(G)} 
%\{{\cal D}_R(\rho,M)| (\ref{6-20-3})\hbox{holds.}\}
%=
\min_{\rho \in {\cal S}({\cal H})} 
\min_{M \in {\cal M}_{\cov}(G)} 
\{{\cal D}_R(\rho,M)| \Tr H \rho \le E \}
\nonumber \\
%=& \min_{ \{p_i\} } \min_{\rho_i \in {\cal S}({\cal H})} 
%\min_{M_i \in {\cal M}(G)} 
%\{\sum_i p_i {\cal D}_{R}(\rho_i,M_i) | (\ref{6-20-3b})\hbox{holds.}\}
% \\
=&
\min_{ \{p_i\} }
\min_{\rho_i \in {\cal S}({\cal H})} 
\min_{M_i \in {\cal M}_{\cov}(G)} 
\{\sum_i p_i {\cal D}_{R}(\rho_i,M_i) | 
\sum_i p_i \Tr H \rho_i \le E
\} 
=
\kappa(E)
\Label{6-20-5-a}
\end{align}
hold.
\end{theorem}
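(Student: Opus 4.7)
The plan is to establish the chain \eqref{6-20-5-c} inequality by inequality and then derive \eqref{6-20-5-a} from the convexity hypothesis via Jensen's inequality. The first inequality, $\kappa(E) \ge \min_{\rho,M}\{{\cal D}_R(\rho,M)\mid \Tr H\rho \le E\}$, is immediate: every $\phi\in{\cal K}_{{\cal H},n}$ with $\langle\phi|H|\phi\rangle \le E$ yields an admissible Scheme~1 pair $(|\phi\rangle\langle\phi|,M_{|{\cal I}\rangle\langle{\cal I}|})$, and \eqref{th7} identifies ${\cal D}_R(\phi)$ with ${\cal D}_R(|\phi\rangle\langle\phi|,M_{|{\cal I}\rangle\langle{\cal I}|})$, so restricting to this sub-family can only raise the minimum. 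The second inequality, from Scheme~1 down to Scheme~2, follows from viewing any Scheme~1 strategy as a degenerate Scheme~2 strategy with a single atom of probability one.

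The heart of the proof is the equality in the last line of \eqref{6-20-5-c}. For the direction $\le$, given any admissible $(p_i,E_i)$ with $\sum_i p_iE_i = E$, I would pick $\phi_i\in{\cal K}_{{\cal H},n}$ attaining $\kappa(E_i)$ with $\langle\phi_i|H|\phi_i\rangle \le E_i$; the triple $(p_i,|\phi_i\rangle\langle\phi_i|,M_{|{\cal I}\rangle\langle{\cal I}|})$ is admissible in Scheme~2 and yields error exactly $\sum_i p_i\kappa(E_i)$. The reverse direction requires more work: starting from any Scheme~2 strategy, I first replace each mixed input $\rho_i$ by its spectral pure components, absorbing the spectral weights into the probabilities; this preserves the total energy since $\Tr H\rho_i$ is additive over the spectral decomposition. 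For each resulting pure atom $|\psi_\alpha\rangle$ with covariant POVM $M_\alpha$, I invoke the reduction underlying Theorem~\ref{th3} to produce $\phi_\alpha\in{\cal K}_{{\cal H},n}$ with ${\cal D}_R(|\psi_\alpha\rangle\langle\psi_\alpha|,M_\alpha) = {\cal D}_R(\phi_\alpha)$ and with the crucial property that $\phi_\alpha$ has the same reduced density on each irreducible factor ${\cal U}_\lambda$ as $|\psi_\alpha\rangle$. Since $H = \bigoplus_\lambda H_\lambda \otimes I$ acts trivially on the multiplicity factors ${\cal V}_\lambda$, this guarantees $\langle\phi_\alpha|H|\phi_\alpha\rangle = \Tr H|\psi_\alpha\rangle\langle\psi_\alpha|$. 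Setting $E_\alpha = \Tr H|\psi_\alpha\rangle\langle\psi_\alpha|$ and using ${\cal D}_R(\phi_\alpha)\ge \kappa(E_\alpha)$ yields the matching lower bound; the monotonicity of $\kappa$ (larger energy budgets enlarge the feasible set) then converts the inequality $\sum_\alpha p_\alpha E_\alpha \le E$ into the equality constraint in the statement.

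For \eqref{6-20-5-a}, convexity of $\kappa$ combined with Jensen's inequality gives $\sum_ip_i\kappa(E_i) \ge \kappa(\sum_ip_iE_i) = \kappa(E)$ whenever $\sum_ip_iE_i = E$, so $\min_{\{p_i,E_i\}}\sum_ip_i\kappa(E_i) = \kappa(E)$ and the chain \eqref{6-20-5-c} collapses to equality everywhere. The main technical obstacle is the energy-preserving ${\cal K}_{{\cal H},n}$-reduction in the second paragraph: one must verify that the representative produced in the proof of Theorem~\ref{th3} from a pure input $|\psi\rangle=\oplus_\lambda|\psi_\lambda\rangle$ has the same marginal on each ${\cal U}_\lambda$ as $|\psi\rangle$, so that the block-diagonal, multiplicity-trivial Hamiltonian cannot distinguish them. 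This reduces to checking that the construction amounts to an isometric re-embedding along the multiplicity factor, which is precisely what the rank bound $\rank\phi_\lambda \le \dim{\cal V}_\lambda$ in \eqref{6-1-1} makes possible.
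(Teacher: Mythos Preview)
Your strategy mirrors the paper's, but the reverse direction of the equality in \eqref{6-20-5-c} has a real gap. You assert that the reduction underlying Theorem~\ref{th3} takes a pure atom $(|\psi_\alpha\rangle,M_\alpha)$ to a \emph{single} $\phi_\alpha\in{\cal K}_{{\cal H},n}$ with both ${\cal D}_R(|\psi_\alpha\rangle\langle\psi_\alpha|,M_\alpha)={\cal D}_R(\phi_\alpha)$ and the same ${\cal U}_\lambda$-marginal as $|\psi_\alpha\rangle$. Neither holds. Writing $T_\alpha=\sum_k|\eta_{k,\alpha}\rangle\langle\eta_{k,\alpha}|$ and $|x_{k,\alpha}\rangle=\oplus_\lambda|\psi_{\lambda,\alpha}\eta_{k,\lambda,\alpha}^\dagger\rangle\rangle$, the reduction yields a \emph{family}: the error equals the convex combination $\sum_k\|x_{k,\alpha}\|^2{\cal D}_R(x_{k,\alpha}/\|x_{k,\alpha}\|)$, while the ${\cal U}_\lambda$-marginal of an individual $x_{k,\alpha}$ is $\psi_{\lambda,\alpha}\eta_{k,\lambda,\alpha}^\dagger\eta_{k,\lambda,\alpha}\psi_{\lambda,\alpha}^\dagger$, which generally differs from $\psi_{\lambda,\alpha}\psi_{\lambda,\alpha}^\dagger$. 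Only after summing over $k$ does $\sum_k\eta_{k,\lambda,\alpha}^\dagger\eta_{k,\lambda,\alpha}=I_\lambda$ restore the marginal. If you select a single $x_{k,\alpha}$ minimizing the error (as the proof of Theorem~\ref{th3} does in the unconstrained setting), you lose all control of its energy.

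The paper's proof keeps the entire doubly indexed family $\bigl(p_\alpha\|x_{k,\alpha}\|^2,\,x_{k,\alpha}/\|x_{k,\alpha}\|\bigr)_{k,\alpha}$ as the new mixture. The key identity is $\sum_k\langle x_{k,\alpha}|H|x_{k,\alpha}\rangle=\int\Tr H_\lambda\psi_{\lambda,\alpha}\bigl(\sum_k\eta_{k,\lambda,\alpha}^\dagger\eta_{k,\lambda,\alpha}\bigr)\psi_{\lambda,\alpha}^\dagger\,\mu_{\hat G[{\cal L}]}(d\lambda)=\langle\psi_\alpha|H|\psi_\alpha\rangle$, which uses exactly the block-diagonal, multiplicity-trivial form of $H$ you correctly highlight---but applied to the aggregate over $k$, not to any single representative. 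With $E_{k,\alpha}:=\langle x_{k,\alpha}|H|x_{k,\alpha}\rangle/\|x_{k,\alpha}\|^2$ one then has $\sum_{k,\alpha}p_\alpha\|x_{k,\alpha}\|^2E_{k,\alpha}\le E$ and ${\cal D}_R(x_{k,\alpha}/\|x_{k,\alpha}\|)\ge\kappa(E_{k,\alpha})$, which is precisely the lower bound you need. Your outline becomes correct once you replace ``produce $\phi_\alpha$'' by ``produce the weighted family $\{x_{k,\alpha}\}_k$'' and track energy through the sum rather than termwise.
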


Further, we have the following lemma.
\begin{lemma}\Label{L4-25-2}
When 
the relation $\dim {\cal V}_\lambda \ge \dim {\cal U}_\lambda$
holds for any $\lambda \in \Lambda({\cal H})$,
the function $\kappa(E)$ is convex.
\end{lemma}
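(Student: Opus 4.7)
The strategy is to reduce the pure-state problem defining $\kappa(E)$ to its density-operator relaxation, which is manifestly convex, and then show that the hypothesis implies the relaxation is tight.

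First I would unpack what the hypothesis buys us. Any $\phi_\lambda\in{\cal U}_\lambda\otimes{\cal U}_\lambda^*$, viewed as an operator on ${\cal U}_\lambda$, has $\rank \phi_\lambda\le\dim{\cal U}_\lambda$, and by hypothesis this is bounded by $\dim{\cal V}_\lambda$; thus the constraint in \eqref{6-1-1} defining ${\cal K}_{\cal H}$ is vacuous, so ${\cal K}_{{\cal H},n}$ is the entire unit sphere of the Hilbert space ${\cal K}_{\Lambda({\cal H})}=\oplus_{\lambda\in\Lambda({\cal H})}{\cal U}_\lambda\otimes{\cal U}_\lambda^*$. Writing ${\cal D}_R(|\phi\rangle)=\langle\phi|Q|\phi\rangle$ where $Q=\int_G R(e,\hat g)|\omega_{\hat g}\rangle\langle\omega_{\hat g}|\mu_G(d\hat g)$ and $\omega_g\in{\cal K}_{\Lambda({\cal H})}$ has components $(\omega_g)_\lambda=f_\lambda(g)$ (so that $\langle\omega_g,\phi\rangle={\cal F}_{\cal L}^{-1}[\phi](g)$ by \eqref{th7}), $\kappa(E)$ becomes the complex QCQP
\begin{align*}
\kappa(E)=\min\bigl\{\langle\phi|Q|\phi\rangle:\|\phi\|^2=1,\ \langle\phi|H|\phi\rangle\le E\bigr\}
\end{align*}
with two quadratic constraints.

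Next, introduce the density-operator relaxation
\begin{align*}
\tilde\kappa(E):=\inf\bigl\{\Tr Q\rho:\rho\succeq 0,\ \Tr\rho=1,\ \Tr H\rho\le E\bigr\}.
\end{align*}
The feasible set is convex in $(\rho,E)$ and the objective linear in $\rho$, so taking $t\rho_1+(1-t)\rho_2$ with $\rho_i$ feasible for $E_i$ gives feasibility for $tE_1+(1-t)E_2$ with value $t\Tr Q\rho_1+(1-t)\Tr Q\rho_2$; hence $\tilde\kappa$ is convex in $E$. Since every pure state in ${\cal K}_{{\cal H},n}$ is a rank-one density, $\tilde\kappa(E)\le\kappa(E)$.

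The remaining step is the reverse inequality $\kappa(E)\le\tilde\kappa(E)$, i.e.\ rank-one reduction. Given an optimizer $\rho^\ast=\sum_j q_j|\psi_j\rangle\langle\psi_j|$ (spectral decomposition, with the $|\psi_j\rangle$ orthonormal), if $\rank\rho^\ast\ge 2$ pick two eigenvectors $|\psi_a\rangle,|\psi_b\rangle$ and form
\begin{align*}
|\phi(s,\alpha)\rangle:=\sqrt{s}|\psi_a\rangle+\sqrt{1-s}\,e^{i\alpha}|\psi_b\rangle,
\end{align*}
which has unit norm by orthogonality. Then $\langle\phi|H|\phi\rangle=sH_a+(1-s)H_b+2\sqrt{s(1-s)}\,\Re(e^{-i\alpha}\langle\psi_a|H|\psi_b\rangle)$ and similarly for $Q$. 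Choose one of the two $\alpha$'s that kill $\Re(e^{-i\alpha}\langle\psi_a|H|\psi_b\rangle)$; these two admissible phases give opposite signs for $\Re(e^{-i\alpha}\langle\psi_a|Q|\psi_b\rangle)$, so one of them makes this quantity non-positive. With $s=q_a/(q_a+q_b)$, the replacement $q_a|\psi_a\rangle\langle\psi_a|+q_b|\psi_b\rangle\langle\psi_b|\mapsto(q_a+q_b)|\phi\rangle\langle\phi|$ in $\rho^\ast$ preserves $\Tr H\rho$, does not increase $\Tr Q\rho$, and reduces the rank by one (the new $|\phi\rangle$ is orthogonal to the untouched eigenvectors). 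Iterating produces a pure-state optimizer; because the hypothesis makes ${\cal K}_{{\cal H},n}$ the full unit sphere of ${\cal K}_{\Lambda({\cal H})}$, this pure state lies in ${\cal K}_{{\cal H},n}$, giving $\kappa(E)\le\tilde\kappa(E)$ and hence convexity.

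The main obstacle is the rank-reduction step: one needs the two real parameters $(s,\alpha)$ to match two quadratic expectations, and this succeeds precisely because of the complex phase freedom (the same argument fails over $\mathbb{R}$ with a single mixing parameter). The role of the hypothesis $\dim{\cal V}_\lambda\ge\dim{\cal U}_\lambda$ is exactly to guarantee that the reduced rank-one pure state still satisfies the rank constraint defining ${\cal K}_{\cal H}$ — without it, rank-reduction might exit ${\cal K}_{{\cal H},n}$ and $\kappa$ could genuinely differ from its convex envelope.
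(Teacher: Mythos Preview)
Your argument is correct and rests on the same core idea as the paper's proof: both observe that the hypothesis makes ${\cal K}_{\cal H}$ a full linear space so that superpositions remain admissible, and both use a two-vector purification step (the paper packages this as the appendix Lemma~\ref{L4-25}, applied once to the span of vectors $|\phi_1\rangle,|\phi_2\rangle$ witnessing $\kappa(E_1),\kappa(E_2)$, whereas you iterate it as rank-reduction on the SDP relaxation $\tilde\kappa$). The framing differs slightly but the mechanism is the same.
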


Therefore, when the above condition holds,
it is sufficient to minimize ${\cal D}_R (|\phi\rangle)$
among pure input states $|\phi \rangle$
under the condition $\langle \phi | H |\phi\rangle \le E$.

%In the remaining part of this section, we assume that
%$\Lambda({\cal H)= \hat{G}[{\cal L}]$
%and $\kappa(E)$ has the form $\frac{c}{E}$.

It is often that $\kappa(E)$ is not easy to calculate.
In this case, we consider its Legendre transform
$\gamma(s):= \min_{E} \kappa(E)+ sE $,
which is concave and easier to calculate by the following way.
Since the map $\phi \mapsto {\cal D}_R(|\phi\rangle)$
is affine, there exists an operator $Y$ such that
$\langle \phi |Y| \phi \rangle=  {\cal D}_R(|\phi\rangle)$.
Then, $\gamma(s)$ is written as 
\begin{align}
\gamma(s)= \min_{\phi \in {\cal K}_{{\cal H},n}} 
\langle \phi | Y+ s H |\phi\rangle .
\end{align}
This value can be calculated by seeking the minimum eigenvalue of
the operator $Y+ s H$
when ${\cal K}_{{\cal H}}$ is a vector space, i.e.,
the relation $\dim {\cal V}_\lambda \ge \dim {\cal U}_\lambda$
holds for any $\lambda \in \Lambda({\cal H})$.
Using the function $\gamma(s)$, we can calculate $\kappa(E)$ as follows.
\begin{lemma}\Label{L5-10-1}
Assume that the function $\kappa(E)$ is convex.
For any $E$, there uniquely exists $s_E$ such that 
$\gamma'(s_E)=E$, 
where $\gamma'$ is the derivative of $\gamma$ .
Then, we have
\begin{align}
\kappa(E)= \gamma(s_E)- s_E E
= \max_{s>0} \gamma(s)- s E
.\Label{5-10-2}.
\end{align}
Further, $s_E$ is 
positive and monotone decreasing with respect to $E$.
\end{lemma}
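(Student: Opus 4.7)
The plan is to recognize this statement as the standard Fenchel/Legendre biconjugation for the convex function $\kappa$, together with the first-order optimality condition for the maximum defining the inverse transform. First I would verify that $\gamma$, as defined by $\gamma(s)=\min_E \kappa(E)+sE$, is concave and upper semicontinuous, since it is the pointwise infimum of the family of affine functions $s\mapsto \kappa(E)+sE$ indexed by $E$. Then I would invoke the Fenchel--Moreau theorem: because $\kappa$ is convex (and, in our setting, proper and lower semicontinuous, as it is the infimum of a continuous affine functional $|\phi\rangle\mapsto\langle \phi|Y|\phi\rangle$ over a closed constraint set), it coincides with its biconjugate, giving $\kappa(E)=\sup_s\gamma(s)-sE$.

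Next I would restrict the supremum to $s>0$. Since enlarging the energy budget $E$ only relaxes the constraint $\langle\phi|H|\phi\rangle\le E$, the function $\kappa$ is monotone non-increasing, hence every subgradient of $\kappa$ is nonpositive. In the Legendre identity $-s\in\partial\kappa(E)$ this forces $s\ge 0$; strict positivity follows once $\kappa$ is strictly decreasing at $E$, which is the generic situation under our energy constraint (otherwise no finite $s_E$ with $\gamma'(s_E)=E$ would exist). This then gives the second equality $\kappa(E)=\max_{s>0}\gamma(s)-sE$.

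For the existence, uniqueness, and monotonicity of $s_E$, I would use the first-order optimality condition in the concave maximization $\max_s[\gamma(s)-sE]$: differentiating in $s$ yields $\gamma'(s_E)=E$. Since $\gamma$ is concave, $\gamma'$ is non-increasing, so the equation $\gamma'(s)=E$ has a (unique, under strict concavity of $\gamma$) solution, and the map $E\mapsto s_E$ inherits monotone decreasing behavior from the fact that $\gamma'$ is a decreasing function inverted at the level $E$. Substituting $s=s_E$ into the Legendre identity gives $\kappa(E)=\gamma(s_E)-s_E E$.

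The main obstacle I anticipate is not the duality itself but justifying the regularity needed for the sharp formulation: namely, that $\gamma$ is everywhere differentiable with strictly decreasing derivative, so that $\gamma'(s_E)=E$ has a unique solution for every admissible $E$. This requires ruling out affine pieces of $\gamma$ (equivalently, flat pieces of $\kappa^*$ at the relevant $E$), which in turn relies on strict convexity of $\kappa$ on the energy range of interest. In applications this will follow from the specific structure of $Y+sH$ on ${\cal K}_{{\cal H}}$, but in the abstract statement it must be carried implicitly by the assumption that $\kappa$ is convex together with the uniqueness clause itself; the remaining assertions (positivity and monotonicity of $s_E$) are then direct consequences of concavity of $\gamma$ and monotonicity of $\kappa$.
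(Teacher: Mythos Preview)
Your proposal is correct and identifies the result for what it is: Legendre--Fenchel duality. The paper's proof follows a more elementary calculus route rather than invoking Fenchel--Moreau. Concretely, the paper observes that for convex $\kappa$ the infimum in $\gamma(s)=\min_E\kappa(E)+sE$ is attained where $\kappa'(E)=-s$, yielding the identity $\gamma(-\kappa'(E))=\kappa(E)-E\kappa'(E)$; it then differentiates this identity in $E$ to obtain $\gamma'(-\kappa'(E))=E$, which shows directly that $s_E=-\kappa'(E)$, and substituting back gives $\kappa(E)=\gamma(s_E)-s_EE$. The second equality in \eqref{5-10-2} then follows from concavity of $\gamma$, and positivity and monotonicity of $s_E=-\kappa'(E)$ from $\kappa$ being decreasing and convex. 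Your approach via biconjugation is cleaner and more robust (it handles nondifferentiability through subgradients and makes the role of lower semicontinuity explicit), while the paper's argument is self-contained and avoids any appeal to convex-analysis machinery, at the cost of tacitly assuming the same smoothness you flagged (differentiability of $\kappa$ and of $\gamma$, and $\kappa''(E)\neq 0$ when cancelling it).
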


\section{Proofs of Theorems \ref{th3} and \ref{t6-24-1}
and Lemmas \ref{L4-21}, \ref{L4-25-2}, and \ref{L5-10-1}}\Label{s5}
\begin{proofof}{Theorem \ref{th3}}
For any $|\phi\rangle=\oplus_{\lambda \in \Lambda({\cal H})}
|\phi_\lambda\rangle\rangle
\in {\cal K}_{{\cal H}}$,
the rank of $\phi_{\lambda}$ is not greater than
$\dim {\cal V}_{\lambda}$.
We choose a subspace ${\cal V}_{\lambda}'$ of ${\cal V}_{\lambda}$
so that $\dim{\cal V}_{\lambda}'= \min \{\dim {\cal V}_{\lambda}, \dim {\cal U}_{\lambda}\}$.
Then, we choose an inclusion map $P_\lambda:
{{\cal V}_{\lambda}'}^* 
%\xrightarrow 
%%%%%%\xhookrightarrow{}
{\cal U}_\lambda$.
We have the map 
$\phi_{\lambda} P_\lambda$ from 
${{\cal V}_{\lambda}'}^*$ to ${\cal U}_\lambda$ and
the dual map $P_\lambda^*:{\cal U}_\lambda^* \to {{\cal V}_{\lambda}'}$
so that
$|\phi_{\lambda} P_\lambda \rangle \rangle
\in {\cal U}_\lambda \otimes {\cal V}_{\lambda}'$
and
$| P_\lambda^*\rangle \rangle \in {\cal U}_\lambda \otimes {{\cal V}_{\lambda}'}$.
Choosing $|\tilde{\cal I}\rangle:=
\oplus_\lambda |P_\lambda^*\rangle \rangle$
and 
$|\tilde{\phi}\rangle:=
\oplus_{\lambda}
|\phi_{\lambda} P_\lambda\rangle\rangle$,
we have
\begin{align}
{\cal D}_{R}( |\phi\rangle\langle \phi|, M_{|{\cal I}\rangle \langle {\cal I}|})
=
{\cal D}_{R}( |\tilde{\phi}\rangle\langle \tilde{\phi}|, 
M_{|\tilde{\cal I}\rangle \langle \tilde{\cal I}|}).
\Label{4-26-1}
\end{align}
Hence, we have
\begin{align*}
\min_{|\phi \rangle \in {\cal K}_{{\cal H},n}} 
{\cal D}_{R}(|\phi \rangle) 
\ge 
\min_{\rho \in {\cal S}({\cal H})} 
\min_{M \in {\cal M}_{\cov}(G)} {\cal D}_{R}(\rho,M) .
\end{align*}

Further, the relations
\begin{align}
&\min_{\rho \in {\cal S}({\cal H})} 
\min_{M \in {\cal M}_{\cov}(G)} {\cal D}_{R}(\rho,M) 
%=& \min_{ \{p_i\} } \min_{\rho_i \in {\cal S}({\cal H})} 
%\min_{M_i \in {\cal M}(G)} \sum_i p_i {\cal D}_{R}(\rho_i,M_i) \\
\ge 
\min_{ \{p_i\} }
\min_{\rho_i \in {\cal S}({\cal H})} 
\min_{M_i \in {\cal M}_{\cov}(G)} 
\sum_i p_i {\cal D}_{R}(\rho_i,M_i) 
\nonumber \\
=& 
\min_{ \{p_i\} }
\min_{|\phi_i\rangle \in {\cal H}_n} 
\min_{M_i \in {\cal M}_{\cov}(G)} 
\sum_i p_i {\cal D}_{R}(|\phi_i\rangle \langle \phi_i| ,M_i) 
\Label{4-25-1}
\end{align}
are trivial, where ${\cal H}_n$ is the set of normalized vectors in ${\cal H}$.
For any probabilistic strategy $\{(p_i ,\rho_i,M_i)\}_i$,
we can choose $i$ such that
$\sum_i p_i {\cal D}_{R}(\rho_i,M_i) 
\ge  {\cal D}_{R}(\rho_i,M_i) $,
which implies the equality of the above second inequality.
Hence, we show the opposite inequality of the above first inequality.

Now, we make a decomposition of the operator $T$ 
as $T =\sum_{k}|\eta_k\rangle \langle \eta_k|$.
In the following, we use the notations 
$|\phi\rangle=\oplus_{\lambda \in \Lambda}| \phi_\lambda \rangle\rangle$,
$|\eta_k\rangle=\oplus_{\lambda\in \Lambda}|\eta_{k,\lambda} \rangle\rangle $,
$|x_k\rangle:=\oplus_{\lambda\in \Lambda} |\phi_\lambda
\eta_{k,\lambda}^\dagger\rangle\rangle$.
%$|I\rangle:=\sum_{\lambda}\sqrt{d_\lambda}|I_\lambda\rangle\rangle$
The output $\hat{g}$ satisfies the following distribution.
\begin{align*}
& 
\langle \phi |f(\hat{g}) T f(\hat{g}^{-1}) | \phi \rangle
\mu_G(d\hat{g}) 
=\sum_{k} 
|\langle \eta_k| f(\hat{g}^{-1}) | \phi \rangle|^2
\mu_G(d\hat{g}) \\
=& \sum_{k} 
|
\int_{\Lambda }
\Tr \eta_{k,\lambda}^{\dagger} f_\lambda(\hat{g}^{-1})  \phi_\lambda
\mu_{\hat{G}[{\cal L}]}(d \lambda)
|^2
\mu_G(d\hat{g})\\
=&
\sum_{k} 
|
\int_{\Lambda }
\Tr f_\lambda(\hat{g})^\dagger  \phi_\lambda
\eta_{k,\lambda}^{\dagger} 
\mu_{\hat{G}[{\cal L}]}(d \lambda)
|^2
\mu_G(d\hat{g})\\
=&
 \sum_{k} 
|\langle {\cal I}| f(\hat{g})^\dagger | x_k \rangle |^2
\mu_G(d\hat{g}) 
=
\sum_{k} 
|
{\cal F}_{{\cal L}}^{-1}[x_k](\hat{g})|^2
\mu_G(d\hat{g}).
\end{align*}
Hence, the relation
\begin{align*}
%&\int_{\Lambda }\Tr (\phi_\lambda^{\dagger} \phi_\lambda \mu_{\hat{G}[{\cal L}]}(d \lambda) =
1=&
\int_{G}\langle \phi |f(\hat{g}) T f(\hat{g}^{-1}) | \phi \rangle
\mu_G(d\hat{g}) 
=
\sum_{k} 
\int_{G}
|
{\cal F}_{{\cal L}}^{-1}[x_k](\hat{g})|^2
\mu_G(d\hat{g})\\
=&
\int_{\Lambda }
\Tr  
\phi_\lambda^{\dagger}
\phi_\lambda 
\sum_{k} 
\eta_{k,\lambda}^{\dagger}\eta_{k,\lambda}
\mu_{\hat{G}[{\cal L}]}(d \lambda) 
\end{align*}
holds for any $\phi$ with $\|\phi\|=1$.
We obtain
$\sum_{k} 
\eta_{k,\lambda}^{\dagger}\eta_{k,\lambda}
=I_\lambda$, which is the identity operator on 
${\cal U}_{\lambda}$.
Thus,
\begin{align*}
\Tr \sum_{k} |x_k\rangle \langle x_k|
=
\int_{\Lambda }
\Tr  
\phi_\lambda^{\dagger}
\phi_\lambda 
\mu_{\hat{G}[{\cal L}]}(d \lambda) 
=1,
\end{align*}
i.e., 
$\sum_{k} |x_k\rangle \langle x_k|$ is a density operator.
Hence, we obtain
\begin{align*}
 {\cal D}_{R}(|\phi \rangle \langle \phi|, M_T)
=
{\cal D}_{R}(\sum_k |x_k\rangle\langle x_k|,
M_{|{\cal I}\rangle \langle {\cal I}|}) 
=
\sum_k 
\|x_k \|^2
{\cal D}_{R}( \frac{1}{\|x_k \|^2}|x_k\rangle\langle x_k|,
M_{|{\cal I}\rangle \langle {\cal I}|}) .
\end{align*}
Similarly, for the pure state $|\phi_i \rangle\langle \phi_i |$
and the covariant POVM $M_{T_i}$, 
we choose $\eta_{k,\lambda,i}$ and $|x_{k,i}\rangle$ 
as $T_i= \sum_k |\eta_{k,i} \rangle \langle \eta_{k,i}|$,
$|\eta_{k,i}\rangle = \oplus_{\lambda\in \Lambda}| \eta_{k,\lambda,i}\rangle \rangle$,
$|\phi_i \rangle = \oplus_{\lambda\in \Lambda}| \phi_{\lambda,i}\rangle\rangle $
and
$|x_{k,i}\rangle =\oplus_{\lambda\in \Lambda}|\phi_{\lambda,i} \eta_{k,\lambda,i}^\dagger\rangle\rangle $,
which implies that
\begin{align*}
&\sum_i p_i {\cal D}_{R}(|\phi_i \rangle \langle \phi_i|, M_{T_i})
=
\sum_i p_i {\cal D}_{R}(\sum_k |x_{k,i}\rangle\langle x_{k,i}|,
M_{|{\cal I}\rangle \langle {\cal I}|}) \\
=&
\sum_{k,i} 
p_i \|x_{k,i} \|^2
{\cal D}_{R}( \frac{1}{\|x_{k,i} \|^2}|x_{k,i}\rangle\langle x_{k,i}|,
M_{|{\cal I}\rangle \langle {\cal I}|}) .
\end{align*}
Hence, there exist $k$ and $i$ such that
\begin{align*}
\sum_i p_i {\cal D}_{R}(|\phi_i \rangle \langle \phi_i|, M_{T_i})
\ge 
{\cal D}_{R}( \frac{1}{\|x_{k,i} \|^2}|x_{k,i}\rangle\langle x_{k,i} |,
M_{|{\cal I}\rangle \langle {\cal I}|}) .
\end{align*}
Since the rank of 
$\phi_{\lambda,i} \eta_{k,\lambda,i}^\dagger$
is not greater than
$\dim {\cal V}_{\lambda}$,
$ |x_{k,i}\rangle$ belongs to ${\cal K}_{{\cal H}}$.
Hence, we have
the inequality 
\begin{align*}
\min_{|\phi \rangle \in {\cal K}_{{\cal H},n}} 
{\cal D}_{R}(|\phi \rangle) 
\le 
%\min_{\rho \in {\cal S}({\cal H})} 
%\min_{M \in {\cal M}(G)} {\cal D}_{R}(\rho,M) 
%=
\min_{ \{p_i\} }
\min_{|\phi_i\rangle \in {\cal H}} 
\min_{M_i \in {\cal M}_{\cov}(G)} 
\sum_i p_i {\cal D}_{R}(|\phi_i\rangle \langle \phi_i| ,M_i) ,
\end{align*}
which is opposite to (\ref{4-25-1}).
\end{proofof}

\begin{proofof}{Theorem \ref{t6-24-1}}
We have the relations
\begin{align}
& 
\kappa(E)
\ge 
\min_{\rho \in {\cal S}({\cal H})} 
\min_{M \in {\cal M}_{\cov}(G)} 
\{{\cal D}_R(\rho,M)| 
\Tr H \rho \le E
\}
\nonumber \\
%\min_{ \{p_i\} } \min_{\rho_i \in {\cal S}({\cal H})} 
%\min_{M_i \in {\cal M}(G)} 
%\{\sum_i p_i {\cal D}_{R}(\rho_i,M_i) | (\ref{6-20-3b})\hbox{holds.}\}
%\nonumber  \\
%=&
\ge & \min_{ \{p_i\} }
\min_{\rho_i \in {\cal S}({\cal H})} 
\min_{M_i \in {\cal M}_{\cov}(G)} 
\{\sum_i p_i {\cal D}_{R}(\rho_i,M_i) | 
\sum_i p_i \Tr H \rho_i \le E
\} \nonumber \\
= & \min_{ \{p_i\} }
\min_{\phi_i \in {\cal H}} 
\min_{M_i \in {\cal M}_{\cov}(G)} 
\{\sum_i p_i {\cal D}_{R}(|\phi_i \rangle \langle \phi_i|,M_i) | 
\sum_i p_i \langle \phi_i| H |\phi_i \rangle  \le E
\} ,\Label{5-8-1}
\end{align}
where the first inequality can be shown by (\ref{4-26-1}).
Other relations in \eqref{5-8-1} are trivial.
In fact, the relation (\ref{4-26-1}) yields that
\begin{align}
& \min_{ \{p_i\} }
\min_{\phi_i \in {\cal H}} 
\min_{M_i \in {\cal M}_{\cov}(G)} 
\{
\sum_i p_i {\cal D}_{R}(|\phi_i \rangle \langle \phi_i|,M_i) | 
\sum_i p_i \langle \phi_i| H |\phi_i \rangle  \le E
\} \nonumber \\
\le &
\min_{ \{p_i,E_i\} }
\{\sum_i p_i \kappa(E_i)| \sum_i p_i E_i=E \}.
\Label{4-25-2b}
\end{align}
Hence, it is enough to show the inequality opposite to (\ref{4-25-2b}).
Similar to Proof of Theorem \ref{th3}, 
for pure states $|\phi_i \rangle\langle \phi_i |$
and covariant POVM $M_{T_i}$, 
we choose $\eta_{k,\lambda,i}$ and 
$|x_{k,i}\rangle$. %satisfying (\ref{}).
Then, 
\begin{align}
&\sum_i p_i 
\langle \phi_i | H |\phi_i \rangle
=
\sum_i p_i 
\int
\langle \phi_{\lambda,i} | H_\lambda |\phi_{\lambda,i} \rangle
\mu_{\hat{G}[{\cal L}]}(d \lambda) \nonumber \\
=&
\sum_i p_i 
\int
\Tr \phi_{\lambda,i}^\dagger H_\lambda \phi_{\lambda,i}
\sum_k \eta_{k,\lambda,i}^\dagger  \eta_{k,\lambda,i}
\mu_{\hat{G}[{\cal L}]}(d \lambda) \nonumber \\
=&
\sum_k \sum_i p_i 
\int
\Tr \eta_{k,\lambda,i}
\phi_{\lambda,i}^\dagger H_\lambda \phi_{\lambda,i}
 \eta_{k,\lambda,i}^\dagger  
\mu_{\hat{G}[{\cal L}]}(d \lambda) 
=
\sum_k \sum_i p_i 
\langle x_{k,i} | H |x_{k,i} \rangle.
\end{align}
Since
\begin{align*}
\sum_i p_i {\cal D}_{R}(|\phi_i \rangle \langle \phi_i|,M_i) 
=
\sum_{i,k} \|x_{k,i}\|^2 p_i 
{\cal D}_{R}(\frac{1}{\|x_{k,i}\|^2}|x_{k,i} \rangle \langle x_{k,i}|,
M_{|{\cal I}\rangle \langle {\cal I}|}) ,
\end{align*}
we obtain the inequality opposite to (\ref{4-25-2b}).

Further, when $\kappa(E)$ is convex
$\min_{ \{p_i,E_i\} }
\{\sum_i p_i \kappa(E_i)| \sum_i p_i E_i=E \}
=\kappa(E)$, which implies (\ref{6-20-5-a}).
\end{proofof}

\begin{proofof}{Lemma \ref{L4-25-2}}
It is enough to show that
\begin{align}
p{\cal D}_R (|\phi_1\rangle)
+(1-p){\cal D}_R (|\phi_2\rangle)
\ge 
\min_{|\phi\rangle \in {\cal K}_{{\cal H},n}} 
\{ {\cal D}_R(|\phi\rangle)| \langle \phi | H |\phi\rangle \le E\}
\Label{4-25-10}
\end{align}
when 
$p\in [0,1]$ and
$p \langle \phi_1 |H | \phi_1 \rangle 
+(1-p) \langle \phi_2 |H | \phi_2 \rangle =E$.
The map $\rho\mapsto {\cal D}_R (\rho, M_{|{\cal I}\rangle \langle {\cal I}|})$
is affine.
Thus, there exists a self-adjoint map $Y$ such that
$\Tr \rho Y={\cal D}_R (\rho, M_{|{\cal I}\rangle \langle {\cal I}|})$.
We apply Lemma \ref{L4-25} to the two-dimensional subspace spanned by 
$|\phi_1\rangle$ and $|\phi_2\rangle$.
Then, there exists a vector $|\phi\rangle$ given as a superposition of
$|\phi_1\rangle$ and $|\phi_2\rangle$ such that
\begin{align*}
p{\cal D}_R (|\phi_1\rangle)
+(1-p){\cal D}_R (|\phi_2\rangle)
\ge {\cal D}_R (|\phi\rangle) 
\hbox{\rm~and~}
\langle \phi | H |\phi\rangle = E.
\end{align*}
Thanks to the condition of Lemma \ref{L4-25-2},
$|\phi\rangle$ belongs to ${\cal K}_{{\cal H},n}$.
Hence, we obtain (\ref{4-25-10}).
\end{proofof}

\begin{remark}
Here, we should remark that the condition of Lemma \ref{L4-25-2} is crucial for the above proof.
If the condition does not hold,
we cannot say that
the superposition $|\phi\rangle$ of $|\phi_1\rangle$ and $|\phi_2\rangle$
belongs to ${\cal K}_{{\cal H}}$
because 
${\cal K}_{{\cal H}}$ is not a linear space.
\end{remark}

\begin{proofof}{Lemma \ref{L5-10-1}}
Due to the concavity of $\gamma(s)$, there uniquely exists $s_E$ 
$\gamma'(s_E)=E$.

Since $\kappa(E)$ is convex,
\begin{align}
\gamma(-\kappa'(E) )= \kappa(E)- E\kappa'(E).
\Label{5-10-1}
\end{align}
Taking the derivative with respect to $E$, we have
$-\kappa''(E)\gamma'(-\kappa'(E))
= \kappa'(E)- \kappa'(E) - E \kappa''(E)
=- E \kappa''(E)$.
That is, $\gamma'(-\kappa'(E))=E$, which implies $s_E=-\kappa'(E)$.
Hence, substituting $-\kappa'(E)=s_E$ into \eqref{5-10-1}, 
we have the first equation in \eqref{5-10-2}.
Since $\gamma$ is concave, we have 
the second equation in \eqref{5-10-2}.

Since $\kappa(E)$ is monotone decreasing and convex,
$s_E$ is 
positive and monotone decreasing with respect to $E$.
\end{proofof}

\begin{proofof}{Lemma \ref{L4-21}}
For any $|\phi\rangle = \oplus_{\lambda \in \Lambda({\cal H})} 
|\phi_\lambda \rangle \rangle
\in {\cal H}$ with $\|\phi\|=1$, 
we have
\begin{align}
& \| \int_G f(g)|\phi \rangle \langle \phi| f(g)^\dagger \mu_G(d g) \|
\ge
\langle \phi |\int_G f(g)|\phi \rangle \langle \phi| f(g)^\dagger \mu_G(d g) |
\phi \rangle \nonumber \\
=&
 \int_G |\langle \phi| f(g)|\phi \rangle |^2 \mu_G(d g) 
=
\int_{\Lambda({\cal H})}
|\langle\langle \phi_\lambda| \phi_\lambda \rangle\rangle |^2
\mu_{\hat{G}[{\cal L}]}(d \lambda) 
\nonumber \\
\ge &
\frac{(\int_{\Lambda({\cal H})}
\langle \langle \phi_\lambda| \phi_\lambda \rangle \rangle
\mu_{\hat{G}[{\cal L}]}(d \lambda))^2}
{
\int_{\Lambda({\cal H})}
1\mu_{\hat{G}[{\cal L}]}(d \lambda)} 
=
\frac{\|\phi\|^4}
{\mu_{\hat{G}[{\cal L}]}(\Lambda({\cal H}) )}
=
\frac{1}{\mu_{\hat{G}[{\cal L}]}(\Lambda({\cal H}) )},
\Label{4-26-4}
\end{align}
where the inequality in (\ref{4-26-4}) follows from Schwarz inequality.
This inequality implies that
the norm $\| \int_G f(g)|\phi \rangle \langle \phi| f(g)^\dagger \mu_G(d g) \|$ is infinity 
when the measure $\mu_{\hat{G}[{\cal L}]}(\Lambda({\cal H}) )$
is zero.
This fact implies that no covariant measure exists
when the measure $\mu_{\hat{G}[{\cal L}]}(\Lambda({\cal H}) )$
is zero.
\end{proofof}

%\section{Commutative group}

\section{Finite group}\Label{s6}
As a typical case, we treat finite groups.
This section explains how we recover the 
minimum error formula for finite groups
by \cite{Chi4,Chi5,AH,TH}
from our general result, Theorem \ref{th3}.

It is natural to treat the case
\begin{align*}
R(g,\hat{g})=
\left \{
\begin{array}{ll}
0 & \hbox{ if } g= \hat{g} \\
1 & \hbox{ if } g\neq \hat{g} .
\end{array}
\right.
\end{align*}
In this case,
since the invariant probability measure is $\frac{1}{|G|}$,
\begin{align*}
{\cal D}_{R}(|\phi \rangle \langle \phi|,M_{|{\cal I}\rangle \langle {\cal I}|}) 
=
1- \frac{|{\cal F}_{{\cal L}}^{-1}[\phi](e)|^2}{|G|}.
\end{align*}
Hence,
it is sufficient to calculate 
$|{\cal F}_{{\cal L}}^{-1}[\phi](e)|^2$.
Since ${\cal V}_{\lambda}$ is a subspace of
${\cal U}_{\lambda}^*$,
${\cal V}_{\lambda}^*$ can be regarded as subspace 
of ${\cal U}_{\lambda}$.
Now, we denote the projection to the subspace 
by $P({\cal V}_{\lambda}^* )$.
Hence, for any input state $|\phi \rangle = 
\oplus_{\lambda \in S }|\phi_\lambda\rangle$, we have
\begin{align*}
&|{\cal F}_{{\cal L}}^{-1}[\phi](e)|^2 
=|\sum_{\lambda \in S} {d_\lambda}\Tr \phi_\lambda |^2 
=|\sum_{\lambda \in S} {d_\lambda}\Tr \phi_\lambda
P({\cal V}_{\lambda}^* ) |^2 \\
\le &
(\sum_{\lambda \in S} {d_\lambda}\Tr 
P({\cal V}_{\lambda}^* )^2 )
(\sum_{\lambda \in S} d_\lambda \Tr \phi_\lambda^\dagger \phi_\lambda) 
= 
\sum_{\lambda \in S} d_\lambda
\dim {\cal V}_{\lambda} .
\end{align*}
Further, 
the equality holds
when $
\phi_\lambda=
\frac{1}{\sqrt{\sum_{\lambda \in S} d_\lambda \dim {\cal V}_{\lambda}}}
%\sqrt{d_\lambda} 
P({\cal V}_{\lambda}^* )$.
Thus,
we can recover the existing result\cite{Chi4,Chi5,AH,TH}
\begin{align*}
&\min_{\phi}
{\cal D}_{R}(|\phi \rangle \langle \phi|,M_{|{\cal I}\rangle \langle {\cal I}|}) 
=
1-
\frac{\sum_{\lambda \in S} d_\lambda \dim {\cal V}_{\lambda} }{|G|}.
\end{align*}

\section{Characterization by irreducible characters}\Label{s7}
When $G$ is a compact group,
Chiribella et al \cite{Chi2} showed that
a general formula for the minimum error by using irreducible characters.
Chiribella \cite{Chi3} extended the result to the case with projective representation.
This section explains how we recover the result 
from our general result, Theorem \ref{th3}.

When the error function $R$ satisfies that
the right invariance 
$R(g,\hat{g})=R(g g',\hat{g}g')$ for $g,g\hat{g},g' \in G$ as well as the left invariance,
we have
$R(g,\hat{g})=R(e,g^{-1}\hat{g})=R(g'e {g'}^{-1},g' g^{-1}\hat{g}{g'}^{-1} )=
R(e,g' g^{-1}\hat{g}{g'}^{-1} )$.
Hence, the function $R(g,\hat{g})$ can be written as
$R(g,\hat{g})=\sum_{\lambda \in \hat{G}}
\tilde{a}_\lambda \chi_{\lambda}(g^{-1}\hat{g})$
with constants $\tilde{a}_\lambda$, where
$\chi_\lambda$ is the irreducible character of the representation $f_\lambda$.
For a factor system ${\cal L}$,
we denote the factor system composing of the complex conjugate of ${\cal L}$
by $-{\cal L}$.
Then, for a projective representation $f_{\lambda}$ with $\lambda \in \hat{G}[{\cal L}]$,
the complex conjugate projective representation is denoted as $f_{\lambda^*}$
and its factor system is $-{\cal L}$.
That is, 
$\lambda^* \in \hat{G}[-{\cal L}]$.
In particular, 
when $\lambda \in \hat{G}$, $\lambda \in \hat{G}$.

Now, we additionally assume 
\begin{align}
R(g,\hat{g})=
R(e, \hat{g} g^{-1})=
a_0-\sum_{\lambda \in \hat{G} \setminus \{0\}}
a_{\lambda} 
(\chi_{\lambda}(\hat{g} g^{-1} )+ \chi_{\lambda^*}(\hat{g} g^{-1} )),
\quad a_{\lambda}=a_{\lambda^*} \ge 0 \Label{6-26-14},
\end{align}
where we denote the trivial representation by $f_{\lambda^*}$ and $f_0$.

This problem is equivalent with the maximization of the merit function
\begin{align}
\tilde{R}(g,\hat{g})=
\sum_{\lambda \in \hat{G} \setminus \{0\}}
a_{\lambda} 
(\chi_{\lambda}(\hat{g} g^{-1} )+ \chi_{\lambda^*}(\hat{g} g^{-1} )),
\quad a_{\lambda}=a_{\lambda^*} \ge 0 \Label{6-26-14-a}.
\end{align}

For example,
in the case of $G=\SU(d)$,
as a merit function,
we often adopt the gate fidelity $\frac{1}{d^2}|\Tr \hat{g} g^{-1}|^2$.
In the case of $d=2$,
the set $\hat{G}$
can be identified with the set $\{\frac{n}{2}| n \in \bZ, n \ge 0\}$
by identifying the irreducible representation space by the maximal weight of the representation of 
$\left(
\begin{array}{cc}
1/2 & 0 \\
0 & -1/2
\end{array}
\right) \in \su(2)$.
Then, 
the gate fidelity $\frac{1}{4}|\Tr \hat{g} g^{-1}|^2$ is calculated to $1+\chi_{1}(\hat{g} g^{-1})$.
When $\hat{g}=g$, the gate fidelity is 1.
Hence, 
we often use the quantity $4-|\Tr \hat{g} g^{-1}|^2
=3-\chi_{1}(\hat{g} g^{-1})$
as an error function.

In order to deal with the compact case, we define 
the coefficient $C_{{\lambda''},\lambda}^{{\lambda'}}$ 
for $\lambda\in \hat{G}[{\cal L}]$, $\lambda''\in \hat{G}[-{\cal L}]$, and $\lambda' \in \hat{G}$ 
as follows.
\begin{align}
{\cal U}_{\lambda}\otimes 
{\cal U}_{\lambda''}
=
\bigoplus_{\lambda' \in \hat{G}}
{\cal U}_{\lambda'}
\otimes
\complex^{C_{{\lambda''},\lambda}^{{\lambda'}}} .
\end{align}
That is,
the integer $C_{{\lambda''},\lambda}^{{\lambda'}}$ is the multiplicity 
of the irreducible representation space ${\cal U}_{\lambda'}$
in the tensor product space ${\cal U}_{\lambda}\otimes {\cal U}_{\lambda''}$.

In this case, any input pure state $|\phi\rangle\in L^2_n(\hat{G}[{\cal L}])$
has a decomposition
\begin{align}
|\phi\rangle= \bigoplus_{\lambda\in \hat{G}[{\cal L}]} c_\lambda |\Phi_\lambda\rangle\rangle
\Label{e4-22}
\end{align}
with the conditions
$c_\lambda\ge 0 $
and
$\Tr \Phi_\lambda^\dagger \Phi_\lambda=1$.
The normalized condition is given as
$\sum_{\lambda\in \hat{G}[{\cal L}]}
d_\lambda c_\lambda^2=1$.
Then, we obtain the following lemma.

\begin{lemma}\Label{11-16-2}
For any input pure state $|\phi\rangle\in L^2_n(\hat{G}[{\cal L}])$,
choosing an error function $R$ satisfying (\ref{6-26-14}),
under the decomposition (\ref{e4-22}),
we obtain
\begin{align}
{\cal D}_R(|\phi\rangle)
\ge &
-\sum_{\lambda,\lambda'\in \hat{G}[{\cal L}]}
\sqrt{d_\lambda d_{\lambda'}}
c_\lambda c_{\lambda'}
\sum_{\lambda''\in \hat{G}}
a_{\lambda''}
(C_{\lambda,{\lambda'}^*}^{{\lambda''}^*}
+C_{\lambda,{\lambda'}^*}^{\lambda''}) 
 \Label{11-16-1}.
\end{align}
The equality holds when
$|\phi\rangle=\oplus_{\lambda\in \hat{G}[{\cal L}]}  c_\lambda |\Psi_\lambda\rangle\rangle $,
where
$\Psi_{\lambda}:=
\frac{1}{\sqrt{d_{\lambda}}}
I_{\lambda} $.
Further, 
In this case,
the relations
\begin{align}
\min_{M\in {\cal M}_{\cov}(G)}{\cal D}_R (|\phi\rangle, M)
&={\cal D}_R (|\phi\rangle, M_{|{\cal I}\rangle \langle {\cal I}|})
\Label{5-4-1}
\\
{\cal F}^{-1}_{{\cal L}}[\phi]
&=
\sum_{\lambda\in \hat{G}[{\cal L}]} \sqrt{d_\lambda} c_{\lambda}\chi_\lambda
\Label{5-4-2}
\end{align}
hold.
\end{lemma}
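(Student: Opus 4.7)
The plan is to prove the lemma by direct computation in the compact setting, using the discrete Plancherel inverse Fourier transform (\ref{e-4-22-3}) together with the orthogonality of characters on $G$.

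First, I would handle (\ref{5-4-1}). Theorem \ref{th3} applied to a single pure input state ensures that the minimum over $M \in {\cal M}_{\cov}(G)$ of ${\cal D}_R(|\phi\rangle, M)$ is attained at $M = M_{|{\cal I}\rangle\langle{\cal I}|}$, which gives (\ref{5-4-1}). By (\ref{th7}), this minimum equals
\begin{align*}
{\cal D}_R(|\phi\rangle) = \int_G R(e,\hat g) \, |{\cal F}_{{\cal L}}^{-1}[\phi](\hat g)|^2 \, \mu_G(d\hat g).
\end{align*}

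Next, for (\ref{5-4-2}), I would plug the ansatz $\Phi_\lambda = I_\lambda/\sqrt{d_\lambda}$ from the decomposition (\ref{e4-22}) into the discrete inverse Fourier transform (\ref{e-4-22-3}), giving
\begin{align*}
{\cal F}_{{\cal L}}^{-1}[\phi](g) = \sum_{\lambda} d_\lambda c_\lambda \cdot \frac{1}{\sqrt{d_\lambda}} \Tr f_\lambda(g)^\dagger = \sum_\lambda \sqrt{d_\lambda}\, c_\lambda \chi_\lambda(g),
\end{align*}
in the paper's character convention. Substituting this back, expanding $|{\cal F}_{{\cal L}}^{-1}[\phi]|^2$ as a double sum over $\lambda,\lambda'$, and inserting (\ref{6-26-14}) for $R(e,g)$, the task reduces to evaluating the triple-character integral via the Clebsch-Gordan expansion $\chi_\lambda\cdot\chi_{{\lambda'}^*}=\sum_\mu C_{\lambda,{\lambda'}^*}^\mu \chi_\mu$ combined with Schur orthogonality $\int_G \chi_\mu \overline{\chi_\nu}\,\mu_G(dg)=\delta_{\mu,\nu}$:
\begin{align*}
\int_G \chi_{\lambda''}(g)\chi_\lambda(g)\overline{\chi_{\lambda'}(g)}\,\mu_G(dg) = C_{\lambda,{\lambda'}^*}^{{\lambda''}^*},
\end{align*}
and symmetrically $C_{\lambda,{\lambda'}^*}^{\lambda''}$ for the $\chi_{{\lambda''}^*}$ term. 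Summing against $a_{\lambda''}$ would reproduce the RHS of (\ref{11-16-1}) as the equality case.

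For the lower-bound direction with arbitrary $\Phi_\lambda$ satisfying $\Tr \Phi_\lambda^\dagger \Phi_\lambda = 1$, I would show that $\Phi_\lambda = I_\lambda/\sqrt{d_\lambda}$ minimizes ${\cal D}_R(|\phi\rangle)$ among unit-norm $\Phi_\lambda$. The central point is that $R(e,\cdot)$ is a class function by (\ref{6-26-14}), with $a_{\lambda''}=a_{{\lambda''}^*}\ge 0$. Viewing $\Phi_\lambda$ as an element of $\mathrm{End}({\cal U}_\lambda)$ and decomposing it under the conjugation action $\Phi_\lambda \mapsto f_\lambda(h)\Phi_\lambda f_\lambda(h)^\dagger$, only its $G$-invariant component contributes after averaging against the class function $R(e,\cdot)$; by Schur's lemma this component is the scalar $(\Tr \Phi_\lambda/d_\lambda)\,I_\lambda$. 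Cauchy-Schwarz in the Hilbert-Schmidt inner product then yields $|\Tr \Phi_\lambda|^2 \le d_\lambda \,\Tr(\Phi_\lambda^\dagger \Phi_\lambda)=d_\lambda$, with equality iff $\Phi_\lambda \propto I_\lambda$; combined with the positivity of the $a_{\lambda''}$ this closes (\ref{11-16-1}).

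The main obstacle will be the last step: separating the quadratic expression $|{\cal F}_{{\cal L}}^{-1}[\phi]|^2$ into class-function and non-class-function pieces requires careful tracking of the cross-terms $\lambda\ne\lambda'$ under the conjugation average, since the average of a squared quantity is not the square of the average. Once this linearization is completed using the explicit Clebsch-Gordan form of the triple-character integrals computed in the equality step, the single Cauchy-Schwarz inequality described above finishes the argument.
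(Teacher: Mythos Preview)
Your treatment of (\ref{5-4-2}) and of the equality case of (\ref{11-16-1}) via Clebsch--Gordan and Schur orthogonality is fine and matches the paper. Two points, however, do not go through.

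First, (\ref{5-4-1}) does not follow from Theorem~\ref{th3}. That theorem is a \emph{joint} minimization over inputs and covariant measurements; it does not assert that $M_{|{\cal I}\rangle\langle{\cal I}|}$ is optimal for a \emph{fixed} $|\phi\rangle$. The paper proves (\ref{5-4-1}) only for the special $|\phi\rangle=\oplus_\lambda c_\lambda|\Psi_\lambda\rangle\rangle$, and it does so \emph{after} establishing (\ref{11-16-1}): one writes an arbitrary $M_T$ via $T=\sum_k|\eta_k\rangle\langle\eta_k|$, uses the identity ${\cal D}_R(|\phi\rangle\langle\phi|,M_T)=\sum_k\|x_k\|^2{\cal D}_R(|x_k\rangle/\|x_k\|)$ from the proof of Theorem~\ref{th3}, and then invokes (\ref{11-16-1}) on each summand.

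Second, and this is the real gap, your lower-bound argument for general $\Phi_\lambda$ does not close. Splitting $\Phi_\lambda=\alpha_\lambda I_\lambda/\sqrt{d_\lambda}+\Phi_\lambda^\perp$ with $\Tr\Phi_\lambda^\perp=0$ does kill the cross term between the class-function piece $F_0=\sum_\lambda\sqrt{d_\lambda}\,c_\lambda\alpha_\lambda\chi_\lambda$ and the remainder $F_1$ (by Schur's lemma, $\int_G K(g)\,\Tr f_\lambda(g)^\dagger\Phi_\lambda^\perp\,\mu_G(dg)=0$ for any class function $K$), and your Cauchy--Schwarz $|\Tr\Phi_\lambda|^2\le d_\lambda$ is precisely $|\alpha_\lambda|\le 1$, which bounds $\int\tilde R\,|F_0|^2$. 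But it says nothing about $\int\tilde R\,|F_1|^2$: the traceless pieces $\Phi_\lambda^\perp$ still contribute a nonvanishing bilinear term through the Clebsch--Gordan projectors, and there is no reason this term is $\le 0$. The ``linearization'' you invoke does not supply such a bound.

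The paper handles this differently. For each pair $(\lambda,\lambda')$ it forms the positive semidefinite operator $\Xi_{\lambda,{\lambda'}^*}:=\int_G(-R(e,g))\,f_\lambda(g)\otimes f_{{\lambda'}^*}(g)\,\mu_G(dg)$ and applies Cauchy--Schwarz in the weighted inner product $\langle A,B\rangle:=\Tr\Xi_{\lambda,{\lambda'}^*}A^\dagger B$ with $A=I_\lambda\otimes\sqrt{d_{\lambda'}}\Phi_{\lambda'}$ and $B=\sqrt{d_\lambda}\Phi_\lambda\otimes I_{\lambda'}$. The two resulting ``diagonal'' quantities $\Tr\Xi_{\lambda,{\lambda'}^*}(d_\lambda\Phi_\lambda^\dagger\Phi_\lambda\otimes I)$ and $\Tr\Xi_{\lambda,{\lambda'}^*}(I\otimes d_{\lambda'}\Phi_{\lambda'}^\dagger\Phi_{\lambda'})$ are then reduced, using the $G$-invariance of $\Xi_{\lambda,{\lambda'}^*}$ and $\Tr\Phi_\lambda^\dagger\Phi_\lambda=1$, to $\Tr\Xi_{\lambda,{\lambda'}^*}(I\otimes I)=\sum_{\lambda''}a_{\lambda''}(C_{\lambda,{\lambda'}^*}^{{\lambda''}^*}+C_{\lambda,{\lambda'}^*}^{\lambda''})$. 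This bounds the full $\Phi_\lambda$, not only its trace, termwise in $(\lambda,\lambda')$, which is exactly the step your outline is missing.
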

Therefore, our optimization problem can be reduced to 
the optimization concerning the choice of $c=(c_\lambda)_{\lambda \in \Lambda}$
when our representation space is 
${\cal H}_\Lambda$.
%$\oplus_{\lambda \in S} {\cal U}_{\lambda}\otimes \complex^{d_\lambda}$.

That is, combining Theorem \ref{th3},
we can recover the following known result\cite{Chi2,Chi3}.
Under the same assumption as Lemma \ref{11-16-2},
%when our representation space is 
%$\oplus_{\lambda \in S} {\cal U}_{\lambda}\otimes \complex^{d_\lambda}$,
we have
\begin{align*}
& \min_{\rho \in {\cal S}({\cal H}_\Lambda),M \in {\cal M}(G)} 
{\cal D}_{R,\mu_G}(\rho,M)
=
\min_{\rho\in {\cal S}({\cal H}_\Lambda),M \in {\cal M}(G)} 
{\cal D}_{R}(\rho,M) \nonumber \\
= &
\min_{ c\in V_\Lambda }
-\sum_{\lambda,\lambda'\in \Lambda}
c_\lambda c_{\lambda'}\sqrt{d_\lambda d_{\lambda'}}
\sum_{\lambda''\in \hat{G}}
a_{\lambda''}
(C_{\lambda,{\lambda'}^*}^{{\lambda''}^*}
+C_{\lambda,{\lambda'}^*}^{\lambda''}) ,
\end{align*}
where $V_\Lambda$ is the set of real vectors $c=(c_\lambda)_{\lambda \in \Lambda}$
satisfying that
$c_\lambda \ge 0$ and $\sum_{\lambda \in \Lambda}
d_\lambda c_\lambda^2=1$.

%The following theorem is known \cite{}.
%Indeed, this theorem also can be shown by using Theorem \ref{th3} as follows.

\begin{proof}
The relations
\begin{align}
& {\cal D}_R(\phi)
=\int_G R(e,\hat{g})
|{\cal F}^{-1}[\phi](\hat{g})|^2
\mu_G(d\hat{g})\nonumber \\
=& \int_G R(e,\hat{g})
|
\sum_{\lambda \in \hat{G}[{\cal L}]}
\Tr f_\lambda (\hat{g}) c_\lambda {d_\lambda} \Phi_\lambda 
|^2
\mu_G(d\hat{g})\nonumber \\
=& \int_G R(e,\hat{g})
\sum_{\lambda,\lambda' \in \hat{G}[{\cal L}]}
c_\lambda c_{\lambda'}
\Tr f_{\lambda} (\hat{g}) \otimes f_{{\lambda'}^*}(\hat{g})
{d_\lambda} \Phi_\lambda \otimes 
{d_{\lambda'}} \Phi_{\lambda'}^\dagger  
\mu_G(d\hat{g})\nonumber\\
=& 
\sum_{\lambda,\lambda' \in \hat{G}[{\cal L}]}
c_\lambda c_{\lambda'}\sqrt{d_\lambda d_{\lambda'}}
\Tr
[\int_G R(e,\hat{g})
f_{\lambda} (\hat{g}) \otimes f_{{\lambda'}^*}(\hat{g})
\mu_G(d\hat{g})]
\sqrt{d_\lambda} \Phi_\lambda \otimes \sqrt{d_{\lambda'}} \Phi_{\lambda'}^\dagger
\nonumber 
\\
=& 
\sum_{\lambda,\lambda' \in \hat{G}[{\cal L}]}
c_\lambda c_{\lambda'}\sqrt{d_\lambda d_{\lambda'}}
\Tr
\Xi_{\lambda,{\lambda'}^*}
\sqrt{d_\lambda} \Phi_\lambda \otimes \sqrt{d_{\lambda'}} \Phi_{\lambda'}^\dagger
  \Label{6-10-6}
\end{align}
hold, where
we use the matrix
$\Xi_{\lambda,{\lambda'}^*}
:=\int_{G}
-R(e,\hat{g})
f_{\lambda} (\hat{g}) \otimes 
f_{{\lambda'}^*} (\hat{g})
\mu_{G}(d \hat{g})$.
Using the formula
$\int_G \chi_\lambda(g) f_{\lambda'} (g) \mu_G(dg)
=
\int_G \overline{\chi_{\lambda^*}(g)} f_{\lambda'} (g^{-1}) \mu_G(dg)=$%\par\noindent
$
\frac{\delta_{{\lambda}^*,\lambda'}}{d_{\lambda^*}}I_{{\lambda}^*}$,
we obtain
\begin{align*}
& \Xi_{\lambda,{\lambda'}^*}
=
\int_{G}
\sum_{\lambda'' \in \hat{G}}
a_{\lambda''} 
(\chi_{\lambda''}(\hat{g}  )+ \chi_{{\lambda''}^*}(\hat{g} ))
f_{\lambda} (\hat{g}) 
\otimes f_{{\lambda'}^*} (\hat{g})
\mu_{G}(d \hat{g})\\
=&
\sum_{\lambda'' \in \hat{G}}
a_{\lambda''} 
\int_{G}
(\chi_{\lambda''}(\hat{g})+ \chi_{{\lambda''}^*}(\hat{g}))
f_{\lambda} (\hat{g}) \otimes f_{{\lambda'}^*} (\hat{g})
\mu_{G}(d \hat{g})\\
=&
\sum_{\lambda'' \in \hat{G}}
\frac{a_{\lambda''} }{d_{\lambda''}}
(C_{\lambda,{\lambda'}^*}^{{\lambda''}^*}I_{{\lambda''}^*}
+C_{\lambda,{\lambda'}^*}^{\lambda''}I_{\lambda''}
)
\ge 0.
\end{align*}
Hence, applying Schwarz inequality with respect to the inner product 
$\langle A,B \rangle:=
\Tr 
\Xi_{\lambda,{\lambda'}^*}
A^{\dagger} B$,
to the case
$A:= I_\lambda \otimes \sqrt{d_{\lambda'}}  \Phi_{\lambda'} $,
$B:= \sqrt{d_\lambda} \Phi_\lambda \otimes I_{\lambda'}$,
we obtain
\begin{align*}
& 
%\int_G
\Tr \Xi_{\lambda,{\lambda'}^*} 
\sqrt{d_\lambda} 
%f_{\lambda} (g)
\Phi_\lambda 
%f_{\lambda} (g^{-1}) 
\otimes 
\sqrt{d_{{\lambda'}^*}} 
%f_{\lambda'} (g) 
\Phi_{\lambda'}^\dagger  
%f_{\lambda'} (g^{-1})
%\mu_G(dg) 
\\
\le &
\sqrt{
%\int_G
\Tr 
\Xi_{\lambda,{\lambda'}^*} 
d_\lambda 
%f_{\lambda} (g) 
\Phi_\lambda^\dagger \Phi_\lambda 
%f_{\lambda} (g^{-1})
\otimes 
I_{{\lambda'}^*}
%\mu_G(dg)
} 
 \sqrt{
%\int_G
\Tr 
\Xi_{\lambda,{\lambda'}^*} 
I_{\lambda}
\otimes 
d_{{\lambda'}^*} 
%f_{\lambda'} (g) 
\Phi_{\lambda'}^\dagger \Phi_{\lambda'} 
%f_{\lambda'} (g^{-1})
%\mu_G(dg)
} .
\end{align*}
Since $\Xi_{\lambda^*,{\lambda'}} $ is invariant with respect to the action of $G$,
\begin{align*}
& \Tr 
\Xi_{\lambda,{\lambda'}^*} 
d_\lambda 
%f_{\lambda} (g) 
\Phi_\lambda^\dagger \Phi_\lambda 
%f_{\lambda} (g^{-1})
\otimes 
I_{\lambda'} \\
= &
\Tr 
\Xi_{\lambda,{\lambda'}^*} 
\int_G (f_{\lambda}(g)\otimes
f_{{\lambda'}^*}(g))
d_\lambda 
%f_{\lambda} (g) 
\Phi_\lambda^\dagger \Phi_\lambda 
%f_{\lambda} (g^{-1})
\otimes 
I_{\lambda'}
(f_{\lambda}(g)\otimes
f_{{\lambda'}^*}(g))^{\dagger}
\mu_G(d g) \\
=&
\Tr 
\Xi_{\lambda,{\lambda'}^*} 
I_{\lambda}
\otimes 
I_{{\lambda'}^*}
=
\sum_{\lambda'' \in \hat{G}}
a_{\lambda''} 
(C_{\lambda,{\lambda'}^*}^{{\lambda''}^*}
+C_{\lambda,{\lambda'}^*}^{\lambda''}),
\end{align*}
where we used the condition $\Tr \Phi_\lambda^\dagger \Phi_\lambda=1$.
Similarly, we have
\begin{align*}
\Tr 
\Xi_{\lambda,{\lambda'}^*} 
I_{\lambda}
\otimes 
d_{{\lambda'}^*} 
%f_{\lambda'} (g) 
\Phi_{\lambda'}^\dagger \Phi_{\lambda'} 
=&
\sum_{\lambda'' \in \hat{G}}
a_{\lambda''} 
(C_{\lambda,{\lambda'}^*}^{{\lambda''}^*}
+C_{\lambda,{\lambda'}^*}^{\lambda''}),
\end{align*}
which implies
\begin{align*}
\Tr 
\Xi_{\lambda,{\lambda'}^*} 
d_\lambda 
%f_{\lambda} (g) 
\Phi_\lambda^\dagger \Phi_\lambda 
%f_{\lambda} (g^{-1})
\otimes 
I_{\lambda'} 
\le
\sum_{\lambda'' \in \hat{G}}
a_{\lambda''} 
(C_{\lambda,{\lambda'}^*}^{{\lambda''}^*}
+C_{\lambda,{\lambda'}^*}^{\lambda''}).
\end{align*}
Hence, combining the above relation with
%(\ref{6-10-5}) and 
(\ref{6-10-6}),
we obtain (\ref{11-16-1}).
Due to the equality condition for Schwarz inequality,
the equality in (\ref{11-16-1}) holds when
$|\phi\rangle = \oplus_{\lambda \in \hat{G}[{\cal L}]}c_\lambda |\Psi_\lambda\rangle\rangle $.

Next, we show (\ref{5-4-1}) when $|\phi\rangle=\oplus_{\lambda\in \hat{G}[{\cal L}]}  c_\lambda |\Psi_\lambda\rangle\rangle $.
Any covariant measurement $M$ can be written as $M_T$ by using an operator $T$.
We make a decomposition of the operator $T$ 
as $T =\sum_{k}|\eta_k\rangle \langle \eta_k|$.
Then, we use the notations 
$|\phi\rangle=\oplus_{\lambda \in \in \hat{G}[{\cal L}]}| \phi_\lambda \rangle\rangle$,
$|\eta_k\rangle=\oplus_{\lambda\in \in \hat{G}[{\cal L}]}|\eta_{k,\lambda} \rangle\rangle $,
$|x_k\rangle:=\oplus_{\lambda\in \in \hat{G}[{\cal L}]} |\phi_\lambda
\eta_{k,\lambda}^\dagger\rangle\rangle$.
Using (\ref{11-16-1}) and its equality condition, we have
\begin{align*}
& {\cal D}_R(|\phi\rangle\langle \phi|, M_T)
=
\sum_{k}
\|x_k\|^2
{\cal D}_R(\frac{|x_k\rangle\langle x_k|}{\|x_k\|^2} , 
M_{|{\cal I}\rangle \langle {\cal I}|})
=
\sum_{k}
\|x_k\|^2
{\cal D}_R(\frac{|x_k\rangle}{\|x_k\|} ) \\
\ge &
{\cal D}_R(|\phi\rangle),
\end{align*}
which implies (\ref{5-4-1}).
In this case,
\begin{align*}
{\cal F}^{-1}[\phi](g)
=\sum_{\lambda \in \hat{G}[{\cal L}]}
\Tr f_\lambda (g) d_{\lambda} c_{\lambda} \Psi_\lambda 
=\sum_{\lambda \in \hat{G}[{\cal L}]}
\Tr f_\lambda (g) \sqrt{d_{\lambda}} c_{\lambda} 
=\sum_{\lambda \in \hat{G}[{\cal L}]}
\sqrt{d_{\lambda}} c_{\lambda} \chi_{\lambda}(g),
\end{align*}
which implies (\ref{5-4-2}).
\end{proof}

%\section{Examples}
\section{Real numbers $\real$}\Label{s8}
\subsection{Energy constraint}\Label{s8-1}
In this section, 
as a typical example of commutative group, we treat the real group $G=\real$.
In this case, $\hat{G}$ is also $\real$ and is called the weight space.
That is, since $L^2(\hat{G})=L^2(\real)$, the input state is given as a wave function $\phi(\lambda)$ on the space $L^2(\real)$.
Here, we choose the invariant measure $\mu_{\real}(d\hat{g})=\frac{1}{\sqrt{2\pi}} d\hat{g}$ on $\real$.
Then, 
$\mu_{\hat{\real}}(d\lambda)=\frac{1}{\sqrt{2\pi}} d \lambda$.
The input state $\phi$ satisfies
$\int_{\real} |\phi(\lambda)|^2 \frac{1}{\sqrt{2\pi}} d \lambda =1$.
The covariant POVM $M_{|{\cal I}\rangle \langle {\cal I}|}$ is 
the spectral decomposition of the position operator $Q$ on $L^2(\real)$.
When the true parameter is $0$, 
the estimate $\hat{g}\in \real$ obeys the distribution
$|{\cal F}^{-1}[\phi](\hat{g})|^2 \frac{1}{\sqrt{2\pi}} d\hat{g}$,
which is given by the inverse Fourier transform ${\cal F}^{-1}[\phi]$ of 
$\phi$.

We minimize the average of the square error $(\hat{g}-g)^2$,
which is calculated to 
$\int \hat{g}^2 |{\cal F}^{-1}[\phi](\hat{g})|^2 \frac{d \hat{g}}{\sqrt{2\pi}}$. 
In this setting, it is natural to restrict the average energy for input state $\phi$, i.e.,
we impose the constraint
\begin{align}
\int \lambda^2 |\phi(\lambda)|^2 \frac{1}{\sqrt{2\pi}} d \lambda \le E
\Label{3-13-1}
\end{align}
for a given constant $E$.

When the uncertainty of operator $X$ is defined
as $\Delta_\phi^2 X = 
\langle \phi | (X-\langle \phi | X | \phi \rangle )^2 | \phi \rangle$
and the momentum operator $P$ is defined as $P:= -i \frac{d}{d\lambda}$,
we have the uncertainty relation between $\Delta_\phi^2 Q$ and 
$\Delta_\phi^2 P$ as
\begin{align}
& \min_{|\phi \rangle \in L^2_{n}(\real) }
\left\{\left.
\langle \phi | P^2 | \phi \rangle 
\right|
\langle \phi | Q^2 | \phi \rangle \le E
\right\} 
=
 \min_{|\phi \rangle \in L^2_{n}(\real) }
\left\{\left.
\Delta_\phi^2 P
\right|
\Delta_\phi^2 Q
 \le E \right\} =\frac{1}{4E} \Label{5-12-1}.
\end{align}
Thus, the combination of \eqref{th7}, \eqref{5-12-1}, and Theorem \ref{t6-24-1} 
yields the following theorem.
\begin{theorem}\Label{T5-12-1}
The relations
\begin{align}
&\min_{\rho \in {\cal S}(L^2(\real))} 
\min_{M \in {\cal M}_{\cov}(\real)} 
\{{\cal D}_R(\rho,M)| \Tr \rho Q^2 \le E \}
\nonumber \\
%=& \min_{ \{p_i\} } \min_{\rho_i \in {\cal S}({\cal H})} 
%\min_{M_i \in {\cal M}(\real)} 
%\{\sum_i p_i {\cal D}_{R}(\rho_i,M_i) | (\ref{6-20-3b})\hbox{holds.}\}
% \\
=&
\min_{ \{p_i\} }
\min_{\rho_i \in {\cal S}(L^2(\real))} 
\min_{M_i \in {\cal M}_{\cov}(\real)} 
\{\sum_i p_i{\cal D}_R(\rho_i,M_i)| \sum_i p_i \Tr \rho_i Q^2 \le E \}
\nonumber \\
=&
\min_{|\phi \rangle \in L^2_n(\real) }
\left\{\left.
\int_{-\infty}^{\infty}
\hat{g}^2 |{\cal F}^{-1}[\phi](\hat{g})|^2 \frac{d \hat{g}}{\sqrt{2\pi}}
\right|
\int_{-\infty}^{\infty} \lambda^2 |\phi(\lambda)|^2 \frac{1}{\sqrt{2\pi}} d \lambda \le E
\right\}\nonumber \\
=&
\min_{|\phi \rangle \in L^2_n(\real) }
\left\{\left.
\langle \phi | P^2 | \phi \rangle 
\right|
\langle \phi | Q^2 | \phi \rangle \le E
\right\} 
=\frac{1}{4E} \Label{5-13-14}
\end{align}
hold.
%The minimum value is $\frac{1}{4E} $.
The minimum $\frac{1}{4E}$ is attained if and only if the input state $\phi(\lambda)$
is $\frac{1}{E^{1/4}}e^{-\frac{\lambda^2}{4 E}}$,
whose inverse Fourier transform is $(2 E)^{1/4} e^{-E \lambda^2}$.
\end{theorem}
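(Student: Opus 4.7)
The plan is to read this theorem as an instance of the general machinery developed in Sections \ref{s2}--\ref{s4}, specialized to the commutative locally compact group $G=\real$ with the regular representation on $L^2(\real)$, and then to identify the resulting optimization with a classical Heisenberg uncertainty problem on the real line. First I would fix the setup: all irreducible (projective) representations of $\real$ are one-dimensional characters $f_\lambda(g)=e^{i\lambda g}$ labeled by $\lambda\in\hat{\real}=\real$, and the Plancherel measure is $\frac{1}{\sqrt{2\pi}}d\lambda$. In the irreducible decomposition of $L^2(\real)$, each $\cU_\lambda$ is one-dimensional and each multiplicity space $\cV_\lambda$ is also one-dimensional, so the hypothesis $\dim\cV_\lambda\ge\dim\cU_\lambda$ of Lemma \ref{L4-25-2} holds for every $\lambda\in\Lambda(\cH)$. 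Consequently $\kappa(E)$ is convex, and Theorem \ref{t6-24-1} collapses the first two minima in \eqref{5-13-14} onto $\kappa(E)$, which by its very definition equals
\begin{align*}
\min_{|\phi\rangle\in L^2_n(\real)}\Big\{\cD_R(|\phi\rangle)\,\Big|\,\langle\phi|Q^2|\phi\rangle\le E\Big\},
\end{align*}
where I identify the Hamiltonian $H$ with $Q^2$ (i.e. $H_\lambda=\lambda^2$ acting on the one-dimensional $\cU_\lambda$).

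Next I would compute $\cD_R(|\phi\rangle)$ for the error $R(g,\hat g)=(g-\hat g)^2$ using \eqref{th7}. Because $R(e,\hat g)=\hat g^2$, this gives
\begin{align*}
\cD_R(|\phi\rangle)=\int_{-\infty}^{\infty}\hat g^2\,|\cF^{-1}[\phi](\hat g)|^2\,\frac{d\hat g}{\sqrt{2\pi}},
\end{align*}
establishing the third equality of the theorem. To pass to the expression $\langle\phi|P^2|\phi\rangle$ with $P=-i\,d/d\lambda$, I would invoke Plancherel (Proposition \ref{th4-24-1}) together with the standard intertwining identity $\cF^{-1}[P\phi](\hat g)=\hat g\,\cF^{-1}[\phi](\hat g)$ (one line of integration by parts in the weight variable). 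Applied to $P\phi$ this gives $\|P\phi\|^2=\int\hat g^2|\cF^{-1}[\phi](\hat g)|^2\frac{d\hat g}{\sqrt{2\pi}}$, which is the fourth equality. The constraint is rewritten by the same tautology $\langle\phi|Q^2|\phi\rangle=\int\lambda^2|\phi(\lambda)|^2\frac{d\lambda}{\sqrt{2\pi}}$.

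Finally I would quote or prove the Heisenberg uncertainty bound \eqref{5-12-1}: since $[Q,P]=i$, one has $\langle\phi|Q^2|\phi\rangle\,\langle\phi|P^2|\phi\rangle\ge\tfrac14$ for any normalized $\phi$ (with equality forced on a mean-zero state, which is the relevant case because minimizing $\langle P^2\rangle$ at fixed $\langle Q^2\rangle$ forces $\langle Q\rangle=\langle P\rangle=0$). Therefore $\min\langle\phi|P^2|\phi\rangle=\frac{1}{4E}$. To identify the extremizer I would plug $\phi(\lambda)=E^{-1/4}e^{-\lambda^2/(4E)}$ into the constraint and check $\langle\phi|Q^2|\phi\rangle=E$, and then compute its inverse Fourier transform by completing the square, obtaining $(2E)^{1/4}e^{-E\hat g^2}$. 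Uniqueness (up to phase) of the minimizer follows from the standard equality case of Heisenberg: the saturating states are Gaussians of the form $e^{-\alpha\lambda^2}$ with a unique $\alpha$ dictated by the constraint.

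The proof is essentially an assembly argument rather than a computation; the only places that require care are the verification that the commutative/multiplicity-one structure of $\real$ places us in the situation where Lemma \ref{L4-25-2} applies (so that Theorem \ref{t6-24-1} gives equality rather than merely inequality) and the clean translation, via Plancherel with the paper's $\frac{1}{\sqrt{2\pi}}$ normalization, between the Fourier-side integral $\int\hat g^2|\cF^{-1}[\phi]|^2\frac{d\hat g}{\sqrt{2\pi}}$ and the operator expectation $\langle\phi|P^2|\phi\rangle$. After those two points are in place, the rest is just the Heisenberg inequality together with its Gaussian equality case.
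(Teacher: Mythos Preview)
Your proposal is correct and follows essentially the same route as the paper: the paper states the theorem as a direct combination of \eqref{th7}, the uncertainty relation \eqref{5-12-1}, and Theorem \ref{t6-24-1}, which is exactly your assembly. Your write-up is in fact slightly more explicit than the paper's, since you spell out why Lemma \ref{L4-25-2} applies (one-dimensional $\cU_\lambda$ and $\cV_\lambda$) to give the convexity of $\kappa(E)$ needed for the equality case of Theorem \ref{t6-24-1}.
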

That is, in Schemes 1 and 2, the minimum average square error with the energy constraint (\ref{3-13-1})
is $\frac{1}{4E}$.

Now, we consider two systems ${\cal H}_i$ ($i=1,2$) equivalent with $L^2(\real)$ with the Hamiltonian $Q^2$.
We focus on the composite system ${\cal H}_1 \otimes {\cal H}_2$
with the Hamiltonian
$(Q \otimes I+ I \otimes Q)^2 =Q^2 \otimes I+ 2 Q \otimes Q +I \otimes Q^2$,
which has a strong interaction term $2 Q \otimes Q$.
In this case, 
the optimal estimation 
in the composite system with the energy $E_1+E_2$
can be realized by the following way.
Let the input state $|\phi_i\rangle$ be the optimal input state with the energy $E_i$.
In this case, 
the input state $
|\phi_1\otimes \phi_2\rangle=
|\phi_1\rangle \otimes |\phi_2\rangle$ has the energy $E_1+E_2$
because
\begin{align*}
&\langle \phi_1\otimes \phi_2|(Q \otimes I+ I \otimes Q)^2 |\phi_1\otimes \phi_2\rangle \\
=&
\langle \phi_1\otimes \phi_2|Q^2 \otimes I |\phi_1\otimes \phi_2\rangle
+2 \langle \phi_1\otimes \phi_2|Q \otimes Q |\phi_1\otimes \phi_2\rangle
+\langle \phi_1\otimes \phi_2|I \otimes Q^2 |\phi_1\otimes \phi_2\rangle \\
=&
\langle \phi_1|Q^2 |\phi_1\rangle
+2 \langle \phi_1|Q |\phi_1\rangle\langle \phi_2|Q |\phi_2\rangle
+\langle \phi_2|Q^2 |\phi_2\rangle \\
=& E_1+E_2.
\end{align*}
Note that $\langle \phi_i|Q |\phi_i\rangle=0$ for $i=1,2$
because
$\phi_i$ has the Gaussian form with the average $0$.
The state $|\phi_1\otimes \phi_2\rangle$
realizes the optimal estimator in the composite system ${\cal H}_1 \otimes {\cal H}_2$ with the energy $E_1+E_2$
by employing the following measurement.
First, we measure the position operator $Q_i$ in the respective system and denote the outcome by $X_i$.
When the unknown parameter to be estimated is $\theta$,
$X_i$ obeys the Gaussian distribution with the variance $\frac{1}{4E_i}$ and the average $\theta$.
The value $X=\frac{E_1 X_1+E_2 X_2}{E_1+E_2}$ obeys 
the Gaussian distribution with the variance $\frac{\frac{E_1}{4E_1}+ \frac{E_2}{4E_2}}{E_1+E_2}=
\frac{1}{4(E_1+E_2)}$ and the average $\theta$.
Hence, the value $X$ realizes the optimal estimate with the energy $E_1+E_2$. 
That is, 
we can realize the optimal estimator by the combination of the optimal estimators of the individual systems.

\subsection{Energy constraint and positivity constraint}
In the above setting, we assume that all irreducible representations are available while the energy constraint is imposed.
Next, we assume that only the irreducible representation with positive $\lambda >0$ is available.
That is, the real number $\lambda$ is restricted to $\real_+:=\{x \in \real| x \ge 0\}$.
This problem with the energy constraint can be solved in the following way.
\begin{theorem}\Label{T3-13-11}
The relations
\begin{align}
&\min_{\rho \in {\cal S}(L^2(\real_+))} 
\min_{M \in {\cal M}_{\cov}(\real)} 
\{{\cal D}_R(\rho,M)| \Tr \rho Q^2 \le E \}\nonumber \\
=&
\min_{ \{p_i\} }
\min_{\rho_i \in {\cal S}(L^2(\real_+))} 
\min_{M_i \in {\cal M}_{\cov}(\real)} 
\{\sum_i p_i{\cal D}_R(\rho_i,M_i)| \sum_i p_i \Tr \rho_i Q^2 \le E \}
\nonumber \\
=&
\min_{|\phi \rangle \in L^2_n(\real_+) }
\left\{\left.
\int_{-\infty}^{\infty}
\hat{g}^2 |{\cal F}^{-1}[\phi](\hat{g})|^2 \frac{d \hat{g}}{\sqrt{2\pi}}
\right|
\int_{0}^{\infty} \lambda^2 |\phi(\lambda)|^2 \frac{d \lambda }{\sqrt{2\pi}} 
\le E
\right\} \nonumber \\
=&
\min_{|\phi \rangle \in L^2_n(\real_+) }
\left\{\left.
\langle \phi | P^2 | \phi \rangle 
\right|
\langle \phi | Q^2 | \phi \rangle \le E
\right\} \nonumber \\
=& \min_{|\tilde{\phi} \rangle \in L^2_{\odd,n}(\real) }
\left\{\left.
\langle \tilde{\phi} | P^2 | \tilde{\phi} \rangle 
\right|
\langle \tilde{\phi} | Q^2 | \tilde{\phi} \rangle \le E
\right\} 
=\frac{9}{4E}
\label{5-13-16}
\end{align}
hold, where
a function $\phi \in L^2(\real_+)$ is regarded as an element on $L^2(\real)$ in the following sense.
\begin{align}
\phi(x):= 
\left\{
\begin{array}{ll}
\phi(x) & \hbox{ if } x \ge 0 \\
0 & \hbox{ if } x < 0 .
\end{array}
\right. \Label{3-13-10}
\end{align}
%The minimum value is $\frac{9}{4E^2}$.
The minimum $\frac{9}{4E^2}$ is attained if and only if the input state $\phi(\lambda)$
is $\sqrt{2}(\frac{3 }{E})^{3/4} \lambda e^{-\frac{3 \lambda^2}{4E}}
\in L^2_n(\real_+)$,
whose inverse Fourier transform is 
$i (\frac{4 E}{3})^{3/4} \lambda e^{-\frac{\hat{g}^2 E}{3}}$.
\end{theorem}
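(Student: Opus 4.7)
My plan is to proceed by the same scheme as the proof of Theorem~\ref{T5-12-1}, with the support restriction $\phi \in L^2(\real_+)$ replaced, via an odd-extension unitary, by a parity restriction on $L^2(\real)$. The reduction given by the first three equalities in \eqref{5-13-16} is immediate from Theorem~\ref{t6-24-1}: the irreducible projective representations of $\real$ are one-dimensional, so the hypothesis $\dim\cV_\lambda\ge\dim\cU_\lambda$ of Lemma~\ref{L4-25-2} is trivially satisfied, and $\kappa(E)$ is convex. The fourth equality is the translation of the Parseval identity together with the identification of $Q^2$ as multiplication by $\lambda^2$ and $P^2 = -d^2/d\lambda^2$ under ${\cal F}_{{\cal L}}$, exactly as in Theorem~\ref{T5-12-1}.

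The key new step is the equality
\[
\min_{\phi\in L^2_n(\real_+)}\!\bigl\{\langle\phi|P^2|\phi\rangle\,\big|\,\langle\phi|Q^2|\phi\rangle\le E\bigr\}
=\min_{\tilde\phi\in L^2_{\odd,n}(\real)}\!\bigl\{\langle\tilde\phi|P^2|\tilde\phi\rangle\,\big|\,\langle\tilde\phi|Q^2|\tilde\phi\rangle\le E\bigr\}.
\]
I will obtain this from the unitary $U\colon L^2(\real_+)\to L^2_{\odd}(\real)$ defined by $(U\phi)(\lambda):=\sgn(\lambda)\phi(|\lambda|)/\sqrt{2}$. Since $\lambda^2$ is even, $\langle U\phi|Q^2|U\phi\rangle=\langle\phi|Q^2|\phi\rangle$. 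For $P^2$, finiteness of $\langle\phi|P^2|\phi\rangle$ (with $\phi$ extended by zero as in~\eqref{3-13-10}) forces the Dirichlet boundary condition $\phi(0)=0$, so $U\phi$ is continuous at the origin, belongs to $H^1(\real)$, and satisfies $(U\phi)'(\lambda)=\phi'(|\lambda|)/\sqrt{2}$ for $\lambda\ne 0$; integrating $|(U\phi)'|^2$ over $\real$ reproduces $\int_0^\infty|\phi'|^2$, giving $\langle U\phi|P^2|U\phi\rangle=\langle\phi|P^2|\phi\rangle$. Conversely, any $\tilde\phi\in L^2_{\odd,n}(\real)$ in the form domain of $P^2$ vanishes at the origin and arises as $U\phi$ for some admissible $\phi$, which yields the two-sided inequality.

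The final, main, step is the eigenvalue problem on $L^2_{\odd}(\real)$, which I will handle by the Legendre-transform recipe of Lemma~\ref{L5-10-1}. Setting $\gamma(s):=\min_{\tilde\phi\in L^2_{\odd,n}(\real)}\langle\tilde\phi|P^2+sQ^2|\tilde\phi\rangle$, I recognize $P^2+sQ^2$ as (up to rescaling) a quantum harmonic oscillator whose $n$-th eigenvalue is $(2n+1)\sqrt{s}$, with the even/odd parity of the eigenfunction alternating in $n$. Hence the minimum restricted to odd $\tilde\phi$ is attained at the first excited state, giving $\gamma(s)=3\sqrt{s}$ with eigenfunction proportional to $\lambda e^{-\sqrt{s}\lambda^2/2}$. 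Solving $\gamma'(s_E)=3/(2\sqrt{s_E})=E$ yields $s_E=9/(4E^2)$ and hence $\kappa(E)=\gamma(s_E)-s_E E=9/(2E)-9/(4E)=9/(4E)$. Substituting $s_E$ into the eigenfunction, pulling back through $U^{-1}$, and normalizing against $\mu_\real(d\lambda)=d\lambda/\sqrt{2\pi}$ produces the stated optimizer $\phi(\lambda)=\sqrt{2}(3/E)^{3/4}\lambda e^{-3\lambda^2/(4E)}$; its inverse Fourier transform is a routine Gaussian computation. The main technical obstacle is the identification of $\gamma(s)$ with the first-excited energy of the harmonic oscillator, but this is classical once the domain identification in Step~2 is in place; everything else reduces to bookkeeping of the measures and normalizing constants.
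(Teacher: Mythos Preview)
Your proof is correct and follows essentially the same route as the paper: reduce to pure states via Theorem~\ref{t6-24-1} and Lemma~\ref{L4-25-2}, pass from $L^2(\real_+)$ to $L^2_{\odd}(\real)$ via the odd-extension unitary, and evaluate the resulting constrained minimum by the Legendre transform of Lemma~\ref{L5-10-1} together with the first excited state of the harmonic oscillator $P^2+sQ^2$. The paper packages the last two steps into a separate Lemma~\ref{L3-13-1} but the argument is identical; your explicit handling of the Dirichlet condition $\phi(0)=0$ in the odd-extension step is in fact more careful than the paper's.
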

That is, in Schemes 1 and 2, the minimum average square error is $\frac{9}{4E}$
when we consider the energy constraint (\ref{3-13-1}) and the positivity constraint for $\lambda$.

\begin{proof}
The first and the second equation follow from Theorem \ref{t6-24-1} and \eqref{th7}.
The third equation can be shown by the correspondence (\ref{3-13-10}).
The fourth equation can be shown by Lemma \ref{L3-13-1} and considering 
the following odd function $\tilde{\phi}$ for any function $\phi \in L^2(\real_+)$.
\begin{align}
\tilde{\phi}(x):= 
\left\{
\begin{array}{ll}
\frac{1}{\sqrt{2}}\phi(x) & \hbox{ if } x >0 \\
-\frac{1}{\sqrt{2}} \phi(-x) & \hbox{ if } x < 0 \\
0 & \hbox{ if } x = 0 .
\end{array}
\right.
\end{align}
Using the above correspondence and Lemma \ref{L3-13-1},
we can show that the minimum can be attained by 
$\phi(\lambda):=\sqrt{2}(\frac{3 }{E})^{3/4} \lambda e^{-\frac{3 \lambda^2}{4E}}
\in L^2_n(\real_+)$.
\end{proof}

\begin{lemma}\Label{L3-13-1}
The relation 
\begin{align}
& \min_{|\phi \rangle \in L^2_{\odd,n}(\real) }
\left\{\left.
\langle \phi | P^2 | \phi \rangle 
\right|
\langle \phi | Q^2 | \phi \rangle \le E
\right\} 
=\frac{9}{4E}\Label{5-9-3}
\end{align}
holds, where
the minimum value is $\frac{9}{4E}$ 
is attained only by the wave function
$\phi(\lambda):=
(\frac{3 }{E})^{3/4} \lambda e^{-\frac{3 \lambda^2}{4E}}
(\in L^2_{\odd}(\real))$
whose inverse Fourier transform is $i (\frac{4 E}{3})^{3/4} \lambda e^{-\frac{\hat{g}^2 E}{3}}$.
\end{lemma}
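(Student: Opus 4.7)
The plan is to reduce the constrained minimization to the spectral theory of the harmonic oscillator Hamiltonian $H_\omega := P^2 + \omega^2 Q^2$ restricted to the odd subspace of $L^2(\real)$, and then to convert the resulting linear lower bound into the claimed inequality via a Legendre/Lagrange optimization over $\omega$.

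The first step is to observe that $L^2_{\odd}(\real)$ is invariant under $H_\omega$ since both $P^2$ and $Q^2$ commute with the parity operator. The Hermite functions $h_n^{(\omega)}$ at frequency $\omega$ diagonalize $H_\omega$ with eigenvalues $(2n{+}1)\omega$, and have parity $(-1)^n$, so the lowest eigenvalue on $L^2_{\odd}(\real)$ is $3\omega$, attained uniquely by the first excited Hermite function $h_1^{(\omega)}(\lambda) \propto \lambda e^{-\omega \lambda^2/2}$. Consequently, for every odd normalized $\phi$ and every $\omega>0$,
\begin{align}
\langle \phi|P^2|\phi\rangle + \omega^2\langle \phi|Q^2|\phi\rangle \ge 3\omega. \Label{plan-A}
\end{align}

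The second step turns \eqref{plan-A} into a sharpened uncertainty relation on the odd subspace. Choosing $\omega = \sqrt{\langle P^2\rangle/\langle Q^2\rangle}$ (or equivalently optimizing both sides) yields $\langle \phi|P^2|\phi\rangle\,\langle \phi|Q^2|\phi\rangle \ge 9/4$. Combined with the constraint $\langle \phi|Q^2|\phi\rangle \le E$, this immediately gives $\langle \phi|P^2|\phi\rangle \ge 9/(4E)$, which is the desired lower bound. Alternatively, one can optimize \eqref{plan-A} at fixed $E$ by setting $s=\omega^2$ and maximizing $3\sqrt{s}-sE$ over $s>0$, obtaining the same value $9/(4E)$ at $s^\ast = 9/(4E^2)$, i.e.\ $\omega^\ast = 3/(2E)$.

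For the attainment and uniqueness, I would verify by direct harmonic-oscillator computation that the first excited state at frequency $\omega^\ast = 3/(2E)$, normalized in the convention $\int|\phi|^2\,d\lambda/\sqrt{2\pi}=1$, is precisely $\phi(\lambda)=(3/E)^{3/4}\lambda\, e^{-3\lambda^2/(4E)}$, and that for this state the standard relations $\langle Q^2\rangle = 3/(2\omega^\ast)=E$ and $\langle P^2\rangle = 3\omega^\ast/2 = 9/(4E)$ hold, saturating both \eqref{plan-A} and the energy constraint. Uniqueness follows because \eqref{plan-A} is strict on the $(2n{+}1)$-eigenspaces for $n\ge 3$ (odd), and the constraint is saturated only when $\langle Q^2\rangle = E$, forcing the stated choice of $\omega^\ast$. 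The Fourier-transform formula is then a direct differentiation of $\mathcal{F}^{-1}[e^{-a\lambda^2}](\hat{g}) = (2a)^{-1/2}e^{-\hat{g}^2/(4a)}$ using $\mathcal{F}^{-1}[\lambda e^{-a\lambda^2}] = i\,\frac{d}{d\hat{g}}\mathcal{F}^{-1}[e^{-a\lambda^2}]$.

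The only genuinely delicate point is to keep the measure convention $d\lambda/\sqrt{2\pi}$ consistent throughout, since the textbook identities $\langle n|Q^2|n\rangle = (n+1/2)/\omega$ and $\langle n|P^2|n\rangle = (n+1/2)\omega$ are usually stated with the plain Lebesgue measure; once this normalization bookkeeping is in place, the rest is straightforward spectral theory.
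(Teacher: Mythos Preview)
Your proposal is correct and follows essentially the same route as the paper's proof: both identify the minimum of $\langle P^2\rangle + s\langle Q^2\rangle$ over odd normalized states as $3\sqrt{s}$ via the harmonic-oscillator (squeezed number operator) spectrum, and then Legendre-transform to obtain $\kappa_1(E)=9/(4E)$ at $s_E=9/(4E^2)$, with the minimizer being the squeezed one-photon state. The only cosmetic difference is that the paper invokes its general Lemmas \ref{L4-25-2} and \ref{L5-10-1} to justify convexity and the duality step, whereas you carry out the same optimization directly.
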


Since
\begin{align}
& \min_{|\phi \rangle \in L^2_{\odd,n}(\real) }
\left\{\left.
\langle \phi | P^2 | \phi \rangle 
\right|
\langle \phi | Q^2 | \phi \rangle \le E
\right\} 
= \min_{|\phi \rangle \in L^2_{\odd,n}(\real) }
\left\{\left.
\Delta_\phi^2 P
\right|
\Delta_\phi^2 Q
 \le E \right\} 
\end{align}
Lemma \ref{L3-13-1} can be regarded as the uncertainty relation between 
$\Delta_\phi^2 Q$ and $\Delta_\phi^2 P$ among odd functions on $\bR$.

\begin{proof}
Since the condition of Lemma \ref{L4-25-2} hold, 
\begin{align*}
\kappa_1(E):=&
\min_{|\phi \rangle \in L^2_n(\real_+) }
\left\{\left.
\langle \phi | P^2 | \phi \rangle 
\right|
\langle \phi | Q^2 | \phi \rangle \le E
\right\} \nonumber \\
=& \min_{|\tilde{\phi} \rangle \in L^2_{\odd,n}(\real) }
\left\{\left.
\langle \tilde{\phi} | P^2 | \tilde{\phi} \rangle 
\right|
\langle \tilde{\phi} | Q^2 | \tilde{\phi} \rangle \le E
\right\} 
\end{align*}
is convex.
Hence, we employ Lemma \ref{L5-10-1} to calculate $\kappa_1(E)$.
For this purpose, we consider a squeezed number operator
$\frac{1}{2}(P^2/t+ tQ^2)-\frac{1}{2}$.
The minimum eigenvalue in $L^2_{\odd}(\real)$ is $1$
and the corresponding eigenvector is the squeezed one-photon state.
Then, we obtain
\begin{align}
&\gamma_1(s):=
\min_{\phi \in L^2_{\odd,n}(\real)}
 \langle \phi |P^2 | \phi \rangle 
+s \langle \phi |Q^2 | \phi \rangle \nonumber \\
=&
\sqrt{s} \min_{\phi \in L^2_{\odd,n}(\real)}
\frac{1}{\sqrt{s}} \langle \phi |P^2 | \phi \rangle 
+\sqrt{s} \langle \phi |Q^2 | \phi \rangle 
=3 \sqrt{s}
\Label{5-9-1}
\end{align}
Solving the equation $\gamma_1'(s_E)=E$, we have
$s_E= \frac{9}{4 E^2}$.
Hence, 
\begin{align*}
\kappa_1(E)= \gamma_1(s_E)-s_E E=
\frac{9}{2E} -\frac{9}{4E}=\frac{9}{4E} ,
\end{align*}
which implies \eqref{5-9-3}.
Since the minimum \eqref{5-9-1} with $s=\frac{9}{4 E^2}$
is attained only by 
$\phi(\lambda)=(\frac{3 }{E})^{3/4} \lambda e^{-\frac{3 \lambda^2}{4E}}$,
the minimum in \eqref{5-9-3} is attained only by 
$\phi(\lambda)=
(\frac{3 }{E})^{3/4} \lambda e^{-\frac{3 \lambda^2}{4E}}$.
\end{proof}

\subsection{Interval constraint}\Label{s8-3}
As another restriction, we assume that
the support of the input state $\phi(\lambda)$ is 
included in the interval $[-L,L]$.
In this case, we have the following theorem.
\begin{theorem}\Label{T5-12-1b}
The relations
\begin{align}
&\min_{\rho \in {\cal S}(L^2([-L,L]))} 
\min_{M \in {\cal M}_{\cov}(\real)} 
{\cal D}_R(\rho,M)
\nonumber \\
%=& \min_{ \{p_i\} } \min_{\rho_i \in {\cal S}({\cal H})} 
%\min_{M_i \in {\cal M}(\real)} 
%\{\sum_i p_i {\cal D}_{R}(\rho_i,M_i) | (\ref{6-20-3b})\hbox{holds.}\}
% \\
=&
\min_{ \{p_i\} }
\min_{\rho_i \in {\cal S}(L^2([-L,L]))} 
\min_{M_i \in {\cal M}_{\cov}(\real)} 
\sum_i p_i{\cal D}_R(\rho_i,M_i)
\nonumber \\
=&
\min_{|\phi \rangle \in L^2_n([-L,L]) }
\int_{-\infty}^{\infty}
\hat{g}^2 |{\cal F}^{-1}[\phi](\hat{g})|^2 \frac{d \hat{g}}{\sqrt{2\pi}}
%\nonumber \\
=
\min_{|\phi \rangle \in L^2_n([-L,L]) }
\langle \phi | P^2 | \phi \rangle 
=
\frac{\pi^2}{4L^2}
\Label{5-12-10}
\end{align}
hold.
The minimum value $\frac{\pi^2}{4L^2}$
is attained only by the input state
$\phi(\lambda)=
\frac{(2 \pi)^{1/4}}{\sqrt{L}}\sin \frac{\pi (1+\frac{\lambda}{L})}{2}$
whose inverse Fourier transform is 
$-\frac{\pi^{3/4}\sqrt{L}}{2^{1/4}} \frac{\cos L \hat{g}}{L^2 \hat{g}^2-\pi^2/4}$.
\end{theorem}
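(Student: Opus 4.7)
The plan is to reduce the statement to a one-dimensional eigenvalue problem for the Dirichlet Laplacian on $[-L,L]$, in essentially three steps.

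First, I would invoke the general reduction in Theorem~\ref{t6-24-1} (with trivial Hamiltonian, so that the energy constraint is vacuous) applied to the representation space ${\cal H}=L^2([-L,L])$ regarded as the subspace of $L^2(\real)=L^2(\hat{\real})$ consisting of functions supported in $[-L,L]$. Since each weight subspace is one-dimensional, the multiplicities satisfy $\dim {\cal V}_\lambda\ge \dim {\cal U}_\lambda$, so Lemma~\ref{L4-25-2} applies and it is enough to minimize over pure input states $\phi\in L^2_n([-L,L])$. Together with the computation \eqref{th7} specialized to $\real$, this identifies the first three quantities in \eqref{5-12-10} with $\int_{-\infty}^\infty \hat g^2 |{\cal F}^{-1}[\phi](\hat g)|^2\,d\hat g/\sqrt{2\pi}$.

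Second, I would pass to the ``momentum'' representation on the weight space. By the Plancherel theorem (Proposition~\ref{th4-24-1}) applied to $\hat g\,{\cal F}^{-1}[\phi](\hat g)$, and the identification ${\cal F}^{-1}[\hat g\,{\cal F}^{-1}[\phi]](\lambda)=-i\phi'(\lambda)$, one obtains
\begin{align*}
\int_{-\infty}^\infty \hat g^{2}\,|{\cal F}^{-1}[\phi](\hat g)|^{2}\,\frac{d\hat g}{\sqrt{2\pi}}
=\int_{-L}^{L}|\phi'(\lambda)|^{2}\,\frac{d\lambda}{\sqrt{2\pi}}
=\langle \phi|P^{2}|\phi\rangle.
\end{align*}
A careful point here is that when $\phi\in L^2([-L,L])$ is extended by $0$ to $L^2(\real)$, its distributional derivative on $\real$ acquires Dirac masses at $\pm L$ unless $\phi(\pm L)=0$; hence $\langle\phi|P^2|\phi\rangle<\infty$ forces $\phi\in H^1_0([-L,L])$, so only such $\phi$ are candidate minimizers.

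Third, I would solve the resulting variational problem $\min\{\int_{-L}^L|\phi'|^2\,d\lambda : \phi\in H^1_0([-L,L]),\ \int_{-L}^L|\phi|^2\,d\lambda=\sqrt{2\pi}\}$. This is the standard Dirichlet eigenvalue problem for $-d^2/d\lambda^2$ on $[-L,L]$: the minimum eigenvalue is $(\pi/(2L))^2=\pi^2/(4L^2)$ and is attained uniquely, up to scalar, by the even ground state $\cos(\pi\lambda/(2L))=\sin(\pi(1+\lambda/L)/2)$. Fixing the normalization $C^2 L=\sqrt{2\pi}$ gives $C=(2\pi)^{1/4}/\sqrt L$, which matches the stated optimizer, and computing ${\cal F}^{-1}$ of $C\cos(\pi\lambda/(2L))\mathbf 1_{[-L,L]}$ by splitting into the two exponentials $e^{\pm i\pi\lambda/(2L)}$ yields the claimed closed form with the factor $-\pi^{3/4}\sqrt{L}/2^{1/4}$.

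The main obstacle I expect is the second step: rigorously tying the support constraint $\mathrm{supp}\,\phi\subset[-L,L]$ to Dirichlet boundary conditions on the variational problem for $P^2$. Once this is done, the remaining minimization is a textbook Poincar\'e/Wirtinger inequality and the identification of the extremizer and its Fourier transform are routine calculations.
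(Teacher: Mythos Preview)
Your proposal is correct and follows essentially the same route as the paper: reduce to pure inputs via the general covariant-estimation results, rewrite the risk as $\langle\phi|P^2|\phi\rangle$ by Plancherel, and then solve the one-dimensional eigenvalue problem on $[-L,L]$. The paper invokes Theorem~\ref{th3} directly for the first reduction (no energy constraint is present, so there is no need to pass through Theorem~\ref{t6-24-1} with a trivial Hamiltonian and Lemma~\ref{L4-25-2}), and for the last step simply cites the minimum eigenvalue of $P^2$ on $L^2([-L,L])$ from \cite{Coddington}; your explicit identification of the correct self-adjoint extension as the Dirichlet Laplacian via the $H^1_0$ argument is a nice clarification that the paper leaves implicit.
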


Now, we define the maximum uncertainty of an operator $X$ 
by $\Delta_{\phi,\max}X:= \inf \{a\in \bR| 
\langle \phi | \{ |X- \langle \phi | X | \phi \rangle | > a\} |\phi \rangle =0 \}$,
where 
the projection $\{|X| > a\}$ is defined by using the spectral measure $E_X$ of $X$ 
as $\{|X| > a\} := \int_{x >a} E_X(dx)$. 
Since
\begin{align}
\min_{|\phi \rangle \in L^2_n([-L,L]) }
\langle \phi | P^2 | \phi \rangle 
=
\min_{|\phi \rangle \in L^2_{n}(\real) }
\left\{\left.
\Delta_\phi^2 P
\right|
\Delta_{\phi,\max} Q \le L \right\} 
\end{align}
Theorem \ref{T5-12-1b} can be regarded as 
the uncertainty relation between 
$\Delta_\phi^2 P$ and $\Delta_{\phi,\max} Q$.

\begin{proof}
Combining Theorem \ref{th3} and \eqref{th7}, 
we obtain the first and the second equations.
The third equation follows from knowledge for Fourier analysis.
As in \cite{Im}, the fourth equation can be shown in the following way.
The restriction of the operator $P^2$ on $L^2([-L, L])$
has the minimum eigenvalue 
$\frac{\pi^2}{4L^2}$
with the eigenvector 
$\frac{(2 \pi)^{1/4}}{\sqrt{L}}\sin \frac{\pi (1+\frac{\lambda}{L})}{2}$
\cite{Coddington}.
Then, we obtain the desired argument.
\end{proof}

Now, we consider the energy of the optimal input state 
$\phi(\lambda)=
\frac{(2 \pi)^{1/4}}{\sqrt{L}}\sin \frac{\pi (1+\frac{\lambda}{L})}{2}$
in Theorem \ref{T5-12-1b}.
\begin{align*}
&\langle \phi | Q^2 | \phi \rangle 
= 
\int_L^L \frac{\lambda^2}{L} \sin^2 \frac{\pi(1+\frac{\lambda}{L})}{2}
d \lambda
= 
\int_L^L \frac{(\lambda+L)^2}{L} \sin^2 \frac{\pi(1+\frac{\lambda}{L})}{2}
d \lambda - L^2 \\
=&
\frac{8 L^2}{\pi^3}
\int_0^\pi x^2 \sin^2 x dx-L
= (\frac{1}{3}-\frac{2}{\pi^2})L^2.
%\frac{\pi^2}{3}-\frac{1}{2}.
\end{align*}
Hence, the energy 
$\langle \phi | Q^2 | \phi \rangle $
increases with the order $L^2$ when $L$ is large.

\section{Integers $\bZ$}\Label{s9}
As another typical example of commutative group, we treat the real group $G=\bZ$.
The one-dimensional unitary representation is characterized by $e^{i \lambda n}$ with a real number $\lambda
\in \real$.
When the difference between two real numbers
$\lambda$ and $\lambda'$ is an integer times of $2 \pi$,
we obtain $e^{i \lambda n}=e^{i \lambda' n} $.
Hence, $\hat{G}$ is $\U(1)=(-\pi,\pi]$, i.e., $L^2(\hat{G})=L^2(\U(1))$.
That is, the input state is given as a wave function $\phi(\lambda)$ on the space $L^2(\U(1))$.
In this case, the measure on the dual space $\U(1)$ is
$\mu_{\U(1)}(\lambda)=\frac{1}{{2\pi}} d\lambda$.
Hence,
The input state $\phi$ satisfies
$\int_{\U(1)} |\phi(\lambda)|^2 \frac{1}{{2\pi}} d \lambda =1 $.

Now, we define the CONS $\{ |f_n\rangle \}$ of $L^2(\U(1))$
by $f_n(\lambda):= e^{i n\lambda}$.
Then,
the covariant POVM $M_{|{\cal I}\rangle \langle {\cal I}|}$ is 
the PVM $\{ |f_n\rangle \langle f_n|\}$.
When the true parameter is $0$, 
the estimate $\hat{g}\in \U(1)$ obeys the distribution
$|{\cal F}^{-1}[\phi](\hat{g})|^2 \frac{1}{{2\pi}} d\hat{g}$,
which is given by the inverse Fourier transform ${\cal F}^{-1}[\phi]$ of $\phi$.

In this case, when $\phi(\lambda)= 1$,
we have
\begin{align}
{\cal F}^{-1}[\phi](n)
=
\left\{
\begin{array}{ll}
0 & \hbox{ if } n \neq 0 \\ 
1 & \hbox{ if } n =0.
\end{array}
\right.
\end{align}
Hence, if the input $\phi(\lambda)= 1$
is available, 
the perfect discrimination is possible.

\section{One-dimensional unitary group $\U(1)$}\Label{s10}
\subsection{General structure}\Label{s10-1}
Next, we treat the estimation of action of the group $\U(1)$.
The square integrable space $L^2(\U(1))$
can be identified with the space of the periodic square integrable functions
with the period $2L$, which is denoted by $L^2_p((-L,L])$.
Under this correspondence,
the one-dimensional unitary representation 
is characterized by $e^{\lambda i\theta}$ with $\lambda= \frac{\pi}{L}k$ and
$k \in \bZ$.
That is, $\hat{U(1)}$ can be regarded as $\frac{\pi}{L} \bZ$,
and $L^2(\hat{U(1)})$ can be identified with the square summable space 
$\ell^2(\frac{\pi}{L} \bZ)
:=\{ \{a_x\}_{ x \in  \frac{\pi}{L} \bZ} | \sum_{x \in  \frac{\pi}{L} \bZ} |a_x|^2<
\infty\}$.
In this case, we denote the Fourier transform by ${\cal F}_{L}$.
In the following discussion of this section, we consider the case of $L=\pi$.
Then, the input state is given as a wave function $\phi(\lambda)$ on the space $\ell^2(\bZ)$.
Here, we choose the invariant measure $\mu_G(d\hat{\theta})=\frac{1}{{2\pi}} d\hat{\theta}$ on $G$.

The covariant POVM $M_{|{\cal I}\rangle \langle {\cal I}|}$ is 
the spectral decomposition of the position operator $Q$ on $L^2_p((-\pi,\pi])$.
When the true parameter is $0$, 
the estimate $\hat{\theta}\in \U(1)$ obeys the distribution
$|{\cal F}^{-1}[\phi](\hat{\theta})|^2 \frac{1}{2\pi} d\hat{\theta}$,
which is given by the inverse Fourier transform ${\cal F}^{-1}[\phi]$ of 
$\phi$.
Since the amount of error should be invariant under the change of the sign,
it is natural to assume that 
the risk function $R(\theta,\hat{\theta})$ is written as 
\begin{align}
R(\theta,\hat{\theta})=R(0,\hat{\theta}-\theta)=w(\hat{\theta}-\theta)
\Label{5-19-11}
\end{align}
by using an even function $w$ on $(-\pi,\pi]$.
Further, we assume that the function $w$ satisfies the condition (\ref{6-26-14}),
i.e., is written as 
\begin{align}
w(\hat{\theta})= w(0) -\sum_{n=1}^{\infty} c_n \cos n \hat{\theta} 
\Label{5-19-11r}
\end{align}
with $c_n \ge 0$.
Hence, defining the unitary $U_{\sgn}$ on $L^2_p((-\pi,\pi])$ by 
$U_{\sgn}[\varphi](\theta)=\varphi(-\theta)$ for $\varphi \in L^2_p((-\pi,\pi])$,
we have $U_{\sgn} w(Q) U_{\sgn}^{-1}=w(Q)$.
In this case, we can decompose the space $L^2_p((-\pi,\pi])$
as $L^2_{p,\even}((-\pi,\pi])\oplus L^2_{p,\odd}((-\pi,\pi])$,
where $L^2_{p,\even}((-\pi,\pi])$ ($L^2_{p,\odd}((-\pi,\pi])$) is 
the space of even (odd) functions in $L^2_{p}((-\pi,\pi])$.
For example, we often employ the risk function
$R_{1,\U(1)}(0,\hat{\theta}):=1- \cos \hat{\theta}$.
Further, it is also natural to assume that
the Hamiltonian $H$ is written as
\begin{align}
H=h(0)I_0+ \sum_{k=1}^{\infty} h(k^2) (I_k +I_{-k})
\Label{5-19-10}
\end{align}
by using a function $h$.
Note that the support restriction case 
can be realized by the energy constraint with a proper Hamiltonian.

\begin{theorem}\Label{T5-20}
Under the assumptions \eqref{5-19-11}, \eqref{5-19-11r}, and \eqref{5-19-10},
the relations
\begin{align}
&\min_{\rho \in {\cal S}({\cal K}_{\hat{\U(1)}})} 
\min_{M \in {\cal M}_{\cov}(\U(1))} 
\{{\cal D}_R(\rho,M)| \Tr \rho H \le E \}
\nonumber \\
=&
\min_{ \{p_i\} }
\min_{\rho_i \in {\cal S}({\cal K}_{\hat{\U(1)}})}
\min_{M_i \in {\cal M}_{\cov}(\U(1))} 
\{\sum_i p_i{\cal D}_R(\rho_i,M_i)| \sum_i p_i \Tr \rho_i H \le E \}
\nonumber \\
=&\min_{|\phi\rangle \in \ell^2_n(\bZ)} 
\{ {\cal D}_R(|\phi\rangle)|
\Tr \langle \phi |H |\phi\rangle \le E 
\}\nonumber \\
=&
\min_{\varphi \in L^2_{p,\even,n}((-\pi,\pi])} 
\{\langle \varphi| w(Q) | \varphi\rangle|
\langle \varphi|h(P^2)| \varphi\rangle \le E \}
\Label{5-18-2d}
\end{align}
hold.
Further, an input state 
$|{\phi} \rangle:=
\oplus_{k=-\infty}^{\infty} {\beta}_{k}|k\rangle$
satisfies the relation
\begin{align}
\min_{M\in {\cal M}_{\cov}(\U(1))} 
{\cal D}_R (|{\phi}\rangle \langle {\phi}|,M)
= {\cal D}_R (|{\phi}\rangle)
= \eqref{5-18-2d}
\end{align}
if and only if
${\cal F}^{-1}[\phi]$ 
is an even function and realizes the minimum \eqref{5-18-2d}.

Additionally,
when $H
=\sum_{k=-\infty}^{\infty} k^2 I_{k}$,
i.e., $h(x)=x$,
we have
\begin{align}
%&\min_{\varphi \in L^2_{p,\odd}((-2\pi,2\pi])} 
%\{\langle \varphi| w(Q) | \varphi\rangle|
%\langle \varphi|h(P^2)| \varphi\rangle \le E \} \\
\eqref{5-18-2d}
=\min_{\varphi \in L^2_{p,\even,n}((-\pi,\pi])} 
\{\langle \varphi| w(Q) | \varphi\rangle|
\langle \varphi|P^2| \varphi\rangle \le E \}  
\Label{5-18-3d}
\end{align}
\end{theorem}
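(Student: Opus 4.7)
The proof proceeds in three reductions which I will perform in order.

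The plan is to first invoke Theorem \ref{t6-24-1} to collapse the three scheme-level minima (mixed states, probabilistic inputs, pure states) into a single pure-state optimization on $\ell^2_n(\bZ)$. For this I must check the convexity hypothesis of Lemma \ref{L4-25-2}. Since $\U(1)$ is commutative, every irreducible projective representation is one-dimensional, and on the canonical representation space ${\cal K}_{\hat{\U(1)}} = \bigoplus_{k \in \bZ} {\cal U}_k \otimes {\cal U}_k^*$ the multiplicities satisfy $\dim {\cal V}_k = 1 = \dim {\cal U}_k$, so Lemma \ref{L4-25-2} applies and $\kappa(E)$ is convex. Theorem \ref{t6-24-1} then yields the first three equalities of \eqref{5-18-2d}, reducing the problem to $\min\{{\cal D}_R(|\phi\rangle) : |\phi\rangle \in \ell^2_n(\bZ),\ \langle \phi|H|\phi\rangle \le E\}$.

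Second, I would transfer the optimization to the Fourier side via the Plancherel isometry of Proposition \ref{th4-24-1}, which gives a unitary ${\cal F}^{-1} : \ell^2(\bZ) \to L^2_p((-\pi,\pi])$. Writing $\varphi := {\cal F}^{-1}[\phi]$, the identity \eqref{th7} combined with the risk structure \eqref{5-19-11} rewrites ${\cal D}_R(|\phi\rangle) = \int_{-\pi}^{\pi} w(\hat\theta)|\varphi(\hat\theta)|^2 \frac{d\hat\theta}{2\pi} = \langle \varphi|w(Q)|\varphi\rangle$. The Hamiltonian \eqref{5-19-10} acts diagonally on $\{|k\rangle\}_{k \in \bZ}$ with eigenvalue $h(k^2)$, and since $P = -i\, d/d\theta$ on $L^2_p((-\pi,\pi])$ diagonalizes on the Fourier basis with eigenvalue $k$ on $e^{ik\theta}$, the Plancherel isometry intertwines $H$ and $h(P^2)$. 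The energy constraint therefore transfers verbatim to $\langle \varphi|h(P^2)|\varphi\rangle \le E$, and the specialization $h(x) = x$ giving \eqref{5-18-3d} is immediate. So far the pure-state minimum equals the minimum over all of $L^2_{p,n}((-\pi,\pi])$, without evenness.

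Third, I would reduce the $L^2_{p,n}$ minimum to the even sector. The paper's own observation that $U_\sgn w(Q) U_\sgn^{-1} = w(Q)$ (from \eqref{5-19-11r}) and $U_\sgn h(P^2) U_\sgn^{-1} = h(P^2)$ (from \eqref{5-19-10}) shows both operators preserve the orthogonal splitting $L^2_p = L^2_{p,\even} \oplus L^2_{p,\odd}$; hence for $\varphi = \varphi_e + \varphi_o$ both the objective and the constraint decompose additively as $\|\varphi_e\|^2$- and $\|\varphi_o\|^2$-weighted sums. Let $\kappa_e(E)$ and $\kappa_o(E)$ denote the minima restricted to the even and odd sectors; convexity of $\kappa_e$ via Lemma \ref{L4-25-2} applied to the even subspace, combined with the pointwise inequality $\kappa_o(E) \ge \kappa_e(E)$, lets one push any mixed even/odd optimizer to a purely even one at the same or better objective value. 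The optimality characterization of $|\phi\rangle$ then follows by tracking the equality cases in the chain \eqref{4-26-1}–Theorem \ref{th3} and in this evenness reduction, giving the ``if and only if'' clause.

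The main obstacle will be the inequality $\kappa_o(E) \ge \kappa_e(E)$. I would attack it via Lemma \ref{L5-10-1} by passing to the Legendre transforms $\gamma(s) := \min_{\varphi \in L^2_{p,n}} \langle \varphi|\,w(Q) + s\, h(P^2)\,|\varphi\rangle$ and $\gamma_e(s)$, $\gamma_o(s)$ its sector restrictions. Since $w(Q) + s h(P^2)$ is a self-adjoint operator commuting with $U_\sgn$, its ground eigenvector can be chosen of definite parity, so $\gamma(s) = \min\{\gamma_e(s), \gamma_o(s)\}$; the claim reduces to showing the even sector always attains the lower eigenvalue, which is expected from the ground-state structure of Schrödinger-type operators on the circle with even, positivity-preserving potential and will follow concretely in the subsequent sections from the Mathieu-equation analysis of Appendix \ref{asB}. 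Lemma \ref{L5-10-1} then transfers the identity $\gamma(s) = \gamma_e(s)$ to $\kappa(E) = \kappa_e(E)$ for all $E$.
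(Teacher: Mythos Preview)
Your first two reductions—collapsing to pure states via Theorem~\ref{t6-24-1} together with Lemma~\ref{L4-25-2}, then transferring to $L^2_{p}((-\pi,\pi])$ by the Plancherel isometry and reading off ${\cal D}_R(|\phi\rangle)=\langle\varphi|w(Q)|\varphi\rangle$, $\langle\phi|H|\phi\rangle=\langle\varphi|h(P^2)|\varphi\rangle$—are correct and match the paper exactly. The divergence is in the evenness reduction. The paper argues directly, asserting that every normalized odd $\varphi_o$ has $\langle\varphi_o|w(Q)|\varphi_o\rangle$ equal to the constant Fourier coefficient of $w$; that equality is false in general (e.g.\ $w(\theta)=1-\cos 2\theta$, $\varphi_o=\sqrt{2}\sin\theta$ gives $3/2\neq 1$), so the paper's step here is at best a sketch. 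Your alternative—pass to the Legendre dual $\gamma(s)$ and argue that the ground state of $w(Q)+s\,h(P^2)$ is even—is valid in spirit but genuinely incomplete: you defer $\gamma_e(s)\le\gamma_o(s)$ to the Mathieu analysis of Appendix~\ref{asB}, which only treats $w(\theta)=1-\cos\theta$ and $h(x)=x$, whereas Theorem~\ref{T5-20} is stated for arbitrary $w$ satisfying \eqref{5-19-11r} and arbitrary $h$ in \eqref{5-19-10}. A Perron--Frobenius argument for general $h(P^2)$ would require something like positivity of the semigroup $e^{-t\,h(P^2)}$, which you have not established.

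There is an elementary direct argument that closes the gap for all admissible $w,h$ and bypasses both routes. Given odd $\varphi_o=\sum_k\beta_k e^{ik\theta}$ (so $\beta_{-k}=-\beta_k$), set $\tilde\varphi:=\sum_k|\beta_k|\,e^{ik\theta}$. Since $|\beta_{-k}|=|\beta_k|$, the function $\tilde\varphi$ is even; its energy $\sum_k h(k^2)|\beta_k|^2$ is unchanged; and because $\langle\varphi|\cos(nQ)|\varphi\rangle=\mathrm{Re}\sum_j\bar\beta_{j+n}\beta_j\le\sum_j|\beta_{j+n}||\beta_j|$, the positivity $c_n\ge0$ in \eqref{5-19-11r} yields $\langle\tilde\varphi|w(Q)|\tilde\varphi\rangle\le\langle\varphi_o|w(Q)|\varphi_o\rangle$. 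Hence $\kappa_o(E)\ge\kappa_e(E)$ for every $E$, and your own decomposition argument (using convexity and monotonicity of $\kappa_e$) then gives $\kappa(E)=\kappa_e(E)$ and the optimality characterization.
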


\begin{proof}
Any odd function $\varphi_o \in L^2_{p,\odd,n}((-\pi,\pi])$
satisfies that
$\langle \varphi_o | w(Q)|\varphi_o\rangle=w(0)$.
Due to \eqref{5-19-11r},
any even function $\varphi_e \in L^2_{p,\even,n}((-\pi,\pi])$
satisfies that 
\begin{align}
\langle \varphi_o | w(Q)|\varphi_o\rangle=w(0)
\ge \langle \varphi_e | w(Q)|\varphi_e\rangle
\Label{5-18}
\end{align}
because 
$ \langle \varphi_e | \cos n Q|\varphi_e\rangle \ge 0$.
So, combining (\ref{5-19-10}) and this fact, we have
\begin{align*}
& \min_{\varphi \in L^2_{p,\even,n}((-\pi,\pi])} 
\{\langle \varphi| w(Q) | \varphi\rangle|
\langle \varphi|{\cal F}^{-1} H {\cal F} | \varphi\rangle \le E \} \\
\le &
\min_{\varphi \in L^2_{p,\odd,n}((-\pi,\pi])} 
\{\langle \varphi| w(Q) | \varphi\rangle|
\langle \varphi|{\cal F}^{-1} H {\cal F} | \varphi\rangle \le E \}).
\end{align*}
Since
\begin{align}
{\cal D}_R(|\phi\rangle)
= \int_{-\pi}^{\pi}w(\hat{\theta})|{\cal F}^{-1}[\phi](\hat{\theta})|^2 
\frac{d\hat{\theta}}{2\pi}
=\langle \varphi| w(Q) | \varphi\rangle,
\end{align}
Theorem \ref{t6-24-1} and \eqref{th7} implies that
\begin{align}
&\min_{|\phi\rangle \in \ell^2 (\bZ)} 
\{ {\cal D}_R(|\phi\rangle)|
\Tr \langle \phi |H |\phi\rangle \le E 
\}\nonumber \\
=&
\min_{\varphi \in L^2_{p,n}((-\pi,\pi])} 
\{\langle \varphi| w(Q) | \varphi\rangle|
\langle \varphi|{\cal F}^{-1} H {\cal F} | \varphi\rangle \le E \} \nonumber \\
=&
\min \Bigl(
\min_{\varphi \in L^2_{p,\even,n}((-\pi,\pi])} 
\{\langle \varphi| w(Q) | \varphi\rangle|
\langle \varphi|{\cal F}^{-1} H {\cal F} | \varphi\rangle \le E \},\nonumber \\
& \quad 
\min_{\varphi \in L^2_{p,\odd,n}((-\pi,\pi])} 
\{\langle \varphi| w(Q) | \varphi\rangle|
\langle \varphi|{\cal F}^{-1} H {\cal F} | \varphi\rangle \le E \}\Bigr) \nonumber \\
=&
\min_{\varphi \in L^2_{p,\even,n}((-\pi,\pi])} 
\{\langle \varphi| w(Q) | \varphi\rangle|
\langle \varphi|{\cal F}^{-1} H {\cal F} | \varphi\rangle \le E \}.
\Label{5-18-1}
\end{align}
Since \eqref{5-19-10} implies 
$\langle \varphi|{\cal F}^{-1} H {\cal F} | \varphi\rangle 
=\langle \varphi|h(P^2) | \varphi\rangle $, 
combining Theorem \ref{t6-24-1} and Lemma \ref{L4-25-2},
we obtain \eqref{5-18-2d}.
\end{proof}

\subsection{Constraint for available irreducible representations}\Label{s10-1-2}
Since the error can be reduced infinitesimally with the infinite support of the input state,
it is natural to restrict the support of $\phi$ to $\Lambda_n:= \{k\in \bZ| |k|\le n\}$.
Now, we treat the error function
$1-\cos (\hat\theta-\theta)=
1-(e^{i(\theta-\hat\theta)}+e^{-i(\theta-\hat\theta)})/2
= 2\sin^2(\frac{\theta-\hat\theta}{2})$,
which satisfies the condition (\ref{6-26-14}).
%Then, we have
%\begin{align}
%{\cal D}_R(|)
%\end{align}
When the input state is $( \phi(\lambda) )_{\lambda=-n}^n$,
the average error is calculated to
\begin{align}
1-
\sum_{\lambda=-n}^{n-1}
\frac{1}{2}
\phi(\lambda) \phi(\lambda+1)
-
\sum_{\lambda=-n+1}^{n}
\frac{1}{2}
\phi(\lambda-1) \phi(\lambda)
=
1-\sum_{\lambda=-n}^{n-1}
\phi(\lambda) \phi(\lambda+1).
\Label{3-13-19}
\end{align}
Then, we obtain the following theorem.
\begin{theorem}\Label{T5-12-4}
The relations
\begin{align}
&
\min_{\rho \in {\cal S}({\cal K}_{\Lambda_n})} 
\min_{M \in {\cal M}_{\cov}(\U(1))} 
{\cal D}_R(\rho,M)
\nonumber \\
%=& \min_{ \{p_i\} } \min_{\rho_i \in {\cal S}({\cal H})} 
%\min_{M_i \in {\cal M}(\U(1))} 
%\{\sum_i p_i {\cal D}_{R}(\rho_i,M_i) | (\ref{6-20-3b})\hbox{holds.}\}
% \\
=&
\min_{ \{p_i\} }
\min_{\rho_i \in {\cal S}({\cal K}_{\Lambda_n})} 
\min_{M_i \in {\cal M}_{\cov}(\U(1))} 
\sum_i p_i {\cal D}_{R}(\rho_i,M_i) 
\nonumber \\
=&
\min_{|\phi \rangle \in {\cal K}_{\Lambda_n},n} 
{\cal D}_{R}(|\phi \rangle)
=
1-\cos \frac{\pi}{2n+2}
\Label{3-12-3}
\end{align}
hold.
The minimum $1-\cos \frac{\pi}{2n+2}$ is attained 
by the input state $|\phi\rangle$
with the measurement ${\cal M}_{|I\rangle \langle I|}$
if and only if the input state $\phi(\lambda)$ is $C \sin\frac{\pi(\lambda+n+1)}{2n+2}$
with the normalizing constant $C$.
\end{theorem}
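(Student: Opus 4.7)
The first two equalities in \eqref{3-12-3} are immediate from Theorem \ref{th3} applied to the finite-dimensional carrier space ${\cal K}_{\Lambda_n}$: Scheme 1, Scheme 2 and the pure-state formulation all coincide, and we may restrict without loss of generality to the covariant POVM $M_{|{\cal I}\rangle\langle {\cal I}|}$ with a pure input $|\phi\rangle=\bigoplus_{\lambda=-n}^{n}\phi(\lambda)|\lambda\rangle\rangle \in {\cal K}_{\Lambda_n,n}$. (Note that for $\U(1)$ every irreducible representation space is one-dimensional, so the condition $\dim {\cal V}_\lambda \ge \dim {\cal U}_\lambda$ trivially holds and each $\phi(\lambda)$ is a scalar.) So the real task is to evaluate the third equality, i.e.\ to compute $\min_{|\phi\rangle\in{\cal K}_{\Lambda_n,n}}{\cal D}_R(|\phi\rangle)$ for $R(\theta,\hat\theta)=1-\cos(\hat\theta-\theta)$.

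The plan is to reduce the problem to an eigenvalue problem for a tridiagonal matrix. First, I would justify formula \eqref{3-13-19} directly from \eqref{th7}: using the discrete inverse Fourier formula \eqref{e-4-22-3} with $d_\lambda=1$, one has ${\cal F}_{\cal L}^{-1}[\phi](\hat\theta)=\sum_{\lambda=-n}^{n}\phi(\lambda)e^{-i\lambda\hat\theta}$, and since $1-\cos\hat\theta=1-\tfrac12(e^{i\hat\theta}+e^{-i\hat\theta})$, direct integration over $\U(1)$ (using orthogonality of $\{e^{ik\hat\theta}\}$) collapses to
\begin{align*}
{\cal D}_R(|\phi\rangle) = 1-\sum_{\lambda=-n}^{n-1}\operatorname{Re}\bigl(\overline{\phi(\lambda)}\,\phi(\lambda+1)\bigr).
\end{align*}
One may take the $\phi(\lambda)$ real and non-negative by absorbing phases (which can only increase the real part of each term), so the problem becomes: maximize $\sum_{\lambda=-n}^{n-1}\phi(\lambda)\phi(\lambda+1)$ subject to $\sum_{\lambda=-n}^{n}\phi(\lambda)^2=1$.

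This maximum is exactly the largest eigenvalue of the $(2n+1)\times(2n+1)$ symmetric tridiagonal matrix $A$ with zero diagonal and $1/2$ on both off-diagonals, i.e.\ half the adjacency matrix of the path graph on $2n+1$ vertices. Its spectrum is classical: eigenvalues $\cos\tfrac{k\pi}{2n+2}$ for $k=1,\dots,2n+1$, with eigenvectors $\phi^{(k)}(\lambda)\propto \sin\tfrac{k\pi(\lambda+n+1)}{2n+2}$ (verification is routine using the product-to-sum identity $2\sin\alpha\cos\beta=\sin(\alpha+\beta)+\sin(\alpha-\beta)$ together with the boundary conditions $\phi(-n-1)=\phi(n+1)=0$). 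The largest eigenvalue is $\cos\tfrac{\pi}{2n+2}$, attained by the $k=1$ eigenvector $\phi(\lambda)=C\sin\tfrac{\pi(\lambda+n+1)}{2n+2}$, and this eigenvalue is simple, which gives uniqueness of the optimizer up to an overall phase. Substituting back yields the minimum error $1-\cos\tfrac{\pi}{2n+2}$.

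The only delicate point is identifying which states saturating the bound are attainable by $M_{|{\cal I}\rangle\langle{\cal I}|}$ in Theorem \ref{th3} versus those produced by other covariant measurements; this is handled by noting that $\phi$ lies in ${\cal K}_{\Lambda_n,n}$ (since each $\phi_\lambda$ has rank at most $1=\dim {\cal V}_\lambda$) so Theorem \ref{th3} applies verbatim, and the simplicity of the top eigenvalue of $A$ makes the maximizer unique up to phase — giving the stated characterization of optimal inputs. The rest is a routine eigenvalue computation, so no step appears to present a substantive obstacle.
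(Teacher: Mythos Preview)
Your proof is correct and follows essentially the same route as the paper: invoke Theorem \ref{th3} to reduce to pure inputs with the covariant measurement, write ${\cal D}_R(|\phi\rangle)$ as $1-\sum_{\lambda}\phi(\lambda)\phi(\lambda+1)$ via \eqref{th7}/\eqref{3-13-19}, and then identify the maximum as the top eigenvalue of the path-graph tridiagonal matrix on $2n+1$ sites. The paper packages the spectral computation as Lemma \ref{t6-10-1} (with $m=2n+1$) citing \cite{CS,OH}, while you spell out the eigenvalues $\cos\frac{k\pi}{2n+2}$ and eigenvectors directly; the content is the same.
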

\begin{proof}
Thanks to Theorem \ref{th3} and \eqref{th7} and (\ref{3-13-19}),
the above theorem can be shown from Lemma \ref{t6-10-1} in Appendix \ref{a5-18}
with $m=2n+1$.
\end{proof}

Hence, the above minimum error is 
$1-\cos \frac{\pi}{2n+2}$, which is attained by $\phi(\lambda)= C \sin\frac{\pi(\lambda+n+1)}{2n+2}$ with the normalizing constant $C$.
Since
$1-\cos \frac{\pi}{2n+2}\cong 1-(1-\frac{1}{2}(\frac{\pi}{2n+2})^2)
\cong \frac{\pi^2}{8 n^2}$,
we have the following asymptotic characterization.
\begin{align*}
&
\lim_{n \to \infty}n^2
\min_{\rho \in {\cal S}({\cal K}_{\Lambda_n})} 
\min_{M \in {\cal M}_{\cov}(\U(1))} 
{\cal D}_R(\rho,M)
\nonumber \\
%=& \min_{ \{p_i\} } \min_{\rho_i \in {\cal S}({\cal H})} 
%\min_{M_i \in {\cal M}(\U(1))} 
%\{\sum_i p_i {\cal D}_{R}(\rho_i,M_i) | (\ref{6-20-3b})\hbox{holds.}\}
% \\
=&
\lim_{n \to \infty}n^2
\min_{ \{p_i\} }
\min_{\rho_i \in {\cal S}({\cal K}_{\Lambda_n})} 
\min_{M_i \in {\cal M}_{\cov}(\U(1))} 
\sum_i p_i {\cal D}_{R}(\rho_i,M_i) 
=
\frac{\pi^2}{8 }.
\end{align*}

For the asymptotic optimality condition with respect to input states,
we obtain the following lemma.
\begin{lemma}\Label{T3-13-3c}
For a sequence $\{\phi_n\}$ in ${\cal K}_{\Lambda_n}$
satisfying $\phi_n(k)\ge 0$, 
the relation
$\min_{M\in {\cal M}_{\cov}(\U(1))} 
{\cal D}_R (|\phi_{n}\rangle \langle \phi_{n}|,M)
=
{\cal D}_R (|\phi_{n}\rangle)
\cong \frac{1}{8 n^2} $ holds as $n \to \infty$,
if and only if the sequence of functions
$\tilde{\phi}_n(\lambda):= (2\pi)^{1/4}\sqrt{n}\phi_{n}(\lfloor n \lambda  +\frac{1}{2}\rfloor) 
\in L^2([-1,1])$
satisfies that
$\tilde{\phi}_n(\lambda)$ goes to $(2\pi)^{1/4} \sin \frac{\pi(1+\lambda) }{2}$
as $n \to \infty$.
%when $\phi_E(n) :=\frac{1}{ (2 \pi E)^{\frac{1}{4}}}e^{-\frac{n^2}{4 E}}$.
\end{lemma}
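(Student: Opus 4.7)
The plan is to reduce the discrete problem to the classical Dirichlet extremal problem on $[-1,1]$ by a rescaling argument. Starting from \eqref{3-13-19} and combining the identity $2ab=a^2+b^2-(a-b)^2$ with the normalization $\sum_{k=-n}^{n}|\phi_n(k)|^2=1$, one rewrites the error as
\begin{align*}
{\cal D}_R(|\phi_n\rangle)
=\tfrac{1}{2}\bigl(\phi_n(-n)^2+\phi_n(n)^2\bigr)
+\tfrac{1}{2}\sum_{\lambda=-n}^{n-1}\bigl(\phi_n(\lambda+1)-\phi_n(\lambda)\bigr)^2,
\end{align*}
which is a discrete Dirichlet energy plus explicit boundary penalties forcing Dirichlet-type vanishing at $\pm n$.

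Under the change of variables $\tilde\phi_n(\lambda):=(2\pi)^{1/4}\sqrt{n}\,\phi_n(\lfloor n\lambda+\tfrac{1}{2}\rfloor)$, a direct computation gives $\|\tilde\phi_n\|_{L^2([-1,1])}^2=\sqrt{2\pi}$, matching the target $(2\pi)^{1/4}\sin\tfrac{\pi(1+\lambda)}{2}$. Interpreting differences through the piecewise linear interpolant of $\tilde\phi_n$, a Riemann-sum estimate yields
\begin{align*}
n^2\,{\cal D}_R(|\phi_n\rangle)
=\frac{1}{2\sqrt{2\pi}}\,\|\tilde\phi_n'\|_{L^2([-1,1])}^2
+\frac{n}{2\sqrt{2\pi}}\bigl(\tilde\phi_n(-1)^2+\tilde\phi_n(1)^2\bigr)+o(1).
\end{align*}
The asymptotic optimization is thus equivalent to minimizing $\|f'\|_{L^2([-1,1])}^2$ over $f\in L^2([-1,1])$ vanishing at $\pm 1$ with $\|f\|_{L^2}^2=\sqrt{2\pi}$, whose unique nonnegative minimizer is $(2\pi)^{1/4}\sin\tfrac{\pi(1+\lambda)}{2}$ with minimum $\pi^2\sqrt{2\pi}/4$; this reproduces the optimal constant of Theorem \ref{T5-12-4} after dividing by $2\sqrt{2\pi}$.

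For the \emph{if} direction I would assume $L^2$-convergence of $\tilde\phi_n$ to the sine extremal (strengthened to Sobolev convergence through the $H^1$ bound available along the sequence); the displayed identity then evaluates directly to the asymptotic value. For the \emph{only if} direction the hypothesis forces a uniform Sobolev bound on $\tilde\phi_n$ and makes the boundary terms vanish in the limit, so by Rellich--Kondrachov a subsequence converges in $L^2$ to some $f\ge 0$ with $f(\pm 1)=0$ and $\|f\|_{L^2}^2=\sqrt{2\pi}$; lower semicontinuity of the Dirichlet energy together with optimality forces $\|f'\|_{L^2}^2=\pi^2\sqrt{2\pi}/4$. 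Uniqueness of the first Dirichlet eigenfunction on $[-1,1]$, combined with $f\ge 0$, identifies $f=(2\pi)^{1/4}\sin\tfrac{\pi(1+\lambda)}{2}$, so that every subsequential limit is the same and the full sequence converges.

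The main obstacle will be the boundary analysis: the discrete penalty $\tfrac{1}{2}(\phi_n(\pm n)^2)$ must, after rescaling, genuinely enforce Dirichlet conditions on subsequential limits rather than allowing residue concentrated near $\pm 1$, which requires careful trace estimates at the endpoints to make the Riemann-sum/$\Gamma$-convergence step rigorous. The positivity hypothesis $\phi_n(k)\ge 0$ is essential to rule out the sign-flipped sine and to prevent fine-scale oscillations that would leave the Dirichlet energy of $|\tilde\phi_n|$ asymptotically extremal while $\tilde\phi_n$ itself failed to converge.
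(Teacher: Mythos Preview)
Your approach is valid and genuinely different from the paper's. You recast the error as a discrete Dirichlet energy with boundary penalties and argue via Rellich--Kondrachov compactness and lower semicontinuity. The paper instead stays on the Fourier side: rescaling $\lambda=k/n$ and $\hat g=n\hat\theta$, it shows that the rescaled inverse transform $(2\pi n)^{-1/4}\mathcal{F}_\pi^{-1}[\phi_n](\hat g/n)$ converges to $\mathcal{F}^{-1}[\tilde\phi](\hat g)$ and hence that $n^2\mathcal{D}_R(|\phi_n\rangle)\to\tfrac12\langle\tilde\phi|P^2|\tilde\phi\rangle$, which is exactly the interval-constraint problem of Theorem~\ref{T5-12-1b} at $L=1$; uniqueness of the sine minimiser there closes both directions at once. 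Your route is more self-contained variationally and makes the boundary mechanism explicit; the paper's buys a one-line reduction to a theorem it has already proved and keeps the argument within its overall Fourier-analytic framework.

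One omission: you never address the leftmost equality $\min_{M}\mathcal{D}_R(|\phi_n\rangle\langle\phi_n|,M)=\mathcal{D}_R(|\phi_n\rangle)$ in the statement, i.e.\ that $M_{|\mathcal I\rangle\langle\mathcal I|}$ is already optimal among covariant POVMs. The paper dispatches this in one line via Lemma~\ref{11-16-2}, which uses precisely the hypotheses $\phi_n(k)\ge 0$ and the character form~(\ref{6-26-14}) of the risk; you should cite it too, since (\ref{3-13-19}) computes only $\mathcal{D}_R(|\phi_n\rangle)$, not the minimum over $M$. Your own parenthetical in the ``if'' direction (strengthening $L^2$ to Sobolev convergence) correctly flags the one delicate step---mere $L^2$ convergence gives only lower semicontinuity of the Dirichlet energy and no rate on $\tilde\phi_n(\pm1)$---but the paper's limit computations are at the same level of informality on this point.
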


\begin{proof}
Since $\phi_n(k)\ge 0$ and the condition (\ref{6-26-14}) holds, 
Lemma \ref{11-16-2} guarantees that
$\min_{M\in {\cal M}_{\cov}(\U(1))} 
{\cal D}_R (|\phi_{n}\rangle \langle \phi_{n}|,M)
=
{\cal D}_R (|\phi_{n}\rangle)$.
Now, we choose $\lambda:= \frac{k}{n}$
and $\hat{g}:= n \hat{\theta}$.
Then, 
using the limiting function $\tilde{\phi}(\lambda):= \lim_{n \to \infty} \tilde{\phi}_n(\lambda)$,
we have
\begin{align*}
&\frac{1}{n^2}
\sum_{k=-n}^{n} k^2 |\phi_n(k)|^2
=
\sum_{k=-n}^{n}  (\frac{k}{n})^2 \frac{1}{\sqrt{2\pi}n} 
|\tilde{\phi}_n(\frac{k}{n})|^2 
\to 
\int_{-\infty}^{\infty} 
\lambda^2 |\tilde{\phi}(\lambda )|^2 \frac{d \lambda}{2\pi} 
\end{align*}
as $n \to \infty$.
Similarly,
since
\begin{align*}
& \frac{{\cal F}_{\pi}^{-1}(\phi_n)(\frac{\hat{g}}{n})}{(2\pi)^{\frac{1}{4}}n^{\frac{1}{2}} } 
=
\sum_{k=-n}^{n}
e^{-i k \frac{\hat{g}}{n}} \phi_E(k) 
\frac{1}{(2\pi)^{\frac{1}{4}} n^{\frac{1}{2}}} \\
=&
\sum_{k=-n}^{n}
e^{-i \frac{k}{n} \hat{g}} \tilde{\phi}_n(\frac{k}{n}) 
\frac{1}{\sqrt{2\pi}n} 
\to 
\int_{-\infty}^{\infty}
e^{-i \lambda \hat{g}} \tilde{\phi}(\lambda) 
\frac{d \lambda}{\sqrt{2\pi}} 
=
{\cal F}_{\pi}^{-1}[\tilde{\phi}](\hat{g}) ,
\end{align*}
we have
\begin{align*}
& n^2
\int_{-\pi}^{\pi}
(1-\cos (\hat{\theta}))
|{\cal F}_{\pi}^{-1}[{\phi}_n](\hat{\theta})|^2 \frac{d\hat{\theta}}{{2\pi}}  
\cong
n^2
\int_{-\pi}^{\pi}
\frac{\hat{\theta}^2}{2}
|{\cal F}_{\pi}^{-1}[{\phi}_n](\hat{\theta})|^2 \frac{d\hat{\theta}}{{2\pi}}  \\
= &
\int_{-\pi n}^{\pi n}
\frac{\hat{g}^2}{2}
|{\cal F}_{\pi}^{-1}[{\phi}_n](\frac{\hat{g}}{n})|^2 
\frac{d\hat{g}}{{2\pi} n} 
\to
\int_{-\infty}^{\infty}
\frac{\hat{g}^2}{2}
|{\cal F}_{\pi}^{-1}[\tilde{\phi}](\hat{g})|^2 
\frac{d\hat{g}}{\sqrt{2\pi} } .
\end{align*}
In Theorem \ref{T5-12-1b}, the minimum \eqref{5-12-10} with $L=1$
is attained by $\tilde{\phi}(\lambda)= (2\pi)^{1/4} \sin \frac{\pi(1+\lambda) }{2}$.
Hence, 
${\cal D}_R (|\phi_{n}\rangle) \cong \frac{1}{8 n^2} $ as $n \to \infty$
if and only if 
$\tilde{\phi}_n(\lambda)$ goes to $(2\pi)^{1/4} \sin \frac{\pi(1+\lambda) }{2}$
as $n \to \infty$.
\end{proof}

\subsection{Typical energy constraint}\Label{s10-2}
Next, we consider the risk function $R_{\U(1)}(\theta,\hat{\theta})= 1- \cos (\hat{\theta}-\theta)$
and the Hamiltonian $H= \sum_{k=-\infty}^{\infty} k^2 |k \rangle \langle k|$.
Then, thanks to Theorem \ref{T5-20}, 
the minimum error can be characterized by the following value.
\begin{align}
\kappa_{\U(1)}(E)
:=
\min_{\varphi \in L^2_{p,\even,n}((-\pi,\pi])} 
\{\langle \varphi| I-\cos (Q)| \varphi\rangle|
\langle \varphi| P^2 | \varphi\rangle \le E \} .
\Label{3-13-11}
\end{align}
For example, we can show that
\begin{align}
\kappa_{\U(1)}(0)=1 \Label{5-10-10}.
\end{align}
This fact can be also checked by the following way.
In fact, the condition $\langle \phi |H|\phi\rangle=0$
can be realized only when $\phi(0)=1$ and $\phi(n)=0$ with $n \neq 0$,
i.e., ${\cal F}_{\pi}^{-1}[\phi]=1$.
In this case, we have $\int_{-\pi}^{\pi}
(1-\cos (\hat{\theta}))
|{\cal F}_{\pi}^{-1}[\phi](-\hat{\theta})|^2 \frac{1}{{2\pi}} d\hat{\theta}
=1$.
Hence, we see (\ref{5-10-10}).

Now, we consider the case with non-zero $E$.
Since the condition of Lemma \ref{L4-25-2} hold, $\kappa_{\U(1)}(E)$ is convex.
Hence, we employ Lemma \ref{L5-10-1} to calculate $\kappa_{\U(1)}(E)$,
and consider 
the minimum 
\begin{align*}
\gamma_{\U(1)}(s):=&
\min_{\varphi\in L^2_n((-\pi,\pi ])}
\langle \varphi| (I-\cos (Q)) +  s P^2| \varphi \rangle \\
=& \min_{\varphi\in L^2_n((-\pi/2,\pi/2 ])}
\langle \varphi| (I-\cos (2Q))+ \frac{s P^2}{4}| \varphi \rangle .
\end{align*}
So, $\gamma_{\U(1)}(s)$ can be characterized as the minimum $\gamma_{\U(1)}$
having the solution in $L^2_n((-\pi/2,\pi/2 ])$ 
of the following differential equation.
\begin{align}
\frac{s}{4}\frac{d^2}{d\theta^2} \varphi(\theta) + 
( \gamma_{\U(1)}- 1 +\cos (2\theta))
\varphi(\theta) =0,
\end{align}
which is equivalent to
\begin{align}
\frac{d^2}{d\theta^2} \varphi(\theta)+ 
( \frac{4(\gamma_{\U(1)}- 1)}{s} + \frac{4}{s}\cos (2\theta))
\varphi(\theta) =0,
\end{align}
In order to find the minimum $\gamma_{\U(1)}$, we employ Mathieu equation (\ref{5-10-7}),
whose detail is summarized in Subsection \ref{asB}.
Hence, using the function $a_0$ given in Subsection \ref{asB},
we have 
$\gamma_{\U(1)}(s)
=\frac{s a_0(-\frac{2}{s} )}{4} +1
=\frac{s a_0(\frac{2}{s})}{4} +1$.
So, applying (\ref{5-10-2}) to $\kappa_{\U(1)}(E)$, and
combining %Theorem \ref{t6-24-1}, Lemma \ref{L4-25-2},
the facts given in Subsection \ref{asB},
we obtain the following theorem.

\begin{theorem}\Label{T3-13-3}
\begin{align}
\kappa_{\U(1)}(E)
=\max_{s>0} \frac{s a_0(\frac{2}{s})}{4} +1 -sE
\Label{5-10-13}.
\end{align}
The minimum \eqref{3-13-11} is attained by the input state $|\phi\rangle$
with the measurement ${\cal M}_{|I\rangle \langle I|}$
if and only if 
${\cal F}_{\pi}^{-1}[\phi](\theta)=\ce_0(\frac{\theta}{2}, -\frac{2}{s_E})$,
where $s_E$ is 
$\argmax_{s>0} \frac{s a_0(\frac{2}{s})}{4} +1 -sE$
and the function $\ce_0$ is given in Subsection \ref{asB}.
\end{theorem}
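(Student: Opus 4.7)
The plan is to derive this theorem as a direct application of Theorem \ref{T5-20} (which already reduces Schemes 1 and 2 to the scalar minimization defining $\kappa_{\U(1)}(E)$) combined with the Legendre-transform technique of Lemma \ref{L5-10-1}. Since the representation space here is ${\cal K}_{\hat{\U(1)}}$ which satisfies the hypothesis of Lemma \ref{L4-25-2}, the function $\kappa_{\U(1)}(E)$ is convex, so Lemma \ref{L5-10-1} applies and it suffices to compute the concave conjugate
\begin{align*}
\gamma_{\U(1)}(s)
=\min_{\varphi\in L^2_{p,\even,n}((-\pi,\pi])}
\langle \varphi| (I-\cos Q) + sP^2 |\varphi\rangle
\end{align*}
and then obtain $\kappa_{\U(1)}(E)=\max_{s>0}\gamma_{\U(1)}(s)-sE$ from \eqref{5-10-2}.

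First I would observe that $\gamma_{\U(1)}(s)$ is the bottom of the spectrum, restricted to the even periodic sector, of the self-adjoint Schr\"odinger-type operator $sP^2+(I-\cos Q)$ on $L^2_p((-\pi,\pi])$. To put this into the standard Mathieu form, I would rescale the variable by $\theta\mapsto 2\theta$, which maps $L^2_p((-\pi,\pi])$ to $L^2_p((-\pi/2,\pi/2])$, converts $\cos Q$ into $\cos(2Q)$, and turns $P^2$ into $P^2/4$. After this change, the eigenvalue problem for the minimizer $\varphi$ becomes
\begin{align*}
\frac{d^2}{d\theta^2}\varphi(\theta)+\Bigl(\frac{4(\gamma_{\U(1)}-1)}{s}+\frac{4}{s}\cos(2\theta)\Bigr)\varphi(\theta)=0,
\end{align*}
which is Mathieu's equation with parameters $a=4(\gamma_{\U(1)}-1)/s$ and $q=-2/s$. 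Using the standard theory recalled in Appendix \ref{asB}, the smallest characteristic value admitting a $\pi$-periodic even solution is $a_0(q)$, and the corresponding eigenfunction is $\ce_0(\theta,q)$. Thus $4(\gamma_{\U(1)}(s)-1)/s=a_0(-2/s)$, and using the evenness of $a_0$ in $q$ (also recalled in Appendix \ref{asB}), I obtain $\gamma_{\U(1)}(s)=s\,a_0(2/s)/4+1$, with minimizer $\varphi(\theta)=\ce_0(\theta/2,-2/s)$ after undoing the rescaling.

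Substituting into Lemma \ref{L5-10-1} yields $\kappa_{\U(1)}(E)=\max_{s>0}\,s\,a_0(2/s)/4+1-sE$, which is \eqref{5-10-13}, and at the optimal $s_E=\argmax_{s>0}s\,a_0(2/s)/4+1-sE$ the minimizing inverse Fourier transform is ${\cal F}_\pi^{-1}[\phi](\theta)=\ce_0(\theta/2,-2/s_E)$, matching the stated characterization of the optimal input.

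The main obstacle is the spectral identification step: verifying that the minimum over \emph{even} normalized periodic functions (as opposed to all periodic, or odd periodic, functions) of the quadratic form $\langle\varphi|(I-\cos Q)+sP^2|\varphi\rangle$ is indeed attained on the lowest Mathieu cosine $\ce_0$ and not on a lower odd eigenfunction. This is precisely the reason Theorem \ref{T5-20} restricted the minimization to the even subspace in \eqref{5-18-2d}, since inequality \eqref{5-18} shows odd functions cannot reduce the error below $w(0)=1$. I would verify this by invoking the oscillation/Sturm-Liouville theory summarized in Appendix \ref{asB}, where $a_0(q)$ is by definition the lowest characteristic value among $\pi$-periodic \emph{even} solutions of Mathieu's equation, so that the scaled eigenvalue $4(\gamma_{\U(1)}(s)-1)/s$ equals exactly $a_0(-2/s)$. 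The remaining monotonicity of $s_E$ in $E$ and the identification of the maximizer are then immediate from Lemma \ref{L5-10-1}.
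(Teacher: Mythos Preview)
Your proposal is correct and follows essentially the same route as the paper: convexity via Lemma \ref{L4-25-2}, Legendre transform via Lemma \ref{L5-10-1}, the rescaling $\theta\mapsto 2\theta$ to put the eigenvalue problem in Mathieu form, identification of the lowest characteristic value with $a_0(-2/s)=a_0(2/s)$, and the minimizer with $\ce_0(\theta/2,-2/s)$. The only minor difference is that the paper writes $\gamma_{\U(1)}(s)$ as a minimum over all of $L^2_n((-\pi,\pi])$ rather than the even subspace you use; since $a_0(q)<b_2(q)$ for $q<0$ (Appendix \ref{asB}), both formulations give the same value, and your discussion of this point is in fact more careful than the paper's.
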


Using the formula \eqref{5-10-13},
we can calculate $\kappa_{\U(1)}(E)$ as Fig. \ref{g1}.

\begin{figure}[htbp]
\begin{center}
\scalebox{0.6}{\includegraphics[scale=1.2]{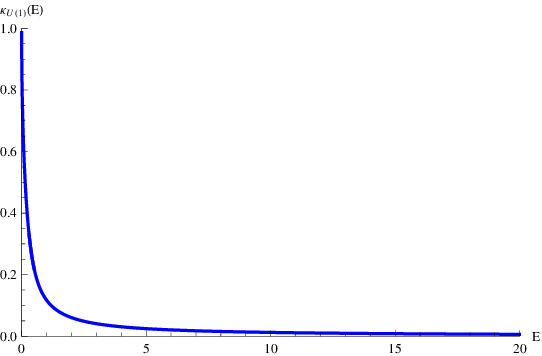}}
\end{center}
\caption{Graph of $\kappa_{\U(1)}(E)$.}
\Label{g1}
\end{figure}%

By using the expansion \eqref{exp1} for $a_0$, as $s \to 0$, $\gamma_{\U(1)}(s)$ can be expanded to
\begin{align*}
\gamma_{\U(1)}(s)
\cong
\frac{s(-2\frac{2}{s}+2 \sqrt{\frac{2}{s}}-\frac{1}{4})}{4} +1
=\sqrt{\frac{s}{2}}- \frac{s}{16}.
\end{align*}
As is shown in Lemma \ref{L5-10-1},
$s_E$ is decreasing as a function of $E$.
Hence, when $E$ is large, 
solving the equation $\gamma_{\U(1)}'(s_E)=E$, we approximately obtain
$s_E\cong \frac{1}{8(E+1/16)^2}$.
Hence,
\begin{align}
&\kappa_{\U(1)}(E)= \gamma_{\U(1)}(s_E)-s_E E
\cong \sqrt{\frac{s_E}{2}}- \frac{s_E}{16} -s_E E
= \sqrt{\frac{s_E}{2}}- s_E(E+\frac{1}{16}) 
\nonumber \\
\cong & \frac{1}{8 (E+1/16)}
\cong \frac{1}{8E}- \frac{1}{128 E^2}.
\Label{5-10-4}
\end{align}
As is shown in Fig. \ref{g11},
while 
the first order approximation $\kappa_{1,\U(1),\infty}(E):=\frac{1}{8E}$ gives a good approximation 
for $\kappa_{\U(1)}(E)$ with a large $E$,
the second order approximation 
$\kappa_{2,\U(1),\infty}(E):=\frac{1}{8E}- \frac{1}{128 E^2}$
much improves the approximation 
for $\kappa_{\U(1)}(E)$ with a large $E$.
Hence, we have the following asymptotic characterization.
\begin{align}
&
\lim_{E \to \infty}
E
\min_{\rho \in {\cal S}(L^2(\bZ))} 
\min_{M \in {\cal M}_{\cov}(U(1))} 
\{{\cal D}_R(\rho,M)| \Tr \rho H \le E \}
\nonumber \\
=&
\lim_{E \to \infty}
E
\min_{ \{p_i\} }
\min_{\rho_i \in {\cal S}(L^2(\bZ))} 
\min_{M_i \in {\cal M}_{\cov}(U(1))} 
\{\sum_i p_i  {\cal D}_R(\rho_i,M_i)| \sum_i p_i  \Tr \rho_i H \le E \}
\nonumber \\
=&
\lim_{E \to \infty}
E
\min_{|{\phi} \rangle \in L^2_n(\bZ) }
\{{\cal D}_{R}(|\phi \rangle)|
\langle \phi | H |\phi \rangle \le E \} 
=
\frac{1}{8} 
\Label{3-13-7}.
\end{align}

\begin{figure}[htbp]
\begin{center}
\scalebox{0.6}{\includegraphics[scale=1.2]{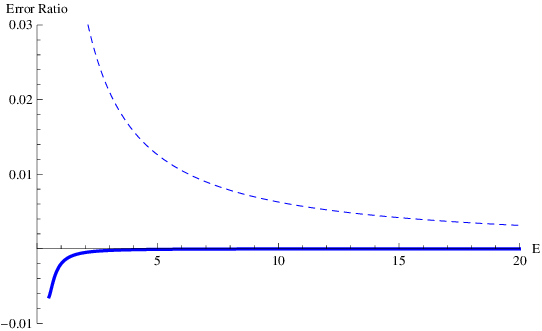}}
\end{center}
\caption{Comparison of two approximations 
$\kappa_{1,\U(1),\infty}$ and
$\kappa_{2,\U(1),\infty}$
of $\kappa_{\U(1)}$ with a large $E$.
Thick line expresses 
the error ratio
$\frac{\kappa_{2,\U(1),\infty}(E)-\kappa_{\U(1)}(E)}{\kappa_{\U(1)}(E)}$,
and
dashed line expresses 
the error ratio
$\frac{\kappa_{1,\U(1),\infty}(E)-\kappa_{\U(1)}(E)}{\kappa_{\U(1)}(E)}$.}
\Label{g11}
\end{figure}%

Next, we consider the case when $E$ is small.
Hence, when $s$ is large,
by using the expansion \eqref{exp2} for $a_0$,
$\gamma_{\U(1)}(s)$ can be expanded to
\begin{align*}
\gamma_{\U(1)}(s)
\cong
\frac{s(-\frac{1}{2}(\frac{2}{s})^2 +\frac{7}{128}(\frac{2}{s})^4)}{4} +1
= 1 -\frac{1}{2s}+\frac{7}{32 s^3}.
\end{align*}
When $E$ is small, 
solving the equation $\gamma_{\U(1)}'(s_E)=E$, we approximately obtain
$s_E\cong \sqrt{\frac{1}{2E}+\frac{21}{128}}\cong \frac{1}{\sqrt{2E}}(1+\frac{21}{32}E)$.
Hence,
\begin{align}
&\kappa_{\U(1)}(E)= \gamma_{\U(1)}(s_E)-s_E E
\cong 1 -\frac{1}{2s_E} -s_E E +\frac{7}{32 s_E^3} \nonumber \\
\cong &1 
- \frac{\sqrt{E}}{\sqrt{2}}(1-\frac{21}{32}E)
- \frac{\sqrt{E}}{\sqrt{2}}(1+\frac{21}{32}E)
+\frac{7}{32} \sqrt{2E}^3 
\cong 
1- \sqrt{2E} +\frac{7 \sqrt{2}}{16}E^{\frac{3}{2}}.
\Label{5-10-4c}
\end{align}
This expansion with $E=0$ coincides with \eqref{5-10-10}. 
As is shown in Fig. \ref{g12},
while 
the first order approximation $\kappa_{1,\U(1),+0}(E):=1- \sqrt{2E}$ gives a good approximation 
for $\kappa_{\U(1)}(E)$ with a small $E$,
the second order approximation 
$\kappa_{2,\U(1),+0}(E):=1- \sqrt{2E} +\frac{7 \sqrt{2}}{16}E^{\frac{3}{2}}$
much improves the approximation 
for $\kappa_{\U(1)}(E)$ with a small $E$.

\begin{figure}[htbp]
\begin{center}
\scalebox{0.6}{\includegraphics[scale=1.2]{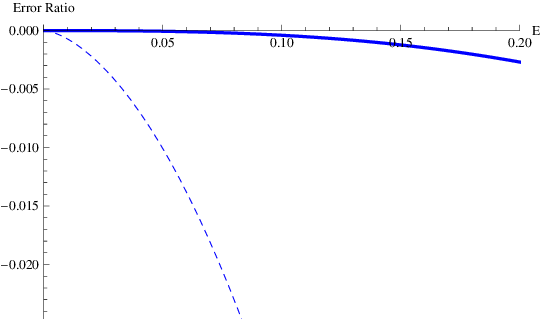}}
\end{center}
\caption{Comparison of two approximations 
$\kappa_{1,\U(1),+0}$ and $\kappa_{2,\U(1),+0}$
of $\kappa_{\U(1)}$ with a small $E$.
Thick line expresses 
the error ratio
$\frac{\kappa_{2,\U(1),+0}(E)-\kappa_{\U(1)}(E)}{\kappa_{\U(1)}(E)}$,
and
dashed line expresses 
the error ratio
$\frac{\kappa_{1,\U(1),+0}(E)-\kappa_{\U(1)}(E)}{\kappa_{\U(1)}(E)}$.}
\Label{g12}
\end{figure}%

%From (\ref{5-10-4}), we obtain the following theorem.
For the asymptotic optimality condition with respect to input states,
we obtain the following lemma.
%However, it is not easy to calculate the above minimum.
%Instead of the above minimization, we deal with the asymptotic %behavior of the above minimum with the limit $E \to \infty$.
\begin{lemma}\Label{T3-13-3x}
For a sequence $\{E_l\}$ satisfying $E_l \to \infty$ as $l \to \infty$,
we focus on a sequence of input states 
$\{\phi_{E_l}\}$ satisfying that $\phi_{E_l}(n)\ge 0$
and $\langle \phi_{E_l}| H |\phi_{E_l}\rangle \le E_l$. 
Then,
$\min_{M\in {\cal M}_{\cov}(\U(1))} 
{\cal D}_R (|\phi_{E_l}\rangle \langle \phi_{E_l}|,M)
=
{\cal D}_R (|\phi_{E_l}\rangle)
\cong \frac{1}{8 E_l} $ as $l \to \infty$
if and only if the sequence of functions
$\tilde{\phi}_l(\lambda):= (2\pi E_l)^{1/4}\phi_{E_l}(\lfloor \sqrt{E_l} \lambda  +\frac{1}{2}\rfloor) 
\in L^2(\real)$
satisfies that
$\tilde{\phi}_l(\lambda)$ goes to $e^{-\frac{\lambda^2}{4}}$
as $l \to \infty$.
%when $\phi_E(n) :=\frac{1}{ (2 \pi E)^{\frac{1}{4}}}e^{-\frac{n^2}{4 E}}$.
For example,
when we $E_l=\frac{l}{2}$, % ($l$ is integer),
the following input state $|\phi_{b,l}\rangle$ 
asymptotically attains the minimum \eqref{3-13-7}.
\begin{align}
\phi_{b,l}(k):= 
\left\{
\begin{array}{ll}
\frac{1}{2^{l}} \sqrt{{2l \choose n+l}}
& \hbox{if }|k| \le l \\
0
& \hbox{if }|k| >  l.
\end{array}
\right.
\end{align}
\end{lemma}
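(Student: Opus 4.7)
The plan is to reduce the discrete estimation problem on $\ell^2(\bZ)$ to the continuous problem on $L^2(\real)$ of Theorem \ref{T5-12-1} through the rescaling $\lambda = k/\sqrt{E_l}$ and $\hat g = \sqrt{E_l}\hat\theta$, and then read off the asymptotic optimality condition from the uniqueness of the Gaussian minimizer in \eqref{5-13-14}. Since $\phi_{E_l}(k)\ge 0$ and $R_{\U(1)}=1-\cos$ satisfies \eqref{6-26-14}, Lemma \ref{11-16-2} gives $\min_{M}{\cal D}_R(|\phi_{E_l}\rangle\langle\phi_{E_l}|,M)= {\cal D}_R(|\phi_{E_l}\rangle)$, so it suffices to analyse ${\cal D}_R(|\phi_{E_l}\rangle)$.

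Using the identification $|\phi_{E_l}(k)|^2 = |\tilde\phi_l(k/\sqrt{E_l})|^2/\sqrt{2\pi E_l}$, the energy constraint becomes the Riemann sum
\[
\frac{\langle\phi_{E_l}|H|\phi_{E_l}\rangle}{E_l}=\sum_{\lambda\in\bZ/\sqrt{E_l}} \lambda^2 |\tilde\phi_l(\lambda)|^2 \frac{1}{\sqrt{2\pi}\sqrt{E_l}} \le 1,
\]
whose limit, along any subsequence on which $\tilde\phi_l\to\tilde\phi$, is $\langle\tilde\phi|Q^2|\tilde\phi\rangle \le 1$. Following the same computation as in Lemma \ref{T3-13-3c}, I substitute $\hat g=\sqrt{E_l}\hat\theta$ and use $1-\cos\hat\theta \cong \hat g^2/(2E_l)$ together with the pointwise convergence ${\cal F}_\pi^{-1}[\phi_{E_l}](\hat g/\sqrt{E_l})/(2\pi E_l)^{1/4}\to {\cal F}^{-1}[\tilde\phi](\hat g)$ of the discrete Fourier sum to the Plancherel integral. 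This yields
\[
E_l\,{\cal D}_R(|\phi_{E_l}\rangle) \longrightarrow \int_\real \frac{\hat g^2}{2}|{\cal F}^{-1}[\tilde\phi](\hat g)|^2 \frac{d\hat g}{\sqrt{2\pi}} = \tfrac{1}{2}\langle\tilde\phi|P^2|\tilde\phi\rangle.
\]
Comparing with the asymptotic value $\tfrac{1}{8}$ of \eqref{3-13-7}, the asymptotic bound is saturated iff $\tfrac{1}{2}\langle\tilde\phi|P^2|\tilde\phi\rangle=\tfrac{1}{8}$ under $\langle\tilde\phi|Q^2|\tilde\phi\rangle\le 1$. By the uniqueness clause of Theorem \ref{T5-12-1} applied with $E=1$, this forces $\tilde\phi(\lambda)= e^{-\lambda^2/4}$, proving the ``iff''.

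For the binomial example with $E_l=l/2$, I first check the energy constraint exactly: interpreting $k\mapsto {2l\choose k+l}/4^l$ as the law of $Z-l$ with $Z\sim\mathrm{Bin}(2l,1/2)$, one has $\langle\phi_{b,l}|H|\phi_{b,l}\rangle=\mathrm{Var}(Z-l)=l/2=E_l$. Then by the local de Moivre--Laplace theorem (or Stirling), $\frac{1}{4^l}{2l\choose k+l}=\frac{1}{\sqrt{\pi l}}e^{-k^2/l}(1+o(1))$ uniformly on $|k|\le l^{3/4}$, so $\tilde\phi_{b,l}(\lambda)=(\pi l)^{1/4}\phi_{b,l}(\lfloor\sqrt{l/2}\,\lambda+1/2\rfloor)\to e^{-\lambda^2/4}$, which is the required Gaussian.

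The main obstacle is the analytic tightening of the ``only if'' direction: from the bounded-energy hypothesis one only gets weak compactness of $\{\tilde\phi_l\}$ in $L^2(\real)$, so I must upgrade this to convergence of the quadratic functionals $\langle\tilde\phi_l|Q^2|\tilde\phi_l\rangle$ and $\langle\tilde\phi_l|P^2|\tilde\phi_l\rangle$. The natural route is to establish uniform integrability of $\lambda^2|\tilde\phi_l(\lambda)|^2$ from the energy bound (for the $Q^2$ side) and to use the uniform bound $1-\cos\hat\theta\le\hat\theta^2/2$ globally on $(-\pi,\pi]$ together with Fatou's lemma (for the $P^2$ side), after which the uniqueness of the Gaussian minimizer in Theorem \ref{T5-12-1} forces $\tilde\phi_l\to e^{-\lambda^2/4}$ along the full sequence rather than merely on subsequences.
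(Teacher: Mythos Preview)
Your proposal is correct and follows essentially the same route as the paper: invoke Lemma \ref{11-16-2} for the equality with the covariant minimum, rescale via $\lambda=k/\sqrt{E_l}$ and $\hat g=\sqrt{E_l}\hat\theta$ so that the energy constraint and the risk become Riemann sums converging to $\langle\tilde\phi|Q^2|\tilde\phi\rangle$ and $\tfrac{1}{2}\langle\tilde\phi|P^2|\tilde\phi\rangle$, and then appeal to the uniqueness of the Gaussian minimizer in Theorem \ref{T5-12-1} at $E=1$; the binomial example is handled identically via the exact variance $l/2$ and the central limit theorem. Your added paragraph on tightness and Fatou for the ``only if'' direction in fact goes beyond the paper, which simply writes the limits without addressing compactness or uniform integrability.
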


\begin{proof}
The relation 
$\min_{M\in {\cal M}_{\cov}(G)} 
{\cal D}_R (|\phi_{E_l}\rangle \langle \phi_{E_l}|,M)
=
{\cal D}_R (|\phi_{E_l}\rangle)$ holds by the same reason as Lemma \ref{T3-13-3c}.
Now, we choose $\lambda:= \frac{k}{\sqrt{E} }$
and $\hat{g}:= \sqrt{E} \hat{\theta}$.
Then, 
%using the limiting function $\tilde{\phi}(\lambda):= \lim_{E \to \infty} \tilde{\phi}_E(\lambda)$,
we have
\begin{align*}
&\frac{1}{E_l}
\sum_{k=-\infty}^{\infty}k^2 |\phi_{E_l}(k)|^2
=
\sum_{k=-\infty}^{\infty}  (\frac{k}{\sqrt{E_l}})^2 \frac{1}{\sqrt{2\pi E_l}} 
|\tilde{\phi}_{E_l}(\frac{k}{\sqrt{E_l}})|^2 
\to 
\int_{-\infty}^{\infty} 
\lambda^2 |\tilde{\phi}(\lambda )|^2 \frac{d \lambda}{2\pi} 
\end{align*}
as $E_l \to \infty$.
Similarly,
since
\begin{align*}
& \frac{{\cal F}_{\pi}^{-1}(\phi_{E_l})(\frac{\hat{g}}{\sqrt{E}})}{(2\pi E)^{\frac{1}{4}}} 
=
\sum_{k=-\infty}^{\infty}
e^{-i k \frac{\hat{g}}{\sqrt{E}}} \phi_{E_l}(k) 
\frac{1}{(2\pi E_l)^{\frac{1}{4}}} \\
=&
\sum_{k=-\infty}^{\infty} 
e^{-i \frac{k}{\sqrt{E_l}} \hat{g}} \tilde{\phi}_{E_l}(\frac{k}{\sqrt{E}}) 
\frac{1}{\sqrt{2\pi E}} 
\to 
\int_{-\infty}^{\infty}
e^{-i \lambda \hat{g}} \tilde{\phi}(\lambda) 
\frac{d \lambda}{\sqrt{2\pi}} 
=
{\cal F}^{-1}[\tilde{\phi}](\hat{g}) ,
\end{align*}
we have
\begin{align*}
& E_l
\int_{-\pi}^{\pi}
(1-\cos (\hat{\theta}))
|{\cal F}_{\pi}^{-1}[{\phi}_{E_l}](\hat{\theta})|^2 \frac{d\hat{\theta}}{{2\pi}}  
\cong
 E_l
\int_{-\pi}^{\pi}
\frac{\hat{\theta}^2}{2}
|{\cal F}_{\pi}^{-1}[{\phi}_{E_l}](\hat{\theta})|^2 \frac{d\hat{\theta}}{{2\pi}}  \\
= &
\int_{-\pi \sqrt{E_l}}^{\pi \sqrt{E_l}}
\frac{\hat{g}^2}{2}
|{\cal F}_{\pi}^{-1}[{\phi}_{E_l}](\frac{\hat{g}}{\sqrt{E_l}})|^2 
\frac{d\hat{g}}{{2\pi} \sqrt{E_l}} 
\to
\int_{-\infty}^{\infty}
\frac{\hat{g}^2}{2}
|{\cal F}^{-1}[\tilde{\phi}](\hat{g})|^2 
\frac{d\hat{g}}{\sqrt{2\pi} } .
\end{align*}
In Theorem \ref{T5-12-1}, the minimum \eqref{5-13-14} with $E=1$
is attained only by 
$\tilde{\phi}(\lambda)=e^{-\frac{\lambda^2}{4}}$
Hence, 
${\cal D}_R (|\phi_{E_l}\rangle) \cong \frac{1}{8 E_l} $ as $l \to \infty$
if and only if 
$\tilde{\phi}_l(\lambda)$ goes to $e^{-\frac{\lambda^2}{4}}$
as $l \to \infty$.

Since
\begin{align}
\langle \phi_{b,l}| H |\phi_{b,l}\rangle
=
\sum_{k=-l}^l
k^2 \frac{1}{2^{2l}}{2l \choose k+l}
= \frac{2l}{4}=\frac{l}{2},
\end{align}
the state $|\phi_{b,l}\rangle$ has the energy $\frac{l}{2}$.
Thanks to the central limit theorem,
$(\pi l)^{\frac{1}{2}} |\phi_{b,l}(\sqrt{\frac{l}{2}}\lambda)|^2
$ goes to 
$e^{-\frac{\lambda^2}{2}}$, i.e.,
$(\pi l)^{\frac{1}{4}} \phi_{b,l}(\sqrt{\frac{l}{2}}\lambda)$ goes to 
$e^{-\frac{\lambda^2}{4}}$.
Hence, the input state $|\phi_{b,l}\rangle$
also asymptotically attains the minimum \eqref{3-13-7}.
\end{proof}

In this problem, the global phase factor does not effect the representation, but 
changes the energy slightly.
By using a $\lambda_0 \in \bR$,
the representation is changed to
\begin{align}
f_{(\lambda_0)}(\theta):=
\sum_{k=-\infty}^{\infty}
e^{i (k+\lambda_0) \theta}
|k \rangle \langle k|.
\end{align}
Then, the Hamiltonian is given as
\begin{align}
H_{(\lambda_0)}:= N_{(\lambda_0)}^2 ,\quad
N_{(\lambda_0)}:=\sum_{k=-\infty}^{\infty}
(k+\lambda_0 )|k \rangle \langle k|.
\end{align}
Even in this modification,
the result in Theorem \ref{T3-13-3} is not changed
because this modification does not effect the asymptotic behavior of the energy.

\subsection{Practical construction of asymptotically optimal estimator with energy constraint}\Label{s10-3}
While \eqref{3-13-7} provides an asymptotically optimal estimator with energy
constraint, its construction is not so practical. However, the optimal performance
with energy constraint can be realized with easier construction by the
following ways. 
Now, we fix a state $|\phi \rangle = \sum_{k=-\infty}^{\infty}
\sqrt{p_k}|k\rangle 
\in \ell^2(\bZ)$, and choose the real number 
$\lambda:= -\sum_{k=-\infty}^{\infty}k p_k$.
The energy of $|\phi \rangle$ is 
$E_{\phi}:=
\langle \phi | H_{(\lambda)}| \phi \rangle=
\sum_{k=-\infty}^{\infty} (k+\lambda)^2 p_k$ under the Hamiltonian $H_{(\lambda)}$.

We also consider the $m$-tensor product system ${\cal H}_m:= 
\ell^2( \bZ)^{\otimes m}$, 
the Hamiltonian $H_{m}:= (\sum_{i=1}^m  I^{\otimes (i-1)} \otimes N_{(\lambda)} 
\otimes I^{\otimes (m-i)})^2$, and
the tensor product representation $f_{(\lambda)}(\theta)^{\otimes m}$.
Then, 
since
$\langle \phi^{\otimes n} | 
I^{\otimes (i-1)} \otimes N_{(\lambda)} \otimes I^{\otimes (m-i)}
| \phi^{\otimes n} \rangle =0$,
the energy of $|\phi^{\otimes m}\rangle$ is
\begin{align*}
& \langle \phi^{\otimes n} | H_{m}| \phi^{\otimes n} \rangle 
=
\langle \phi^{\otimes n} | 
(\sum_{i=1}^m  I^{\otimes (i-1)} \otimes N_{(\lambda)} 
\otimes I^{\otimes (m-i)})^2
| \phi^{\otimes n} \rangle \\
=&
\langle \phi^{\otimes n} | 
\sum_{i=1}^m  I^{\otimes (i-1)} \otimes N_{(\lambda)}^2 
\otimes I^{\otimes (m-i)}
| \phi^{\otimes n} \rangle 
= m E_{\phi}.
\end{align*}

Now, we give the following estimation protocol (Protocol 1).
\begin{description}
\item[(1.1)]
We set the initial state $|\phi^{\otimes m}\rangle$.
\item[(1.2)]
We apply the covariant measurement $M_{|{\cal I}\rangle \langle {\cal I}|}$
on each system $\ell^2(\bZ)$.
Then, we obtain $n$ outcomes ${\theta}_1, \ldots, {\theta}_n$.
Each outcome $\theta_i$ obeys the distribution
$p_{\theta}(\theta_i)d \theta_i :=
|\sum_{k=-\infty}^{\infty} \sqrt{p_k}e^{-ik (\theta_i-\theta)} |^2
\frac{d \theta_i}{2\pi} $
when the true parameter is $\theta$.
\item[(1.3)]
We apply the maximum likelihood estimator to the obtained outcomes
${\theta}_1, \ldots, {\theta}_m$.
Then, we obtain the final estimate $\hat{\theta}_m$.
That is, we decide $\hat{\theta}_m$ as
\begin{align}
\hat{\theta}_m:= \argmax_{\theta \in (-\pi,\pi]} \sum_{i=1}^m 
\log p_{\theta}(\theta_i).
\end{align}
\end{description}

We denote the above measurement with the output $\hat{\theta}_m$ by $\tilde{M}_m$.
Then, due to the following theorem,
the above protocol asymptotically realizes 
the minimum error under the energy constraint. 
The optimal performance with energy constraint 
can be attained without use of quantum correlation in the measurement process.
\begin{theorem}
Assume that $\phi$ satisfies one of two conditions.
\begin{description}
\item[(a)]
$\lambda=0$ and 
${\cal F}[\phi]$ is an even function.
\item[(b)]
$\lambda=-\frac{1}{2}$ and 
$\sum_{k=-\infty}^{\infty}
\sqrt{p_k} e^{i(k+\lambda)\theta }$
is an even function. 
\end{description}

The relation
\begin{align}
\lim_{m \to \infty} 
m {\cal D}_R(|\phi^{\otimes m} \rangle, \tilde{M}_m)
&= \frac{1}{8E_\phi} \Label{5-1-6b}.
\end{align}
holds. That is,
\begin{align}
\lim_{m \to \infty} 
\langle\phi^{\otimes m}| H_m |\phi^{\otimes m}\rangle
{\cal D}_R(|\phi^{\otimes m} \rangle, \tilde{M}_n)
= \frac{1}{8}.
\end{align}
\end{theorem}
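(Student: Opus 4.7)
The strategy is to reduce this to classical asymptotic theory for the i.i.d.\ outcomes $\theta_1,\ldots,\theta_m$ produced in Step (1.2), compute the Fisher information explicitly, and then Taylor-expand the loss $1-\cos$. Writing the common density of each $\theta_i$ as $p_\theta(\hat\theta)=\frac{1}{2\pi}|h(\hat\theta-\theta)|^2$ with $h(\alpha):=\sum_k\sqrt{p_k}\,e^{-ik\alpha}$, condition (a) makes $h$ real and even, while condition (b) makes an analogous half-integer-shifted version so. In both cases $p_\theta$ is reflection-symmetric about $\theta$, which guarantees that $\theta$ is the unique global maximizer over the whole circle $(-\pi,\pi]$ of the population log-likelihood and that the score has mean zero; this is precisely the identifiability input needed both for consistency of $\hat\theta_m$ and for the standard quadratic expansion of the log-likelihood.

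The key computation is the Fisher information
\begin{align*}
J(\theta) &= \int_{-\pi}^{\pi}\frac{|\partial_\theta p_\theta(\hat\theta)|^2}{p_\theta(\hat\theta)}\,d\hat\theta = \frac{2}{\pi}\int_{-\pi}^{\pi}|h'(\alpha)|^2\,d\alpha = 4E_{\phi},
\end{align*}
where the second equality uses $\partial_\theta p_\theta = -\partial_\alpha(|h|^2)/(2\pi)$ together with the realness of $h$, and the third is Parseval applied to the Fourier coefficients $-ik\sqrt{p_k}$ of $h'$ (in case (b) one uses the shifted coefficients $-i(k+\lambda)\sqrt{p_k}$, giving $J=4\sum_k(k+\lambda)^2 p_k=4E_{\phi}$). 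Equivalently, $J$ coincides with the quantum Fisher information $4\Delta_\phi^2 N_{(\lambda)}=4E_{\phi}$ of the single-copy pure-state family $e^{i\theta N_{(\lambda)}}|\phi\rangle$, so the covariant POVM $M_{|{\cal I}\rangle\langle{\cal I}|}$ is already information-optimal on each site.

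I would then invoke the standard asymptotic efficiency of the MLE: consistency of $\hat\theta_m$ together with the quadratic expansion of the log-likelihood around $\theta$ gives $\sqrt{m}(\hat\theta_m-\theta)\Rightarrow {\cal N}(0,1/J)$. To upgrade convergence in distribution to convergence of the expected squared loss, I would verify uniform integrability of $\{m(\hat\theta_m-\theta)^2\}_m$ via an exponential tail bound for the MLE obtained from a uniform law of large numbers for $\log p_\theta$ on the compact circle; this yields $m\,\mathbb{E}_\theta[(\hat\theta_m-\theta)^2]\to 1/J$. Taylor expanding $1-\cos x=x^2/2+O(x^4)$ and using that the fourth moment is $O(1/m^2)$ then gives
\begin{align*}
m\,{\cal D}_R(|\phi^{\otimes m}\rangle,\tilde{M}_m)=m\,\mathbb{E}_\theta[1-\cos(\hat\theta_m-\theta)]\longrightarrow \frac{1}{2J}=\frac{1}{8E_{\phi}},
\end{align*}
which is \eqref{5-1-6b}. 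The main obstacle I expect is handling the circular topology of the parameter space: one must ensure that $\hat\theta_m$ concentrates near the true $\theta$ and does not wrap around to a distant competitor such as $\theta\pm\pi$, and it is precisely here that the global symmetry provided by conditions (a)/(b) does substantive work beyond the Fisher-information calculation, ruling out spurious global maxima of the sample log-likelihood and permitting the reduction to the local Euclidean MLE theory in a neighborhood of $\theta$.
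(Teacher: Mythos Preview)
Your proposal is correct and follows essentially the same route as the paper: both reduce to computing the per-copy Fisher information of the classical family $p_\theta$, obtain $J=4\sum_k(k+\lambda)^2 p_k=4E_\phi$ via the orthogonality/Parseval relations made available by conditions (a)/(b), and then invoke the asymptotic efficiency of the MLE together with the second-order expansion $1-\cos x\approx x^2/2$. Your additional discussion of uniform integrability and identifiability on the circle fills in technicalities that the paper simply delegates to the reference \cite{van}.
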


\begin{proof}
We show (\ref{5-1-5}).
For this purpose, we calculate the Fisher information
of the distribution family $\{ p_{\theta}(\theta') \}$.
Due to the assumption, we have
\begin{align}
\sqrt{p_{\theta}(\theta')}=
\sum_{k=-\infty}^{\infty} \sqrt{p_k}e^{-i(k+\lambda) (\theta_i-\theta)} .
\end{align}
and $\sqrt{p_k}=\sqrt{p_{k'}}$ when $k+\lambda = -(k'+\lambda)$.
The logarithmic derivative is given as
\begin{align*}
& l_{\theta}(\theta_i):=
\frac{d}{d \theta} \log p_{\theta}(\theta_i)
=
2\frac{d}{d \theta} \log \sqrt{p_{\theta}(\theta_i)} 
= 
2 \frac{\sum_{k=-\infty}^{\infty} 
i(k+\lambda) \sqrt{p_k}e^{-i(k+\lambda) (\theta_i-\theta)} }
{\sqrt{p_{\theta}(\theta_i)}}.
\end{align*}
Since 
\begin{align}
\int_{-\pi}^{\pi}
e^{-i(k+\lambda) (\theta_i-\theta)} 
e^{-i(k'+\lambda) (\theta_i-\theta)} 
\frac{d  \theta_i}{2\pi}=
\left\{
\begin{array}{cl}
0  & \hbox{if }k+\lambda \neq -(k'+\lambda) \\
1   & \hbox{if }k+\lambda = -(k'+\lambda) ,
\end{array}
\right.
\end{align}
the Fisher information is calculated to
\begin{align}
J_{\theta}
:=& \int_{-\pi}^{\pi}
l_{\theta}(\theta_i)^2 
p_{\theta}(\theta_i) 
\frac{d \theta_i}{2\pi} 
=\int_{-\pi}^{\pi}
4
(
\sum_{k=-\infty}^{\infty} 
i(k+\lambda) \sqrt{p_k}e^{-i(k+\lambda) (\theta_i-\theta)} )^2
\frac{d \theta_i}{2\pi} \nonumber \\
=&
4
\sum_{k=-\infty}^{\infty} 
(k+\lambda)^2 p_k
=4 E_{\phi}.
\Label{5-2-1bx}
\end{align}

Remember that
the asymptotic mean square error of the maximum likelihood estimator
can be characterized by 
the inverse of Fisher information $J_{\theta}$ \cite{van}.
That is, we obtain  
$n \rE_\theta (\hat{\theta}_n-\theta)^2 \to  J_{\theta}^{-1}$,
where $\rE_\theta$ expresses the expectation under the distribution $p_{\theta} $.
Hence, we have
\begin{align*}
{\cal D}_R(|\phi^{\otimes m} \rangle, \tilde{M}_n)
= 
\rE_\theta[ 1 -\cos (\hat{\theta}_{n}-\theta) ]
\cong
\frac{1}{2} \rE_\theta[ (\hat{\theta}_{n}-\theta)^2]
\cong 
\frac{1}{2} \frac{1}{4 E_{\phi} n}=
\frac{1}{8 E_{\phi} n},
\end{align*}
where $\rE_\theta$ expresses the expectation
under the distribution $p_{\theta}(\theta')$.
\end{proof}

\subsection{Uncertainty relation}\Label{s10-4}
We consider the relation between the above results and the 
uncertainty relation on the space $L^2_p((-\pi,\pi])$.
In this space, we can consider the pair of operators
$(\cos Q, \sin Q)$.
Then, we focus on the uncertainty 
\begin{align*}
&\Delta_{\varphi}^2(\cos Q, \sin Q):=
\Delta_{\varphi}^2 \cos Q +\Delta_{\varphi}^2 \sin Q \\
=&
\langle \varphi | \cos^2 Q |\varphi \rangle 
+\langle \varphi | \sin^2 Q |\varphi \rangle 
-\langle \varphi | \cos Q |\varphi \rangle^2
-\langle \varphi | \sin Q |\varphi \rangle^2 \\
=&
1-\langle \varphi | \cos Q |\varphi \rangle^2
-\langle \varphi | \sin Q |\varphi \rangle^2 .
\end{align*}
Then, as the uncertainty relation, 
we consider trade-off between
$\Delta_{\varphi}^2(\cos Q, \sin Q)$
and $\Delta_{\varphi}^2 P$,
which is formulated as the following theorem.

\begin{theorem}\Label{T5-12-16}
The minimum of the uncertainty $\Delta_{\varphi}^2(\cos Q, \sin Q)$
under the constraint for $\Delta_{\varphi}^2 P $
is calculated as
\begin{align}
\min_{\varphi \in L^2_{p,n}([-\pi,\pi])}
\{
\Delta_{\varphi}^2(\cos Q, \sin Q)|
\Delta_{\varphi}^2 P \le E \}
=
\max_{s>0} 1-(sE-\frac{s a_0(\frac{2}{s})}{4})^2 .\Label{5-12-16}
\end{align}
The minimum in \eqref{5-12-16}
is realized by $\varphi$
if and only if
$\varphi$ is given as a shift of the Mathieu function
$\ce_0(\frac{\theta}{2}, -\frac{2}{s_E})$.
Further, 
$\min_{s>0} sE-\frac{s a_0(\frac{2}{s})}{4})$ is a positive value,
\eqref{5-12-16} has the asymptotic expansion 
$\frac{1}{4E}-\frac{1}{32 E^2}$ as $E \to \infty$.
\end{theorem}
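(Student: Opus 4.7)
The plan is to reduce to Theorem \ref{T3-13-3}. The key identity
\begin{align*}
\Delta_\varphi^2(\cos Q,\sin Q)
=1-\langle\varphi|\cos Q|\varphi\rangle^2-\langle\varphi|\sin Q|\varphi\rangle^2
=1-|\langle\varphi|e^{iQ}|\varphi\rangle|^2
\end{align*}
turns the minimization under $\Delta_\varphi^2 P\le E$ into the maximization of $|\langle\varphi|e^{iQ}|\varphi\rangle|$ under the same constraint.

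For the lower bound I would decompose $\varphi=\sqrt{p}\,e^{i\psi}$ with $p:=|\varphi|^2$. Integration by parts together with periodicity yield
\begin{align*}
\langle\varphi|P|\varphi\rangle=\int\psi'(\theta) p(\theta)\tfrac{d\theta}{2\pi},
\qquad
\langle\varphi|P^2|\varphi\rangle=F(p)+\int(\psi'(\theta))^2 p(\theta)\tfrac{d\theta}{2\pi},
\end{align*}
where $F(p):=\int(\partial_\theta\sqrt{p})^2\tfrac{d\theta}{2\pi}=\langle\sqrt{p}|P^2|\sqrt{p}\rangle$; hence $\Delta_\varphi^2 P\ge F(p)$. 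Since $|\langle\varphi|e^{iQ}|\varphi\rangle|$ depends only on $p$, I shift by $\theta_0:=-\arg\langle\varphi|e^{iQ}|\varphi\rangle$ and set $\tilde\varphi(\theta):=\sqrt{p(\theta-\theta_0)}$, which is real and nonnegative, lies in $L^2_{p,n}((-\pi,\pi])$, and satisfies $\langle\tilde\varphi|P^2|\tilde\varphi\rangle=F(p)\le E$, $\langle\tilde\varphi|\sin Q|\tilde\varphi\rangle=0$, and $\langle\tilde\varphi|\cos Q|\tilde\varphi\rangle=|\langle\varphi|e^{iQ}|\varphi\rangle|$. Theorem \ref{T3-13-3} then yields
\begin{align*}
|\langle\varphi|e^{iQ}|\varphi\rangle|
=1-\langle\tilde\varphi|(I-\cos Q)|\tilde\varphi\rangle
\le 1-\kappa_{\U(1)}(E),
\end{align*}
so $\Delta_\varphi^2(\cos Q,\sin Q)\ge 1-(1-\kappa_{\U(1)}(E))^2$. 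The matching upper bound is realized by $\varphi^{*}=\ce_0(\theta/2,-2/s_E)$, which is real and even with $\langle P\rangle=\langle\sin Q\rangle=0$, $\langle P^2\rangle=E$ and $\langle\cos Q\rangle=1-\kappa_{\U(1)}(E)$.

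To arrive at the stated right hand side of \eqref{5-12-16}, I will apply \eqref{5-10-13}: from $\kappa_{\U(1)}(E)=\max_{s>0}(s a_0(2/s)/4+1-sE)$ we get $1-\kappa_{\U(1)}(E)=\min_{s>0}(sE-s a_0(2/s)/4)$, and the stationarity $\gamma_{\U(1)}'(s_E)=E$ together with $\kappa_{\U(1)}(E)\in[0,1]$ ensures this minimand is nonnegative and attained at $s_E$, so $(1-\kappa_{\U(1)}(E))^2=\min_{s>0}(sE-s a_0(2/s)/4)^2$ and the claimed identity follows. The equality characterization is tracked by combining the uniqueness of the $\ce_0$-minimizer in Theorem \ref{T3-13-3} with the saturation condition that $\psi'$ be $p$-a.e.\ constant, which forces $\varphi$ to be, up to an overall phase and an integer momentum boost $e^{in\theta}$, a translate of $\ce_0(\theta/2,-2/s_E)$. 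The one delicate step I foresee is the amplitude-phase decomposition when $\varphi$ has zeros on a set of positive measure: there $\psi$ is not globally smooth, and $\Delta_\varphi^2 P\ge F(p)$ will need to be justified either by a density argument approximating by nonvanishing smooth wavefunctions, or directly via the pointwise diamagnetic-type inequality $|\varphi'|^2\ge(\partial_\theta|\varphi|)^2$ valid almost everywhere.
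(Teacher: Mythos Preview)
Your argument is correct and arrives at the same destination as the paper (reduction to Theorem \ref{T3-13-3} followed by the Legendre-transform algebra), but the reduction step is genuinely different. The paper simply writes ``due to the symmetry, without loss of generality, we can assume $\langle\varphi|\sin Q|\varphi\rangle=0$ and $\langle\varphi|P|\varphi\rangle=0$'' and then identifies $\Delta_\varphi^2 P$ with $\langle P^2\rangle$; from there it applies Theorem \ref{T3-13-3} directly. Your amplitude--phase decomposition and the diamagnetic bound $\Delta_\varphi^2 P\ge\langle\sqrt{p}\,|P^2|\sqrt{p}\,\rangle$ make this step explicit and rigorous: on the circle one cannot force $\langle P\rangle=0$ by a momentum boost (only integer shifts are allowed), so the paper's ``symmetry'' remark is somewhat elliptic, whereas your construction of $\tilde\varphi=\sqrt{p(\cdot-\theta_0)}$ produces a competitor that automatically has $\langle P\rangle=0$ and satisfies the $\langle P^2\rangle\le E$ constraint. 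Your equality analysis is also sharper: tracking saturation in the Jensen step forces $\psi'$ to be an integer constant on $\{p>0\}$, so the full optimizing set is $\{e^{id}e^{in\theta}\ce_0((\theta-\theta_0)/2,-2/s_E)\}$, which refines the paper's ``shift of the Mathieu function''. The final algebraic identity is handled the same way in both proofs; your remark that $\gamma_{\U(1)}(s)\le 1$ (take $\varphi\equiv 1$) is what justifies the passage $(\min_s f)^2=\min_s f^2$, and the paper's one-line justification at that point is essentially the same observation. Your caveat about zeros of $\varphi$ is the right one, and the pointwise inequality $|\varphi'|^2\ge(\partial_\theta|\varphi|)^2$ a.e.\ (the one-dimensional diamagnetic inequality) is indeed the clean way to close it.
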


\begin{proof}
Due to the symmetry, without loss of generality,
we can assume that 
$\langle \varphi | \sin Q |\varphi \rangle=0$
and 
$\langle \varphi | P  |\varphi \rangle=0$.
Hence, we obtain
\begin{align*}
&\min_{\varphi \in L^2_{p,n}([-\pi,\pi])}
\{
\Delta_{\varphi}^2(\cos Q, \sin Q)|
\Delta_{\varphi}^2 P \le E \} \\
=&
\min_{\varphi \in L^2_{p,n}([-\pi,\pi])}
\{
1-\langle \varphi| \cos Q| \varphi\rangle^2 |
\langle \varphi| P^2 | \varphi\rangle \le E \} \\
=& 1-
(\max_{\varphi \in L^2_{p,n}([-\pi,\pi])}
\{
\langle \varphi| \cos Q| \varphi\rangle |
\langle \varphi| P^2 | \varphi\rangle \le E \})^2.
\end{align*}
Thanks to Theorem \ref{T3-13-3}, 
we have
\begin{align}
&1-
(\max_{\varphi \in L^2_{p,n}([-\pi,\pi])}
\{
\langle \varphi| \cos Q| \varphi\rangle |
\langle \varphi| P^2 | \varphi\rangle \le E \})^2 \nonumber \\
=&
1-(1-\max_{s>0} \frac{s a_0(\frac{2}{s})}{4} +1 -sE)^2 
=
1-(\max_{s>0} \frac{s a_0(\frac{2}{s})}{4} -sE)^2 .\nonumber
\end{align}
Since $\max_{s>0} \frac{s a_0(\frac{2}{s})}{4} +1 -sE\le 1$,
$sE-\frac{s a_0(\frac{2}{s})}{4} \ge 0$.
Hence,
\begin{align*}
&1-(\max_{s>0} \frac{s a_0(\frac{2}{s})}{4} -sE)^2
=1-(\min_{s>0} sE- \frac{s a_0(\frac{2}{s})}{4} )^2 \\
=&1-\min_{s>0} (sE- \frac{s a_0(\frac{2}{s})}{4} )^2
=\max_{s>0} 1-(sE- \frac{s a_0(\frac{2}{s})}{4} )^2,
\end{align*}
which implies \eqref{5-12-16}.
Then, the minimum uncertainty in \eqref{5-12-16}
is realized by $\varphi$
if and only if
$\varphi$ is given as a shift of the Mathieu function
$\ce_0(\frac{\theta}{2}, -\frac{2}{s_E})$.

Further,
when $E$ approaches infinity,
$\max_{s>0} 1-(sE- \frac{s a_0(\frac{2}{s})}{4} )^2
\cong 1- (1-(\frac{1}{8E}-\frac{1}{128 E^2}))^2
\cong \frac{1}{4E}-\frac{1}{32 E^2}$.
\end{proof}

Next, as another type of uncertainty relation,
we consider the trade-off between
$\Delta_{\varphi}^2(\cos Q, \sin Q)$
and $\Delta_{\varphi,\max} P$,
which is defined as
the maximum eigenvalue of 
$|P - \langle \varphi|P |\varphi\rangle|$
such that 
the corresponding projection $A$ satisfies $\langle \varphi| A| \varphi\rangle>0$.

\begin{theorem}
The minimum of the uncertainty $\Delta_{\varphi}^2(\cos Q, \sin Q)$
under the constraint for $\Delta_{\varphi,\max} P $
is calculated as
\begin{align}
\min_{\varphi \in L^2_{p,n}([-\pi,\pi])}
\{
\Delta_{\varphi}^2(\cos Q, \sin Q)|
\Delta_{\varphi,\max} P \le E \} =
\sin^2 \frac{\pi}{2\lfloor E \rfloor+2}
\Label{5-12-17}.
\end{align}
The minimum uncertainty is realized by $\varphi$
if and only if
${\cal F}_{\pi}[\varphi](\lambda)$ is given as a shift of the wave function
$C \sin\frac{\pi(\lambda+\lfloor E \rfloor+1)}{2\lfloor E \rfloor+2}$, 
where $C$ is the normalizing constant.
\end{theorem}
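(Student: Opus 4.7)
The plan is to reduce the problem to Theorem \ref{T5-12-4}, which already computes the maximum of $\langle \varphi | \cos Q | \varphi\rangle$ over states whose Fourier support is confined to a symmetric window of integers.

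First, I would invoke translation symmetries to assume without loss of generality that $\langle \varphi | \sin Q | \varphi\rangle = 0$ and $\langle \varphi | P | \varphi\rangle = 0$. The first normalization uses that replacing $\varphi(\theta)$ by $\varphi(\theta-\theta_0)$ leaves both $\Delta_\varphi^2(\cos Q,\sin Q)$ and $\Delta_{\varphi,\max} P$ invariant while rotating the vector $(\langle\cos Q\rangle,\langle\sin Q\rangle)$ through angle $\theta_0$; choosing $\theta_0$ appropriately makes $\langle\varphi|\sin Q|\varphi\rangle=0$ and $\langle\varphi|\cos Q|\varphi\rangle\ge 0$. The second normalization uses that multiplication of $\varphi$ by $e^{ic\theta}$ for an integer $c$ is unitary on $L^2_p((-\pi,\pi])$, leaves $\cos Q$ and $\sin Q$ invariant, and shifts the spectrum of $P$ by $c$; combined with the observation that only the integer part of $E$ affects the constraint, this lets us take the mean of $P$ to be zero.

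Under these normalizations,
\begin{align*}
\Delta_\varphi^2(\cos Q,\sin Q) = 1 - \langle\varphi|\cos Q|\varphi\rangle^2,
\end{align*}
so minimizing the left-hand side amounts to maximizing $\langle\varphi|\cos Q|\varphi\rangle$. Moreover, since the spectrum of $P$ is $\bZ$ with eigenvectors $e^{ik\theta}$, the constraint $\Delta_{\varphi,\max} P\le E$ together with $\langle\varphi|P|\varphi\rangle=0$ is equivalent to saying that $\phi:={\cal F}_\pi[\varphi]$ has support inside $\Lambda_{\lfloor E\rfloor} = \{k\in\bZ:|k|\le\lfloor E\rfloor\}$.

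Now I would apply Theorem \ref{T5-12-4} directly. That theorem gives $\min_{|\phi\rangle\in{\cal K}_{\Lambda_n,n}}{\cal D}_R(|\phi\rangle) = 1-\cos\frac{\pi}{2n+2}$, and the identity ${\cal D}_R(|\phi\rangle) = \langle\varphi|(I-\cos Q)|\varphi\rangle = 1-\langle\varphi|\cos Q|\varphi\rangle$ from \eqref{th7} translates this into $\max\langle\varphi|\cos Q|\varphi\rangle = \cos\frac{\pi}{2n+2}$ over $\phi$ supported in $\Lambda_n$, with the extremal $\phi$ given by $C\sin\frac{\pi(\lambda+n+1)}{2n+2}$. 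Setting $n=\lfloor E\rfloor$ and squaring produces $1-\cos^2\frac{\pi}{2\lfloor E\rfloor+2} = \sin^2\frac{\pi}{2\lfloor E\rfloor+2}$, proving \eqref{5-12-17}; undoing the two translations from the first step recovers the stated description of the optimizer as a shift of the prescribed sine wave.

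The main obstacle is the first step: cleanly exploiting the continuous rotation symmetry in $\theta$ together with the discrete integer shift symmetry in the Fourier variable, so that every competing $\varphi$ can be put into the standard form required by Theorem \ref{T5-12-4}. Once that reduction is established, the optimization itself is essentially already done in Theorem \ref{T5-12-4}.
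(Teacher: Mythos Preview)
Your proposal is correct and follows essentially the same route as the paper: normalize by the rotation symmetry in $\theta$ and the integer shift in the Fourier variable so that $\langle\varphi|\sin Q|\varphi\rangle=0$ and $\langle\varphi|P|\varphi\rangle=0$, rewrite the objective as $1-\langle\varphi|\cos Q|\varphi\rangle^2$, identify the constraint with ${\cal F}_\pi[\varphi]\in{\cal K}_{\Lambda_{\lfloor E\rfloor}}$, and invoke Theorem~\ref{T5-12-4}. In fact you supply more detail on the symmetry reduction than the paper does, which simply states ``Due to the symmetry, without loss of generality'' before proceeding identically.
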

\begin{proof}
Due to the symmetry, without loss of generality,
we can assume that 
$\langle \varphi | \sin Q |\varphi \rangle=0$
and 
$\langle \varphi | P  |\varphi \rangle=0$.
Hence, we obtain
\begin{align*}
&\min_{\varphi \in L^2_{p,n}([-\pi,\pi])}
\{
\Delta_{\varphi}^2(\cos Q, \sin Q)|
\Delta_{\varphi,\max} P \le E \} \\
=&
\min_{\varphi \in L^2_{p,n}([-\pi,\pi])}
\{
1-\langle \varphi| \cos Q| \varphi\rangle^2 |
{\cal F}_{\pi}[\varphi]\in {\cal K}_{\Lambda_{\lfloor E \rfloor}}
\le E \} \\
=& 1-
(\max_{\varphi \in L^2_{p,n}([-\pi,\pi])}
\{
\langle \varphi| \cos Q| \varphi\rangle |
{\cal F}_{\pi}[\varphi]\in {\cal K}_{\Lambda_{\lfloor E \rfloor}}
\})^2.
\end{align*}
Thanks to Theorem \ref{T5-12-4}, 
we have
\begin{align}
&1-
(\max_{\varphi \in L^2_{p,n}([-\pi,\pi])}
\{
\langle \varphi| \cos Q| \varphi\rangle |
\langle \varphi| P^2 | \varphi\rangle \le E 
{\cal F}_{\pi}[\varphi]\in {\cal K}_{\Lambda_{\lfloor E \rfloor}}
\})^2 \nonumber \\
=&
1-(1-(1-\cos \frac{\pi}{2\lfloor E \rfloor+2} )
)^2 
=
1-\cos^2 \frac{\pi}{2\lfloor E \rfloor+2}
=
\sin^2 \frac{\pi}{2\lfloor E \rfloor+2}.
\end{align}
Then, the minimum uncertainty in \eqref{5-12-17}
is realized by $\varphi$
if and only if
${\cal F}_{\pi}[\varphi](\lambda)$ is given as a shift of the wave function
$C \sin\frac{\pi(\lambda+\lfloor E \rfloor+1)}{2\lfloor E \rfloor+2}$, 
where $C$ is the normalizing constant.
\end{proof}

\section{2-dimensional special unitary group $\SU(2)$}\Label{s11}
\subsection{General structure of $\SU(2)$ estimation}\Label{s11-1}
We consider the special linear group $\SU(2)$, which is simply connected.
In the case of $G=\SU(2)$, 
we identify the elements of the set $\hat{G}=\hat{\SU(2)}$ by the highest weight.
Now, we consider two kinds of parametrization of $\SU(2)$.
Using the matrices
\begin{align}
\sigma_1 &:=
\left(
\begin{array}{cc}
0 & 1 \\
1 & 0
\end{array}
\right), \quad
\sigma_2:=
\left(
\begin{array}{cc}
0 & -i \\
i & 0
\end{array}
\right) , \quad
\sigma_3:=
\left(
\begin{array}{cc}
1 & 0 \\
0 & -1
\end{array}
\right) ,
\end{align}
we have the first kind of parametrization 
\begin{align}
g_{\vec{\theta}}:= \exp(i\sum_{k=1}^3 \frac{\theta^k}{2} \sigma_k) \Label{5-19-5}
\end{align}
with the range $\{\vec{\theta} =(\theta^1,\theta^2,\theta^3) |
\|\vec{\theta} \|:= \sqrt{\sum_{k=1}^3 (\theta^k)^2} \le 2\pi\} $.
We also the second kind of parametrization
\begin{align}
\tilde{g}_{\theta,\eta_1,\eta_2}:
= 
\left(
\begin{array}{cc}
x_0+i x_1 & x_2+i x_3 \\
x_2-i x_3 & x_0-i x_1
\end{array}
\right) ,
\Label{5-19-5c}
\end{align}
where 
$x_0=\cos \frac{\theta}{2}$,
$x_1=\sin \frac{\theta}{2} \cos \eta_1$,
$x_2=\sin \frac{\theta}{2} \sin \eta_1 \cos \eta_2$,
$x_3=\sin \frac{\theta}{2} \sin \eta_1 \sin \eta_2$
with the range $\theta \in (-2\pi,2\pi]$,
$\eta_1 \in (-\frac{\pi}{2},\frac{\pi}{2}]$,
$\eta_2 \in (-\frac{\pi}{2},\frac{\pi}{2}]$.
Under these parametrization, 
the character $\chi_{\frac{k}{2}}$ can be written as
\begin{align}
\chi_{\frac{k}{2}}(\tilde{g}_{\theta,\eta_1,\eta_2})
&= \sum_{l=0 \hbox{ \rm or } \frac{1}{2}}^{\frac{k}{2}} 
\cos l \theta
=\frac{\sin \frac{k+1}{2}\theta}{\sin \frac{\theta}{2}}
\Label{5-28-11} \\
\chi_{\frac{k}{2}}({g}_{\vec{\theta}})
&= \sum_{l=0 \hbox{ \rm or } \frac{1}{2}}^{\frac{k}{2}} 
\cos l \|\vec{\theta}\|.
\Label{5-28-12} 
\end{align}
The second equation of \eqref{5-28-11} can be shown as follows.
\begin{proofof}{the second equation of \eqref{5-28-11}}
Then, 
for an even $k$,
we have 
$\chi_{\frac{k}{2}}(\tilde{g}_{\theta,\eta_1,\eta_2})
=
1+\sum_{l=1}^{\frac{k}{2}} 2 \cos l \theta$, which implies 
\begin{align}
&\chi_{\frac{k}{2}}(\tilde{g}_{\theta,\eta_1,\eta_2})
\sin \frac{\theta}{2} 
=
(1+\sum_{l=1}^{\frac{k}{2}} 2 \cos l \theta)\sin \frac{\theta}{2} 
\nonumber \\
=&
\sin \frac{\theta}{2} 
+\sum_{l=1}^{\frac{k}{2}}
(\sin \frac{2l+1}{2} \theta -\sin \frac{2l-1}{2} \theta) 
=\sin \frac{k+1}{2} \theta \Label{5-20-9}.
\end{align}
For an odd $k$,
we have
$\chi_{\frac{k}{2}}(\tilde{g}_{\theta,\eta_1,\eta_2})
=
\sum_{l=0}^{\frac{k-1}{2}} \cos (l+\frac{1}{2}) \theta$,
which implies 
\begin{align}
& \chi_{\frac{k}{2}}(\tilde{g}_{\theta,\eta_1,\eta_2})
\sin \frac{\theta}{2} 
=
(\sum_{l=0}^{\frac{k-1}{2}} \cos (l+\frac{1}{2}) \theta)
\sin \frac{\theta}{2} 
\nonumber \\
=&
\sum_{l=0}^{\frac{k-1}{2}}
(\sin (l+1) \theta -\sin l \theta) 
=\sin \frac{k+1}{2} \theta \Label{5-20-10}.
\end{align}
\end{proofof}

When the risk function $R$ satisfies the condition \eqref{6-26-14}, 
the risk function is written as 
\begin{align}
R(e,\hat{g})= \alpha_0- \sum_{k=1}^{\infty} \alpha_{\frac{k}{2}} \chi_{\frac{k}{2}}(\hat{g})
\Label{5-19-1}
\end{align}
with $\alpha_{\frac{k}{2}} \ge 0$.
Defining the even periodic function 
$w(\theta):=(\alpha_0- \sum_{k=1}^{\infty}\alpha_{\frac{k}{2}} 
(\sum_{l=0 \hbox{ \rm or } \frac{1}{2}}^{\frac{k}{2}} 
\cos l \theta))$ with the period $4\pi$,
we have $R(e,\tilde{g}_{\theta,\eta_1,\eta_2})=w(\theta)$.
As a typical risk function, we often adopt the risk function 
$R_{\SU(2)}(e,\hat{g})= 1- \frac{1}{2}\chi_{\frac{1}{2}}(\hat{g})$,
which is written as
$R_{\SU(2)}(e,g_{\vec{\theta}})= 1- \cos \frac{\|\vec{\theta} \|}{2}$
and
$R_{\SU(2)}(e,\tilde{g}_{\theta,\eta_1,\eta_2})= 1- \cos \frac{\theta}{2}$.

We often use the risk function 
$R(e,g_{\vec{\theta}})=
\frac{3-\chi_1(g_{\vec{\theta}})}{2}=1- \cos \|\vec{\theta} \|$.
However, we cannot distinguish matrices $e$ and $-e$
under this risk function because $\frac{3-\chi_1(-e)}{2}=0$.
That is, under the projection $\varpi: \SU(2) \to \SO(3)$,
the two elements in $\varpi^{-1}(g)$ cannot be distinguished.
So, it is better to use this function as a risk function of estimation of $\SO(3)$.
In this case, the representation with the highest weight $\frac{1}{2}$
can be treated as a projective representation of $\SO(3)$.

Further, we also assume that
the Hamiltonian $H$ is written by using a function $h$ as 
\begin{align}
H=\sum_{k=0}^{\infty} h((\frac{k+1}{2})^2) I_{\frac{k}{2}}.
\Label{5-19-2}
\end{align}
Then, we have the following theorem.
\begin{theorem}\Label{T5-20b}
Assume the assumptions \eqref{5-19-1} and \eqref{5-19-2}.
For an input state 
\begin{align}
|{\phi} \rangle:=
\bigoplus_{k=0}^{\infty} \frac{{\beta}_{\frac{k}{2}}}{\sqrt{k+1}}
|\Psi_{\frac{k}{2}}\rangle\rangle, 
\Label{5-20-8}
\end{align}
we have the relations
\begin{align}
\varphi(\theta)
&:=
{\cal F}^{-1}[\phi](\tilde{g}_{\theta,\eta_1,\eta_2}) {\sin \frac{\theta}{2}}
=
\sqrt{2}\sum_{k=0}^{\infty}\beta_{\frac{k}{2}} \sin (k+\frac{1}{2}) \theta 
\Label{5-28-10},
\\
{\cal D}_R(|\phi\rangle)
&=\langle \varphi| w(Q) | \varphi\rangle , \quad
\langle \phi |H |\phi\rangle =\langle \varphi|h(P^2)| \varphi\rangle 
\Label{5-18-2n}.
\end{align}
Here $\varphi(\theta)$
is an odd function and is included in $L_{p}^2((-2\pi,2\pi])$.

Then,
the relations
\begin{align}
&\min_{\rho \in {\cal S}({\cal K}_{\hat{\SU(2)}})} 
\min_{M \in {\cal M}_{\cov}(\SU(2))} 
\{{\cal D}_R(\rho,M)| \Tr \rho H \le E \}
\nonumber \\
=&
\min_{ \{p_i\} }
\min_{\rho_i \in {\cal S}({\cal K}_{\hat{\SU(2)}})}
\min_{M_i \in {\cal M}_{\cov}(\SU(2))} 
\{\sum_i p_i{\cal D}_R(\rho_i,M_i)| \sum_i p_i \Tr \rho_i H \le E \}
\nonumber \\
=&\min_{|\phi\rangle \in L^2_n(\hat{\SU(2)})} 
\{ {\cal D}_R(|\phi\rangle)|
\langle \phi |H |\phi\rangle \le E 
\}\nonumber \\
=&
\min_{\varphi \in L^2_{p,\odd,n}((-2\pi,2\pi])} 
\{\langle \varphi| w(Q) | \varphi\rangle|
\langle \varphi|h(P^2)| \varphi\rangle \le E \}
\Label{5-18-2}
\end{align}
hold.

Further, an input state $|{\phi} \rangle$ given in (\ref{5-20-8})
with $\beta_{\frac{k}{2}}\ge 0$
satisfies the relation
\begin{align}
\min_{M\in {\cal M}_{\cov}(\SU(2))} 
{\cal D}_R (|{\phi}\rangle \langle {\phi}|,M)
= {\cal D}_R (|{\phi}\rangle)
= \eqref{5-18-2}
\end{align}
if and only if
\begin{align}
{\cal F}^{-1}[\phi](\tilde{g}_{\theta,\eta_1,\eta_2})
\sin \frac{\theta}{2}
= \argmin_{\varphi \in L^2_{p,\odd,n}((-2\pi,2\pi])} 
\{\langle \varphi| w(Q) | \varphi\rangle|
\langle \varphi|h(P^2)| \varphi\rangle \le E \}.
\end{align}
Additionally,
when $H
=\sum_{k=0}^{\infty} \frac{k}{2}(\frac{k}{2}+1) I_{\frac{k}{2}}
=\sum_{k=0}^{\infty} ((\frac{k+1}{2})^2-\frac{1}{4}) I_{\frac{k}{2}}$,
i.e., $h(x)=x-\frac{1}{4}$,
we have
\begin{align}
%&\min_{\varphi \in L^2_{p,\odd}((-2\pi,2\pi])} 
%\{\langle \varphi| w(Q) | \varphi\rangle|
%\langle \varphi|h(P^2)| \varphi\rangle \le E \} \\
\eqref{5-18-2}
=\min_{\varphi \in L^2_{p,\odd,n}((-2\pi,2\pi])} 
\{\langle \varphi| w(Q) | \varphi\rangle|
\langle \varphi|P^2| \varphi\rangle \le E+\frac{1}{4} \}  
\Label{5-18-3}.
\end{align}
\end{theorem}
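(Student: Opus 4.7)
The plan is to parallel the proof of Theorem \ref{T5-20} for $\U(1)$: the character formula \eqref{5-28-11} will play the role of the exponential basis, and the Weyl integration formula will supply a Jacobian of $\sin^2(\theta/2)$ that turns Haar measure on $\SU(2)$, in the parametrization $\tilde g_{\theta,\eta_1,\eta_2}$ of \eqref{5-19-5c}, into Lebesgue measure on $\theta$ times a surface form on the transverse $(\eta_1,\eta_2)$ two-sphere. Because $\SU(2)$ double-covers $\SO(3)$, the natural range of $\theta$ is $(-2\pi,2\pi]$, and the appearance of odd (rather than even) functions is traceable to the sine in the Weyl denominator.

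First I would verify the structural identities \eqref{5-28-10} and \eqref{5-18-2n} for input states of the form \eqref{5-20-8}. Formula \eqref{5-4-2} of Lemma \ref{11-16-2} gives ${\cal F}^{-1}[\phi] = \sum_k \sqrt{d_{k/2}}\,c_{k/2}\,\chi_{k/2}$ with $d_{k/2}=k+1$ and $c_{k/2}=\beta_{k/2}/\sqrt{k+1}$; multiplying by $\sin(\theta/2)$ and applying \eqref{5-28-11} produces a sum of sines with half-integer frequencies, which is automatically odd and $4\pi$-periodic in $\theta$, so $\varphi \in L^2_{p,\odd}((-2\pi,2\pi])$. For the error identity, I would invoke the Weyl integration formula to reduce ${\cal D}_R(|\phi\rangle)$ to a one-dimensional integral over $\theta$; the $\sin^2(\theta/2)$ Jacobian is exactly absorbed into $|\varphi(\theta)|^2 = \sin^2(\theta/2)\,|{\cal F}^{-1}[\phi](\tilde g_{\theta,\eta_1,\eta_2})|^2$ once the integrand (a class function for states of the form \eqref{5-20-8}) is integrated out against the $\eta$ surface measure. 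For the Hamiltonian identity, I would note that $H$ acts by $h((\tfrac{k+1}{2})^2)$ on each irrep while on $L^2_{p,\odd}((-2\pi,2\pi])$ the sine of frequency $\tfrac{k+1}{2}$ is an eigenfunction of $P^2 = -d^2/d\theta^2$ with eigenvalue $(\tfrac{k+1}{2})^2$; Parseval then gives $\langle \phi | H|\phi\rangle = \langle\varphi|h(P^2)|\varphi\rangle$.

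With these tools in hand, the chain \eqref{5-18-2} follows by a two-step reduction. First, Theorem \ref{t6-24-1} equates the three mixed-state minima with a pure-state minimum over $|\phi\rangle\in{\cal K}_{\hat{\SU(2)}}$ provided $\kappa(E)$ is convex; since ${\cal K}_{\hat{\SU(2)}}=\bigoplus_k{\cal U}_{k/2}\otimes{\cal U}_{k/2}^*$ has $\dim{\cal V}_{k/2}=k+1=\dim{\cal U}_{k/2}$, Lemma \ref{L4-25-2} supplies the convexity. Second, the Schwarz-inequality equality condition inside Lemma \ref{11-16-2} (whose hypothesis \eqref{6-26-14} is built into \eqref{5-19-1}) shows that for fixed block weights $c_{k/2}$ the optimal choice of $|\phi_{k/2}\rangle\rangle$ is the maximally entangled vector $|\Psi_{k/2}\rangle\rangle$; crucially, this choice is consistent with the energy constraint because $H$ is block-scalar in $k$, so $\langle\phi|H|\phi\rangle$ depends only on $|c_{k/2}|^2$. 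The optimization then collapses onto states of the form \eqref{5-20-8}, and \eqref{5-28-10}--\eqref{5-18-2n} identify it with the odd-function problem on the right-hand side of \eqref{5-18-2}; the attainment clause follows by reversing the chain. Finally, \eqref{5-18-3} is immediate from $\tfrac{k}{2}(\tfrac{k}{2}+1)=(\tfrac{k+1}{2})^2-\tfrac{1}{4}$, which merely shifts the bound by $1/4$. The main obstacle is the second step, the Schwarz reduction to $|\Psi_{k/2}\rangle\rangle$ under a nontrivial energy constraint; but the block-scalarity of $H$ decouples the choice of $|\phi_{k/2}\rangle\rangle$ from the constraint, so the argument of Lemma \ref{11-16-2} transfers verbatim.
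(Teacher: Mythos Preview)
Your proposal is correct and follows essentially the same route as the paper: derive \eqref{5-28-10} from \eqref{5-4-2} and the character identity \eqref{5-28-11}, convert ${\cal D}_R(|\phi\rangle)$ to a $\theta$-integral via the Weyl integration formula (the $\sin^2(\theta/2)$ Jacobian being absorbed into $|\varphi|^2$), read off the Hamiltonian identity from the sine eigenbasis of $P^2$, and then chain Theorem~\ref{t6-24-1} (with convexity from Lemma~\ref{L4-25-2}) and Lemma~\ref{11-16-2} to reduce to the odd-function minimization. Your explicit observation that the block-scalarity of $H$ makes the Schwarz reduction of Lemma~\ref{11-16-2} compatible with the energy constraint is the one point the paper leaves implicit, but the argument is otherwise the same.
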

%$k(k+1)=(k+\frac{1}{2})^2-\frac{1}{4}=P^2-\frac{1}{4}$
\begin{proof}
The second equation in \eqref{5-28-11} and the equation \eqref{5-4-2} 
yields that
\begin{align*}
{\cal F}^{-1}[\phi](\tilde{g}_{\theta,\eta_1,\eta_2}) \sin \frac{\theta}{2} 
=
\sum_{k=0}^{\infty} 
\frac{\beta_{\frac{k}{2}}}{\sqrt{k+1}} \sqrt{k+1}
\chi_{\frac{k}{2}}(\tilde{g}_{\theta,\eta_1,\eta_2})
\sin \frac{\theta}{2} 
=
\sum_{k=0}^{\infty} \beta_{\frac{k}{2}} \sin \frac{k+1}{2} \theta ,
\end{align*}
which implies \eqref{5-28-10}.

Due to the form of Hamiltonian, 
Theorem \ref{t6-24-1} implies the first, the second, and the third equations in
\eqref{5-18-2}.
So, we need to show only the fourth equation in \eqref{5-18-2}.
Thanks to Lemma \ref{11-16-2},
the minimum value\par\noindent$\min_{|\phi\rangle \in L^2(\hat{\SU(2)})} 
\{ {\cal D}_R(|\phi\rangle)|
\langle \phi |H |\phi\rangle \le E \}$
can be attained by the input state $|\phi\rangle$ with the form \eqref{5-20-8}.
Hence, for the minimization of ${\cal D}_R(|\phi\rangle)$, 
it is enough to consider the inputs with the form \eqref{5-20-8}.
Since $R(e,\tilde{g}_{\theta,\eta_1,\eta_2})$ depends only on $\theta$,
we have
\begin{align}
&{\cal D}_R(|\phi\rangle)
=
\int_{\SU(2)}
(\alpha_0- \sum_{k=1}^{\infty}\alpha_{\frac{k}{2}} \chi_{\frac{k}{2}}(g))
|{\cal F}^{-1}[\phi](g) |^2
\mu_{\SU(2)}(d g) \nonumber \\
=&
\int_{-2\pi}^{2\pi}
(\alpha_0- 
\sum_{k=1}^{\infty}\alpha_{\frac{k}{2}} 
(\sum_{l=0 \hbox{ or } \frac{1}{2}}^{\frac{k}{2}} 
\cos l \theta)
|{\cal F}^{-1}[\phi](\tilde{g}_{\theta,\eta_1,\eta_2})|^2
\sin^2 \frac{\theta}{2} \frac{d \theta}{2\pi} \nonumber \\
=&
\int_{-2\pi}^{2\pi}
w(\theta)
|\sum_{k=0}^{\infty} \beta_{\frac{k}{2}} \sin \frac{k+1}{2} \theta |^2
 \frac{d \theta}{2\pi} 
=
\int_{-2\pi}^{2\pi}
w(\theta)
|\varphi (\theta) |^2
 \frac{d \theta}{2\pi} ,
\Label{5-20-2}
\end{align}
where $\varphi(\theta):= \sum_{k=0}^{\infty} \beta_{\frac{k}{2}} \sin \frac{k+1}{2} \theta$.
Then, we have
\begin{align}
\langle \phi |H |\phi\rangle
=\sum_{k} h((\frac{k+1}{2})^2) \beta_{\frac{k}{2}}^2 
=\langle \varphi|h(P^2)| \varphi\rangle .
\Label{5-20-1}
\end{align}
Hence, we obtain \eqref{5-18-2n}.
Since any odd function with the period $4 \pi$ can be written as 
$\sqrt{2} \sum_{k=0}^{\infty} \beta_{\frac{k}{2}} \sin \frac{k+1}{2} \theta$,
the relations \eqref{5-20-2} and \eqref{5-20-1} yield
\begin{align*}
&\min_{|\phi\rangle \in L^2_n(\hat{\SU(2)})} 
\{ {\cal D}_R(|\phi\rangle)|
\langle \phi |H |\phi\rangle \le E 
\}\\
=&
\min_{\varphi \in L^2_{p,\odd,n}((-2\pi,2\pi])} 
\{\langle \varphi| w(Q) | \varphi\rangle|
\langle \varphi|h(P^2)| \varphi\rangle \le E \}.
\end{align*}
\end{proof}

\subsection{Constraint for available irreducible representation}\Label{s11-2}
Next, we restrict available weights to the set $
\Lambda_{n}:= \{0,\frac{1}{2},1 , \frac{3}{2}, \ldots, \frac{n}{2} \}$.
Then, we consider the risk function $R_{\SU(2)}$ 
on the system ${\cal K}_{\Lambda_{n}}$.
When the input state $|\phi\rangle$ has the form (\ref{5-20-8}),
\eqref{5-18-2n} in Theorem \ref{T5-20b} implies that
\begin{align}
{\cal D}_{R_{\SU(2)}}(|\phi \rangle)
=
1-\frac{1}{2}\sum_{k=0}^{n-1} 
(\beta_{\frac{k}{2}}\overline{\beta_{\frac{k+1}{2}}}+\beta_{\frac{k+1}{2}}
\overline{\beta_{\frac{k}{2}}}).
\end{align}
Hence, applying Lemma \ref{t6-10-1}, we have
\begin{align}
\min_{|\phi \rangle \in {\cal K}_{\Lambda_n} }
{\cal D}_{R_{\SU(2)}}(|\phi \rangle)
=
1-\cos \frac{\pi}{n+2}.
\end{align}
This fact can be also shown by the relation
$C^{\frac{1}{2}}_{\frac{k}{2},\frac{k'}{2}}=
\delta_{k,k'-1}+\delta_{k,k'+1}$
and Lemma \ref{11-16-2}.
Hence, Theorem \ref{th3} implies
\begin{align}
&
\min_{\rho \in {\cal S}({\cal K}_{\Lambda_n})} 
\min_{M \in {\cal M}_{\cov}(G)} 
{\cal D}_{R_{\SU(2)}}(\rho,M)
\nonumber \\
%=& \min_{ \{p_i\} } \min_{\rho_i \in {\cal S}({\cal H})} 
%\min_{M_i \in {\cal M}(G)} 
%\{\sum_i p_i {\cal D}_{R}(\rho_i,M_i) | (\ref{6-20-3b})\hbox{holds.}\}
% \\
=&
\min_{ \{p_i\} }
\min_{\rho_i \in {\cal S}({\cal K}_{\Lambda_n})} 
\min_{M_i \in {\cal M}_{\cov}(G)} 
\sum_i p_i {\cal D}_{R_{\SU(2)}}(\rho_i,M_i) 
\nonumber \\
=&
\min_{|\phi \rangle \in {\cal K}_{\Lambda_n},n} 
{\cal D}_{R_{\SU(2)}}(|\phi \rangle)
=1-\cos \frac{\pi}{n+2}.
\Label{3-12-1vc}
\end{align}
Due to Lemma \ref{t6-10-1},
the minimum is attained only by 
$|\phi\rangle= C \sum_{k=0}^{n}\frac{
\sin \frac{(k+1) \pi}{n+2}}{\sqrt{k+1}}|\Psi_{\frac{k}{2}} \rangle$,
i.e.,
${\cal F}^{-1}[\phi](\tilde{g}_{\theta,\eta_1,\eta_2})
= C \sqrt{2} \sum_{k=0}^n 
\sin \frac{(k+1) \pi}{n+2}
\frac{\sin \frac{k+1}{2}\theta}{\sin \frac{\theta}{2}}$.

\subsection{Typical energy constraint}\Label{s11-2-2}
Next, we consider the risk function 
$R_{\SU(2)}(e,\hat{g})= 1- \frac{1}{2}\chi_{\frac{1}{2}} (\hat{g})$
and the Hamiltonian 
$H= \sum_{k=0}^{\infty} 
\frac{k}{2}(\frac{k}{2}+1)I_{\frac{k}{2}}$.
In this case, the function $w(\theta)$ is given as $1-\cos \frac{\theta}{2}$.
Then, thanks to Theorem \ref{T5-20b}, 
the minimum error can be characterized by the following value.
\begin{align}
\kappa_{\SU(2)}(E)
:=
\min_{\varphi \in L^2_{p,\odd,n}((-2\pi,2\pi])} 
\{\langle \varphi| I-\cos (\frac{Q}{2})| \varphi\rangle|
\langle \varphi| P^2 | \varphi\rangle \le E+\frac{1}{4}\} .
\Label{3-13-11b}
\end{align}
For example, we can show that
\begin{align}
\kappa_{\SU(2)}(0)=1 \Label{5-10-10b}.
\end{align}
This fact can be also checked by the following way.
In fact, the condition $\langle \phi |H|\phi\rangle=0$
can be realized only when $\beta_0=1$ and $\beta_{\frac{k}{2}}=0$ with $k \neq 0$,
i.e., $\varphi(\theta)=\sqrt{2}\sin \frac{\theta}{2}$.
In this case, we have $\int_{-2\pi}^{2\pi}
(1-\cos (\hat{\theta}))
|\varphi(\hat{\theta})|^2 \frac{d\hat{\theta}}{{4\pi}} 
=1$.
Hence, we see (\ref{5-10-10b}).

Now, we consider the case with non-zero $E$.
Since the condition of Lemma \ref{L4-25-2} hold, $\kappa_{\SU(2)}(E)$ is convex.
Hence, we employ Lemma \ref{L5-10-1} to calculate $\kappa_{\SU(2)}(E)$,
and consider 
the minimum 
\begin{align*}
\gamma_{\SU(2)}(s):=&
\min_{\varphi\in L^2_{p,n}((-2\pi, 2\pi ])}
\langle \varphi| (I-\cos (\frac{Q}{2})) +  s P^2| \varphi \rangle \\
=& \min_{\varphi\in L^2_{p,n}((-\pi/2,\pi/2 ])}
\langle \varphi| (I-\cos (2Q))+ \frac{s P^2}{16}| \varphi \rangle .
\end{align*}
So, $\gamma_{\SU(2)}(s)$ can be characterized as 
the minimum $\gamma_{\SU(2)}$
having the solution in $L^2_{p,n}((-\pi/2,\pi/2 ])$ 
of the following differential equation.
\begin{align}
\frac{s}{16}\frac{d^2}{d\theta^2} \varphi(\theta) + 
( \gamma_2- 1 +\cos (2\theta))
\varphi(\theta) =0,
\end{align}
which is equivalent to
\begin{align}
\frac{d^2}{d\theta^2} \varphi(\theta)+ 
( \frac{16(\gamma_{\SU(2)}- 1)}{s} + \frac{16}{s}\cos (2\theta))
\varphi(\theta) =0,
\end{align}
In order to find the minimum $\gamma_{\SU(2)}$, we employ Mathieu equation (\ref{5-10-7}),
whose detail is summarized in Subsection \ref{asB}.
Hence, using the function $b_2$ given in Subsection \ref{asB},
we have 
$\gamma_{\SU(2)}(s)
=\frac{s b_2(-\frac{8}{s} )}{16} +1
=\frac{s b_2(\frac{8}{s})}{16} +1$.
So, applying (\ref{5-10-2}) to $\kappa_{\SU(2)}(E)$, and
combining %Theorem \ref{t6-24-1}, Lemma \ref{L4-25-2},
the facts given in Subsection \ref{asB},
we obtain the following theorem.

\begin{theorem}\Label{T3-13-3b}
\begin{align}
\kappa_{\SU(2)}(E)
=\max_{s>0} 
\frac{s b_2(\frac{8}{s})}{16} +1 -s(E+\frac{1}{4})
\Label{5-10-13b}.
\end{align}
The minimum \eqref{3-13-11b} is attained by the input state $|\phi\rangle$
with the measurement ${\cal M}_{|I\rangle \langle I|}$
if and only if 
${\cal F}^{-1}[\phi](\tilde{\rho}_{\theta,\eta_1,\eta_2})=
\frac{\se_2(\frac{\theta}{4}, -\frac{8}{s_E})}{\sin \frac{\theta}{2}}$,
where $s_E$ is 
$\argmax_{s>0} 
\frac{s b_2(\frac{8}{s})}{16} +1 -s(E+\frac{1}{4})$
and Mathieu function $\se_2$ is given in Subsection \ref{asB}.
\end{theorem}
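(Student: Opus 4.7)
My plan is to follow exactly the pattern already laid out for Theorem \ref{T3-13-3}, adapting each step to the odd periodic setting forced by the character formula \eqref{5-28-11}. The paper has essentially carried out all the algebraic reduction just before the theorem statement, so the proof amounts to checking that the Mathieu eigenvalue quoted really is the right one and that the claimed minimizer corresponds to the ground-state eigenfunction.

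First, I would invoke Theorem \ref{T5-20b} with $w(\theta)=1-\cos(\theta/2)$ and $h(x)=x-\tfrac14$ to rewrite $\kappa_{\SU(2)}(E)$ as $\min\{\langle\varphi|I-\cos(Q/2)|\varphi\rangle:\varphi\in L^2_{p,\odd,n}((-2\pi,2\pi]),\ \langle\varphi|P^2|\varphi\rangle\le E+\tfrac14\}$. Lemma \ref{L4-25-2} gives convexity of $\kappa_{\SU(2)}$ (the condition $\dim{\cal V}_\lambda\ge\dim{\cal U}_\lambda$ holds in ${\cal K}_{\hat{\SU(2)}}$), so Lemma \ref{L5-10-1} applies. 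Thus it suffices to compute the Legendre transform $\gamma_{\SU(2)}(s):=\min_{\varphi\in L^2_{p,\odd,n}((-2\pi,2\pi])}\langle\varphi|(I-\cos(Q/2))+sP^2|\varphi\rangle$ and then note that the constraint value is $E+\tfrac14$ rather than $E$.

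Next I would perform the rescaling $\theta\mapsto\theta/4$, which carries $L^2_{p,\odd,n}((-2\pi,2\pi])$ isometrically onto the odd subspace of $L^2_{p,n}((-\pi/2,\pi/2])$ (i.e.\ odd functions of period $\pi$, spanned by $\{\sin 2k\theta\}_{k\ge 1}$) and converts $P^2$ into $P^2/16$. The eigenvalue problem becomes exactly Mathieu's equation with parameter $q=-8/s$, and the minimum eigenvalue among odd $\pi$-periodic solutions is by definition $b_2(-8/s)$, attained by $\se_2(\theta,-8/s)$. Evenness of $b_2$ (summarized in Appendix \ref{asB}) rewrites this as $b_2(8/s)$, giving $\gamma_{\SU(2)}(s)=\tfrac{s b_2(8/s)}{16}+1$. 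Substituting into $\kappa_{\SU(2)}(E)=\max_{s>0}\gamma_{\SU(2)}(s)-s(E+\tfrac14)$ from Lemma \ref{L5-10-1} yields \eqref{5-10-13b}.

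For the optimal input, I would unwind the correspondences: the minimizer $\varphi$ in the reduced problem is $\se_2(\theta/4,-8/s_E)$, so by Theorem \ref{T5-20b} (specifically \eqref{5-28-10}) the optimal $\phi$ satisfies $({\cal F}^{-1}[\phi])(\tilde g_{\theta,\eta_1,\eta_2})=\se_2(\theta/4,-8/s_E)/\sin(\theta/2)$, and the coefficients $\beta_{k/2}$ are read off from the Fourier sine expansion of $\se_2$. Uniqueness (up to global phase) follows from the equality clauses in Lemmas \ref{11-16-2} and \ref{L5-10-1} together with simplicity of the bottom Mathieu eigenvalue. The main obstacle I anticipate is purely bookkeeping — namely confirming which Mathieu eigenvalue family ($b_1$ vs.\ $b_2$ vs.\ $a_2$) corresponds to odd solutions of period $\pi$ after the $\theta/4$ rescaling; this is why the $\SU(2)$ answer involves $b_2$ rather than the $a_0$ appearing for $\U(1)$, and the distinction must be made explicit by checking that genuine period-$2\pi$ solutions are excluded by the oddness constraint forced through the $\sin(\theta/2)$ factor in the character denominator.
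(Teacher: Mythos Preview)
Your proposal is correct and follows essentially the same route as the paper: reduce via Theorem \ref{T5-20b} to an odd-subspace minimization, use Lemmas \ref{L4-25-2} and \ref{L5-10-1} to pass to the Legendre transform, rescale $\theta\mapsto\theta/4$ to land on the Mathieu operator, and identify the relevant eigenvalue as $b_2(-8/s)=b_2(8/s)$. You are in fact slightly more careful than the paper's text, which drops the ``odd'' subscript when writing $\gamma_{\SU(2)}(s)$ yet still (correctly) lands on $b_2$; your explicit tracking of the odd periodic subspace through the rescaling is exactly the bookkeeping needed to justify that choice.
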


Using the formula \eqref{5-10-13b},
we can calculate $\kappa_{\SU(2)}(E)$ as Fig. \ref{SU(2)-1}.

\begin{figure}[htbp]
\begin{center}
\scalebox{0.6}{\includegraphics[scale=1.2]{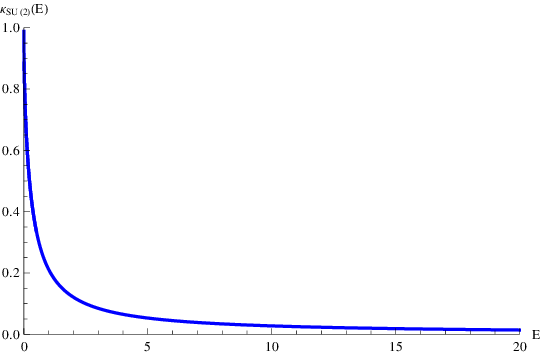}}
\end{center}
\caption{Graph of $\kappa_{\SU(2)}(E)$.}
\Label{SU(2)-1}
\end{figure}%

By using the expansion \eqref{5-20-16b} for $b_2$, 
as $s \to 0$, $\gamma_{\SU(2)}(s)$ can be expanded to
\begin{align*}
\gamma_{\SU(2)}(s)
\cong
\frac{s(-2\frac{8}{s}+6 \sqrt{\frac{8}{s}}-\frac{5}{4})}{16} +1
=\frac{3}{2}\sqrt{\frac{s}{2}}- \frac{5s}{64}.
\end{align*}
As is shown in Lemma \ref{L5-10-1},
$s_E$ is decreasing as a function of $E$.
Hence, when $E$ is large, 
solving the equation $\gamma_{\SU(2)}'(s_E)=E+\frac{1}{4}$, we approximately obtain
$s_E\cong 2\cdot(\frac{3}{8E+21/8})^2$.
Hence,
\begin{align}
&\kappa_{\SU(2)}(E)= \gamma_{\SU(2)}(s_E)-s_E 
(E+\frac{1}{4})
\cong 
\frac{3}{2\sqrt{2}}\sqrt{s_E}- \frac{5s_E}{64}
- s_E(E+\frac{1}{4}) \nonumber \\
=& \frac{3}{2 \sqrt{2}}\sqrt{s_E}- s_E
(\frac{21}{64}+E) 
\cong 
\frac{9}{2(8E+21/8)}- 
2 \cdot (\frac{3}{8E+21/8})^2
\frac{(\frac{21}{8}+8E)}{8} \nonumber \\
= &
\frac{9}{2(8E+21/8)}- 
\frac{9}{4(8E+21/8)}
= 
\frac{9}{4(8E+21/8)}
\cong
\frac{9}{32E}-\frac{7 \cdot 3^3 }{2^{11} E^2} .
\Label{5-10-4b}
\end{align}
As is shown in Fig. \ref{SU(2)-L},
while 
the first order approximation $\kappa_{1,\SU(2),\infty}(E):=\frac{9}{32E}$ gives a good approximation 
for $\kappa_{\SU(2)}(E)$ with a large $E$,
the second order approximation 
$\kappa_{2,\SU(2),\infty}(E):=\frac{9}{32E}-\frac{7 \cdot 3^3 }{2^{11} E^2} $
much improves the approximation 
for $\kappa_{\SU(2)}(E)$ with a large $E$.
Hence, we have the following asymptotic characterization.
\begin{align}
&
\lim_{E \to \infty}
E
\min_{|{\phi} \rangle \in L^2_n(\hat{\SU(2)}) }
\{{\cal D}_{R_{\SU(2)}}(|\phi \rangle)|
\langle \phi | H |\phi \rangle \le E \} 
=
\frac{9}{32} 
\Label{3-13-7x}.
\end{align}

\begin{figure}[htbp]
\begin{center}
\scalebox{0.6}{\includegraphics[scale=1.2]{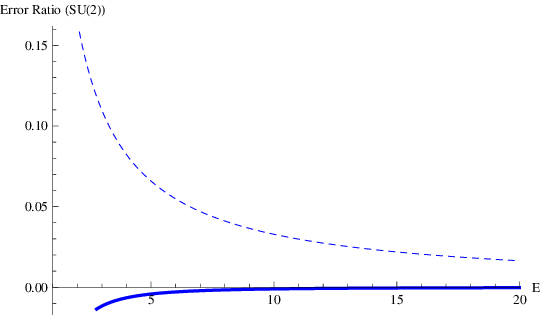}}
\end{center}
\caption{Comparison of two approximations $\kappa_{1,\SU(2),\infty}$
and $\kappa_{2,\SU(2),\infty}$
of $\kappa_{\SU(2)}$ with a large $E$.
Thick line expresses 
the error ratio$\frac{\kappa_{2,\SU(2),\infty}(E)-\kappa_{\SU(2)}(E)}{\kappa_{\SU(2)}(E)}$,
and
dashed line expresses 
the error ratio$\frac{\kappa_{1,\SU(2),\infty}(E)-\kappa_{\SU(2)}(E)}{\kappa_{\SU(2)}(E)}$.}
\Label{SU(2)-L}
\end{figure}%

Next, we consider the case when $E$ is small.
By using the expansion \eqref{exp3} for $b_2$,
when $s$ is large,
$\gamma_{\SU(2)}(s)$ can be expanded to
\begin{align*}
\gamma_{\SU(2)}(s)
\cong
\frac{s(4-\frac{1}{12}(\frac{8}{s})^2 +\frac{5}{13824}(\frac{8}{s})^4)}{16} +1
= 
\frac{s}{4}+1-\frac{1}{3s}  +\frac{5}{54 s^3}.
\end{align*}
When $E$ is small, 
since $\gamma_{\SU(2)}'(s)\cong \frac{1}{4}+\frac{1}{3 s^2}-\frac{5}{18s^4} $
solving the equation $\gamma_{\SU(2)}'(s_E)=E+\frac{1}{4}$, 
we approximately obtain
$s_E
\cong \sqrt{\frac{1}{3E}+\frac{5}{6}}
\cong \sqrt{\frac{1}{3E}}(1+\frac{5}{4}E)$.
Hence,
\begin{align}
&\kappa_{\SU(2)}(E)= 
\gamma_{\SU(2)}(s_E)-s_E( E+\frac{1}{4}) \nonumber \\
\cong &
\frac{s_E}{4}+1-\frac{1}{3s_E}  +\frac{5}{54 s_E^3}
-s_E( E+\frac{1}{4}) 
%=1-\frac{1}{3s_E}  +\frac{5}{54 s_E^3} -s_E  E \nonumber \\
%=& 1+\frac{1}{3s_E}(-1+\frac{5}{18 s_E^2}) -s_E  E 
=
1+\frac{1}{3s_E}(\frac{5-18 s_E^2}{18 s_E^2}) -s_E  E \nonumber \\
=&
1+\frac{1}{3(\sqrt{\frac{1}{3E}}(1+\frac{5}{4}E))}
(\frac{5-18 (\frac{1}{3E}+\frac{5}{6})}{18 (\frac{1}{3E}+\frac{5}{6})}) 
-(\sqrt{\frac{1}{3E}}(1+\frac{5}{4}E))  E \nonumber \\
%=& 1-\frac{\sqrt{E}}{\sqrt{3}(1+\frac{5}{4}E)} (\frac{5+\frac{3}{E}}{\frac{3}{E}+\frac{15}{2}}) 
%-(\sqrt{\frac{E}{3}}(1+\frac{5}{4}E))  \nonumber \\
=&
1-\frac{\sqrt{E}}{\sqrt{3}(1+\frac{5}{4}E)}
(\frac{1+\frac{5E}{3}}{1+\frac{5E}{2}}) 
-(\sqrt{\frac{E}{3}}(1+\frac{5}{4}E))  \nonumber \\
%\cong & 1-\frac{\sqrt{E}}{\sqrt{3}}(1-\frac{5}{4}E)(1+\frac{5E}{3})(1-\frac{5E}{2}) 
%-(\sqrt{\frac{E}{3}}(1+\frac{5}{4}E))  \nonumber \\
\cong &
1-\frac{\sqrt{E}}{\sqrt{3}}
(1-\frac{5}{4}E+\frac{5E}{3}-\frac{5E}{2}) 
-(\sqrt{\frac{E}{3}}(1+\frac{5}{4}E))\nonumber   \\
= &
1-\frac{\sqrt{E}}{\sqrt{3}}
(1-\frac{25}{12}E) 
-\sqrt{\frac{E}{3}}(1+\frac{5}{4}E)  
%= 1-\frac{\sqrt{E}}{\sqrt{3}} (1-\frac{25}{12}E+1+\frac{5}{4}E)  \nonumber \\
%= & 1-\frac{\sqrt{E}}{\sqrt{3}} (2-\frac{5}{6}E)  
= 
1-\frac{2}{\sqrt{3}}\sqrt{E}
+\frac{5}{6\sqrt{3}}E^{\frac{3}{2}}  
\Label{5-10-4e1}
\end{align}
This expansion with $E=0$ coincides with \eqref{5-10-10b}. 
As is shown in Fig. \ref{SU(2)-S},
while 
the first order approximation $\kappa_{1,\SU(2),+0}(E):=1-\frac{2}{\sqrt{3}}\sqrt{E}
$ gives a good approximation 
for $\kappa_{\SU(2)}(E)$ with a small $E$,
the second order approximation 
$\kappa_{2,\SU(2),+0}(E):=
1-\frac{2}{\sqrt{3}}\sqrt{E}
+\frac{5}{6\sqrt{3}}E^{\frac{3}{2}}$
much improves the approximation 
for $\kappa_{\SU(2)}(E)$ with a small $E$.

\begin{figure}[htbp]
\begin{center}
\scalebox{0.6}{\includegraphics[scale=1.2]{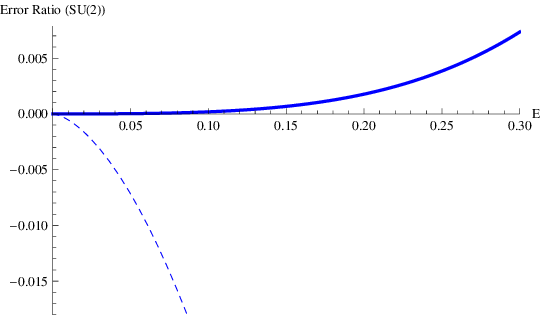}}
\end{center}
\caption{Comparison of two approximations 
$\kappa_{1,\SU(2),+0}$
and
$\kappa_{2,\SU(2),+0}$
of $\kappa_{\SU(2)}$ with a small $E$.
Thick line expresses 
the error ratio$\frac{\kappa_{2,\SU(2),+0}(E)-\kappa_{\SU(2)}(E)}{\kappa_{\SU(2)}(E)}$,
and
dashed line expresses 
the error ratio$\frac{\kappa_{1,\SU(2),+0}(E)-\kappa_{\SU(2)}(E)}{\kappa_{\SU(2)}(E)}$.}
\Label{SU(2)-S}
\end{figure}%

%From (\ref{5-10-4}), we obtain the following theorem.
For the asymptotic optimality condition with respect to input states,
we obtain the following lemma.
%However, it is not easy to calculate the above minimum.
%Instead of the above minimization, we deal with the asymptotic %behavior of the above minimum with the limit $E \to \infty$.
\begin{lemma}\Label{T3-13-3y}
For a sequence $\{E_l\}$ satisfying $E_l \to \infty$ as $l \to \infty$,
we focus on a sequence of input states 
$\{\phi_{E_l}\}$ in $L^2_n(\hat{\SU(2)})$
with the form 
$|\phi_{E_l}\rangle=\oplus_{k=0}^{\infty} 
\frac{\beta_{\frac{k}{2},E_l}}{\sqrt{k+1} } |\Psi_{\frac{k}{2}}\rangle\rangle$
satisfying that
$\langle \phi_{E_l}| H |\phi_{E_l}\rangle \le E_l$. 
We also define the odd function
\begin{align*}
\tilde{\phi}_{E_l}(\lambda):= 
\left\{
\begin{array}{ll}
(2\pi E_l)^{1/4}
\beta_{\lfloor 2 \sqrt{E_l} \lambda  -\frac{1}{2}\rfloor/2, E_l} 
& \hbox {if } \lambda >0 \\
-(2\pi E_l)^{1/4}
\beta_{\lfloor 2 \sqrt{E_l} |\lambda|  -\frac{1}{2}\rfloor/2, E_l} 
& \hbox {if } \lambda <0 \\
0 
& \hbox {if } \lambda =0.
\end{array}
\right.
\end{align*}
Then,
$\min_{M\in {\cal M}_{\cov}(\SU(2))} 
{\cal D}_R (|\phi_{E_l}\rangle \langle \phi_{E_l}|,M)
= {\cal D}_R (|\phi_{E_l}\rangle)
\cong \frac{9}{32 E_l} $ as $l \to \infty$
if and only if the sequence of functions
$\tilde{\phi}_{E_l}(\lambda)$ goes to 
$3^{\frac{3}{4}}\lambda e^{-\frac{3\lambda^2}{4}}$
as $l \to \infty$ on $\bR_+$.
\end{lemma}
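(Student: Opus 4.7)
The proof plan follows the template of the $\U(1)$ case (Lemma \ref{T3-13-3x}), combined with the reduction to odd periodic functions supplied by Theorem \ref{T5-20b}. Under the (implicit) positivity assumption $\beta_{k/2, E_l}\ge 0$, Lemma \ref{11-16-2} applied to the risk function $R_{\SU(2)}$ gives
\begin{align*}
\min_{M\in{\cal M}_{\cov}(\SU(2))}{\cal D}_R(|\phi_{E_l}\rangle\langle\phi_{E_l}|,M) = {\cal D}_R(|\phi_{E_l}\rangle),
\end{align*}
so the problem reduces to asymptotics of ${\cal D}_R(|\phi_{E_l}\rangle)$ alone. The identities (\ref{5-20-2}), (\ref{5-20-1}) and (\ref{5-28-10}) of Theorem \ref{T5-20b} rewrite the risk and the energy as integrals against the odd periodic function $\varphi_{E_l}$ on $(-2\pi,2\pi]$ associated to the coefficients $\{\beta_{k/2, E_l}\}$, weighted by $w(\theta)=1-\cos(\theta/2)$ and by $h(P^2)=P^2-\tfrac{1}{4}$ respectively.

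The main step is the passage to the continuum limit via the rescaling $\hat g=\sqrt{E_l}\,\theta$ on the angle side and $\lambda=(k+1/2)/(2\sqrt{E_l})$ on the weight side, chosen so that the energy constraint rescales to an $E=1$ constraint. Using the Taylor expansion $1-\cos(\theta/2)=\theta^2/8+O(\theta^4)$, together with the concentration of $|\varphi_{E_l}|^2$ on angles of size $O(1/\sqrt{E_l})$ forced by the energy bound, one checks (following the model of the proof of Lemma \ref{T3-13-3x}) that
\begin{align*}
E_l\cdot{\cal D}_R(|\phi_{E_l}\rangle) \;\longrightarrow\; \frac{1}{8}\int_{-\infty}^\infty \hat g^2\bigl|{\cal F}^{-1}[\tilde\phi](\hat g)\bigr|^2\frac{d\hat g}{\sqrt{2\pi}},
\end{align*}
where $\tilde\phi$ is the pointwise limit of $\tilde\phi_{E_l}$, extended to an odd function on $\bR$ by the prescription of the lemma. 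Simultaneously, after discarding the subleading $-\tfrac{1}{4}$ from $h(x)=x-\tfrac{1}{4}$, the constraint $\langle\phi_{E_l}|H|\phi_{E_l}\rangle\le E_l$ becomes $\langle\tilde\phi|Q^2|\tilde\phi\rangle\le 1$.

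The limiting problem is therefore to minimize $\tfrac{1}{8}\langle\tilde\phi|P^2|\tilde\phi\rangle$ over $\tilde\phi\in L^2_{\odd,n}(\bR)$ subject to $\langle\tilde\phi|Q^2|\tilde\phi\rangle\le 1$. This is exactly the problem solved by Lemma \ref{L3-13-1} with $E=1$: the minimum equals $\tfrac{1}{8}\cdot\tfrac{9}{4}=\tfrac{9}{32}$ and is attained uniquely on $\bR_+$ by $\tilde\phi(\lambda)=3^{3/4}\lambda e^{-3\lambda^2/4}$. This matches the constant in \eqref{3-13-7x}, and by uniqueness of the continuum minimizer, the sharp asymptotics ${\cal D}_R(|\phi_{E_l}\rangle)\cong 9/(32 E_l)$ holds if and only if $\tilde\phi_{E_l}\to 3^{3/4}\lambda e^{-3\lambda^2/4}$ on $\bR_+$, yielding the stated biconditional.

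The main obstacle is making the limit transitions rigorous: (i) quantifying the Taylor approximation error in $1-\cos(\theta/2)\approx\theta^2/8$ through concentration estimates derived from the energy bound; (ii) justifying the passage from Riemann sums indexed by $k$ to integrals over $\lambda$ when only pointwise convergence of $\tilde\phi_{E_l}$ is available; and (iii) extracting the necessity direction via a weak-compactness and lower-semicontinuity argument using the uniqueness of the continuum optimizer. Each of these is a technical refinement of the corresponding step in the proof of Lemma \ref{T3-13-3x}, requiring careful bookkeeping but no substantively new ideas.
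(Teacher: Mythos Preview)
Your proposal is correct and follows essentially the same route as the paper: reduce to ${\cal D}_R(|\phi_{E_l}\rangle)$ via Lemma~\ref{11-16-2}, pass to the odd periodic function $\varphi_{E_l}$ using Theorem~\ref{T5-20b}, rescale by $\hat g=\sqrt{E_l}\,\theta$ and $\lambda\approx (k+1)/(2\sqrt{E_l})$ to obtain the continuum problem, and then invoke Lemma~\ref{L3-13-1} at $E=1$ for the value $9/32$ and the uniqueness of the minimizer $3^{3/4}\lambda e^{-3\lambda^2/4}$. The paper's own proof is written at the same formal level as Lemma~\ref{T3-13-3x} and does not supply more detail on the rigor points (i)--(iii) you flag, so your treatment is at least as careful as the original.
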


\begin{proof}
The relation 
$\min_{M\in {\cal M}_{\cov}(\SU(2))} 
{\cal D}_R (|\phi_{E_l}\rangle \langle \phi_{E_l}|,M)
=
{\cal D}_R (|\phi_{E_l}\rangle)$ holds by the same reason as Lemma \ref{T3-13-3c}.
Now, we choose 
the function $\varphi_{E_l}(\theta):= \sum_{k=0}^{\infty}
\sqrt{2}\beta_{\frac{k}{2},E_l} \sin \frac{k+1}{2} \theta
\in L^2_{p}((-2\pi,2\pi])$, and the parameters
$\lambda:= \pm \frac{k+1}{2\sqrt{E} }$ and $\hat{g}:= \sqrt{E} \hat{\theta}$.
Then, 
%using the limiting function $\tilde{\phi}(\lambda):= \lim_{E \to \infty} \tilde{\phi}_E(\lambda)$,
we have
\begin{align*}
&\frac{1}{E_l}
\sum_{k=0}^{\infty}
(\frac{k+1}{2})^2 |\beta_{\frac{k}{2},E_l}|^2
=
\sum_{k=0}^{\infty} 
(\frac{k+1}{2\sqrt{E_l}})^2 \frac{1}{\sqrt{2\pi E_l}} 
|\tilde{\phi}_{E_l}(\frac{k+1}{2\sqrt{E_l}})|^2 \\
\to &
2 \int_{0}^{\infty} 
\lambda^2 |\tilde{\phi}(\lambda )|^2 \frac{d \lambda}{\sqrt{2\pi}} 
= 
\int_{-\infty}^{\infty} 
\lambda^2 |\tilde{\phi}(\lambda )|^2 \frac{d \lambda}{\sqrt{2\pi}} 
\end{align*}
as $E_l \to \infty$.
Similarly,
since
\begin{align*}
& \frac{\varphi_{E_l}(\frac{\hat{g}}{\sqrt{E_l}})}{(2\pi E_l)^{\frac{1}{4}}} 
=
\sum_{k=0}^{\infty}
\frac{e^{-i \frac{k+1}{2} \frac{\hat{g}}{\sqrt{E_l}}}}{\sqrt{2}} 
\beta_{\frac{k}{2},E_l} 
\frac{1}{(2\pi E_l)^{\frac{1}{4}}} 
-\sum_{k=0}^{\infty}
\frac{e^{i \frac{k+1}{2} \frac{\hat{g}}{\sqrt{E_l}}}}{\sqrt{2}} 
\beta_{\frac{k}{2},E_l} 
\frac{1}{(2\pi E_l)^{\frac{1}{4}}} \\
=&
\sum_{k'=-\infty}^{\infty} 
e^{-i \frac{k'}{2\sqrt{E_l}} \hat{g}} \tilde{\phi}_{E_l}(\frac{k'}{2\sqrt{E_l}}) 
\frac{1}{2\sqrt{\pi E_l}} 
\to 
\sqrt{2} \int_{-\infty}^{\infty}
e^{-i \lambda \hat{g}} \tilde{\phi}(\lambda) 
\frac{d \lambda}{\sqrt{2\pi}} 
=
\sqrt{2}{\cal F}^{-1}[\tilde{\phi}](\hat{g}) ,
\end{align*}
where $k'=k+1,-(k+1)$,
we have
\begin{align*}
& E_l
\int_{-2\pi}^{2\pi}
(1-\cos \frac{\hat{\theta}}{2})
|\varphi_{E_l}(\hat{\theta})|^2 
\frac{d\hat{\theta}}{{4\pi}}  
\cong
 E_l
\int_{-2\pi}^{2\pi}
\frac{\hat{\theta}^2}{8}
|\varphi_{E_l}(\hat{\theta})|^2 
\frac{d\hat{\theta}}{{4\pi}}  \\
= &
\int_{-2\pi \sqrt{E_l}}^{2\pi \sqrt{E_l}}
\frac{\hat{g}^2}{8}
|\varphi_{E_l}(\frac{\hat{g}}{\sqrt{E_l}})|^2 
\frac{d\hat{g}}{{4\pi} \sqrt{E_l}} 
\to
\int_{-\infty}^{\infty}
\frac{\hat{g}^2}{8}
|{\cal F}^{-1}[\tilde{\phi}](\hat{g})|^2 
\frac{d\hat{g}}{\sqrt{2\pi} } .
\end{align*}
In Lemma \ref{L3-13-1}, the minimum \eqref{5-9-3} with $E=1$
is attained only by 
$\tilde{\phi}(\lambda)=3^{\frac{3}{4}}\lambda e^{-\frac{3\lambda^2}{4}}$.
Hence, 
${\cal D}_R (|\phi_{E_l}\rangle) \cong \frac{9}{32 E_l} 
=\frac{1}{8 E_l} \cdot \frac{9}{4}$ as $l \to \infty$
if and only if 
$\tilde{\phi}_l(\lambda)$ goes to 
$3^{\frac{3}{4}}\lambda e^{-\frac{3\lambda^2}{4}}$
as $l \to \infty$.
\end{proof}

\subsection{Practical construction of asymptotically optimal estimator with energy constraint}\Label{s11-3}
While Lemma \ref{T3-13-3y} characterizes the asymptotically optimal estimator with energy constraint,
no practical construction is provided.
In this subsection, we give its practical construction.
For this purpose, we introduce th operators
on ${\cal K}_{\hat{\SU(2)}}$ as
\begin{align}
J_{l}:= \bigoplus_{k=0}^{\infty} f_{\frac{k}{2}}(\sigma_l) .
\end{align}
Since
\begin{align}
f_{\frac{k}{2}}(\sigma_1) ^2+f_{\frac{k}{2}}(\sigma_2) ^2 +f_{\frac{k}{2}}(\sigma_3) ^2
= 4 \cdot \frac{k}{2}(\frac{k}{2}+1),
\end{align}
the Hamiltonian $H$ is characterized as 
\begin{align}
J_1^2+J_2^2+J_3^2 = 4 H,
\end{align}
Further, we have
\begin{align}
\Tr f_{\frac{k}{2}}(\sigma_l) ^2= 4 \cdot \frac{k+1}{3} \cdot \frac{k}{2}(\frac{k}{2}+1).
\Label{5-22-1}
\end{align}
Now, we give the tensor product system ${\cal K}_{\hat{\SU(2)}}^{\otimes n}$
and define the Hamiltonian $H^{(n)}$ as follows.
\begin{align}
H^{(n)}:= \frac{1}{4}\sum_{l=1}^3 (J_{l}^{(n)})^2, \quad
J_{l}^{(n)}:= \sum_{t=1}^n J_{l,t}^{(n)} , \quad
J_{l,t}^{(n)}:=I^{\otimes i-1} \otimes J_{l} \otimes I^{\otimes n-i}.
\end{align}
Now, we choose a state $|\phi\rangle =\oplus_{k=0}^{\infty}
\frac{\beta_{\frac{k}{2}}}{\sqrt{k+1}} |\Psi_{\frac{k}{2}} \rangle \rangle
\in {\cal K}_{\hat{\SU(2)}}$ with $\beta_{\frac{k}{2}}\ge 0$.
This state has the energy
$E_{\phi}:=\sum_{k=0}^\infty \frac{k}{2}(\frac{k}{2}+1)\beta_{\frac{k}{2}}^2$.

Now, we give a practical estimation protocol for the $n$-tensor-products system 
${\cal K}_{\hat{\SU(2)}}^{\otimes n}$
in the following way:
\begin{description}
\item[(2.1)]
We set the initial state $|\phi\rangle^{\otimes n}$
on  the tensor product system ${\cal K}_{\hat{\SU(2)}}^{\otimes n}$.
\item[(2.2)]
We apply the covariant measurement $M_{|{\cal I}\rangle \langle {\cal I}|}$
on each system 
${\cal K}_{\hat{\SU(2)}}$.
Then, we obtain $n$ outcomes ${g}_1, \ldots, {g}_n$.
Each outcome ${g}_i$ obeys the distribution
$p_{g}(g_i) \mu_{\SU(2)}(d g_i)$ where
$p_{g}(g_i):=
|\sum_{k=0}^{\infty}\frac{\beta_{k/2}}{\sqrt{k+1}} \Tr f_{k/2}(g_i^{-1}g)|^2
=
|\sum_{k=0}^{\infty}\frac{\beta_{k/2}}{\sqrt{k+1}} \chi_{k/2}(g_i^{-1}g)|^2$.

\item[(2.3)]
We apply the maximum likelihood estimator to the obtained outcomes
${g}_1, \ldots, {g}_n$.
Then, we obtain the final estimate $\hat{g}_n$.
That is, we decide $\hat{g}_n$ as
\begin{align}
\hat{g}_n:= \argmax_{g \in \SU(2)} \sum_{i=1}^n \log p_{g}(g_i).
\end{align}
\end{description}
We denote the above measurement with the output $\hat{g}_n$ by $M_n$.
Then, due to the following theorem,
the above protocol asymptotically realizes 
the minimum error under the energy constraint. 
\begin{theorem}\Label{T5-5-1}
Assume that
there exist at lest one even number $k_e \ge 0$ and one odd number $k_o >0$ 
such that $\beta_{\frac{k_e}{2}}>0$ and $\beta_{\frac{k_o}{2}}>0$.
Then, the relations
\begin{align}
\lim_{n \to \infty} n 
{\cal D}_R(|\phi^{\otimes n}\rangle, M_n)
&= \frac{9}{32} E_\phi \Label{5-1-5}\\
\langle\phi^{\otimes n}| H^{(n)} |\phi^{\otimes n}\rangle
&= E_{\phi} n \Label{5-1-6}
\end{align}
hold. That is,
\begin{align}
\lim_{n \to \infty} 
\langle\phi^{\otimes n}| H^{(n)} |\phi^{\otimes n}\rangle
{\cal D}_R(|\phi\rangle^{\otimes n}, M_n)
= \frac{9}{32}.
\end{align}
\end{theorem}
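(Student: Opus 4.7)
The plan is to mirror the $\U(1)$-case proof of \eqref{5-1-6b}, splitting into the energy identity and the asymptotic rate. For the energy identity \eqref{5-1-6}, I would expand
\begin{align*}
H^{(n)} = \frac{1}{4}\sum_{l=1}^{3}\sum_t J_{l,t}^2 \; + \; \frac{1}{4}\sum_{l=1}^{3}\sum_{t\neq s} J_{l,t}\,J_{l,s},
\end{align*}
and observe that the cross-site terms vanish on $|\phi^{\otimes n}\rangle$ because $\langle\phi|J_l|\phi\rangle = \sum_{k}\frac{\beta_{k/2}^{2}}{k+1}\Tr f_{k/2}(\sigma_l) = 0$ (the $\sigma_l$ are traceless in every irreducible block, and each $\Psi_{k/2}$ is maximally entangled). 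The diagonal piece, using \eqref{5-22-1} together with the maximally-entangled-state trace formula $\langle\langle\Psi_{k/2}|X\otimes I|\Psi_{k/2}\rangle\rangle = \Tr(X)/(k+1)$, gives $\langle\phi|J_l^{2}|\phi\rangle = \frac{4}{3}E_\phi$ for each $l$, so
\begin{align*}
\langle\phi^{\otimes n}|H^{(n)}|\phi^{\otimes n}\rangle = \frac{n}{4}\sum_{l=1}^{3}\langle\phi|J_l^{2}|\phi\rangle = n\,\langle\phi|H|\phi\rangle = nE_\phi,
\end{align*}
which is \eqref{5-1-6}.

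The core of the proof is the asymptotic rate \eqref{5-1-5}. I would compute the classical Fisher information matrix $J$ of the outcome distribution $p_g(\hat g) = |\langle \mathcal{I}|f(\hat g)^\dagger f(g)|\phi\rangle|^{2}$ at $g = e$ in the local chart $g = g_{\vec\theta}$ of \eqref{5-19-5}. Covariance makes $J$ constant in $g$, and the three-dimensional isotropy of the adjoint action combined with the $\SU(2)$-invariance of the maximally-entangled blocks forces $J = c\,I_{3}$. The constant $c$ matches the diagonal SLD quantum Fisher information $F^{\mathrm{SLD}}_{aa} = \langle\phi|J_a^{2}|\phi\rangle = \frac{4}{3}E_\phi$, obtained from $\partial_a f(g_{\vec\theta})|_{0} = \tfrac{i}{2}J_a$ together with the vanishing means $\langle\phi|J_a|\phi\rangle = 0$, in perfect parallel with the $\U(1)$ score calculation \eqref{5-2-1bx}; the off-diagonal QFI entries vanish because $\Tr f_{k/2}(\sigma_a)f_{k/2}(\sigma_b)\propto\delta_{ab}$ by Casimir invariance. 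Hence $J^{-1} = \frac{3}{4E_\phi}I_{3}$.

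The identifiability hypothesis that some even $k_e\geq 0$ and some odd $k_o>0$ satisfy $\beta_{k_e/2},\beta_{k_o/2}>0$ ensures that the representation separates points of $\SU(2)$ (containing both $\SO(3)$-type and genuinely projective blocks), so distinct $g$ yield distinct $p_g$ and the score has full rank. Standard asymptotic MLE theory on the smooth manifold $\SU(2)$ then gives $\sqrt{n}\,\vec\theta_{\hat g_n g^{-1}}\Rightarrow \mathcal{N}(0,J^{-1})$, and Taylor-expanding $R(g,\hat g) = 1-\cos(\theta/2)\cong \tfrac{1}{8}\|\vec\theta\|^{2}$ near the identity yields
\begin{align*}
n\,{\cal D}_R(|\phi^{\otimes n}\rangle,M_n)\cong \frac{1}{8}\Tr(J^{-1}) = \frac{1}{8}\cdot\frac{9}{4E_\phi} = \frac{9}{32\,E_\phi},
\end{align*}
completing \eqref{5-1-5}. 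The main obstacle is the identification $J=F^{\mathrm{SLD}}$: in generic multi-parameter quantum estimation the SLD bound need not be saturable by any single POVM, so one must verify saturation directly — either by computing the classical Fisher matrix via Schur orthogonality applied to the matrix coefficients in $\langle \mathcal{I}|f(\hat g)^\dagger f(g)|\phi\rangle$, or by exploiting the commuting-SLD structure present at the symmetric state $|\phi\rangle$ (where the real parts $\mathrm{Re}\langle\phi|J_aJ_b|\phi\rangle$ vanish off-diagonal) to rule out the usual quantum incompatibility obstruction.
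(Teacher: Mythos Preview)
Your proposal is correct and follows essentially the same route as the paper: the energy identity via vanishing cross terms $\langle\phi|J_l|\phi\rangle=0$, the explicit Fisher-information computation at the identity giving $J_0^{s,t}=\frac{4}{3}E_\phi\,\delta_{s,t}$, and the MLE asymptotics combined with the Taylor expansion $R_{\SU(2)}(e,g_{\vec\theta})\cong\|\vec\theta\|^2/8$ are exactly the paper's steps. The paper resolves your ``main obstacle'' by taking your option~(a)---a direct Schur-orthogonality calculation of the classical Fisher matrix (its equations \eqref{5-22-2}--\eqref{5-22-4})---and only afterward, in a Remark, confirms the identification $J=F^{\mathrm{SLD}}$ that you use as your organizing heuristic.
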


\begin{proof}
First, we show (\ref{5-1-6}).
Since $\langle\phi^{\otimes n}| J_{l,t}^{(n)} |\phi^{\otimes n}\rangle =
\langle\phi| J_{l} |\phi\rangle =0$,
the interaction terms vanish
so that
\begin{align*}
\langle\phi^{\otimes n}|
(J_{l}^{(n)})^2
|\phi^{\otimes n}\rangle
=&
\sum_{t=1}^n
\langle\phi^{\otimes n}|
(J_{l,t}^{(n)})^2
|\phi^{\otimes n}\rangle .
\end{align*}
Thus,
\begin{align*}
&\langle\phi^{\otimes n}|
H^{(n)}
|\phi^{\otimes n}\rangle
=
\frac{1}{4}\sum_{t=1}^n
\langle\phi^{\otimes n}|
(J_{1,t}^{(n)})^2+(J_{2,t}^{(n)})^2+(J_{3,t}^{(n)})^2
|\phi^{\otimes n}\rangle \\
=& \frac{n}{4} \langle\phi| J_{1}^2+J_{2}^2+J_{3}^2 |\phi\rangle 
=n E_{\phi},
\end{align*}
which implies (\ref{5-1-6}).

Next, we show (\ref{5-1-5}).
Due to the assumption, the map $g \mapsto p_{g}(g_i)$ is one-to-one.
Then, the Fisher information gives the asymptotic error.
Note that if the map $g \mapsto p_{g}(g_i)$ is one-to-one, e.g., the case when 
the distribution $p_{g}(g_i)$ is decided by the element $\varpi (g) \in \SO(3)$ 
with the projection $\varpi:\SU(2) \to \SO(3)$,
the Fisher information does not give the asymptotic error.
So, this assumption is crucial.

Due to the covariance of the estimator, it is enough to show (\ref{5-1-5})
in the case of $g= I$.
We choose the parameter $\hat{\vec{\theta}}_n$ as 
$\hat{g}_n= g_{\hat{\vec{\theta}}_n}$ by using the parametrization \eqref{5-19-5}.
For this purpose, we calculate the Fisher information
of the distribution family $\{p_{\vec{\theta}} \}$ with $p_{\vec{\theta}}(g):= p_{g_{\vec{\theta}}}(g)$.
Then, we can define the square root $\sqrt{p_{\vec{\theta}}(g)}$ as
$\sqrt{p_{\vec{\theta}}(g)}:=
\sum_{k=0}^{\infty}\frac{\beta_{k/2}}{\sqrt{k+1}} \chi_{k/2}(g^{-1}g_{\vec{\theta}}) \in \bR$
because all characters $\chi_{k/2}$ are real.

Since $\frac{d}{d \theta_s} g_{\vec{\theta}}|_{\theta=0}= i \sigma_s$
and 
$\frac{d}{d \theta_s} \sqrt{p_{\vec{\theta}}(g)}|_{\vec{\theta}=0}=
\sum_{k=0}^{\infty}\frac{\beta_{k/2}}{\sqrt{k+1}} \Tr f_{k/2}(g^{-1}) i \frac{f_{k/2}(\sigma_s)}{2}$,
the logarithmic derivative is given as
\begin{align}
l^s_{0}(g):=
2 \frac{d}{d \theta_s} \log \sqrt{p_{\vec{\theta}}(g)}|_{\vec{\theta}=0}
= 2 \frac{\frac{d}{d \theta_s} \sqrt{p_{\vec{\theta}}(g)}|_{\vec{\theta}=0}}{\sqrt{p_{\vec{\theta}}(g)}} 
%= &
%2 \frac{\sum_{k=0}^{\infty}\frac{\beta_{k/2}}{\sqrt{k+1}} \Tr f_{k/2}(g^{-1}) f_{k/2}(\sigma_i)}{\sqrt{p_{\theta}(g)}}.
\Label{5-22-2}
\end{align}
Since
$\Tr f_{k'/2}(g^{-1}) i f_{k'/2}(\sigma_t)$ is real,
we have
\begin{align}
\Tr f_{k'/2}(g^{-1}) i f_{k'/2}(\sigma_t)=
-i \Tr f_{k'/2}(\sigma_t)^{\dagger} f_{k'/2}(g^{-1})^{\dagger}
=-i \Tr f_{k'/2}(\sigma_t) f_{k'/2}(g).
\Label{5-22-3}
\end{align}
Further, we have
\begin{align}
&\int_{\SU(2)} 
\langle\langle f_{k/2}(\sigma_t) | f_{k/2}(g)    \rangle\rangle 
\langle\langle f_{k/2}(g^{-1})| f_{k/2}(\sigma_s)\rangle\rangle
\mu_{\SU(2)}(dg) \nonumber \\
=&
\left\{
\begin{array}{ll}
(k+1) \langle\langle  f_{k/2}(\sigma_t) | f_{k/2}(\sigma_s)\rangle\rangle & \hbox{if } k'=k \\
0 & \hbox{if } k' \neq k 
\end{array}
\right.
\Label{5-22-4}.
\end{align}
By combining \eqref{5-22-1}, \eqref{5-22-2}, \eqref{5-22-3}, and \eqref{5-22-4},
the Fisher information matrix is calculated to
\begin{align}
& J^{s,t}_{0}=
\int_{\SU(2)} 
l^s_{0}(g)l^t_{0}(g)
p_{\vec{\theta}}(g) \mu_{\SU(2)}(dg) \nonumber \\
=&
4 \int_{\SU(2)} 
\frac{d}{d \theta_s} \sqrt{p_{\vec{\theta}}(g)}|_{\vec{\theta}=0}
\frac{d}{d \theta_t} \sqrt{p_{\vec{\theta}}(g)}|_{\vec{\theta}=0}
\mu_{\SU(2)}(dg) \nonumber\\
=&
4 \int_{\SU(2)} 
\sum_{k=0}^{\infty}\frac{\beta_{k/2}}{\sqrt{k+1}} \Tr f_{k/2}(g^{-1}) i 
\frac{f_{k/2}(\sigma_s)}{2}
\sum_{k'=0}^{\infty}\frac{\beta_{k'/2}}{\sqrt{k'+1}} \Tr f_{k'/2}(g^{-1}) i 
\frac{f_{k'/2}(\sigma_t)}{2}
\mu_{\SU(2)}(dg) \nonumber\\
=&
\int_{\SU(2)} 
\sum_{k=0}^{\infty}\frac{\beta_{k/2}}{\sqrt{k+1}} \Tr f_{k/2}(g^{-1}) f_{k/2}(\sigma_s)
\sum_{k'=0}^{\infty}\frac{\beta_{k'/2}}{\sqrt{k'+1}} \Tr f_{k'/2}(\sigma_t) f_{k'/2}(g)
\mu_{\SU(2)}(dg) \nonumber\\
=&
\sum_{k=0}^{\infty}\frac{\beta_{k/2}^2}{k+1} 
\int_{\SU(2)} 
\langle\langle  f_{k/2}(\sigma_t) | f_{k/2}(g)\rangle\rangle 
\langle \langle f_{k/2}(g^{-1}) | f_{k/2}(\sigma_s)\rangle\rangle
\mu_{\SU(2)}(dg) \nonumber\\
=&
\sum_{k=0}^{\infty}\beta_{k/2}^2 
\langle\langle  f_{k/2}(\sigma_t) | f_{k/2}(\sigma_s)\rangle\rangle \nonumber\\
=&
\sum_{k=0}^{\infty}\beta_{k/2}^2 
\Tr f_{k/2}(\sigma_t) f_{k/2}(\sigma_s) 
=
\sum_{k=0}^{\infty}\beta_{k/2}^2 
\frac{\delta_{s,t}}{3} \cdot 4 \cdot \frac{k}{2}(\frac{k}{2}+1) 
=
\frac{4}{3}E_{\phi} \delta_{s,t}.
\Label{5-2-1}
\end{align}
Hence, $(J_0^{-1})_{i,j}= \frac{3 \delta_{i,j}}{4 E_{\phi}}$.
We have $ R_{\SU(2)}(e,g_\theta)=
1-\cos \frac{\|\vec{\theta} \|}{2}
\cong \frac{\|\vec{\theta} \|^2}{8}
= \frac{1}{8}((\theta^1)^2+(\theta^2)^2+(\theta^3)^2)$ when 
$\|\vec{\theta}\|$ is small.
Remember that
the asymptotic mean square error of the maximum likelihood estimator
can be characterized by 
the inverse of Fisher information matrix \cite{van}.
Hence, we obtain  
$n \rE_0 \hat{\theta}_n^i \hat{\theta}_n^j \to  (J_0^{-1})_{i,j}$,
where $\rE_{\vec{\theta}}$ expresses the expectation under the distribution $p_{\vec{\theta}} $.
Hence, 
we have
\begin{align*}
& n {\cal D}_R(|\phi^{\otimes n}\rangle, M_n)
= 
n \rE_0 [
\frac{1}{8}
((\hat{\theta}_n^1)^2+(\hat{\theta}_n^2)^2+(\hat{\theta}_n^3)^2)
] \\
\to &
\frac{1}{8} \sum_{i=1}^3 (J_0^{-1})_{i,i}
=\frac{1}{8}
 \sum_{i=1}^3 
\frac{3 \delta_{i,i}}{4 E_{\phi}}
=
\frac{9}{32 E_{\phi}}.
\end{align*}
Since the error of the maximum likelihood estimator does not depend on 
the true parameter, we obtain \eqref{5-1-5}.
\end{proof}

\begin{remark}
In fact, (\ref{5-2-1}) coincides with the symmetric logarithmic derivative (SLD) Fisher information matrix
by the following reason.
The SLD $L^i_0$ 
is given as 
\begin{align*}
\left.
\frac{d}{d \theta_i} 
| g_\theta/ \sqrt{2} \rangle\rangle 
\langle \langle %\frac{1}{\sqrt{2}} 
g_\theta/ \sqrt{2}|
\right|_{\theta=0}
=
\frac{1}{2}
(L^i_0 |%\frac{1}{\sqrt{2}} 
I / \sqrt{2}\rangle\rangle \langle \langle %\frac{1}{\sqrt{2}} 
I/ \sqrt{2}|
+
|%\frac{1}{\sqrt{2}} 
I / \sqrt{2}\rangle\rangle \langle \langle %\frac{1}{\sqrt{2}} 
I/ \sqrt{2}|L^i_0),
\end{align*}
Then, $L^i_0=
2 (|%\frac{1}{\sqrt{2}} 
\sigma_i/ \sqrt{2} \rangle\rangle \langle \langle %\frac{1}{\sqrt{2}} 
I/ \sqrt{2}|
+
|%\frac{1}{\sqrt{2}} 
I / \sqrt{2}\rangle\rangle \langle \langle %\frac{1}{\sqrt{2}} 
\sigma_i/ \sqrt{2}|)$.
Thus, the SLD Fisher information matrix
is calculated to
\begin{align*}
J^{i,j}_{SLD,0}=
\Tr L^j_0 
\frac{1}{2}
(L^i_0 |%\frac{1}{\sqrt{2}} 
I/ \sqrt{2} \rangle\rangle \langle \langle %\frac{1}{\sqrt{2}} 
I/ \sqrt{2}|
+
|%\frac{1}{\sqrt{2}} 
I/ \sqrt{2} \rangle\rangle \langle \langle %\frac{1}{\sqrt{2}} 
I/ \sqrt{2}|L^i_0) 
= 4
\langle \langle %\frac{1}{\sqrt{2}} 
\sigma_j/ \sqrt{2}| %\frac{1}{\sqrt{2}} 
\sigma_i / \sqrt{2}\rangle\rangle .
\end{align*}
Hence, the Fisher information matrix given in (\ref{5-2-1}) coincides with the SLD Fisher information matrix.
This coincidence holds for a general pure state family $\{|\phi_\theta\rangle\}_\theta$ and a continuous POVM $|\omega \rangle \langle \omega |d \omega$
when the following conditions hold
because the above discussion uses only the following properties.
\begin{description}
\item[(1)] $\langle \phi_\theta| \omega \rangle $ is a real number.
\item[(2)] $\langle \phi_\theta| \frac{d}{d \theta_i}\phi_\theta|_{\theta=0} \rangle =0$.
\item[(3)] The probability of the set $\{\omega |\langle \phi_0| \omega \rangle=0\} $ is zero
when the true parameter $\theta$ is zero. 
\end{description}
\end{remark}

\subsection{Uncertainty relation}\Label{s11-4}
We consider the relation between the above results and the 
uncertainty relation on the space $L^2(\SU(2))$.
In fact, the group $\SU(2)$ is isomorphic to 
the 3-dimensional sphere $S^3$ by the correspondence
$g \mapsto (x^0(g),x^1(g),x^2(g),x^3(g))$,
where 
$x^j(g)$ ($j=0,1,2,3$) is given as
$g= x^0(g)I+\sum_{j=1}^3 x^j(g) \sigma_i$ for $g \in \SU(2)$.
This, we have $L^2(\SU(2))=L^2(S^3)$.
Then, we define the operator $Q_j$ as 
the multiplication of $x^j(g)$.
For the set of operators $\vec{Q}:=(Q_0,Q_1,Q_2,Q_3)$, 
we focus on the uncertainty 
\begin{align*}
&\Delta_{\varphi^*}^2\vec{Q}:=
\sum_{j=0}^3 \Delta_{\varphi^*}^2 Q_j 
=
\sum_{j=0}^3 \langle \varphi^* | Q_j |\varphi^* \rangle 
-\sum_{j=0}^3 \langle \varphi^* | Q_j |\varphi^* \rangle^2 
=
1 -\sum_{i=0}^3 \langle \varphi^* | Q_j |\varphi^* \rangle^2 
\end{align*}
for $\varphi^*\in L^2(\SU(2))$.
Next, we define the momentum operator $P_j$ on $L^2(\SU(2))$ as
\begin{align}
P_j \varphi^*(g):= 
\frac{d \varphi^*(e^{-it\frac{\sigma_j}{2}} g )}{dt}|_{t=0}.
\end{align}
For the set of momentum operators $\vec{P}:=(P_1,P_2,P_3)$,
we also consider the uncertainty 
\begin{align*}
&\Delta_{\varphi^*}^2\vec{P}:=\sum_{j=1}^3 \Delta_{\varphi^*}^2 P_j .
\end{align*}
Then, as the uncertainty relation, 
we consider trade-off between
$\Delta_{\varphi^*}^2\vec{Q}$
and $\Delta_{\varphi^*}^2 \vec{P}$,
which is formulated as the following theorem.

\begin{theorem}\Label{t5-29-1}
The minimum of the uncertainty $\Delta_{\varphi^*}^2\vec{Q}$
under the constraint for $\Delta_{\varphi^*}^2 \vec{P}$
is calculated as
\begin{align}
\min_{\varphi^* \in L^2_{n}(\SU(2))}
\{
\Delta_{\varphi^*}^2 \vec{Q}|
\Delta_{\varphi^*}^2 \vec{P} \le E \}
=1-(\min_{s>0}
s(E+\frac{1}{4})- \frac{sb_2(\frac{8}{s})}{16})^2.
\Label{5-12-20}
\end{align}
The minimum in \eqref{5-12-16}
is realized by $\varphi^*$
if and only if
there exists $g \in \SU(2)$ such that
$\varphi^*(g \tilde{g}_{\theta,\eta_1,\eta_2})
=
\frac{\se_2(\frac{\theta}{4}, -\frac{8}{s_E})}{\sin \frac{\theta}{2}}$,
where $s_E$ is given in Theorem \ref{T3-13-3b}.

Further, $\min_{s>0}
s(E+\frac{1}{4})- \frac{sb_2(\frac{8}{s})}{16}$ is a positive value,
and \eqref{5-12-20} has the asymptotic expansion 
$\frac{9}{16E} -\frac{5 \cdot 3^3}{2^9 E^2}$
as $E \to \infty$.
\end{theorem}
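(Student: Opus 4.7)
The plan is to reduce the uncertainty relation to the estimation bound in Theorem~\ref{T3-13-3b}. The starting observation is that the coordinates $(x_0,x_1,x_2,x_3)$ satisfy $\sum_{j=0}^3 x_j(g)^2 = 1$ on $\SU(2)\cong S^3$, so $\sum_{j=0}^3 Q_j^2 = I$ and hence $\Delta_{\varphi^*}^2 \vec{Q} = 1 - \|\langle\vec Q\rangle\|^2$. The task is therefore to maximize $\|\langle\vec Q\rangle\|$ under the constraint $\Delta^2\vec P \le E$. Exploiting the invariance of both $\Delta^2\vec Q$ and $\Delta^2\vec P$ under left and right translations on $L^2(\SU(2))$---a left translation by $h$ acts on $\vec Q$ via the $S^3$-rotation by $h$ and on $\vec P$ by $\mathrm{Ad}(h)\in \SO(3)$, while a right translation leaves $\vec P$ intact and rotates $\vec Q$ by the complementary $SO(4)$-action---we may assume without loss of generality that $\langle\vec Q\rangle = (r,0,0,0)$, so that $\|\langle\vec Q\rangle\|^2 = \langle Q_0\rangle^2 = \langle\cos(\theta/2)\rangle^2$ in the parametrization $\tilde g_{\theta,\eta_1,\eta_2}$.

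I would then restrict attention to class functions, which in the parametrization take the form $\varphi^*(\tilde g_{\theta,\eta_1,\eta_2}) = \varphi(\theta)/\sin(\theta/2)$ with $\varphi \in L^2_{p,\odd,n}((-2\pi,2\pi])$. By \eqref{5-28-10}--\eqref{5-18-2n}, such states satisfy $\langle Q_0\rangle = \langle\varphi|\cos(Q/2)|\varphi\rangle$; by conjugation symmetry $\langle\vec P\rangle = 0$, whence $\Delta^2\vec P = \langle\vec P^2\rangle = \langle H\rangle$, and \eqref{5-18-3} converts the constraint into $\langle\varphi|P^2|\varphi\rangle \le E + 1/4$. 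Applying Theorem~\ref{T3-13-3b} then gives
\begin{align*}
\max \langle\cos(\theta/2)\rangle = 1 - \kappa_{\SU(2)}(E) = \min_{s>0}\Bigl\{s(E+\tfrac{1}{4}) - \frac{s\,b_2(8/s)}{16}\Bigr\},
\end{align*}
and substitution into $1 - (1 - \kappa_{\SU(2)}(E))^2$ yields the stated formula. The extremizer is a group translate of the class function built from the Mathieu function $\se_2(\theta/4,-8/s_E)/\sin(\theta/2)$ in Theorem~\ref{T3-13-3b}.

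The main obstacle I anticipate is the rigorous justification of the restriction to class functions. A generic $\varphi^*$ need not satisfy $\langle\vec P\rangle = 0$, so the constraint $\Delta^2\vec P \le E$ only yields $\langle H\rangle = \Delta^2\vec P + \|\langle\vec P\rangle\|^2$, with the extra contribution not a priori controlled. The natural conjugation average $|\tilde\varphi^*(g)|^2 := \int_{\SU(2)} |\varphi^*(hgh^{-1})|^2\,dh$ produces a class function and preserves $\langle Q_0\rangle$ (since $x_0$ is conjugation-invariant), but the nonlinearity of the positive square root obstructs a clean comparison of $\langle H\rangle$ before and after symmetrization. This is presumably why the statement is given as a $\lim_{E\to\infty}$ identity: along optimizing sequences one expects $\|\langle\vec P\rangle\|^2 = o(E)$, so the reduction becomes asymptotically exact, whereas for finite $E$ a more delicate argument---perhaps exploiting the convexity provided by Lemma~\ref{L4-25-2}---would be needed to promote the inequality to a sharp nonasymptotic equality.
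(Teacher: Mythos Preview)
Your approach is essentially the same as the paper's: reduce $\Delta_{\varphi^*}^2\vec Q$ to $1-\langle Q_0\rangle^2$ by symmetry, replace the constraint $\Delta_{\varphi^*}^2\vec P\le E$ by $\langle H\rangle\le E$, and then invoke Theorems~\ref{T5-20b} and~\ref{T3-13-3b}. One minor difference: the paper does not restrict to class functions. It only assumes $\langle P_j\rangle=0$ and $\langle Q_j\rangle=0$ for $j=1,2,3$, then bounds
\[
\max\{\langle Q_0\rangle\,|\,\langle H\rangle\le E,\ \langle Q_j\rangle=0\}
\le
\max\{\langle Q_0\rangle\,|\,\langle H\rangle\le E\},
\]
applies Theorem~\ref{T3-13-3b} to the right-hand side, and closes the inequality by checking that the Mathieu optimizer $\se_2(\theta/4,-8/s_E)/\sin(\theta/2)$ (which happens to be a class function) satisfies $\langle Q_j\rangle=0$. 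This is slightly cleaner than a direct restriction to class functions, but the content is the same.

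Regarding the gap you flag: you are right that translations on $\SU(2)$ rotate $\langle\vec P\rangle$ but cannot annihilate it, so the reduction $\Delta^2\vec P\le E\Rightarrow\langle H\rangle\le E$ is not a consequence of symmetry alone. The paper, however, does not supply the additional argument you anticipate; it simply writes ``Due to the symmetry, without loss of generality, we can assume that $\langle\varphi^*|Q_j|\varphi^*\rangle=0$ and $\langle\varphi^*|P_j|\varphi^*\rangle=0$ for $j=1,2,3$'' and proceeds. So the concern you raise is not resolved in the paper's proof either---it is asserted, not argued. Your reading of the $\lim_{E\to\infty}$ as a hedge is plausible, but note that the right-hand side of \eqref{5-12-20} still depends on $E$, so the limit is almost certainly a typo rather than a deliberate weakening; the proof is written as if the identity holds for each $E$.
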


\begin{proof}
Due to the symmetry, without loss of generality, we can assume that 
$\langle \varphi^* |  Q_j |\varphi^* \rangle=0$ and 
$\langle \varphi^* | P_j  |\varphi^* \rangle=0$ for $j=1,2,3$.
Hence, using the Hamiltonian $H$ given in Subsection \ref{s11-2-2},
we obtain
\begin{align*}
&
\min_{\varphi^* \in L^2_{n}(\SU(2))}
\{
\Delta_{\varphi^*}^2 \vec{Q}|
\Delta_{\varphi^*}^2 \vec{P} \le E \} \\
=&
\min_{\varphi^* \in L^2_{n}(\SU(2))}
\{
1-\langle \varphi^*| Q_0 | \varphi^*\rangle^2 |
\langle \varphi^*| H | \varphi^*\rangle \le E ,
\langle \varphi^*| Q_j | \varphi^*\rangle =0, j=1,2,3\} \\
=&
1-
(
\max_{\varphi^* \in L^2_{n}(\SU(2))}
\{
\langle \varphi^*| Q_0 | \varphi^*\rangle |
\langle \varphi^*| H | \varphi^*\rangle \le E ,
\langle \varphi^*| Q_j | \varphi^*\rangle =0, j=1,2,3\})^2 .
\end{align*}
Using Theorems \ref{T5-20b} and \ref{T3-13-3b}, 
we have
\begin{align}
& \max_{\varphi^* \in L^2_{n}(\SU(2))}
\{
\langle \varphi^*| Q_0 | \varphi^*\rangle |
\langle \varphi^*| H | \varphi^*\rangle \le E ,
\langle \varphi^*| Q_j | \varphi^*\rangle =0, j=1,2,3\}\nonumber \\
\le & 
\max_{\varphi^* \in L^2_{n}(\SU(2))}
\{
\langle \varphi^*| Q_0 | \varphi^*\rangle |
\langle \varphi^*| H | \varphi^*\rangle \le E \} \nonumber\\
= &
\max_{\varphi \in L^2_{p,\odd,n}([-\pi,\pi])}
\{
\langle \varphi| \cos Q| \varphi\rangle |
\langle \varphi| P^2 | \varphi\rangle \le E+\frac{1}{4} \} \nonumber\\
=&
\min_{s>0}
s(E+\frac{1}{4})- \frac{sb_2(\frac{8}{s})}{16}
\ge 0,
\Label{5-29-1}
\end{align}
where $\varphi$ is chosen as 
$\varphi(\theta)= \varphi^*(\tilde{g}_{\theta,\eta_1,\eta_2})\sin \frac{\theta}{2}$.
The minimum in right hand side of \eqref{5-29-1}
can be realized by
$\varphi^*(\tilde{g}_{\theta,\eta_1,\eta_2})
=
\frac{\se_2(\frac{\theta}{4}, -\frac{8}{s_E})}{\sin \frac{\theta}{2}}$.
This function satisfies the condition
$\langle \varphi^*| Q_j | \varphi^*\rangle =0$ for $j=1,2,3$.
Hence, we obtain the equality in \eqref{5-29-1}.

For a general function $\varphi^*$, there exists an element $g\in \SU(2)$
such that $\varphi^*(g \tilde{g}_{\theta,\eta_1,\eta_2})$ satisfies the condition.
$\langle \varphi^*| Q_j | \varphi^*\rangle =0$ for $j=1,2,3$.
So, $\varphi^*$ attains 
the minimum in \eqref{5-12-16}
if and only if
$\varphi^*(g \tilde{g}_{\theta,\eta_1,\eta_2})
=
\frac{\se_2(\frac{\theta}{4}, -\frac{8}{s_E})}{\sin \frac{\theta}{2}}$,
where $s_E$ is given in Theorem \ref{T3-13-3b}.

Further, we have
\begin{align}
1-(\min_{s>0}
s(E+\frac{1}{4})- \frac{sb_2(\frac{8}{s})}{16})^2
\cong 1-(1-(
\frac{9}{32E}-\frac{7 \cdot 3^3 }{2^{11} E^2} ))^2
\cong \frac{9}{16E} -\frac{5 \cdot 3^3}{2^9 E^2}.
\end{align}
\end{proof}

Next, as another type of uncertainty relation,
we consider the trade-off between
$\Delta_{\varphi^*}^2 \vec{Q}$
and $\Delta_{\varphi^*,\max} \vec{P}$, which is defined as
the square root of the maximum eigenvalue of 
$\sum_{j=1}^3 (P_j- \langle \varphi^*|P_j |\varphi^*\rangle)^2$
such that 
the corresponding projection $A$ satisfies $\langle \varphi^*| A |\varphi^*\rangle>0$.

Then, we obtain the following theorem.
\begin{theorem}
The minimum of the uncertainty $\Delta_{\varphi^*}^2 \vec{Q}$
under the constraint for $\Delta_{\varphi^*,\max} \vec{P} $
is calculated as
\begin{align}
\min_{\varphi^* \in L^2_{n}(\SU(2))}
\{
\Delta_{\varphi^*}^2 \vec{Q}|
\Delta_{\varphi^*,\max} P \le E \} =
\sin^2 \frac{\pi}{\lfloor 2(\sqrt{E^2+\frac{1}{4}}-\frac{1}{2}) \rfloor+2}
\Label{5-12-17b}.
\end{align}
The minimum uncertainty is realized by $\varphi^*\in L^2_{n}(\SU(2))$
if and only if
there exists $g \in \SU(2)$ such that
$\varphi^*(g \tilde{g}_{\theta,\eta_1,\eta_2})
=
C \sqrt{2} \sum_{k=0}^{n_0} 
\sin \frac{(k+1) \pi}{n_0+2}
\frac{\sin \frac{k+1}{2}\theta}{\sin \frac{\theta}{2}}$,
where $C$ is the normalizing constant,
and $n_0:=\lfloor 2(\sqrt{E^2+\frac{1}{4}}-\frac{1}{2}) \rfloor$.
\end{theorem}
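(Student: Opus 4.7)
\begin{proofof}{Theorem (Uncertainty relation with $\Delta_{\varphi^*,\max}\vec{P}$)}
The plan is to mimic the structure of the proof of Theorem \ref{t5-29-1}, but replacing the appeal to Theorem \ref{T3-13-3b} (energy constraint, Mathieu eigenvalue) by an appeal to the finite-dimensional result \eqref{3-12-1vc} (support constraint, $1-\cos\frac{\pi}{n+2}$).

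\textbf{Step 1 (symmetry reduction).} For any $\varphi^*\in L^2_n(\SU(2))$, the action of left translation by some $g\in\SU(2)$ on $\varphi^*$ preserves $\Delta_{\varphi^*,\max}\vec{P}$ (because $\sum_j(P_j-\langle P_j\rangle)^2$ is left-translation invariant) and leaves $\Delta^2_{\varphi^*}\vec{Q}=1-\sum_j\langle Q_j\rangle^2$ unchanged. By choosing $g$ appropriately, we may therefore assume $\langle\varphi^*|Q_j|\varphi^*\rangle=0$ for $j=1,2,3$ and $\langle\varphi^*|P_j|\varphi^*\rangle=0$ for $j=1,2,3$. Under this normalization, $\Delta^2_{\varphi^*}\vec{Q}=1-\langle\varphi^*|Q_0|\varphi^*\rangle^2$, and minimizing $\Delta^2_{\varphi^*}\vec{Q}$ is equivalent to maximizing $\langle\varphi^*|Q_0|\varphi^*\rangle$.

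\textbf{Step 2 (identification of the constraint as a support restriction).} Since $P_j$ is generated by $\sigma_j/2$, the Casimir $\sum_{j=1}^3 P_j^2$ acts as the scalar $\tfrac{k}{2}(\tfrac{k}{2}+1)$ on the isotypic component ${\cal U}_{k/2}\otimes{\cal U}_{k/2}^*$ of $L^2(\SU(2))$. With the zero-average assumption, $\Delta_{\varphi^*,\max}^2\vec{P}\le E^2$ is exactly the condition that the spectral support of $\sum_j P_j^2$ on which $\varphi^*$ has mass satisfies $\tfrac{k}{2}(\tfrac{k}{2}+1)\le E^2$. Solving $(k+1)^2\le 4E^2+1$ yields $k\le 2(\sqrt{E^2+\tfrac14}-\tfrac12)$, so the largest admissible weight is $n_0:=\lfloor 2(\sqrt{E^2+\tfrac14}-\tfrac12)\rfloor$. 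Hence ${\cal F}^{-1}[\varphi^*]\in{\cal K}_{\Lambda_{n_0}}$ in the notation of Subsection \ref{s11-2}.

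\textbf{Step 3 (reduction to \eqref{3-12-1vc} and identification of the optimizer).} Under the parametrization \eqref{5-19-5c} we have $x^0(\tilde g_{\theta,\eta_1,\eta_2})=\cos\tfrac\theta2$, hence $\langle\varphi^*|Q_0|\varphi^*\rangle = 1-{\cal D}_{R_{\SU(2)}}(\varphi^*)$ because $R_{\SU(2)}(e,\tilde g_{\theta,\eta_1,\eta_2})=1-\cos\tfrac\theta2$. Therefore
\[
\max_{\mathrm{supp}(\varphi^*)\subset\Lambda_{n_0}}\langle\varphi^*|Q_0|\varphi^*\rangle
= 1-\min_{\rho\in{\cal S}({\cal K}_{\Lambda_{n_0}})}{\cal D}_{R_{\SU(2)}}(\rho,M_{|{\cal I}\rangle\langle{\cal I}|})
= \cos\tfrac{\pi}{n_0+2}
\]
by \eqref{3-12-1vc}. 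Consequently $\Delta^2_{\varphi^*}\vec{Q}\ge 1-\cos^2\tfrac{\pi}{n_0+2}=\sin^2\tfrac{\pi}{n_0+2}$, with equality iff the state achieves equality in \eqref{3-12-1vc}. By the uniqueness statement following \eqref{3-12-1vc}, the optimizer has the inverse Fourier transform $C\sqrt{2}\sum_{k=0}^{n_0}\sin\tfrac{(k+1)\pi}{n_0+2}\tfrac{\sin\frac{k+1}{2}\theta}{\sin\frac{\theta}{2}}$; combining with the left translation of Step 1 gives the stated form of the minimizer.

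\textbf{Main obstacle.} The only delicate point is the symmetry reduction in Step 1: one must argue that the map $g\mapsto(\langle Q_1\rangle,\langle Q_2\rangle,\langle Q_3\rangle)$ applied to the left-translate $L_g\varphi^*$ sweeps out enough of $\bR^3$ to annihilate the three components simultaneously (equivalently, the original state can be rotated so that its ``center of mass'' on $S^3$ lies on the $x^0$-axis). This follows from the transitivity of $\SU(2)$ acting on $S^3$, but it is the one non-formal input needed before the combinatorial arguments of Steps 2--3 take over.
\end{proofof}
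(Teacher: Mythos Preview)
Your proof is essentially correct and follows the same route as the paper's: symmetry reduction to $\langle Q_j\rangle=\langle P_j\rangle=0$, translation of the $\Delta_{\varphi^*,\max}\vec P\le E$ constraint into the support restriction ${\cal F}[\varphi^*]\in{\cal K}_{\Lambda_{n_0}}$, and then appeal to the finite-support result \eqref{3-12-1vc}. Your derivation of $n_0$ from $\tfrac{k}{2}(\tfrac{k}{2}+1)\le E^2$ is more explicit than the paper's, which simply writes the constraint as ${\cal F}[\varphi^*]\in{\cal K}_{\Lambda_{n_0}}$.

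One small point worth tightening: in Step~3 you silently drop the side constraints $\langle Q_j\rangle=0$ when invoking \eqref{3-12-1vc}, which gives only the inequality $\Delta^2_{\varphi^*}\vec Q\ge\sin^2\tfrac{\pi}{n_0+2}$. To conclude equality you need to observe (as the paper does by referring back to the proof of Theorem~\ref{t5-29-1}) that the optimizer of \eqref{3-12-1vc} is a function of $\theta$ alone, hence automatically satisfies $\langle Q_j\rangle=0$ for $j=1,2,3$ by the angular symmetry in $(\eta_1,\eta_2)$. You gesture at this with ``with equality iff the state achieves equality in \eqref{3-12-1vc}'', but the check that the optimizer is admissible for the constrained problem should be stated. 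Also, a notational slip: you write ${\cal F}^{-1}[\varphi^*]\in{\cal K}_{\Lambda_{n_0}}$ where ${\cal F}[\varphi^*]$ is meant ($\varphi^*$ already lives on the group side).
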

\begin{proof}
Due to the symmetry, without loss of generality, we can assume that 
$\langle \varphi^* |  Q_j |\varphi^* \rangle=0$ and 
$\langle \varphi^* | P_j  |\varphi^* \rangle=0$ for $j=1,2,3$.
Hence, using the Hamiltonian $H$ given in Subsection \ref{s11-2-2},
we obtain
\begin{align*}
&
\min_{\varphi^* \in L^2_{n}(\SU(2))}
\{
\Delta_{\varphi^*}^2 \vec{Q}|
\Delta_{\varphi^*,\max}^2 \vec{P} \le E \} \\
=&
\min_{\varphi^* \in L^2_{n}(\SU(2))}
\{
1-\langle \varphi^*| Q_0 | \varphi^*\rangle^2 |
{\cal F}[\varphi^* ] \in {\cal K}_{\Lambda_{n_0}},
\langle \varphi^*| Q_j | \varphi^*\rangle =0, j=1,2,3\} \\
=&
1-
(
\max_{\varphi^* \in L^2_{n}(\SU(2))}
\{
\langle \varphi^*| Q_0 | \varphi^*\rangle |
{\cal F}[\varphi^* ] \in {\cal K}_{\Lambda_{n_0}},
\langle \varphi^*| Q_j | \varphi^*\rangle =0, j=1,2,3\})^2 .
\end{align*}
Similar to \eqref{5-29-1},
using Theorem \ref{T5-20b} and \eqref{3-12-1vc}, 
we have
\begin{align}
& \min_{\varphi^* \in L^2_{n}(\SU(2))}
\{
1- \langle \varphi^*| Q_0 | \varphi^*\rangle |
{\cal F}[\varphi^* ] \in {\cal K}_{\Lambda_{n_0}},
\langle \varphi^*| Q_j | \varphi^*\rangle =0, j=1,2,3\}\nonumber \\
=& 
\min_{|\phi\rangle \in {\cal K}_{\Lambda_{n_0},n}}
\{
{\cal R}_{\SU(2)}(|\phi\rangle )
|
\langle \varphi^*| Q_j | \varphi^*\rangle =0, j=1,2,3\}\nonumber \\
\ge & 
\min_{|\phi\rangle \in {\cal K}_{\Lambda_{n_0},n}}
{\cal R}_{\SU(2)}(|\phi\rangle )
=
1- \cos \frac{\pi}{\lfloor 2(\sqrt{E^2+\frac{1}{4}}-\frac{1}{2}) \rfloor+2}.
\Label{5-29-1b}
\end{align}
Since $1- \cos^2 \frac{\pi}{\lfloor 2(\sqrt{E^2+\frac{1}{4}}-\frac{1}{2}) \rfloor+2}
=\sin^2 \frac{\pi}{\lfloor 2(\sqrt{E^2+\frac{1}{4}}-\frac{1}{2}) \rfloor+2}$,
we obtain \eqref{5-12-17b}.
Similar to Theorem \ref{t5-29-1}, the condition for realizing the minimum in \eqref{3-12-1vc}
yields the condition for realizing the minimum in \eqref{5-12-17b}.
\end{proof}

\section{3-dimensional Special Orthogonal Group $\SO(3)$}\Label{s11b}
\subsection{General structure of $\SO(3)$ estimation}\Label{s11b-1}
Next, we consider the group $\SO(3)$, whose universal covering group is $\SU(2)$.
That is, there is the projection $\varpi:\SU(2) \to \SO(3)$.
When $\lambda$ is even, the representation $f_{\lambda}$ of $\SU(2)$ gives
the representation of $\SO(3)$.
When $\lambda$ is odd, the representation $f_{\lambda}$ of $\SU(2)$ gives
the projective representation of $\SO(3)$ with the same factor system,
which will be denoted by $-1$.
More precisely, firstly, we define the projective representation $f_{\frac{1}{2}}$
of $\SO(3)$ by $f_{\frac{1}{2}}(g):=f_{\frac{1}{2}}(g')$ with a choice of $g'\in 
\varpi^{-1}(g)$.
Next, we define the projective representation $f_{k+\frac{1}{2}}$ 
of $\SO(3)$ whose factor system is the same as $f_{\frac{1}{2}}$ for $k>0$.
Then, we have $\hat{\SU(2)}= \hat{\SO(3)}\cup \hat{\SO(3)}[-1]$.
That is, we describe the elements of $\hat{\SO(3)}$ and $\hat{\SO(3)}[-1]$
by using the maximal weight of the representation of $\SU(2)$.
Using two kinds of parameterizations of $\SU(2)$,
we introduce two kinds of parameterizations of $\SO(3)$ as
$\varpi_{\vec{\theta}}:=\varpi(g_{\vec{\theta}})$ with 
the range $\{\vec{\theta}|\|\vec{\theta}\| \le \pi \}$
and
$\tilde{\varpi}_{\theta,\eta_1,\eta_2}:=\varpi(\tilde{g}_{\theta,\eta_1,\eta_2})$ with the range
$\theta \in (-\pi,\pi]$, 
$\eta_1\in (-\frac{\pi}{2},\frac{\pi}{2}]$,
$\eta_2\in (-\frac{\pi}{2},\frac{\pi}{2}]$.

When the risk function $R$ satisfies the condition \eqref{6-26-14}, 
the risk function is written as 
\begin{align}
R(e,\hat{g})= \alpha_0- \sum_{k=1}^{\infty} \alpha_{k} \chi_{k}.
\Label{5-19-1b}
\end{align}
Then, we obtain
$R(e,\tilde{\varpi}_{\theta,\eta_1,\eta_2})=w(\theta)$, 
where we define the even function 
$w(\theta):=(\alpha_0- \sum_{k=1}^{\infty}\alpha_{k} 
(\sum_{l=0}^{k} \cos l \theta))$ with the period $2\pi$.
As a typical risk function, we often adopt the risk function 
$R_{\SO(3)}(e,\hat{g})= \frac{1}{2}(3- \chi_{1}(\hat{g}))
=\frac{1}{4}(4- |\Tr \hat{g}|^2)$
by using the gate fidelity,
and is written as
$R_{\SO(3)}(e,\varpi_{\vec{\theta}})= 1- \cos \|\vec{\theta} \|$
by using \eqref{5-28-12}.

Further, we also assume that
the Hamiltonians $H_1$ and $H_{-1}$ on ${\cal K}_{\hat{\SO(3)}}$ and ${\cal K}_{\hat{\SO(3)}[-1]}$ are written by using a function $h$ as 
\begin{align}
H_1=\sum_{k=0}^{\infty} h((k+\frac{1}{2})^2) I_{k} , \quad
H_{-1}=\sum_{k=0}^{\infty} h((k+1)^2) I_{k+\frac{1}{2}} 
\Label{5-19-2b}
\end{align}
For the description of the following theorem, we prepare the function space:
\begin{align*}
L^2_{a,\odd}((-\pi,\pi]):=
\{ f \in L^2_{p,\odd}((-2\pi,2\pi])|
f(\theta+ 2\pi)=-f(\theta) \}.
\end{align*}

The following theorem holds for the representation of $\SO(3)$.
\begin{theorem}\Label{t5-26-1}
Assume the assumptions \eqref{5-19-1b} and \eqref{5-19-2b}.
For an input state 
\begin{align}
|{\phi} \rangle:=
\bigoplus_{k=0}^{\infty} \frac{{\beta}_{k}}{\sqrt{2k+1}}
|\Psi_{k}\rangle\rangle \Label{5-20-8b}
\end{align}
on ${\cal K}_{\hat{\SO(3)}}$,
we have the relations
\begin{align}
\varphi(\theta)
&:=
{\cal F}^{-1}[\phi](\tilde{\varpi}_{\theta,\eta_1,\eta_2}) \sin \frac{\theta}{2}
=\sqrt{2}
\sum_{k=0}^{\infty}\beta_{k} \sin (k+\frac{1}{2}) \theta 
\Label{5-28-10b} \\
{\cal D}_R(|\phi\rangle)& = \langle \varphi| w(Q) | \varphi\rangle, 
\quad
\langle \phi |H_1 |\phi\rangle 
=\langle \varphi|h(P^2)| \varphi\rangle .
\Label{5-26-2}
\end{align}
Here, $\varphi(\theta)$ is an odd function belonging to $L^2_{a,\odd}((-\pi,\pi])$.
Then,
the relations
\begin{align}
&\min_{\rho \in {\cal S}({\cal K}_{\hat{\SO(3)}})} 
\min_{M \in {\cal M}_{\cov}(\SO(3))} 
\{{\cal D}_R(\rho,M)| \Tr \rho H_1 \le E \}
\nonumber \\
=&
\min_{ \{p_i\} }
\min_{\rho_i \in {\cal S}({\cal K}_{\hat{\SO(3)}})}
\min_{M_i \in {\cal M}_{\cov}(\SO(3))} 
\{\sum_i p_i{\cal D}_R(\rho_i,M_i)| \sum_i p_i \Tr \rho_i H_1 \le E \}
\nonumber \\
=&\min_{|\phi\rangle \in L^2_n(\hat{\SO(3)})} 
\{ {\cal D}_R(|\phi\rangle)|
\langle \phi |H_1 |\phi\rangle \le E 
\}\nonumber \\
=&
\min_{\varphi \in L^2_{a,\odd,n}((-\pi,\pi])} 
\{\langle \varphi| w(Q) | \varphi\rangle|
\langle \varphi|h(P^2)| \varphi\rangle \le E \}
\Label{5-18-2b} 
\end{align}
hold.
Further, an input state $\phi$ given in \eqref{5-20-8b}
with $\beta_{k}\ge 0$
satisfies the relation
\begin{align}
\min_{M\in {\cal M}_{\cov}(\SO(3))} 
{\cal D}_R (|{\phi}\rangle \langle {\phi}|,M)
= {\cal D}_R (|{\phi}\rangle)
= \eqref{5-18-2b}
\Label{5-26-4}
\end{align}
if and only if
the odd function 
$
{\cal F}^{-1}[\phi](\tilde{\varpi}_{\theta,\eta_1,\eta_2}) \sin \frac{\theta}{2}
=\sum_{k=0}^{\infty}\beta_{k} \sin (k+\frac{1}{2}) \theta$
realizes the minimum \eqref{5-18-2b}.
Additionally,
when 
$H_1=\sum_{k=0}^{\infty} k(k+1) I_{k}$
i.e., $h(x)=x-\frac{1}{4}$,
we have
\begin{align}
\eqref{5-18-2b}
=&\min_{\varphi \in L^2_{a,\odd,n}((-\pi,\pi])} 
\{\langle \varphi| w(Q) | \varphi\rangle|
\langle \varphi|P^2| \varphi\rangle \le E+\frac{1}{4} \} ,
\Label{5-18-3b} 
\end{align}
\end{theorem}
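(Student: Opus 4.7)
The plan is to follow the proof of Theorem~\ref{T5-20b} closely, adapting each step to the passage from $\SU(2)$ to $\SO(3)$. The overall architecture is: (i)~use Lemma~\ref{11-16-2} to reduce the infimum to pure inputs of the canonical form \eqref{5-20-8b}; (ii)~pass through the map $\phi\mapsto\varphi$ to convert the problem into a one-variable problem on $L^2_{a,\odd,n}((-\pi,\pi])$; (iii)~invoke Theorem~\ref{t6-24-1} together with Lemma~\ref{L4-25-2} to close the chain of equalities in \eqref{5-18-2b}.

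First I would verify \eqref{5-28-10b}. Since $\SO(3)$ is compact I apply the inverse Fourier formula \eqref{e-4-22-3} with $d_k=2k+1$ to the input \eqref{5-20-8b}. Using $\Psi_k=I_k/\sqrt{2k+1}$ and the reality of the $\SO(3)$ characters, this collapses to ${\cal F}^{-1}[\phi](g)=\sum_{k=0}^\infty \beta_k\chi_k(g)$. The character of $f_k$ viewed as an $\SO(3)$-representation coincides with the $\SU(2)$ character at the corresponding integer weight, so \eqref{5-28-11} (with the relabeling $k/2\mapsto k$ and $\tilde g\mapsto\tilde\varpi$) gives $\chi_k(\tilde\varpi_{\theta,\eta_1,\eta_2})=\sin((k+\tfrac12)\theta)/\sin(\theta/2)$, and multiplication by $\sin(\theta/2)$ produces the claimed Fourier sine series.

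Next I would verify \eqref{5-26-2}. The $\SO(3)$ Haar measure in the parametrization $\tilde\varpi_{\theta,\eta_1,\eta_2}$ has $\theta$-marginal proportional to $\sin^2(\theta/2)\,d\theta$ on $(-\pi,\pi]$, and because the risk \eqref{5-19-1b} satisfies $R(e,\tilde\varpi_{\theta,\eta_1,\eta_2})=w(\theta)$, the integrations over $\eta_1,\eta_2$ are trivial. The definition $\varphi(\theta)={\cal F}^{-1}[\phi](\tilde\varpi_{\theta,\eta_1,\eta_2})\sin(\theta/2)$ is exactly what is needed to absorb the Haar-measure weight into $|\varphi(\theta)|^2$, so ${\cal D}_R(|\phi\rangle)$ becomes $\langle\varphi|w(Q)|\varphi\rangle$. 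For the energy, a direct computation gives $\langle\phi|H_1|\phi\rangle=\sum_k h((k+\tfrac12)^2)\beta_k^2$, and since the Fourier components of $\varphi$ have momentum $\pm(k+\tfrac12)$ the operator $h(P^2)$ acts diagonally with eigenvalues $h((k+\tfrac12)^2)$, yielding the second identity of \eqref{5-26-2}.

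To close the chain \eqref{5-18-2b}, I would apply Theorem~\ref{t6-24-1}: the convexity hypothesis holds via Lemma~\ref{L4-25-2} because on the canonical space ${\cal K}_{\hat{\SO(3)}}$ the multiplicity spaces are identified with ${\cal U}_k^*$, so $\dim{\cal V}_k=\dim{\cal U}_k$. Lemma~\ref{11-16-2} applies because $R$ in \eqref{5-19-1b} has the form \eqref{6-26-14}; its Schwarz-inequality equality condition selects pure states of the form \eqref{5-20-8b}, so the infima over $\rho$, $\{p_i,\rho_i\}$, $M$, $M_i$ collapse to the infimum over $(\beta_k)$, which by the preceding identities equals the infimum over $\varphi\in L^2_{a,\odd,n}((-\pi,\pi])$. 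The characterization \eqref{5-26-4} then follows from the equality condition in Lemma~\ref{11-16-2} combined with \eqref{5-4-1}, and the specialization \eqref{5-18-3b} is immediate from $k(k+1)=(k+\tfrac12)^2-\tfrac14$. The main obstacle I anticipate is verifying that the map $\phi\mapsto\varphi$ yields a bijection onto $L^2_{a,\odd,n}((-\pi,\pi])$: unlike the $\SU(2)$ case, where the allowed frequencies $(k+1)/2$ for $k\ge 0$ span all periodic odd functions on $(-2\pi,2\pi]$, restricting to $\SO(3)$-integer weights isolates the frequencies $k+\tfrac12$, which span precisely the anti-periodic subspace $L^2_{a,\odd,n}((-\pi,\pi])$; completeness of this sine system in the correct norm is the essential analytic input that distinguishes Theorem~\ref{t5-26-1} from Theorem~\ref{T5-20b}.
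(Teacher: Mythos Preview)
Your proposal is correct and follows essentially the same route as the paper's own proof: reduce to canonical inputs via Lemma~\ref{11-16-2}, compute the inverse Fourier transform using the character formula \eqref{5-28-11}, absorb the $\sin^2(\theta/2)$ Haar weight into $|\varphi|^2$ to obtain \eqref{5-26-2}, invoke Theorem~\ref{t6-24-1} (with convexity from Lemma~\ref{L4-25-2}) for the equality chain \eqref{5-18-2b}, and identify the image of $\phi\mapsto\varphi$ as $L^2_{a,\odd,n}((-\pi,\pi])$ via the completeness of the half-integer sine system. Your explicit mention of Lemma~\ref{L4-25-2} and your identification of the bijection onto the anti-periodic subspace as the key distinguishing step are, if anything, more careful than the paper's presentation.
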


The following theorem holds for the projective representation of $\SO(3)$ with the factor system $-1$.
\begin{theorem}\Label{t5-26-2}
Assume the assumptions \eqref{5-19-1b} and \eqref{5-19-2b}.
For an input state 
\begin{align}
|{\phi} \rangle:=
\bigoplus_{k=0}^{\infty} \frac{{\beta}_{k+\frac{1}{2}}}{\sqrt{2k+2}}
|\Psi_{k+\frac{1}{2}}\rangle\rangle \Label{5-20-8c}
\end{align}
on ${\cal K}_{\hat{\SO(3)}[-1]}$,
we have the relations
\begin{align}
\varphi(\theta) &:=
{\cal F}^{-1}[\phi](\tilde{\varpi}_{\theta,\eta_1,\eta_2}) \sin \frac{\theta}{2}
=\sqrt{2}
\sum_{k=0}^{\infty}\beta_{k+\frac{1}{2}} \sin (k+1) \theta
\Label{5-28-10c} \\
{\cal D}_R(|\phi\rangle) &= \langle \varphi| w(Q) | \varphi\rangle, 
\quad
\langle \phi |H_{-1} |\phi\rangle 
=\langle \varphi|h(P^2)| \varphi\rangle .
\Label{5-26-3}
\end{align}
Here, $\varphi(\theta)$ is an odd function belonging to $L^2_{p,\odd}((-\pi,\pi])$.
Then, the relations
\begin{align}
&\min_{\rho \in {\cal S}({\cal K}_{\hat{\SO(3)}[-1]})} 
\min_{M \in {\cal M}_{\cov}(\SO(3))} 
\{{\cal D}_R(\rho,M)| \Tr \rho H_{-1} \le E \}
\nonumber \\
=&
\min_{ \{p_i\} }
\min_{\rho_i \in {\cal S}({\cal K}_{\hat{\SO(3)}[-1]})}
\min_{M_i \in {\cal M}_{\cov}(\SO(3))} 
\{\sum_i p_i{\cal D}_R(\rho_i,M_i)| \sum_i p_i \Tr \rho_i H_{-1} \le E \}
\nonumber \\
=&\min_{|\phi\rangle \in L^2_n(\hat{\SO(3)}[-1])} 
\{ {\cal D}_R(|\phi\rangle)|
\langle \phi |H_{-1} |\phi\rangle \le E 
\}\nonumber \\
=&
\min_{\varphi \in L^2_{p,\odd,n}((-\pi,\pi])} 
\{\langle \varphi| w(Q) | \varphi\rangle|
\langle \varphi|h(P^2)| \varphi\rangle \le E \}
\Label{5-18-2c}
\end{align}
hold.
Further, an input state $|\phi \rangle$
given in \eqref{5-20-8c} with $\beta_{k+\frac{1}{2}}$
satisfies the relation
\begin{align}
\min_{M\in {\cal M}_{\cov}(\SO(3))} 
{\cal D}_R (|{\phi}\rangle \langle {\phi}|,M)
= {\cal D}_R (|{\phi}\rangle)
= \eqref{5-18-2c}
\Label{5-26-5}
\end{align}
if and only if
the odd function 
${\cal F}^{-1}[\phi](\tilde{\varpi}_{\theta,\eta_1,\eta_2}) \sin \frac{\theta}{2}
=\sum_{k=0}^{\infty}\beta_{k+\frac{1}{2}} \sin (k+1) \theta$
realizes the minimum \eqref{5-18-2c}.
Additionally,
when 
$H_{-1}=\sum_{k=0}^{\infty} (k+\frac{1}{2})(k+\frac{3}{2}) I_{k+\frac{1}{2}}$,
i.e., $h(x)=x-\frac{1}{4}$,
we have
\begin{align}
\eqref{5-18-2c}
=&\min_{\varphi \in L^2_{p,\odd,n}((-\pi,\pi])} 
\{\langle \varphi| w(Q) | \varphi\rangle|
\langle \varphi|P^2| \varphi\rangle \le E+\frac{1}{4} \} .
\Label{5-18-3c}
\end{align}
\end{theorem}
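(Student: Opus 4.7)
The plan is to adapt the argument of Theorem \ref{t5-26-1} to the case of half-integer highest weights, keeping track of how the parity of the weight changes the function space in which $\varphi$ lives. First I would establish the Fourier identity \eqref{5-28-10c}. For an input of the form \eqref{5-20-8c} on ${\cal K}_{\hat{\SO(3)}[-1]}$, the normalization $\Tr \Psi_{k+\frac{1}{2}}^\dagger \Psi_{k+\frac{1}{2}}=1$ combined with $d_{k+\frac{1}{2}}=2k+2$ gives $\sum_k \beta_{k+\frac{1}{2}}^2=1$. Applying \eqref{5-4-2} and then \eqref{5-28-11} with the replacement $k\mapsto 2k+1$ (so that the Dirichlet kernel reads $\sin(k+1)\theta/\sin\tfrac{\theta}{2}$) yields
\begin{align*}
{\cal F}^{-1}[\phi](\tilde{\varpi}_{\theta,\eta_1,\eta_2})\sin\tfrac{\theta}{2}
= \sum_{k=0}^{\infty}\beta_{k+\frac{1}{2}}\,\sin(k+1)\theta ,
\end{align*}
which (after the conventional $\sqrt{2}$ coming from the normalization of $\sin(k+1)\theta$ in $L^2_p((-\pi,\pi])$ under $d\theta/2\pi$) is \eqref{5-28-10c}. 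The point is that $\sin(k+1)\theta$ is $2\pi$-periodic and odd, so $\varphi\in L^2_{p,\odd}((-\pi,\pi])$; this is where the case $\lambda\in\hat{\SO(3)}[-1]$ differs from $\lambda\in\hat{\SO(3)}$ in Theorem \ref{t5-26-1}, where the sines have half-integer frequency and generate the antiperiodic space.

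Next I would derive the two identities in \eqref{5-26-3}. For the risk, insert the expansion \eqref{5-19-1b} into \eqref{th7}, use $R(e,\tilde{\varpi}_{\theta,\eta_1,\eta_2})=w(\theta)$, and pass to the $(\theta,\eta_1,\eta_2)$ parametrization of the Haar measure on $\SO(3)$, whose Jacobian contains a factor $\sin^2\tfrac{\theta}{2}$ that exactly cancels against the Weyl denominator introduced when multiplying ${\cal F}^{-1}[\phi]$ by $\sin\tfrac{\theta}{2}$; this is just the computation \eqref{5-20-2}, mutatis mutandis, and produces $\langle\varphi|w(Q)|\varphi\rangle$. For the energy, use \eqref{5-19-2b}: since $P$ acts on $\sin(k+1)\theta$ with eigenvalue $k+1$, one gets $\langle\varphi|h(P^2)|\varphi\rangle=\sum_k h((k+1)^2)\beta_{k+\frac{1}{2}}^2=\langle\phi|H_{-1}|\phi\rangle$.

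Having the correspondence $|\phi\rangle\leftrightarrow\varphi$, I would then invoke Theorem \ref{t6-24-1} together with Lemma \ref{L4-25-2} (applicable because ${\cal K}_{\hat{\SO(3)}[-1]}$ is built from the regular-representation inclusions ${\cal V}_\lambda={\cal U}_\lambda^*$, so $\dim{\cal V}_\lambda=\dim{\cal U}_\lambda$ and $\kappa$ is convex) to collapse the first three lines of \eqref{5-18-2c} into the minimum over pure states $|\phi\rangle\in L^2_n(\hat{\SO(3)}[-1])$. To reduce further to inputs of the canonical form \eqref{5-20-8c}, I would apply Lemma \ref{11-16-2}: under assumption \eqref{5-19-1b} on $R$, the Schwarz inequality in that lemma is saturated precisely by the maximally entangled choice $\Phi_\lambda=\Psi_\lambda$, which both justifies the form \eqref{5-20-8c} and supplies the equality \eqref{5-4-1} needed for \eqref{5-26-5}. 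The fact that every odd $2\pi$-periodic $L^2$ function admits a sine expansion $\sqrt{2}\sum_k\beta_{k+\frac{1}{2}}\sin(k+1)\theta$ makes the map $\{\beta_{k+\frac{1}{2}}\}\mapsto\varphi$ a bijection onto $L^2_{p,\odd,n}((-\pi,\pi])$, yielding the fourth equality in \eqref{5-18-2c}. Finally, for \eqref{5-18-3c}, substituting $h(x)=x-\tfrac{1}{4}$ into the constraint $\langle\varphi|h(P^2)|\varphi\rangle\le E$ is the tautological rearrangement $\langle\varphi|P^2|\varphi\rangle\le E+\tfrac{1}{4}$.

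The main obstacle is not any single step but bookkeeping: one must make sure the Plancherel weights $d_{k+\frac{1}{2}}=2k+2$, the Haar Jacobian $\sin^2\tfrac{\theta}{2}\,d\theta\,d\eta_1\,d\eta_2$ (with the correct normalizing constant so $\mu_{\SO(3)}$ is a probability measure), and the $L^2$ normalization of $\sin(k+1)\theta$ on $(-\pi,\pi]$ all combine consistently so that (i) the $\sqrt{2}$ in \eqref{5-28-10c} is correct, (ii) $\varphi$ really lies in $L^2_{p,\odd,n}$ (and not $L^2_{a,\odd,n}$), and (iii) the energy identity in \eqref{5-26-3} is exact rather than off by a constant. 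Once these normalizations are verified, the proof is a direct transcription of the argument for Theorem \ref{t5-26-1} with integer weights replaced by half-integers and antiperiodicity replaced by periodicity on $(-\pi,\pi]$.
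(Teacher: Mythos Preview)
Your proposal is correct and follows essentially the same route as the paper: you invoke \eqref{5-4-2} together with the character identity \eqref{5-28-11} (via \eqref{5-20-10}) to get \eqref{5-28-10c}, compute ${\cal D}_R$ and the energy exactly as in \eqref{5-20-2c}--\eqref{5-20-1c}, then appeal to Theorem \ref{t6-24-1}, Lemma \ref{L4-25-2}, and Lemma \ref{11-16-2} for the reduction to pure maximally entangled inputs, and finally use completeness of $\{\sin(k+1)\theta\}$ in $L^2_{p,\odd}((-\pi,\pi])$ to obtain the fourth equality. Your emphasis on why $\varphi$ lands in the periodic rather than the antiperiodic odd space is exactly the point that distinguishes this case from Theorem \ref{t5-26-1}.
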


\begin{proofof}{Theorem \ref{t5-26-1}}
Similar to \eqref{5-28-10}, the second equation in \eqref{5-28-11} and the equation \eqref{5-4-2} yield \eqref{5-28-10b}.
Due to the form of Hamiltonian, 
Theorem \ref{t6-24-1} implies 
the first, the second, and the third equations in \eqref{5-18-2b}.

%The fourth equation in \eqref{5-18-2b} follows from \eqref{5-26-2}.

Now, we show \eqref{5-26-2}.
Thanks to Lemma \ref{11-16-2},
the maximum value $\min_{|\phi\rangle \in L^2_n(\hat{\SO(3)})} 
\{ {\cal D}_R(|\phi\rangle)|
\langle \phi |H_1 |\phi\rangle \le E \}$
can be attained by the input state $|\phi\rangle$ with the form 
\eqref{5-20-8b}.
Hence, for the minimization of ${\cal D}_R(|\phi\rangle)$, 
it is enough to consider the inputs with the form \eqref{5-20-8b}.
We use the parametrization 
$\tilde{\varpi}_{\theta,\eta_1,\eta_2}$ for $\SO(3)$
with $\theta \in (-\pi,\pi ]$, $\phi_1\in [0,\pi)$, $\phi_2\in [ 0,2\pi)$.
Thanks to \eqref{5-20-9},
the equation \eqref{5-4-2} implies that
\begin{align*}
&{\cal F}^{-1}[\phi](\tilde{\varpi}_{\theta,\eta_1,\eta_2}) 
\sin \frac{\theta}{2} 
=
\sum_{k=0}^{\infty} 
\frac{\beta_{k}}{\sqrt{2k+1}} \sqrt{2k+1}
\chi_{k}(\tilde{\varpi}_{\theta,\eta_1,\eta_2})
\sin \frac{\theta}{2} \\
=&
\sum_{k=0}^{\infty} \beta_{k} \sin (k + \frac{1}{2}) \theta .
\end{align*}
In this case,
since $R(e,\tilde{\varpi}_{\theta,\eta_1,\eta_2})$ depends only on $\theta$,
we have
\begin{align}
&{\cal D}_R(|\phi\rangle)
=
\int_{\SO(3)}
(\alpha_0- \sum_{k=1}^{\infty}\alpha_{k} \chi_{k}(g))
|{\cal F}^{-1}[\phi](g) |^2
\mu_{\SO(3)}(d g) \nonumber \\
=&
\int_{-\pi}^{\pi}
(\alpha_0- 
\sum_{k=1}^{\infty}\alpha_{k} 
(\sum_{l=0}^{k} 
\cos l \theta)
|{\cal F}^{-1}[\phi](\tilde{\varpi}_{\theta,\eta_1,\eta_2})|^2
\sin^2 \frac{\theta}{2} \frac{d \theta}{\pi} \nonumber \\
=&
\int_{-\pi}^{\pi}
w(\theta)
|\sum_{k=0}^{\infty} \beta_{k} \sin (k + \frac{1}{2}) \theta |^2
 \frac{d \theta}{\pi} 
=
\int_{-\pi}^{\pi}
w(\theta)
|\varphi (\theta) |^2
 \frac{d \theta}{\pi} ,
\Label{5-20-2b}
\end{align}
where $\varphi(\theta):= \sum_{k=0}^{\infty} \beta_{k} \sin (k + \frac{1}{2}) \theta$.
Then, we have
\begin{align}
\langle \phi |H |\phi\rangle
=\sum_{k} h(((k + \frac{1}{2}))^2) |\beta_{k}|^2 
=\langle \varphi|h(P^2)| \varphi\rangle .
\Label{5-20-1b}
\end{align}
Hence, we obtain \eqref{5-26-2}.
Since any odd function with the period $2 \pi$ can be written as 
$\sum_{k=0}^{\infty} \beta_{k} \sin (k + \frac{1}{2}) \theta$,
the relations \eqref{5-20-2b} and \eqref{5-20-1b} yield
\begin{align*}
&\min_{|\phi\rangle \in L^2_n(\hat{\SO(3)})} 
\{ {\cal D}_R(|\phi\rangle)|
\langle \phi |H |\phi\rangle \le E 
\}\\
=&
\min_{\varphi \in L^2_{p,\odd,n}((-\pi,\pi])} 
\{\langle \varphi| w(Q) | \varphi\rangle|
\langle \varphi|h(P^2)| \varphi\rangle \le E \}.
\end{align*}
Hence, we obtain the fourth equation in \eqref{5-18-2b}.

Further, \eqref{5-4-1} of Lemma \ref{11-16-2}
the first equation in \eqref{5-26-4}.
Summarizing the above discussion, we can conclude that
\eqref{5-26-4} if and only if the odd function 
$\sum_{k=0}^{\infty}\beta_{k} \sin (k+\frac{1}{2}) \theta$ realizes the minimum \eqref{5-18-2b}.
\end{proofof}

\begin{proofof}{Theorem \ref{t5-26-2}}
Similar to \eqref{5-28-10}, the second equation in \eqref{5-28-11} and the equation \eqref{5-4-2} yield \eqref{5-28-10c}.
Similarly, we can show the first, the second, and the third equations in \eqref{5-18-2c}.
%So, we need to show only the fourth equation in \eqref{5-18-2b} and \eqref{5-18-2c}.
Next, we show the \eqref{5-26-3}.
Thanks to Lemma \ref{11-16-2},
the maximum value $\min_{|\phi\rangle \in L^2(\hat{\SO(3)}[-1])} 
\{ {\cal D}_R(|\phi\rangle)|
\langle \phi |H_1 |\phi\rangle \le E \}$
can be attained by the input state $|\phi\rangle$ with the form \eqref{5-20-8c}.

Hence, for the minimization of ${\cal D}_R(|\phi\rangle)$, 
it is enough to consider the inputs with the form \eqref{5-20-8c}.
Thanks to \eqref{5-20-10},
the equation \eqref{5-4-2} implies that
\begin{align*}
&{\cal F}^{-1}[\phi](\tilde{\varpi}_{\theta,\eta_1,\eta_2}) 
\sin \frac{\theta}{2} 
=
\sum_{k=0}^{\infty} 
\frac{\beta_{k+\frac{1}{2}}}{\sqrt{2k+2}} \sqrt{2k+2}
\chi_{k+\frac{1}{2}}(\tilde{\varpi}_{\theta,\eta_1,\eta_2})
\sin \frac{\theta}{2} \\
=&
\sum_{k=0}^{\infty} \beta_{k+\frac{1}{2}} 
\sin (k+1) \theta .
\end{align*}
In this case,
since $R(e,\tilde{\varpi}_{\theta,\eta_1,\eta_2})$ depends only on $\theta$,
we have
\begin{align}
&{\cal D}_R(|\phi\rangle)
=
\int_{\SO(3)}
(\alpha_0- \sum_{k=1}^{\infty}\alpha_{k} \chi_{k}(g))
|{\cal F}^{-1}[\phi](g) |^2
\mu_{\SO(3)}(d g) \nonumber \\
=&
\int_{-\pi}^{\pi}
(\alpha_0- 
\sum_{k=1}^{\infty}\alpha_{k} 
(\sum_{l=0}^{k} 
\cos l \theta)
|{\cal F}^{-1}[\phi](\tilde{\varpi}_{\theta,\eta_1,\eta_2})|^2
\sin^2 \frac{\theta}{2} \frac{d \theta}{\pi} \nonumber \\
=&
\int_{-\pi}^{\pi}
w(\theta)
|\sum_{k=0}^{\infty} \beta_{k+\frac{1}{2}} \sin (k+1) \theta |^2
 \frac{d \theta}{\pi} 
=
\int_{-\pi}^{\pi}
w(\theta)
|\varphi (\theta) |^2
 \frac{d \theta}{\pi} ,
\Label{5-20-2c}
\end{align}
where $\varphi(\theta):= \sum_{k=0}^{\infty} \beta_{k+\frac{1}{2}} \sin (k+1) \theta$.
Then, we have
\begin{align}
\langle \phi |H |\phi\rangle
=\sum_{k=0}^{\infty} h((k+1)^2) \beta_{k+\frac{1}{2}}^2 
=\langle \varphi|h(P^2)| \varphi\rangle .
\Label{5-20-1c}
\end{align}
We obtain \eqref{5-26-3}.
Since any odd function in $L^2_{a,\odd}((-\pi,\pi])$ can be written as 
$\sum_{k=0}^{\infty} \beta_{k+\frac{1}{2}} \sin (k+1) \theta$,
the relations \eqref{5-20-2c} and \eqref{5-20-1c} yield
\begin{align*}
&\min_{|\phi\rangle \in L^2_n(\hat{\SO(3)})} 
\{ {\cal D}_R(|\phi\rangle)|
\langle \phi |H |\phi\rangle \le E 
\}\\
=&
\min_{\varphi \in L^2_{a,\odd,n}((-\pi,\pi])} 
\{\langle \varphi| w(Q) | \varphi\rangle|
\langle \varphi|h(P^2)| \varphi\rangle \le E \}.
\end{align*}
Hence, we obtain \eqref{5-18-2c}.
We can show the equivalence condition by the same way as Theorem \ref{t5-26-1}.
\end{proofof}

\subsection{Constraint for available irreducible representations}\Label{s11b-2}
We restrict available weights to the set 
$\tilde{\Lambda}_{n}:= \{0,1, 2, \ldots, n \}$
or 
$\tilde{\Lambda}_{n+\frac{1}{2}}:= \{\frac{1}{2},\frac{3}{2}, \ldots, n+\frac{1}{2} \}$.
First, we consider the risk function $R_{\SO(3)}$ on the system 
${\cal K}_{\tilde{\Lambda}_{n}}$. %or ${\cal K}_{\tilde{\Lambda}_{n+\frac{1}{2}}}$.
When the input state $|\phi\rangle$ has the form (\ref{5-20-8b}),
Theorem \ref{T5-20b} implies that
\begin{align}
{\cal D}_{R_{\SO(3)}}(|\phi \rangle)
=1+\frac{1}{2}|\beta_{0}|^2
-\frac{1}{2}\sum_{k=0}^{n-1} 
(\beta_{k}\overline{\beta_{k+1}}+\beta_{k+1} \overline{\beta_{k}}).
\end{align}
This fact can be also shown by Lemma \ref{11-16-2} and
the relation 
\begin{align*}
C^1_{k,k'}=
\left\{
\begin{array}{ll}
\delta_{k,k'-1} +\delta_{k,k'}+\delta_{k,k'+1} & \hbox{ if } k>0 \\
\delta_{0,k'-1} & \hbox{ if } k=0.
\end{array}
\right.
\end{align*}
In order to find the minimum eigenvalue and the eigenvector,
we focus on the operator \eqref{5-26-1}.
Then, the discussion in Appendix \ref{a5-18} with $l=n+1$ 
implies that
\begin{align}
\min_{|\phi \rangle \in {\cal K}_{\tilde{\Lambda}_n,n} }
{\cal D}_{R_{\SO(3)}}(|\phi \rangle)
=
1-\cos \frac{2\pi}{2n+3}.
\end{align}
Hence, Theorem \ref{th3} implies
\begin{align}
&
\min_{\rho \in {\cal S}({\cal K}_{\tilde{\Lambda}_n})} 
\min_{M \in {\cal M}_{\cov}(\SO(3))} 
{\cal D}_{R_{\SO(3)}}(\rho,M)
\nonumber \\
%=& \min_{ \{p_i\} } \min_{\rho_i \in {\cal S}({\cal H})} 
%\min_{M_i \in {\cal M}(\SO(3))} 
%\{\sum_i p_i {\cal D}_{R}(\rho_i,M_i) | (\ref{6-20-3b})\hbox{holds.}\}
% \\
=&
\min_{ \{p_i\} }
\min_{\rho_i \in {\cal S}({\cal K}_{\tilde{\Lambda}_n})} 
\min_{M_i \in {\cal M}_{\cov}(\SO(3))} 
\sum_i p_i {\cal D}_{R_{\SO(3)}}(\rho_i,M_i) 
\nonumber \\
=& 1-\cos \frac{2\pi}{2n+3}
\cong \frac{1}{2}(\frac{2\pi}{2n+3})^2
\cong \frac{\pi^2}{2n^2}.
\Label{3-12-1v}
\end{align}

Next, we consider the risk function $R_{\SO(3)}$ on the system 
${\cal K}_{\tilde{\Lambda}_{n+\frac{1}{2}}}$.
When the input state $|\phi\rangle$ has the form (\ref{5-20-8c}),
Theorem \ref{T5-20b} implies that
\begin{align}
{\cal D}_{R_{\SO(3)}}(|\phi \rangle)
=1
-\frac{1}{2}\sum_{k=0}^{n-1} 
(\beta_{k+\frac{1}{2}}\overline{\beta_{k+\frac{3}{2}}}+\beta_{k+\frac{3}{2}} \overline{\beta_{k+\frac{1}{2}}}).
\end{align}
This fact can be also shown by Lemma \ref{11-16-2} and
the relation 
$C^1_{k+\frac{1}{2},k'+\frac{1}{2}}=
\delta_{k,k'-1} +\delta_{k,k'}+\delta_{k,k'+1}$.
Hence, applying Lemma \ref{t6-10-1} with $m=n+1$,
we obtain
\begin{align}
\min_{|\phi \rangle \in {\cal K}_{\tilde{\Lambda}_{n+\frac{1}{2}},n} }
{\cal D}_{R_{\SO(3)}}(|\phi \rangle)
=
1-\cos \frac{\pi}{n+2}.
\end{align}
Hence, Theorem \ref{th3} implies
\begin{align}
&
\min_{\rho \in {\cal S}({\cal K}_{\tilde{\Lambda}_{n+\frac{1}{2}}})} 
\min_{M \in {\cal M}_{\cov}(\SO(3))} 
{\cal D}_{R_{\SO(3)}}(\rho,M)
\nonumber \\
%=& \min_{ \{p_i\} } \min_{\rho_i \in {\cal S}({\cal H})} 
%\min_{M_i \in {\cal M}(\SO(3))} 
%\{\sum_i p_i {\cal D}_{R}(\rho_i,M_i) | (\ref{6-20-3b})\hbox{holds.}\}
% \\
=&
\min_{ \{p_i\} }
\min_{\rho_i \in {\cal S}({\cal K}_{\tilde{\Lambda}_{n+\frac{1}{2}}})} 
\min_{M_i \in {\cal M}_{\cov}(\SO(3))} 
\sum_i p_i {\cal D}_{R_{\SO(3)}}(\rho_i,M_i) \nonumber \\
=&1-\cos \frac{\pi}{n+2}
\cong \frac{\pi^2}{2n^2}.
\Label{3-12-1w}
\end{align}

Indeed, the asymptotic expansion in \eqref{3-12-1v} and \eqref{3-12-1w}
are given in \cite{Ba,Chi,Ha1}.
However, 
the exact calculations in \eqref{3-12-1v} and \eqref{3-12-1w}
are not given in these references.

Now, we consider the $n$-tensor product representation on $(\complex^2)^{\otimes n}$.
The relation ${\cal K}_{\tilde{\Lambda}_{m-1}} \subset 
{\cal K}_{(\complex^2)^{\otimes n}} \subset {\cal K}_{\tilde{\Lambda}_{m}}$
holds for the even case $n=2m$,
and 
the relation ${\cal K}_{\tilde{\Lambda}_{m-\frac{1}{2}}} 
\subset 
{\cal K}_{(\complex^2)^{\otimes n}} \subset {\cal K}_{\tilde{\Lambda}_{m+\frac{1}{2}}}$
holds for the odd case $n=2m+1$.
For the definition of ${\cal K}_{(\complex^2)^{\otimes n}}$, see \eqref{6-1-1}.
Hence, 
using \eqref{3-12-1v} and \eqref{3-12-1w},
we can recover the following proposition known in 
\cite{Ba,Chi,Ha1}.
\begin{proposition}\Label{p4-26-1x}
The relations 
\begin{align}
&
\lim_{n \to \infty}
n^2 
\min_{\rho \in {\cal S}((\complex^2)^{\otimes n})} 
\min_{M \in {\cal M}_{\cov}(\SO(3))} 
{\cal D}_{R_{\SO(3)}}(\rho,M)
\nonumber \\
%=& \min_{ \{p_i\} } \min_{\rho_i \in {\cal S}({\cal H})} 
%\min_{M_i \in {\cal M}(\SO(3))} 
%\{\sum_i p_i {\cal D}_{R}(\rho_i,M_i) | (\ref{6-20-3b})\hbox{holds.}\}
% \\
=&
\lim_{n \to \infty}n^2
\min_{ \{p_i\} }
\min_{\rho_i \in {\cal S}((\complex^2)^{\otimes n})} 
\min_{M_i \in {\cal M}_{\cov}(\SO(3))} 
\sum_i p_i {\cal D}_{R_{\SO(3)}}(\rho_i,M_i) 
=
2\pi^2
\Label{3-12-1bx}
\end{align}
hold.
\end{proposition}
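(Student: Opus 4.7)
The plan is to use the sandwich inclusions stated immediately before the proposition, together with the exact formulas \eqref{3-12-1v} and \eqref{3-12-1w}, and then pass to the limit. By Theorem \ref{th3}, the left-hand side of \eqref{3-12-1bx} equals $\min_{|\phi\rangle \in {\cal K}_{(\complex^2)^{\otimes n},n}} {\cal D}_{R_{\SO(3)}}(|\phi\rangle)$, so monotonicity of the minimum under inclusion of feasible sets is the only tool we need. Specifically, for $A\subset B$ we have $\min_B \le \min_A$.

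Next, I would split the limit along even and odd parities. For even $n=2m$, the inclusion ${\cal K}_{\tilde{\Lambda}_{m-1}} \subset {\cal K}_{(\complex^2)^{\otimes n}} \subset {\cal K}_{\tilde{\Lambda}_{m}}$ combined with \eqref{3-12-1v} gives
\begin{align*}
1-\cos \tfrac{2\pi}{2m+3}
\le \min_{|\phi\rangle \in {\cal K}_{(\complex^2)^{\otimes 2m},n}} {\cal D}_{R_{\SO(3)}}(|\phi\rangle)
\le 1-\cos \tfrac{2\pi}{2m+1}.
\end{align*}
Multiplying by $n^2 = 4m^2$ and using $1-\cos x = \tfrac{x^2}{2}+O(x^4)$, both outer terms tend to $\tfrac{1}{2}(2\pi)^2 = 2\pi^2$. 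For odd $n=2m+1$, the inclusion ${\cal K}_{\tilde{\Lambda}_{m-\frac{1}{2}}} \subset {\cal K}_{(\complex^2)^{\otimes n}} \subset {\cal K}_{\tilde{\Lambda}_{m+\frac{1}{2}}}$ combined with \eqref{3-12-1w} gives
\begin{align*}
1-\cos \tfrac{\pi}{m+2}
\le \min_{|\phi\rangle \in {\cal K}_{(\complex^2)^{\otimes 2m+1},n}} {\cal D}_{R_{\SO(3)}}(|\phi\rangle)
\le 1-\cos \tfrac{\pi}{m+1},
\end{align*}
and multiplying by $n^2 = (2m+1)^2$ yields the same limit $\tfrac{1}{2}(2\pi)^2 = 2\pi^2$. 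Combining the two parity cases gives \eqref{3-12-1bx} for the pure-state minimization, which by the equalities appearing in \eqref{3-12-1v} and \eqref{3-12-1w} coincides with both optimizations on the left-hand side of \eqref{3-12-1bx}.

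No real obstacle is expected here: the content of the proposition is essentially contained in the exact formulas \eqref{3-12-1v} and \eqref{3-12-1w}, and the only conceptual point is to make sure the ambient space $(\complex^2)^{\otimes n}$ is handled through the sandwich. The one place to be careful is to verify that the stated inclusions ${\cal K}_{\tilde{\Lambda}_{m\pm 1}} \subset {\cal K}_{(\complex^2)^{\otimes 2m}} \subset {\cal K}_{\tilde{\Lambda}_m}$ (and the odd analog) really hold as subsets of the ${\cal K}$-class defined in \eqref{6-1-1}; this reduces to checking that the multiplicity space ${\cal V}_\lambda$ in the Clebsch--Gordan decomposition of $(\complex^2)^{\otimes n}$ has dimension at least $\dim {\cal U}_\lambda = 2\lambda+1$ for the relevant $\lambda$, which is a standard dimension count using the hook-length or the known formula $\dim {\cal V}_\lambda = \binom{n}{n/2-\lambda}-\binom{n}{n/2-\lambda-1}$. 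Once that routine verification is done, the asymptotic sandwich closes and \eqref{3-12-1bx} follows immediately.
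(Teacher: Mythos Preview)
Your proposal is correct and follows exactly the route the paper indicates: the paper states the sandwich inclusions ${\cal K}_{\tilde{\Lambda}_{m-1}} \subset {\cal K}_{(\complex^2)^{\otimes 2m}} \subset {\cal K}_{\tilde{\Lambda}_{m}}$ (and the odd analog) and then simply asserts that \eqref{3-12-1bx} follows from \eqref{3-12-1v} and \eqref{3-12-1w}, without writing out the squeeze; your argument supplies precisely those missing details. The one point you flag for verification---that $\dim {\cal V}_\lambda \ge 2\lambda+1$ for $\lambda \le m-1$---is also left implicit in the paper, so your treatment is in fact slightly more complete.
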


\subsection{Typical energy constraint}\Label{s11b-3}
Next, we consider the risk function 
$R_{\SO(3)}(e,\hat{g})= \frac{1}{2}(3- \chi_{1} (\hat{g}))$
and the Hamiltonian 
$H= \sum_{k=-\infty}^{\infty} 
\frac{k}{2}(\frac{k}{2}+1)I_{\frac{k}{2}}$.
In this case, the function $w(\theta)$ is given as $1-\cos \theta$.
Then, thanks to Theorem \ref{t5-26-1}, 
the minimum error with respect to the representation
can be characterized by the following value.
\begin{align}
\kappa_{\SO(3)}(E)
:=
\min_{\varphi \in L^2_{a,\odd,n}((-\pi,\pi])} 
\{\langle \varphi| I-\cos (Q)| \varphi\rangle|
\langle \varphi| P^2 | \varphi\rangle \le E+\frac{1}{4}\} .
\Label{3-13-11c}
\end{align}
Similarly,
thanks to Theorem \ref{t5-26-2}, 
the minimum error with respect to the representation 
with the factor system $-1$
can be characterized by the following value.
\begin{align}
\kappa_{\SO(3),[-1]}(E)
:=
\min_{\varphi \in L^2_{p,\odd,n}((-\pi,\pi])} 
\{\langle \varphi| I-\cos (Q)| \varphi\rangle|
\langle \varphi| P^2 | \varphi\rangle \le E+\frac{1}{4}\} .
\Label{3-13-11d}
\end{align}

For example, we can show that
\begin{align}
\kappa_{\SO(3)}(0)&=\frac{3}{2} \Label{5-10-10c}\\
\kappa_{\SO(3),[-1]}(\frac{3}{4})&=1 \Label{5-10-10d}.
\end{align}
These facts can be also checked by the following way.
In \eqref{5-10-10c}, the condition $\langle \phi |H|\phi\rangle=0$
can be realized only when $\beta_{0}=1$ and $\beta_{k}=0$ with $k \neq 0$,
i.e., $\varphi(\theta)=\sqrt{2}\sin \frac{\theta}{2}$.
In this case, we have $\int_{-\pi}^{\pi}
(1-\cos (\hat{\theta}))
|\varphi(\hat{\theta})|^2 \frac{d\hat{\theta}}{{2\pi}} 
=\frac{3}{2}$.
Hence, we see (\ref{5-10-10c}).
In \eqref{5-10-10d}, the condition $\langle \phi |H|\phi\rangle=\frac{3}{4}$
can be realized only when $\beta_{\frac{1}{2}}=1$ and $\beta_{k+\frac{1}{2}}=0$ with $k \neq 0$,
i.e., $\varphi(\theta)=\sqrt{2}\sin \theta$.
In this case, we have $\int_{-\pi}^{\pi}
(1-\cos (\hat{\theta}))
|\varphi(\hat{\theta})|^2 \frac{d\hat{\theta}}{{2\pi}} 
=1$.
Hence, we see (\ref{5-10-10d}).

Now, we consider the general case.
Since the condition of Lemma \ref{L4-25-2} hold, $\kappa_{\SO(3)}(E)$ and $\kappa_{\SO(3),[-1]}(E)$ are convex.
Hence, we employ Lemma \ref{L5-10-1} to calculate $\kappa_{\SO(3)}(E)$ and $\kappa_{\SO(3),[-1]}(E)$,
and consider 
the minimums 
\begin{align*}
\gamma_{\SO(3)}(s):=&
\min_{\varphi\in L^2_{a,n}((-\pi, \pi ])}
\langle \varphi| (I-\cos (Q)) +  s P^2| \varphi \rangle \\
=& \min_{\varphi\in L^2_{a,n}((-\pi/2,\pi/2 ])}
\langle \varphi| (I-\cos (2Q))+ \frac{s P^2}{4}| \varphi \rangle \\
\gamma_{\SO(3),[-1]}(s):=&
\min_{\varphi\in L^2_{p,n}((-\pi, \pi ])}
\langle \varphi| (I-\cos (Q)) +  s P^2| \varphi \rangle \\
=& \min_{\varphi\in L^2_{p,n}((-\pi/2,\pi/2 ])}
\langle \varphi| (I-\cos (2Q))+ \frac{s P^2}{4}| \varphi \rangle .
\end{align*}
So, 
$\gamma_{\SO(3)}(s)$ and $\gamma_{\SO(3),[-1]}(s)$ 
can be characterized as 
the minimums of
$\gamma$ having the solution in 
$L^2_{a,n}((-\pi/2,\pi/2 ])$ 
and
$L^2_{p,n}((-\pi/2,\pi/2 ])$ 
of the following differential equation, respectively.
\begin{align}
\frac{s}{4}\frac{d^2}{d\theta^2} \varphi(\theta) + 
( \gamma - 1 +\cos (2\theta))
\varphi(\theta) =0,
\end{align}
which is equivalent to
\begin{align}
\frac{d^2}{d\theta^2} \varphi(\theta)+ 
( \frac{4(\gamma - 1)}{s} + \frac{4}{s}\cos (2\theta))
\varphi(\theta) =0.
\end{align}
In order to find the minimums 
$\gamma_{\SO(3)}(s)$ and $\gamma_{\SO(3),[-1]}(s)$,
we employ Mathieu equation (\ref{5-10-7}),
whose detail is summarized in Subsection \ref{asB}.
Hence, using the functions $a_1$, $b_1$ and $b_2$ 
given in Subsection \ref{asB},
we have 
$\gamma_{\SO(3)}(s)
=\frac{s b_1(-\frac{2}{s} )}{4} +1
=\frac{s a_1(\frac{2}{s})}{4} +1$,
and
$\gamma_{\SO(3),[-1]}(s)
=\frac{s b_2(-\frac{2}{s} )}{4} +1
=\frac{s b_2(\frac{2}{s})}{4} +1$,
where we employ the relation \eqref{6-3-1}.
So, applying (\ref{5-10-2}) to 
$\kappa_{\SO(3)}(E)$
and
$\kappa_{\SO(3),[-1]}(E)$, and
combining %Theorem \ref{t6-24-1}, Lemma \ref{L4-25-2},
the facts given in Subsection \ref{asB},
we obtain the following theorem.

\begin{theorem}\Label{T3-13-3cx}
The relations
\begin{align}
\kappa_{\SO(3)}(E)
=\max_{s>0} 
\frac{s a_1(\frac{2}{s})}{4} +1 -s(E+\frac{1}{4})
\Label{5-10-13c} \\
\kappa_{\SO(3),[-1]}(E)
=\max_{s>0} 
\frac{s b_2(\frac{2}{s})}{4}  +1 -s(E+\frac{1}{4})
\Label{5-10-13d} 
\end{align}
hold. The minimum \eqref{3-13-11c} 
is attained by the input state $|\phi\rangle$
with the measurement ${\cal M}_{|I\rangle \langle I|}$
if and only if 
${\cal F}^{-1}[\phi](\tilde{\varpi}_{\theta,\eta_1,\eta_2})=
\frac{\se_1(\frac{\theta}{2}, -\frac{2}{s_E})}{\sin \frac{\theta}{2}}$,
where $s_{E}$ is
$\argmax_{s>0} 
\frac{s a_1(\frac{2}{s})}{4} +1 -s(E+\frac{1}{4})$
and Mathieu function $\se_1$ is given in Subsection \ref{asB}.

Similarly, the minimum \eqref{3-13-11d} 
is attained by the input state $|\phi\rangle$
with the measurement ${\cal M}_{|I\rangle \langle I|}$
if and only if 
${\cal F}^{-1}[\phi](\tilde{\varpi}_{\theta,\eta_1,\eta_2})=
\frac{\se_2(\frac{\theta}{2}, -\frac{4}{s_E})}{\sin \frac{\theta}{2}}$,
where $s_{E}$ is
$\argmax_{s>0} 
\frac{s b_2(\frac{2}{s})}{4}  +1 -s(E+\frac{1}{4})$
and Mathieu function $\se_2$ is given in Subsection \ref{asB}.
\end{theorem}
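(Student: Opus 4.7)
\begin{proofof}{Theorem \ref{T3-13-3cx}}
The plan is to follow exactly the template already established in the proofs of Theorem \ref{T3-13-3} for $\U(1)$ and Theorem \ref{T3-13-3b} for $\SU(2)$. First, I invoke Theorems \ref{t5-26-1} and \ref{t5-26-2}, which reduce the quantities on the left-hand side of \eqref{5-10-13c} and \eqref{5-10-13d} to the pure-state minimizations \eqref{3-13-11c} and \eqref{3-13-11d} over $L^2_{a,\odd,n}((-\pi,\pi])$ and $L^2_{p,\odd,n}((-\pi,\pi])$ respectively. By Lemma \ref{L4-25-2} (applied in the form already used for $\U(1)$ and $\SU(2)$), both $\kappa_{\SO(3)}(E)$ and $\kappa_{\SO(3),[-1]}(E)$ are convex, so Lemma \ref{L5-10-1} applies and yields
\begin{align*}
\kappa_{\SO(3)}(E) = \max_{s>0} \gamma_{\SO(3)}(s) - sE, \qquad
\kappa_{\SO(3),[-1]}(E) = \max_{s>0} \gamma_{\SO(3),[-1]}(s) - sE,
\end{align*}
where $\gamma_{\SO(3)}(s)$ and $\gamma_{\SO(3),[-1]}(s)$ are the auxiliary minima defined just before the theorem statement.

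Next I would compute the two $\gamma$-functions explicitly. Using the substitution $\theta \mapsto 2\theta$ already carried out in the paragraph preceding the theorem statement, each $\gamma$ becomes the smallest $\gamma$ for which the ODE
\begin{align*}
\frac{d^2}{d\theta^2}\varphi(\theta) + \left( \frac{4(\gamma-1)}{s} + \frac{4}{s}\cos(2\theta) \right)\varphi(\theta) = 0
\end{align*}
admits a nontrivial solution in the relevant subspace of $L^2_n((-\pi/2,\pi/2])$. This is Mathieu's equation \eqref{5-10-7} with parameter $q = -2/s$. For $\gamma_{\SO(3)}(s)$, the rescaling sends anti-periodic odd functions on $(-\pi,\pi]$ to odd $2\pi$-periodic (but not $\pi$-periodic) functions, whose smallest characteristic value is $b_1(-2/s)$; the Mathieu identity \eqref{6-3-1} then gives $b_1(-2/s) = a_1(2/s)$. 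For $\gamma_{\SO(3),[-1]}(s)$, the rescaling sends periodic odd functions on $(-\pi,\pi]$ to odd $\pi$-periodic functions, whose smallest characteristic value is $b_2(-2/s) = b_2(2/s)$. Reading off the eigenvalue relations, I obtain
\begin{align*}
\gamma_{\SO(3)}(s) = \frac{s\, a_1(2/s)}{4} + 1, \qquad
\gamma_{\SO(3),[-1]}(s) = \frac{s\, b_2(2/s)}{4} + 1,
\end{align*}
and substituting into the Legendre representation yields \eqref{5-10-13c} and \eqref{5-10-13d} after absorbing the constant $\frac{1}{4}$ from the relation $P^2 = H_{\pm} + \frac{1}{4}$ built into \eqref{5-18-3b} and \eqref{5-18-3c}.

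Finally, the characterization of the optimal input state is obtained by unwinding the rescaling: the eigenfunction attaining the minimum for $\gamma_{\SO(3)}(s_E)$ is, after restoring the original variable $\theta$ in $(-\pi,\pi]$, the Mathieu function $\se_1(\theta/2,-2/s_E)$, and the identity \eqref{5-28-10b} in Theorem \ref{t5-26-1} identifies this with ${\cal F}^{-1}[\phi](\tilde{\varpi}_{\theta,\eta_1,\eta_2}) \sin(\theta/2)$; dividing by $\sin(\theta/2)$ gives the stated form. The projective case is identical, using $\se_2(\theta/2,-4/s_E)$ together with \eqref{5-28-10c} in Theorem \ref{t5-26-2}. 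The main obstacle I anticipate is bookkeeping the correspondence between the four function spaces (periodic/anti-periodic, odd, and the two factor systems) and the appropriate Mathieu characteristic values $a_n$ versus $b_n$; this is the step where the $\SO(3)$ result genuinely differs from the $\U(1)$ and $\SU(2)$ analogues, and where the careful statement of Mathieu function parities in Appendix \ref{asB} must be matched to the parity constraints inherited from \eqref{5-28-10b} and \eqref{5-28-10c}.
\end{proofof}
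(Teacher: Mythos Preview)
Your proposal is correct and follows essentially the same route as the paper: the derivation preceding the theorem statement already reduces $\kappa_{\SO(3)}$ and $\kappa_{\SO(3),[-1]}$ to the Mathieu eigenvalue problem via Lemma \ref{L4-25-2} and Lemma \ref{L5-10-1}, identifies $\gamma_{\SO(3)}(s)=\frac{s\,b_1(-2/s)}{4}+1=\frac{s\,a_1(2/s)}{4}+1$ and $\gamma_{\SO(3),[-1]}(s)=\frac{s\,b_2(-2/s)}{4}+1=\frac{s\,b_2(2/s)}{4}+1$ exactly as you do, and then applies \eqref{5-10-2}. Your handling of the $E+\tfrac14$ shift via \eqref{5-18-3b}--\eqref{5-18-3c} and of the eigenfunction identification via \eqref{5-28-10b}--\eqref{5-28-10c} matches the paper's reasoning as well.
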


Using the formula \eqref{5-10-13b},
we can calculate $\kappa_{\SO(3)}(E)$ and $\kappa_{\SO(3),[-1]}(E)$ 
as Fig. \ref{SO(3)-1}.

\begin{figure}[htbp]
\begin{center}
\scalebox{0.6}{\includegraphics[scale=1.2]{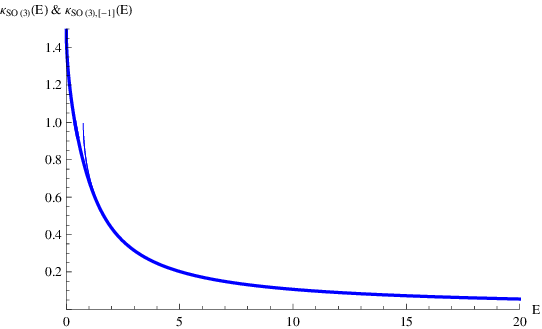}}
\end{center}
\caption{Comparison of $\kappa_{\SO(3)}(E)$ and $\kappa_{\SO(3),[-1]}(E)$.
Thick line expresses $\kappa_{\SO(3)}(E)$ and normal line expresses 
$\kappa_{\SO(3),[-1]}(E)$.
$\kappa_{\SO(3),[-1]}(E)$ is not defined when $E < \frac{3}{4}$.}
\Label{SO(3)-1}
\end{figure}%

Since, as is shown in \eqref{5-20-16b}, $a_1$ and $b_2$ have the same asymptotic expansion up to higher orders,
$\gamma_{\SO(3)}(s)$ and $\gamma_{\SO(3),[-1]}(s)$
have the same asymptotic expansion up to higher order as $s$ goes to zero.
Hence, 
$\kappa_{\SO(3)}(E)$ and $\kappa_{\SO(3),[-1]}(E)$
have the same asymptotic expansion up to higher order as $E$ goes to infinity.
So, As is shown in Fig. \ref{SO(3)-L2}, the difference rate
$\frac{\kappa_{\SO(3),[-1]}(E)-\kappa_{\SO(3)}(E)}{\kappa_{\SO(3)}(E)}$
goes to zero very quickly.

\begin{figure}[htbp]
\begin{center}
\scalebox{0.6}{\includegraphics[scale=1.2]{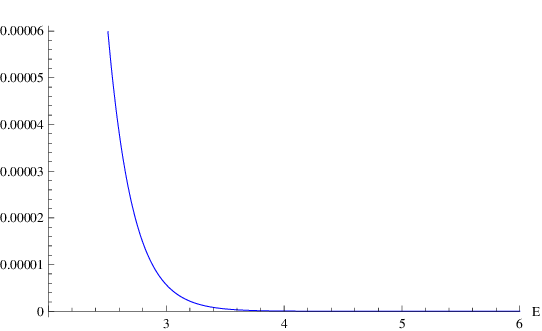}}
\end{center}
\caption{Graph of 
$\frac{\kappa_{\SO(3),[-1]}(E)-\kappa_{\SO(3)}(E)}{\kappa_{\SO(3)}(E)}$.}
\Label{SO(3)-L2}
\end{figure}%

Thanks to the expansion \eqref{5-20-16b},
As $s \to 0$, $\gamma_{\SO(3)}(s)$ can be expanded to
\begin{align*}
\gamma_{\SO(3)}(s)
&\cong
\gamma_{\SO(3),[-1]}(s)
\cong
\frac{s(-2\frac{2}{s}+6 \sqrt{\frac{2}{s}}-\frac{5}{4})}{4} +1
=\frac{3}{2}\sqrt{2s}- \frac{5s}{16}.
\end{align*}
As is shown in Lemma \ref{L5-10-1},
$s_E$ is decreasing as a function of $E$.
Hence, when $E$ is large, 
solving the equation 
$\gamma_{\SO(3)}'(s_E)
\cong
\gamma_{\SO(3),[-1]}'(s_E)
=E+\frac{1}{4}$, we approximately obtain
$s_E\cong \frac{1}{8}\cdot(\frac{3}{E+9/16})^2$.
Hence,
\begin{align*}
\kappa_{\SO(3)}(E)&= \gamma_{\SO(3)}(s_E)-s_E 
(E+\frac{1}{4})
\cong 
\frac{3\sqrt{2}}{2}\sqrt{s_E}- \frac{5s_E}{16}
- s_E(E+\frac{1}{4}) \nonumber \\
\kappa_{\SO(3),[-1]}(E)&= \gamma_{\SO(3),[-1]}(s_E)-s_E 
(E+\frac{1}{4})
\cong 
\frac{3\sqrt{2}}{2}\sqrt{s_E}- \frac{5s_E}{16}
- s_E(E+\frac{1}{4}) .
\end{align*}
and
\begin{align}
&\frac{3\sqrt{2}}{2}\sqrt{s_E}- \frac{5s_E}{16}
- s_E(E+\frac{1}{4}) 
= \frac{3\sqrt{2}}{2 }\sqrt{s_E}- s_E
(\frac{9}{16}+E) \nonumber \\
\cong &
\frac{3}{4}
\frac{3}{E+9/16}
-
\frac{1}{8}
\frac{9}{E+9/16}
= 
\frac{9}{8(E+9/16)}
\cong
\frac{9}{8E}
-
\frac{81}{128 E^2}
\Label{5-10-4e2}.
\end{align}
As is shown in Fig. \ref{SO(3)-L},
while 
the first order approximation $\kappa_{1,\SO(3),\infty}(E):=\frac{9}{8E}$ gives a good approximation 
for 
$\kappa_{\SO(3)}(E)$ 
and
$\kappa_{\SO(3),[-1]}(E)$ 
with a large $E$,
the second order approximation 
$\kappa_{2,\SO(3),\infty}(E):=\frac{9}{8E}
-\frac{81}{128 E^2}$
much improves the approximation 
for 
$\kappa_{\SO(3)}(E)$ 
and
$\kappa_{\SO(3),[-1]}(E)$ 
with a large $E$.
Hence, we have the following asymptotic characterization.
\begin{align}
&
\lim_{E \to \infty}
E
\min_{|{\phi} \rangle \in L^2_n(\hat{\SO(3)}) }
\{{\cal D}_{R_{\SO(3)}}(|\phi \rangle)|
\langle \phi | H |\phi \rangle \le E \} \nonumber \\
=&
\lim_{E \to \infty}
E
\min_{|{\phi} \rangle \in L^2_n(\hat{\SO(3)}[-1]) }
\{{\cal D}_{R_{\SO(3)}}(|\phi \rangle)|
\langle \phi | H |\phi \rangle \le E \} 
=
\frac{9}{8} 
\Label{3-13-7c}.
\end{align}

\begin{figure}[htbp]
\begin{center}
\scalebox{0.6}{\includegraphics[scale=1.2]{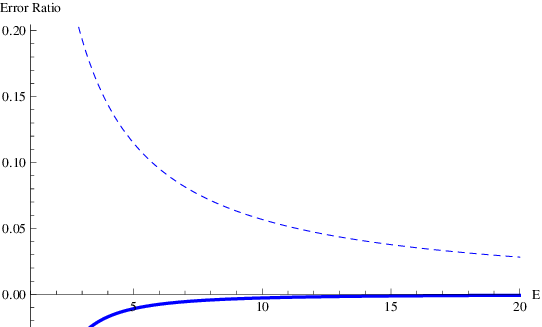}}
\end{center}
\caption{Comparison of two approximations 
$\kappa_{1,\SO(3),\infty}$
and
$\kappa_{2,\SO(3),\infty}$
of $\kappa_{\SO(3)}$ with a large $E$.
Thick line expresses 
the error ratio$\frac{\kappa_{2,\SO(3),\infty}(E)-\kappa_{\SO(3)}(E)}{\kappa_{\SO(3)}(E)}$,
and
dashed line expresses 
the error ratio$\frac{\kappa_{1,\SO(3),\infty}(E)-\kappa_{\SO(3)}(E)}{\kappa_{\SO(3)}(E)}$.}
\Label{SO(3)-L}
\end{figure}%

Next, we consider the case when $E$ is small.
Since $s$ is large,
by using the expansions \eqref{5-26-11} and \eqref{exp3},
$\gamma_{\SO(3)}(s)$ 
and
$\gamma_{\SO(3),[-1]}(s)$ 
can be expanded to
\begin{align*}
\gamma_{\SO(3)}(s)
&\cong 
\frac{s(1+\frac{2}{s}
-\frac{1}{8}(\frac{2}{s})^2 -\frac{1}{64}(\frac{2}{s})^3)}{4} +1
= 
\frac{s}{4}+\frac{3}{2}-\frac{1}{8 s}  -\frac{1}{32 s^3}.
\\
\gamma_{\SO(3),[-1]}(s)
&\cong
\frac{s(4-\frac{1}{12}(\frac{2}{s})^2 +\frac{5}{13824}(\frac{2}{s})^4)}{4} +1
= 
s+1-\frac{1}{12s}  +\frac{5}{3456 s^3}.
\end{align*}
Since $E>0$ is small, 
since $\gamma_{\SO(3)}'(s)\cong \frac{1}{4} +\frac{1}{8 s^2}
+\frac{1}{16 s^3}$,
solving the equation $\gamma_{\SO(3)}'(s_E)=E+\frac{1}{4}$, 
we approximately obtain
$s_E
\cong \sqrt{\frac{1}{8 E}}(1+\frac{E}{\sqrt{2}})$.
Hence,
\begin{align}
&\kappa_{\SO(3)}(E)= 
\gamma_{\SO(3)}(s_E)-s_E( E+\frac{1}{4}) \nonumber \\
\cong &
\frac{s_E}{4}+\frac{3}{2}-\frac{1}{8 s_E}  -\frac{1}{32 s_E^2}
-s_E( E+\frac{1}{4}) 
=
\frac{3}{2}-\frac{1}{8 s_E}  -\frac{1}{32 s_E^2}
-s_E  E \nonumber \\
=&
\frac{3}{2}-\frac{1}{8 s_E}(1+\frac{1}{4 s_E}) -s_E E
\cong
\frac{3}{2}-\frac{1}{8 s_E}(1+\sqrt{\frac{E}{2}}) -s_E E
\nonumber\\
\cong &
\frac{3}{2}-\sqrt{\frac{E}{8}}
(1-\frac{E}{\sqrt{2}})(1+\sqrt{\frac{E}{2}}) -
\sqrt{\frac{E}{8}} (1+\frac{E}{\sqrt{2}}) \nonumber\\
\cong &
\frac{3}{2}-\sqrt{\frac{E}{8}}
(1-\frac{E}{\sqrt{2}}+\sqrt{\frac{E}{2}}) -
\sqrt{\frac{E}{8}} (1+\frac{E}{\sqrt{2}})
=
\frac{3}{2}-\sqrt{\frac{E}{8}}
(2+\sqrt{\frac{E}{2}}) \nonumber \\
= &
\frac{3}{2}
-\frac{\sqrt{E}}{\sqrt{2}}
-\frac{E}{4}.
\Label{5-10-4g}
\end{align}
This expansion with $E \to 0$ coincides with \eqref{5-10-10c}. 
As is shown in Fig. \ref{SO(3)-S+},
while 
the first order approximation $\kappa_{1,\SO(3),+0}(E):=
\frac{3}{2}-\frac{\sqrt{E}}{\sqrt{2}}$ 
gives a good approximation 
for $\kappa_{\SO(3)}(E)$ with a small $E$,
the second order approximation 
$\kappa_{2,\SO(3),+0}(E):=
\frac{3}{2}-\frac{\sqrt{E}}{\sqrt{2}}
-\frac{E}{4}$
much improves the approximation 
for $\kappa_{\SO(3)}(E)$ with a small $E$.

\begin{figure}[htbp]
\begin{center}
\scalebox{0.6}{\includegraphics[scale=1.2]{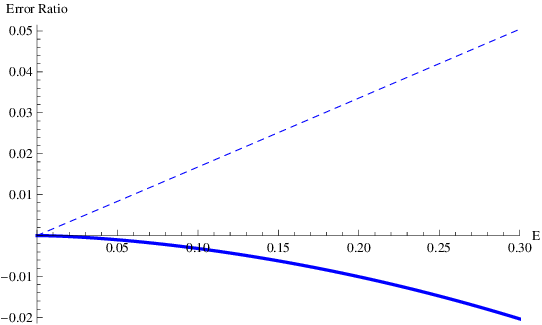}}
\end{center}
\caption{Comparison of two approximations 
$\kappa_{1,\SO(3),+0}$ and $\kappa_{2,\SO(3),+0}$
of $\kappa_{\SO(3)}$ with a small $E$.
Thick line expresses 
the error ratio$\frac{\kappa_{2,\SO(3),+0}(E)-\kappa_{\SO(3)}(E)}{\kappa_{\SO(3)}(E)}$,
and
dashed line expresses 
the error ratio$\frac{\kappa_{1,\SO(3),+0}(E)-\kappa_{\SO(3)}(E)}{\kappa_{\SO(3)}(E)}$.}
\Label{SO(3)-S+}
\end{figure}%

Next, we consider $\kappa_{\SO(3),[-1]}(E)$
in the case when $\eta:= E -\frac{3}{4}>0$ is small, 
since $\gamma_{\SO(3),[-1]}'(s)\cong 1+\frac{1}{12 s^2}-\frac{5}{1152 s^4}$,
solving the equation $\gamma_{\SO(3),[-1]}'(s_E)=E+\frac{1}{4}$, 
we approximately obtain
$s_E
\cong \sqrt{\frac{1}{12 \eta}-\frac{5}{96}}
\cong \sqrt{\frac{1}{12 \eta}}(1-\frac{5}{4}\eta)$.
Hence,
\begin{align}
&\kappa_{\SO(3),[-1]}(E)= 
\gamma_{\SO(3),[-1]}(s_E)-s_E( E+\frac{1}{4}) \nonumber \\
\cong &
s_E+1-\frac{1}{12 s_E}  +\frac{5}{3456 s_E^3}
-s_E( E+\frac{1}{4}) 
=
1-\frac{1}{12 s_E}  +\frac{5}{3456 s_E^3} -s_E  \eta \nonumber \\
=&
1-\frac{1}{12 s_E}(1-\frac{5}{288 s_E^2}) -s_E \eta
\cong
1-\frac{1}{12 s_E}(1-\frac{5}{288 (\frac{1}{12 \eta}-\frac{5}{96})}) -s_E \eta \nonumber\\
=&
1-\frac{1}{12 s_E}(1-\frac{5}{\frac{24}{\eta}-15}) -s_E \eta 
\cong 
1-\frac{1}{12 s_E}(1-\frac{5}{24} \eta) -s_E \eta \nonumber\\
\cong 
&
1-\sqrt{\frac{\eta}{12}}
(1
+\frac{5}{4}\eta
-\frac{5}{24} \eta) -\sqrt{\frac{\eta}{12}}(1-\frac{5}{4}\eta) 
=
1-\sqrt{\frac{\eta}{12}}
(2-\frac{5}{24} \eta) \nonumber\\
=&
1-
\frac{1}{\sqrt{3}} \eta^{\frac{1}{2}}
+
\frac{5}{48\sqrt{3}}\eta^{\frac{3}{2}}.
\Label{5-10-4f}
\end{align}
This expansion with $E \to \frac{3}{4}$ coincides with \eqref{5-10-10d}. 

As is shown in Fig. \ref{SO(3)-S-},
while 
the first order approximation $\kappa_{1,\SO(3),[-1],+0}(E):=
1-\frac{1}{\sqrt{3}} (E-\frac{3}{4})^{\frac{1}{2}}$ 
gives a good approximation 
for $\kappa_{\SO(3),[-1]}(E)$ with a small $E-\frac{3}{4}$,
the second order approximation 
$\kappa_{2,\SO(3),[-1],+0}(E):=
1-
\frac{1}{\sqrt{3}} (E-\frac{3}{4})^{\frac{1}{2}}
+\frac{5}{48\sqrt{3}}(E-\frac{3}{4})^{\frac{3}{2}}$
much improves the approximation 
for $\kappa_{\SO(3),[-1]}(E)$ with a small $E-\frac{3}{4}$.

\begin{figure}[htbp]
\begin{center}
\scalebox{0.6}{\includegraphics[scale=1.2]{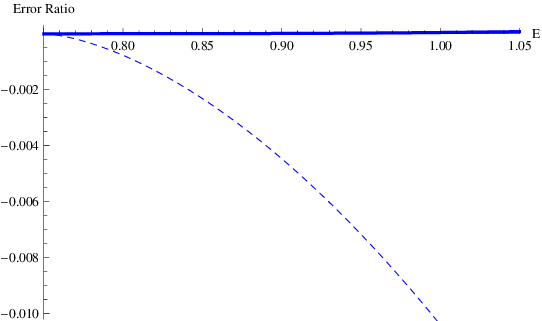}}
\end{center}
\caption{Comparison of two approximations 
$\kappa_{1,\SO(3),[-1],+0}$
and $\kappa_{2,\SO(3),[-1],+0}$
of $\kappa_{\SO(3)}$ with a small $E$.
Thick line expresses 
the error ratio$\frac{\kappa_{2,\SO(3),[-1],+0}(E)-\kappa_{\SO(3),[-1]}(E)}{\kappa_{\SO(3),[-1]}(E)}$, and dashed line expresses 
the error ratio$\frac{\kappa_{1,\SO(3),[-1],+0}(E)-\kappa_{\SO(3),[-1]}(E)}{\kappa_{\SO(3),[-1]}(E)}$.}
\Label{SO(3)-S-}
\end{figure}%

%From (\ref{5-10-4}), we obtain the following theorem.
For the asymptotic optimality condition with respect to input states,
we obtain the following lemma.
%However, it is not easy to calculate the above minimum.
%Instead of the above minimization, we deal with the asymptotic %behavior of the above minimum with the limit $E \to \infty$.
\begin{lemma}\Label{T3-13-3y2}
[Case 1]
For a sequence $\{E_l\}$ satisfying $E_l \to \infty$ as $l \to \infty$,
we focus on a sequence of input states 
$\{\phi_{E_l}\}$ in $L^2_n(\hat{\SO(3)})$
with the form 
$|\phi_{E_l}\rangle=\oplus_{k=0}^{\infty} 
\frac{\beta_{k,E_l}}{\sqrt{2k+1} } 
|\Psi_{k}\rangle\rangle$
satisfying that
$\langle \phi_{E_l}| H |\phi_{E_l}\rangle \le E_l$. 
We also define the odd function
\begin{align*}
\tilde{\phi}_{E_l}(\lambda):= 
\left\{
\begin{array}{ll}
(\frac{\pi E_l}{2})^{1/4}
\beta_{\lfloor \sqrt{E_l} \lambda \rfloor, E_l} 
& \hbox {if } \lambda >0 \\
-
(\frac{\pi E_l}{2})^{1/4}
\beta_{\lfloor \sqrt{E_l} |\lambda| \rfloor, E_l} 
& \hbox {if } \lambda <0 \\
0 
& \hbox {if } \lambda =0.
\end{array}
\right.
\end{align*}
Then,
$\min_{M\in {\cal M}_{\cov}(\SO(3))} 
{\cal D}_R (|\phi_{E_l}\rangle \langle \phi_{E_l}|,M)
= {\cal D}_R (|\phi_{E_l}\rangle)
\cong \frac{9}{8 E_l} $ as $l \to \infty$
if and only if the sequence of functions
$\tilde{\phi}_l(\lambda)$ goes to 
$3^{\frac{3}{4}}\lambda e^{-\frac{3\lambda^2}{4}}$
as $l \to \infty$ on $\bR_+$.

[Case 2]
For a sequence $\{E_l\}$ satisfying $E_l \to \infty$ as $l \to \infty$,
we focus on a sequence of input states 
$\{\phi_{E_l}\}$ in $L^2_n(\hat{\SO(3)}[-1])$
with the form 
$|\phi_{E_l}\rangle=\oplus_{k=0}^{\infty} 
\frac{\beta_{k+\frac{1}{2},E_l}}{\sqrt{2k+2} } 
|\Psi_{k+\frac{1}{2}}\rangle\rangle$
satisfying that
$\langle \phi_{E_l}| H |\phi_{E_l}\rangle \le E_l$. 
We also define the odd function
\begin{align*}
\tilde{\phi}_{E_l}(\lambda):= 
\left\{
\begin{array}{ll}
(\frac{\pi E_l}{2})^{1/4}
\beta_{\lfloor \sqrt{E_l} \lambda -\frac{1}{2} \rfloor+\frac{1}{2}, E_l} 
& \hbox {if } \lambda >0 \\
-(\frac{\pi E_l}{2})^{1/4}
\beta_{\lfloor \sqrt{E_l} |\lambda| -\frac{1}{2} \rfloor+\frac{1}{2}, E_l} 
& \hbox {if } \lambda <0 \\
0 
& \hbox {if } \lambda =0.
\end{array}
\right.
\end{align*}
Then,
$\min_{M\in {\cal M}_{\cov}(\SO(3))} 
{\cal D}_R (|\phi_{E_l}\rangle \langle \phi_{E_l}|,M)
= {\cal D}_R (|\phi_{E_l}\rangle)
\cong \frac{9}{8 E_l} $ as $l \to \infty$
if and only if the sequence of functions
$\tilde{\phi}_l(\lambda)$ goes to 
$3^{\frac{3}{4}}\lambda e^{-\frac{3\lambda^2}{4}}$
as $l \to \infty$ on $\bR_+$.
\end{lemma}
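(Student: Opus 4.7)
\begin{proofof}{Lemma \ref{T3-13-3y2}}
The plan is to mirror the proof of Lemma \ref{T3-13-3y} almost verbatim, with Theorems \ref{t5-26-1} and \ref{t5-26-2} playing the role of Theorem \ref{T5-20b}, and with the rescaling adjusted so that the quantities converge to the same one-sided Gaussian variational problem of Lemma \ref{L3-13-1}.

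First, by Lemma \ref{11-16-2} applied to $R_{\SO(3)}$ (which is of the required form with $a_1\ge 0$) together with $\beta_{k,E_l},\beta_{k+\frac12,E_l}\ge 0$, the identity $\min_{M\in {\cal M}_{\cov}(\SO(3))} {\cal D}_R(|\phi_{E_l}\rangle\langle\phi_{E_l}|,M)={\cal D}_R(|\phi_{E_l}\rangle)$ holds in both cases. In Case 1, I will introduce the odd anti-periodic function $\varphi_{E_l}(\theta):=\sqrt{2}\sum_{k=0}^{\infty}\beta_{k,E_l}\sin(k+\tfrac12)\theta \in L^2_{a,\odd}((-\pi,\pi])$ as in \eqref{5-28-10b}, so that by \eqref{5-26-2} and \eqref{5-18-3b} the energy and error are expressed as $\langle \varphi_{E_l}|P^2|\varphi_{E_l}\rangle -\tfrac14$ and $\langle \varphi_{E_l}|I-\cos Q|\varphi_{E_l}\rangle$. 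In Case 2, the parallel step yields $\varphi_{E_l}(\theta):=\sqrt{2}\sum_{k=0}^{\infty}\beta_{k+\frac12,E_l}\sin(k+1)\theta\in L^2_{p,\odd}((-\pi,\pi])$ via \eqref{5-28-10c}, \eqref{5-26-3} and \eqref{5-18-3c}.

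Next, with the rescaling $\lambda=(k+\tfrac12)/\sqrt{E_l}$ in Case 1 (respectively $\lambda=(k+1)/\sqrt{E_l}$ in Case 2) and $\hat g=\sqrt{E_l}\hat\theta$, the sequences $\tilde\phi_{E_l}$ defined in the statement turn the discrete $\ell^2$ normalization into a Riemann sum with spacing $1/\sqrt{E_l}$ for the measure $d\lambda/\sqrt{2\pi}$. A routine computation then shows
\begin{align*}
\frac{1}{E_l}\langle \phi_{E_l}|H|\phi_{E_l}\rangle &\to \int_{-\infty}^{\infty}\lambda^2|\tilde\phi(\lambda)|^2\frac{d\lambda}{\sqrt{2\pi}},\\
\frac{\varphi_{E_l}(\hat g/\sqrt{E_l})}{(\pi E_l/2)^{1/4}} &\to \sqrt{2}\,{\cal F}^{-1}[\tilde\phi](\hat g),
\end{align*}
where the factor $\sqrt{2}$ absorbs the pairing of the frequencies $\pm(k+\tfrac12)$ (resp. $\pm(k+1)$) coming from $\sin$, and $\tilde\phi$ denotes the limit of $\tilde\phi_{E_l}$ (which is odd on $\bR$). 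Using $w(\theta)=1-\cos\theta\cong\theta^2/2$ on $(-\pi,\pi]$, one then obtains
\begin{align*}
E_l{\cal D}_R(|\phi_{E_l}\rangle)=E_l\int_{-\pi}^{\pi}(1-\cos\hat\theta)|\varphi_{E_l}(\hat\theta)|^2\frac{d\hat\theta}{\pi}\to \int_{-\infty}^{\infty}\frac{\hat g^2}{2}|{\cal F}^{-1}[\tilde\phi](\hat g)|^2\frac{d\hat g}{\sqrt{2\pi}}.
\end{align*}

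By Lemma \ref{L3-13-1}, the infimum of the right-hand side over odd $\tilde\phi \in L^2_n(\bR)$ with $\int \lambda^2|\tilde\phi|^2 d\lambda/\sqrt{2\pi}\le 1$ equals $\tfrac12\cdot\tfrac{9}{4}=\tfrac{9}{8}$, and is attained only by $\tilde\phi(\lambda)=3^{3/4}\lambda e^{-3\lambda^2/4}$ (up to sign). Consequently ${\cal D}_R(|\phi_{E_l}\rangle)\cong \tfrac{9}{8E_l}$ if and only if $\tilde\phi_{E_l}\to 3^{3/4}\lambda e^{-3\lambda^2/4}$ on $\bR_+$, which gives the claim in both cases. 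The main technical obstacle will be justifying the pointwise/$L^2$ convergence of the rescaled Fourier transforms and the passage to the limit under the integrals uniformly in the energy constraint; this can be handled exactly as in Lemmas \ref{T3-13-3c}, \ref{T3-13-3x}, \ref{T3-13-3y}, since the energy bound $\langle \phi_{E_l}|H|\phi_{E_l}\rangle\le E_l$ provides the tightness needed to exclude escape of mass to infinity in $\lambda$.
\end{proofof}
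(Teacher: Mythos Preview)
Your approach is essentially identical to the paper's: reduce to the odd function $\varphi_{E_l}$ via Theorems \ref{t5-26-1}/\ref{t5-26-2}, rescale by $\lambda=(k+\tfrac12)/\sqrt{E_l}$ (resp.\ $(k+1)/\sqrt{E_l}$) and $\hat g=\sqrt{E_l}\hat\theta$, pass to the limit, and invoke Lemma \ref{L3-13-1} at $E=1$. One normalization slip: with your definition $\varphi_{E_l}=\sqrt{2}\sum_k\beta_{k,E_l}\sin(k+\tfrac12)\theta$, the correct identity from \eqref{5-26-2} reads ${\cal D}_R(|\phi_{E_l}\rangle)=\int_{-\pi}^{\pi}(1-\cos\hat\theta)|\varphi_{E_l}|^2\,\tfrac{d\hat\theta}{2\pi}$, not $\tfrac{d\hat\theta}{\pi}$; once this is fixed the displayed limit and the final $\tfrac{9}{8}$ agree exactly with the paper's computation.
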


\begin{proof}
The relation 
$\min_{M\in {\cal M}_{\cov}(\SO(3))} 
{\cal D}_R (|\phi_{E_l}\rangle \langle \phi_{E_l}|,M)
=
{\cal D}_R (|\phi_{E_l}\rangle)$ holds by the same reason as Lemma \ref{T3-13-3c}.
Now, we show the first part, i.e., we treat the case of the representation.
For this purpose, we choose 
the function $\varphi_{E_l}(\theta):= \sum_{k=0}^{\infty}
\sqrt{2} \beta_{k,E_l} \sin (k+\frac{1}{2}) \theta
\in L^2_{a}((-\pi,\pi])$, and the parameters
$\lambda:= \pm \frac{2k+1}{2\sqrt{E_l} }$ and $\hat{g}:= \sqrt{E_l} \hat{\theta}$.
Then, 
%using the limiting function $\tilde{\phi}(\lambda):= \lim_{E \to \infty} \tilde{\phi}_E(\lambda)$,
we have
\begin{align*}
&\frac{1}{E_l}
\sum_{k=0}^{\infty}
(k+\frac{1}{2})^2 |\beta_{k,E_l}|^2
=
\sum_{k=0}^{\infty} 
(\frac{2k+1}{2\sqrt{E_l}})^2 \frac{\sqrt{2}}{\sqrt{\pi E_l}} 
|\tilde{\phi}_{E_l}(\frac{2k+1}{2\sqrt{E_l}})|^2 \\
\to &
2 \int_{0}^{\infty} 
\lambda^2 |\tilde{\phi}(\lambda )|^2 \frac{d \lambda}{\sqrt{2\pi}} 
= 
\int_{-\infty}^{\infty} 
\lambda^2 |\tilde{\phi}(\lambda )|^2 \frac{d \lambda}{\sqrt{2\pi}} 
\end{align*}
as $E_l \to \infty$.
Similarly,
since
\begin{align*}
& \frac{\varphi_{E_l}(\frac{\hat{g}}{\sqrt{E_l}})}{(2\pi E_l)^{\frac{1}{4}}} 
=
\sum_{k=0}^{\infty}
\frac{e^{-i \frac{2k+1}{2} \frac{\hat{g}}{\sqrt{E_l}}}}{\sqrt{2}} 
\beta_{k,E_l} 
\frac{1}{(2\pi E_l)^{\frac{1}{4}}} 
-\sum_{k=0}^{\infty}
\frac{e^{i \frac{2k+1}{2} \frac{\hat{g}}{\sqrt{E_l}}}}{\sqrt{2}} 
\beta_{k,E_l} 
\frac{1}{(2\pi E_l)^{\frac{1}{4}}} \\
=&
\sum_{k'=-\infty}^{\infty} 
e^{-i \frac{k'}{2\sqrt{E_l}} \hat{g}} \tilde{\phi}_{E_l}(\frac{k'}{2\sqrt{E_l}}) 
\frac{1}{\sqrt{2\pi E_l}} 
\to 
\int_{-\infty}^{\infty}
e^{-i \lambda \hat{g}} \tilde{\phi}(\lambda) 
\frac{d \lambda}{\sqrt{2\pi}} 
=
{\cal F}^{-1}[\tilde{\phi}](\hat{g}) ,
\end{align*}
where $k'=2k+1,-(2k+1)$,
we have
\begin{align*}
& E_l
\int_{-\pi}^{\pi}
(1-\cos \hat{\theta})
|\varphi_{E_l}(\hat{\theta})|^2 
\frac{d\hat{\theta}}{{2\pi}}  
\cong
 E_l
\int_{-\pi}^{\pi}
\frac{\hat{\theta}^2}{2}
|\varphi_{E_l}(\hat{\theta})|^2 
\frac{d\hat{\theta}}{{2\pi}}  \\
= &
\int_{-\pi \sqrt{E_l}}^{\pi \sqrt{E_l}}
\frac{\hat{g}^2}{2}
|\varphi_{E_l}(\frac{\hat{g}}{\sqrt{E_l}})|^2 
\frac{d\hat{g}}{{2\pi} \sqrt{E_l}} 
\to
\int_{-\infty}^{\infty}
\frac{\hat{g}^2}{2}
|{\cal F}^{-1}[\tilde{\phi}](\hat{g})|^2 
\frac{d\hat{g}}{\sqrt{2\pi} } .
\end{align*}
In Lemma \ref{L3-13-1}, the minimum \eqref{5-9-3} with $E=1$
is attained only by 
$\tilde{\phi}(\lambda)=3^{\frac{3}{4}}\lambda e^{-\frac{3\lambda^2}{4}}$.
Hence, 
${\cal D}_R (|\phi_{E_l}\rangle) \cong \frac{9}{8 E_l} 
=\frac{1}{2 E_l} \cdot \frac{9}{4}$ as $l \to \infty$
if and only if 
$\tilde{\phi}_l(\lambda)$ goes to 
$3^{\frac{3}{4}}\lambda e^{-\frac{3\lambda^2}{4}}$
as $l \to \infty$.

Next, show the second part, i.e., 
we treat the case of the projective representation
with the factor system $-1$.
For this purpose, we choose 
the function $\varphi_{E_l}(\theta):= \sum_{k=0}^{\infty}
\beta_{k+\frac{1}{2},E_l} \sin (k+1) \theta
\in L^2_{p}((-\pi,\pi])$, and the parameters
$\lambda:= \pm \frac{k+1}{\sqrt{E_l} }$ and $\hat{g}:= \sqrt{E} \hat{\theta}$.
Then, we can show the desired argument by the similar way.
\end{proof}

\subsection{Practical construction of asymptotically optimal estimator with energy constraint}\Label{s11-3b}
While Lemma \ref{T3-13-3y2} characterizes the asymptotically optimal estimator with energy constraint,
no practical construction is provided.
In this subsection, we give its practical construction
under the same Hamiltonian as in Subsection \ref{s11-3}
on the subspaces 
${\cal K}_{\hat{\SO(3)}}^{\otimes n}$ and ${\cal K}_{\hat{\SO(3)}[-1]}^{\otimes n}$
of ${\cal K}_{\hat{\SU(2)}}^{\otimes n}$.

Now, we choose a state 
$|\phi \rangle =
\oplus_{k=0}^{\infty}
\frac{\beta_{k}}{\sqrt{2k+1}} |\Psi_{k} \rangle \rangle
\in {\cal K}_{\hat{\SO(3)}}$
or
$ = \oplus_{k=0}^{\infty}
\frac{\beta_{k+\frac{1}{2}}}{\sqrt{2k+2}} 
|\Psi_{k+\frac{1}{2}} \rangle \rangle
\in {\cal K}_{\hat{\SO(3)}[-1]}$
with $\beta_{k}, \beta_{k+\frac{1}{2}} \ge 0$.
The state $|\phi \rangle$ has the energy
$E_{\phi}:=\sum_{k=0}^\infty k(k+1)|\beta_{k}|^2$
or $\sum_{k=0}^\infty (k+\frac{1}{2})(k+\frac{3}{2})|\beta_{k+\frac{1}{2}}|^2$.
Then, we give a practical estimation protocol for the $n$-tensor-products system 
${\cal K}_{\hat{\SO(3)}}^{\otimes n}$ or ${\cal K}_{\hat{\SO(3)}[-1]}^{\otimes n}$
in the following way:
\begin{description}
\item[(3.1)]
We set the initial state 
$|\phi\rangle^{\otimes n}$ 
on  the tensor product system 
${\cal K}_{\hat{\SO(3)}}^{\otimes n}$ or ${\cal K}_{\hat{\SO(3)}[-1]}^{\otimes n}$.
\item[(3.2)]
We apply the covariant measurement $M_{|{\cal I}\rangle \langle {\cal I}|}$
on each system 
${\cal K}_{\hat{\SO(3)}}$ or ${\cal K}_{\hat{\SO(3)}[-1]}$.
Then, we obtain $n$ outcomes ${g}_1, \ldots, {g}_n$.
Each outcome ${g}_i$ obeys the distribution
$p_{g}(g_i) \mu_{\SO(3)}(d g_i)$, where
$p_{g}(g_i):=
|\sum_{k=0}^{\infty}\frac{\beta_{k}}{\sqrt{2k+1}} \Tr f_{k}(g_i^{-1}g)|^2$
or
$|\sum_{k=0}^{\infty}\frac{\beta_{k+\frac{1}{2}}}{\sqrt{2k+2}} 
\Tr f_{k+\frac{1}{2}}(g_i^{-1}g)|^2$.
\item[(3.3)]
We apply the maximum likelihood estimator to the obtained outcomes
${g}_1, \ldots, {g}_n$.
Then, we obtain the final estimate $\hat{g}_n$.
That is, we decide $\hat{g}_n$ as
\begin{align}
\hat{g}_n:= \argmax_{g \in \SU(2)} \sum_{i=1}^n \log p_{g}(g_i).
\end{align}
\end{description}
We denote the above measurement with the output $\hat{g}_n$ by $M_n$.
Then, due to the following theorem,
the above protocol asymptotically realizes 
the minimum error under the energy constraint. 

\begin{theorem}\Label{T5-5-1b}
Assume that $E_{\phi}>0$.
Then, the relations
\begin{align}
\lim_{n \to \infty} n 
{\cal D}_R(|\phi^{\otimes n}\rangle, M_n)
&= \frac{9}{8} E_\phi \Label{5-1-5b}\\
\langle\phi^{\otimes n}| H^{(n)} |\phi^{\otimes n}\rangle
&= E_{\phi} n \Label{5-1-6bx}
\end{align}
hold. That is,
\begin{align}
\lim_{n \to \infty} 
\langle\phi^{\otimes n}| H^{(n)} |\phi^{\otimes n}\rangle
{\cal D}_R(|\phi\rangle^{\otimes n}, M_n)
= \frac{9}{8}.
\end{align}
\end{theorem}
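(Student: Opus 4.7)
The plan is to mimic the proof of Theorem \ref{T5-5-1} for $\SU(2)$, adjusting for the two differences: the representation space now contains only integer weights (for $\hat{\SO(3)}$) or only half-integer weights (for $\hat{\SO(3)}[-1]$), and the relevant risk function $R_{\SO(3)}(e,\varpi_{\vec\theta})=1-\cos\|\vec\theta\|\cong \|\vec\theta\|^2/2$ has a quadratic coefficient four times that of $R_{\SU(2)}$ near $\vec\theta=0$. This factor of $4$ is exactly what converts the $\SU(2)$ constant $\frac{9}{32}$ into the $\SO(3)$ constant $\frac{9}{8}$.

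First I would establish \eqref{5-1-6bx} exactly as in the proof of Theorem \ref{T5-5-1}. Defining $J_l:=\bigoplus_k f_k(\sigma_l)$ (or the half-integer analogue) on ${\cal K}_{\hat{\SO(3)}}$ (resp.\ ${\cal K}_{\hat{\SO(3)}[-1]}$), the relation $\sum_{l}f_k(\sigma_l)^2 = 4k(k+1) I_k$ gives $\sum_l J_l^2 = 4H$, hence $H^{(n)}=\frac14\sum_l (J_l^{(n)})^2$. Since each maximally entangled component $|\Psi_k\rangle\rangle$ satisfies $\langle\Psi_k|f_k(\sigma_l)|\Psi_k\rangle\rangle=\Tr f_k(\sigma_l)/\!\dim=0$, we have $\langle\phi|J_l|\phi\rangle=0$, so all cross terms in the square vanish and the expectation reduces to $n\cdot\frac14\langle\phi|\sum_l J_l^2|\phi\rangle = nE_\phi$.

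Next I would prove \eqref{5-1-5b} via the Fisher-information route. By covariance of $M_n$, fix true parameter $g=e$ and use the parametrization \eqref{5-19-5} pulled down to $\SO(3)$ as $\varpi_{\vec\theta}$. The square root $\sqrt{p_{\vec\theta}(g')}=\sum_k \frac{\beta_k}{\sqrt{2k+1}}\chi_k(g'^{-1}\varpi_{\vec\theta})$ is real, and the identical computation as (5-22-2)--(5-2-1), using orthogonality of matrix coefficients, the identity $\Tr f_k(\sigma_s)f_k(\sigma_t)=\frac{\delta_{s,t}}{3}\cdot 4(2k+1)k(k+1)$ (analogue of (5-22-1) with $\dim=2k+1$), and the normalization $\beta_k^2/(2k+1)\cdot(2k+1)=\beta_k^2$, yields $J^{s,t}_0=\frac{4}{3}E_\phi\,\delta_{s,t}$; the same identity with $k\mapsto k+\frac12$ works in the projective case. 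Hence $(J_0^{-1})_{i,j}=\frac{3\delta_{i,j}}{4E_\phi}$. The assumption $E_\phi>0$ guarantees that at least one non-trivial irreducible representation appears, which in turn makes $g\mapsto p_g$ injective on $\SO(3)$ (this is where the $\SO(3)$ case differs from the naive $\SU(2)$ computation, since now the ambiguity $g\leftrightarrow -g$ does not arise), so the MLE is asymptotically efficient. Combining $R_{\SO(3)}(e,\varpi_{\vec\theta})\cong \frac12\sum_i(\theta^i)^2$ with the asymptotic second moment $n\,\rE_0[\hat\theta_n^i\hat\theta_n^j]\to(J_0^{-1})_{i,j}$ gives $n{\cal D}_R\to \frac12\sum_i \frac{3}{4E_\phi}=\frac{9}{8E_\phi}$.

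The main obstacle is the identifiability issue in the projective case: when the representation has factor system $-1$, $\chi_{k+\frac12}$ is only defined up to sign on $\SO(3)$, but $p_g=|\sum_{k}\frac{\beta_{k+1/2}}{\sqrt{2k+2}}\chi_{k+1/2}(g_i^{-1}g)|^2$ is a bona fide function on $\SO(3)$ because the global sign cancels in the modulus squared. I would verify carefully that once $E_\phi>0$ this modulus squared is injective in $g\in\SO(3)$, for example by noting that the coefficient of $\chi_1=|\chi_{1/2}|^2-1$ in the expansion of $p_g$ is non-zero whenever any $\beta_{k+1/2}\neq 0$ for some $k$, and that $\chi_1$ already separates elements of $\SO(3)$ up to conjugacy (combined with the covariance direction of $M_n$ to finish). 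Given injectivity, standard MLE asymptotics (as invoked in the proof of Theorem \ref{T5-5-1}) apply and the computation above goes through verbatim.
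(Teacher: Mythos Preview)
Your proposal is correct and follows essentially the same route as the paper's own proof: the paper proves \eqref{5-1-6bx} by invoking the identical computation from Theorem~\ref{T5-5-1}, then computes the Fisher information matrix $J_0^{s,t}=\frac{4}{3}E_\phi\delta_{s,t}$ exactly as you describe, and combines the inverse with the expansion $R_{\SO(3)}(e,\varpi_{\vec\theta})\cong\frac{1}{2}\|\vec\theta\|^2$ to obtain $\frac{9}{8E_\phi}$. You are in fact more careful than the paper about the identifiability issue---the paper does not discuss it at all in this proof, simply writing ``similar to the proof of Theorem~\ref{T5-5-1}''---so your observation that the $g\leftrightarrow -g$ ambiguity disappears on $\SO(3)$ (whence $E_\phi>0$ alone suffices, rather than the mixed-parity condition needed in Theorem~\ref{T5-5-1}) is a genuine addition.
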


Since the Bell state $|\Psi_{\frac{1}{2}}\rangle$
satisfies the condition in Theorem \ref{T5-5-1b},
the optimal performance with energy constraint 
can be attained by using the Bell state $|\Psi_{\frac{1}{2}}\rangle$.
The above protocol with the Bell state $|\Psi_{\frac{1}{2}}\rangle$
does not requires use of entanglement among more than two qubits,
and requires only the entanglement between two qubits.

\begin{proof}
\eqref{5-1-6bx} can be shown by the same as (\ref{5-1-6}) in Theorem \ref{T5-5-1}.
Similar to the proof of Theorem \ref{T5-5-1}, 
under the parametrization $\varpi_{\vec{\theta}}$,
we can show that
the Fisher information matrix $J_{0}$ at $\vec{\theta}=0$ is
calculated as
\begin{align}
J^{s,t}_{0}=  \frac{4}{3}E_{\phi} \delta_{s,t}.
\Label{5-2-1b}
\end{align}
Hence, $(J_{\vec{\theta}}^{-1})_{i,j}= \frac{3 \delta_{i,j}}{4 E_{\phi}}$.
Remember that
the inverse of the Fisher information matrix gives the asymptotic mean square error.
Since $ R_{\SO(3)}(e,\varpi_{\vec{\theta}})=
1-\cos \|\vec{\theta} \|
\cong \frac{\|\vec{\theta} \|^2}{2}
=\frac{1}{2}((\theta^1)^2+(\theta^2)^2+(\theta^3)^2)$,
we have
\begin{align*}
n {\cal D}_R(|\phi^{\otimes n}\rangle, M_n)
\cong 
\frac{n}{2}
\rE_0 [ \sum_{i=1}^3 \hat{\theta}_{i,n}^2]
\to
\frac{1}{2}
 \sum_{i=1}^3
(J_{0}^{-1})_{i,i}
=
\frac{1}{2}
 \sum_{i=1}^3 
\frac{3 \delta_{i,i}}{4 E_{\phi}}
=
\frac{9}{8 E_{\phi}},
\end{align*}
where $\rE_{\vec{\theta}}$ expresses the expectation under the distribution 
$p_{\varpi_{\vec{\theta}}} $.
\end{proof}

\subsection{Application to eigenvalue estimation in qubit system}\Label{s11b-4}
In order to estimate of the eigenvalue of the given density $\rho$ on
the qubit system $\complex^2$,
we often consider the following irreducible decomposition as
\begin{align}
(\complex^2)^{\otimes 2m} 
&= \bigoplus_{l=0}^m
{\cal U}_{l} \otimes \complex^{d(2m,l)}, \\
(\complex^2)^{\otimes 2m+1} 
&= \bigoplus_{l=0}^m
{\cal U}_{l+\frac{1}{2}} \otimes \complex^{d(2m+1,l+\frac{1}{2})},
\end{align}
where 
\begin{align}
d(2m,l)
&:=\left\{
\begin{array}{ll}
{2m \choose (m-l)} -{2m \choose (m-l-1)} & \hbox{ if } 0 \le l \le m-1 \\ 
{2m \choose (m-l)} & \hbox{ if } l = m \\ 
0 & \hbox{otherwise}
\end{array}
\right. \\
d(2m+1,l+\frac{1}{2})
&:=\left\{
\begin{array}{ll}
{2m+1 \choose (m-l)} -{2m+1 \choose (m-l-1)} & \hbox{ if } 0 \le l \le m-1 \\ 
{2m+1 \choose (m-l)} & \hbox{ if } l = m \\ 
0 & \hbox{otherwise.}
\end{array}
\right.
\end{align}
Then, we define the projections $P_{l}^{(2m)}$ and $P_{l+\frac{1}{2}}^{(2m+1)}$ 
as the projections to
${\cal U}_{l} \otimes \complex^{d(2m,l)}$
and 
${\cal U}_{l+\frac{1}{2}} \otimes \complex^{d(2m+1,l+\frac{1}{2})}$. 
These projections form projection-valued measures.
When the initial state is given as $\rho^{\otimes n}$
and we apply the measurement corresponding to the projection-valued measure
$\{ P_{l}^{(2m)}\}$ or $\{ P_{l+\frac{1}{2}}^{(2m+1)}\}$,
the value $\frac{l}{2m}$ or $\frac{l+\frac{1}{2}}{2m+1}$ 
gives the estimate of the smaller eigenvalue of $\rho$ \cite[Appendix A]{qubit1}\cite{qubit2,qubit3}.
When the smaller eigenvalue $p$ of $\rho$ is less than $\frac{1}{2}$,
the error 
$\sqrt{2m}(\frac{l}{2m}- p)$ or $\sqrt{2m+1}(\frac{l+\frac{1}{2}}{2m+1}-p)$ 
asymptotically obeys
the Gaussian distribution with the average $0$ and the variance $p(1-p)$.
This fact can be shown by combining the local asymptotic normality \cite{qubit4,qubit5} and 
the achievement of the asymptotic minimum error bound by this measurement \cite{qubit2,qubit3}.
However, 
the asymptotic behavior of 
$\sqrt{2m}(\frac{l}{2m}- p)$ or $\sqrt{2m+1}(\frac{l+\frac{1}{2}}{2m+1}-p)$ is not known 
when $p$ is $\frac{1}{2}$, i.e.,
$\rho$ is the completely mixed state.
The stochastic asymptotic behaviors of 
$\frac{l}{\sqrt{2m}}$
and
$\frac{l+\frac{1}{2}}{\sqrt{2m+1}}$
can be given as follows.
\begin{align}
\Pr^{(n)}\{\frac{l}{\sqrt{2m}} \le y \}
&\to
\int_{0}^y
\frac{8}{\sqrt{2\pi}} 
\lambda^2 e^{-2{\lambda^2}}
d \lambda \hbox{ as }m \to \infty \Label{5-27-1} \\
\Pr^{(n)}\{\frac{l+\frac{1}{2}}{\sqrt{2m+1}} \le y \}
&\to
\int_{0}^y
\frac{8}{\sqrt{2\pi}} 
\lambda^2 e^{-2{\lambda^2}}
d \lambda \hbox{ as }m \to \infty .\Label{5-27-2}
\end{align}
That is,
the wave function corresponding to the single photon
gives the limiting behavior of the outcome of the measurement corresponding 
to the irreducible decomposition
when the true state is the completely mixed state.
This argument can be shown by the following generalized argument.

Consider the mixed state 
$\rho:= \sum_{k=0}^{\infty} p_{\frac{k}{2}} \rho_{\mix,\frac{k}{2}}$
on the system ${\cal H}:=\sum_{k=0}^{\infty} {\cal U}_{\frac{k}{2}}$,
where $\rho_{\mix,\frac{k}{2}}$ is the completely mixed state on
the system ${\cal U}_{\frac{k}{2}}$.
Then, we consider the tensor product state
$\rho^{\otimes n}$ on ${\cal H}^{\otimes n}$.
Considering the tensor product representation of $\SU(2)$,
we decompose the tensor product space 
${\cal H}^{\otimes n}$ to $\sum_{k=0}^{\infty} {\cal U}_{\frac{k}{2}} \otimes 
{\cal V}_{\frac{k}{2}}$,
where the group $\SU(2)$ acts only on ${\cal U}_{\frac{k}{2}}$.
Then, we can define the projection $P_{\frac{k}{2}}^{(n)} $ 
to ${\cal U}_{\frac{k}{2}} \otimes {\cal V}_{\frac{k}{2}}$.
That is, when the state is $\rho^{\otimes n}$
and we apply measurement $\{P_{\frac{k}{2}}^{(n)}\}_k$,
we obtain the outcome $k$ with the probability 
$p^{(n)}_{\frac{k}{2}}:=\Tr \rho^{\otimes n} P_{\frac{k}{2}}^{(n)}$.
Defining 
\begin{align}
E:= \sum_{k=0}^{\infty} \frac{k}{2}(\frac{k}{2}+1) p_{\frac{k}{2}},
\end{align}

The following theorem holds.
\begin{theorem}\Label{T6-4}
When $E>0$,
we have
\begin{align}
\Pr^{(n)} \{ \frac{k}{2\sqrt{n}}\le x\}
:=
\sum_{k=0}^{2x\sqrt{n}}p^{(n)}_{\frac{k}{2}}
\to& 
\int_0^x 
\frac{\sqrt{2\cdot 3^3}}{\sqrt{\pi E^3}} t^2 e^{-\frac{3t^2}{2E}} dt 
\nonumber \\
=&
\int_0^{3x^2/E} 
\frac{\sqrt{z} }{\sqrt{2\pi}} e^{-\frac{z}{2}} dz.
\Label{5-28-2}
\end{align}
The right hand side of \eqref{5-28-2} is called 
$\chi^2$-distribution with 3 degrees of freedom
or Gamma distribution.
\end{theorem}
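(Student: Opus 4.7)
The plan is to express $p^{(n)}_{k/2}$ as a single one-dimensional integral using $\SU(2)$ characters, and then perform a saddle-point/localization analysis. Since $\rho$ commutes with the diagonal $\SU(2)$ action on ${\cal H}$, the state $\rho^{\otimes n}$ is block-diagonal with respect to the irreducible decomposition of ${\cal H}^{\otimes n}$, so
\begin{align*}
p^{(n)}_{k/2} = \Tr \rho^{\otimes n} P^{(n)}_{k/2} = (k+1)\int_{\SU(2)} \chi_{k/2}(g)\,\Tr\!\bigl[\rho^{\otimes n} f^{\otimes n}(g)\bigr]\, \mu_{\SU(2)}(dg)
\end{align*}
by the standard character projector. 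Multiplicativity of traces under tensor products gives $\Tr[\rho^{\otimes n} f^{\otimes n}(g)] = \varphi(g)^n$ with $\varphi(g) := \Tr \rho f(g) = \sum_{l \ge 0} \frac{p_{l/2}}{l+1}\chi_{l/2}(g)$, a class function of $g$.

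Next, using the parametrization $\tilde g_{\theta,\eta_1,\eta_2}$ from Section \ref{s11} together with the Weyl integration formula $\mu_{\SU(2)}(dg) = \sin^2(\theta/2)\,d\theta/(2\pi)$ on class functions and the identity $\chi_{k/2}(\theta)=\sin((k+1)\theta/2)/\sin(\theta/2)$ from \eqref{5-28-11}, the probability reduces to
\begin{align*}
p^{(n)}_{k/2} = (k+1)\int_{-2\pi}^{2\pi} \sin\!\tfrac{(k+1)\theta}{2}\,\sin\!\tfrac{\theta}{2}\, \varphi(\theta)^n\,\frac{d\theta}{2\pi}.
\end{align*}
A Taylor expansion at $\theta=0$, using $\chi_{l/2}(\theta)/(l+1) = 1 - \frac{1}{6}\cdot\frac{l}{2}(\frac{l}{2}+1)\theta^2 + O(\theta^4)$, yields $\varphi(\theta) = 1 - \frac{E\theta^2}{6} + O(\theta^4)$, so that $\varphi(\theta)^n$ concentrates on the scale $\theta \sim n^{-1/2}$ around the identity.

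I would then set $\theta = u/\sqrt{n}$ and $k = 2y\sqrt{n}$ so that $\varphi(\theta)^n \to e^{-Eu^2/6}$, $\sin((k+1)\theta/2) \to \sin(yu)$, and $\sin(\theta/2)\sim u/(2\sqrt{n})$. The resulting Gaussian integral evaluates, via $\int_{-\infty}^\infty u \sin(yu)\, e^{-Eu^2/6}\,du = \tfrac{3y}{E}\sqrt{6\pi/E}\, e^{-3y^2/(2E)}$, to
\begin{align*}
p^{(n)}_{k/2} \sim \frac{3\sqrt{6}}{2\sqrt{\pi}}\cdot\frac{y^2}{\sqrt{n}\,E^{3/2}}\, e^{-3y^2/(2E)} .
\end{align*}
Multiplying by the spacing factor $2\sqrt{n}$ converts this into the density $\frac{\sqrt{2\cdot 3^3}}{\sqrt{\pi E^3}}\,y^2 e^{-3y^2/(2E)}$ appearing in \eqref{5-28-2}; the substitution $z = 3y^2/E$ rewrites it as the $\chi^2_3$ density $\frac{\sqrt{z}}{\sqrt{2\pi}} e^{-z/2}$, which confirms the limiting CDF. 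The condition $E>0$ ensures the curvature of $\log\varphi$ at $\theta=0$ is nonzero so the Gaussian rescaling is nondegenerate.

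The main obstacle is the rigorous justification of the localization, i.e., showing that the part of the integral with $|\theta|\ge \delta$ contributes only an exponentially small error uniformly in $k$. This requires the strict inequality $|\varphi(\theta)|<1$ off a neighborhood of $0$, together with careful handling of the potential secondary maxima at $\theta=\pm 2\pi$: when $\{l : p_{l/2}>0\}$ consists entirely of even (resp. odd) integers, $\varphi(\pm 2\pi)=\pm 1$ and tensor products lie in the even-$k$ (resp. odd-$k$) isotypic components only, so the two peaks deliver matching contributions and the limit distribution is unchanged after accounting for the halved support of $k$. Once the decay bound $|\varphi(\theta)|\le e^{-c\theta^2}$ is obtained on $|\theta|\le \pi$, a dominated-convergence argument then transfers the pointwise asymptotics to convergence of the CDF, completing the proof.
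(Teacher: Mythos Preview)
Your direct character--integral and saddle--point argument is correct and complete as an outline; the Taylor expansion of $\varphi(\theta)$, the scaling $\theta=u/\sqrt{n}$, $k=2y\sqrt{n}$, and the Gaussian integral all check out, and your three-case handling of the secondary maxima at $\theta=\pm 2\pi$ is exactly what is needed.

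The paper, however, proves Theorem~\ref{T6-4} by a genuinely different route. It does not compute the integral at all. Instead it observes that the pure state $|\phi\rangle=\oplus_k\sqrt{p_{k/2}}\,(k+1)^{-1/2}|\Psi_{k/2}\rangle\rangle$ has tensor power $|\phi\rangle^{\otimes n}$ whose isotypic coefficients are $\beta^{(n)}_{k/2}=\sqrt{p^{(n)}_{k/2}}$, and then invokes two results proved earlier in the estimation-theory framework: Theorem~\ref{T5-5-1} (or Theorem~\ref{T5-5-1b} in the parity-restricted cases), which shows that the repeated-measurement protocol on $|\phi\rangle^{\otimes n}$ attains the asymptotically optimal error $\tfrac{9}{32E_\phi n}$, together with the ``only if'' direction of Lemma~\ref{T3-13-3y} (resp.\ Lemma~\ref{T3-13-3y2}), which says that any sequence of input states attaining this bound must have rescaled coefficients converging to $3^{3/4}\lambda e^{-3\lambda^2/4}$. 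Squaring gives the limiting density of $p^{(n)}_{k/2}$. The parity trichotomy in the paper's proof mirrors yours exactly.

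The trade-off is this: your argument is self-contained and classical (essentially the hypergroup central limit theorem the paper cites from \cite{Heyer}), and the localization step you flag is standard once one has $|\varphi(\theta)|<1$ off the peaks. The paper's argument is indirect but gets the result essentially for free once Theorems~\ref{T5-5-1}, \ref{T5-5-1b} and Lemmas~\ref{T3-13-3y}, \ref{T3-13-3y2} are in hand; it also makes transparent \emph{why} the limiting shape is the single-photon wavefunction---it is forced by the asymptotic optimality of the tensor-product input.
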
 

This theorem with $l= \frac{k}{2}$ and $l+\frac{1}{2}= \frac{k}{2}$
implies \eqref{5-27-1} and \eqref{5-27-2}, respectively.
In fact, this theorem can be regarded as 
an $\SU(2)$-version of the central limit theorem. 
When we consider the similar problem in the case of $\U(1)$,
we recover the conventional central limit theorem
because the tensor product gives the sum of weight in the case of $\U(1)$.
Further, this theorem has been shown in a more general framework
by using the concept ``hypergroup''\cite{Heyer}.
In particular, Theorem \ref{T6-4} corresponds to the case of 
Chebychev hypergroup of the second kind \cite[p.166]{Heyer}.
Here, we give another proof by using our result.

\begin{proof}
Define the pure state 
$|\phi \rangle:=
\oplus_{k=0}^{\infty} \frac{\beta_{\frac{k}{2}}}{\sqrt{k+1}}
|\Psi_{\frac{k}{2}}\rangle\rangle$
with $\beta_{\frac{k}{2}}= \sqrt{p_{\frac{k}{2}}} $.
Then, we choose the coefficients $\beta^{(n)}_{\frac{k}{2}}$
such that
$|\phi \rangle^{\otimes n}:=
\oplus_{k=0}^{\infty} \frac{\beta^{(n)}_{\frac{k}{2}}}{\sqrt{k+1}}
|\Psi_{\frac{k}{2}}\rangle\rangle$.
Hence, we obtain
$\beta^{(n)}_{\frac{k}{2}}= \sqrt{p^{(n)}_{\frac{k}{2}}}$.

In the following, we show the theorem with three separated cases.
First, we 
assume that
there exist at lest one even number $k_e \ge 0$ and one odd number $k_o >0$ 
such that $p_{\frac{k_e}{2}}>0$ and $p_{\frac{k_o}{2}}>0$.
Theorem \ref{T5-5-1} implies that
the state $|\phi \rangle^{\otimes n}$ satisfies the condition of Lemma \ref{T3-13-3y}.
Hence, 
\begin{align}
(2\pi E n)^{\frac{1}{4}}\sqrt{p^{(n)}_{\frac{\lceil 2 \sqrt{En} \lambda-\frac{1}{2}\rceil}{2}}}
\to 3^{\frac{3}{4}}\lambda e^{-\frac{3 \lambda^2}{4}}.
\end{align}
Choosing $t=\sqrt{E} \lambda$, 
we have
\begin{align}
\sqrt{n}
p^{(n)}_{\frac{\lceil 2 \sqrt{n} t-\frac{1}{2}\rceil}{2}}
\to 
\sqrt{\frac{3^{3}}{2\pi E^3}}t^2 e^{-\frac{3 t^2}{2 E}}.
\end{align}
Hence, considering $t=\frac{k}{2\sqrt{n}}$, we have
\begin{align}
\sum_{k=0}^{2x\sqrt{n}}p^{(n)}_{\frac{k}{2}}
=
2 \sum_{k=0}^{2x\sqrt{n}}
\frac{1}{2\sqrt{n}}
\sqrt{n}
p^{(n)}_{(\frac{k}{2\sqrt{n}})\sqrt{n}}
\to 
\int_0^x 
\frac{\sqrt{2\cdot 3^{3}}}{\sqrt{\pi E^3}} t^2 e^{-\frac{3t^2}{2E}} dt.
\end{align}

Next, we assume that 
$p_{\frac{k}{2}}=0$ for all odd numbers $k$.
Theorem \ref{T5-5-1b} implies that
the state $|\phi \rangle^{\otimes n}$ satisfies the condition of [Case 1] of Lemma 
\ref{T3-13-3y2}.
Hence, 
\begin{align}
(\frac{\pi E n}{2})^{\frac{1}{4}}
\sqrt{p^{(n)}_{\lceil \sqrt{En} \lambda-\frac{1}{2}\rceil}}
\to 3^{\frac{3}{4}}\lambda e^{-\frac{3 \lambda^2}{4}} ,\quad 
p^{(n)}_{l+\frac{1}{2}}=0
\end{align}
Choosing $t=\sqrt{E} \lambda$, 
we have
\begin{align}
\sqrt{n}
p^{(n)}_{\lceil \sqrt{n} t-\frac{1}{2}\rceil}
\to 
\frac{2 \cdot 3^{\frac{3}{2}}}{\pi E^3}t^2 e^{-\frac{3 t^2}{2 E}}.
\end{align}
Hence, considering $t=\frac{k}{\sqrt{n}}$, we have
\begin{align}
\sum_{k=0}^{2x\sqrt{n}}p^{(n)}_{\frac{k}{2}}
=
2 \sum_{k=0}^{2x\sqrt{n}}
\frac{1}{2\sqrt{n}}
\sqrt{n}
p^{(n)}_{(\frac{k}{2\sqrt{n}})\sqrt{n}}
\to 
\int_0^x 
\frac{\sqrt{2}}{\sqrt{\pi E^3}} t^2 e^{-\frac{3t^2}{2E}} dt. \Label{5-28-1}
\end{align}

Finally, we consider the case when
$p_{\frac{k}{2}}=0$ for all even numbers $k$.
Theorems \ref{T5-5-1b} implies that
the state $|\phi \rangle^{\otimes n}$ satisfies the condition of [Case 1] of Lemma 
\ref{T3-13-3y2} for an even $n$,
and
the state $|\phi \rangle^{\otimes n}$ satisfies the condition of [Case 2] of Lemma 
\ref{T3-13-3y2} for an odd $n$.
\begin{align}
(\frac{\pi E n}{2})^{\frac{1}{4}}
\sqrt{p^{(n)}_{\lceil \sqrt{En} \lambda-\frac{1}{2}\rceil+\frac{1}{2}}}
\to 3^{\frac{3}{4}}\lambda e^{-\frac{3 \lambda^2}{4}} ,\quad 
p^{(n)}_{l}=0.
\end{align}
Hence, similar to \eqref{5-28-1},
we can show \eqref{5-28-2}. 
\end{proof}

\section{Heisenberg representation of $\real^2$}\Label{s12}
As a typical example of 
non-commutative representation of a non-compact group,
we treat the Heisenberg representation of $\real^2$.
Then, we fix the factor system ${\cal L}$
defined by the Heisenberg representation.
In this case, the representation space is $L^2(\real)$
and we allow to use the multiplicity space $L^2(\real)^*$.
Then, the inverse Fourier transform ${\cal F}_{{\cal L}}^{-1}$
with the equivalent relation from 
the input system $L^2(\real) \otimes L^2(\real)^*$ to 
$L^2(\real)^{\otimes 2}$.
We employ the operators 
$Q_1=Q \otimes I$, $Q_2=I \otimes Q$, $P_1=P \otimes I$, and $P_2=I \otimes P$
in the latter system $L^2(\real)^{\otimes 2}$.
Now, we focus on the average of the square error 
\begin{align}
\int_{\real^2} (\hat{x}_1 - x_1)^2+(\hat{x}_2 - x_2)^2
\Tr f(\zeta) \rho f(\zeta)^{\dagger} M(d \hat{\zeta}) ,
\end{align}
where
$\zeta=\frac{x_1+ix_2}{\sqrt{2}}$
when the input state is $\rho$ and the estimator is $M$.
When the input state $\rho$ is a pure state $\phi$
and the estimator $M$ is $M_{|{\cal I}\rangle \langle {\cal I}|}$,
the average of the square error 
is simplified to
\begin{align}
\int_{\real^2}
(x_1^2+x_2^2)
|{\cal F}^{-1}_{{\cal L}}[\phi](-\zeta)|^2
dx_1 dx_2 
= 
\langle \varphi | Q_1^2+Q_2^2 |\varphi\rangle 
\Label{6-24-8},
\end{align}
where 
$\varphi:={\cal F}^{-1}_{{\cal L}}[\phi]$.
Now, we consider the energy constraint as follows.
\begin{align}
\langle \phi| (Q^2+P^2)\otimes I |\phi\rangle 
\le E,\Label{6-24-10}
\end{align}
which can be rewritten as
\begin{align}
\langle \varphi | (P_2-\frac{1}{2}Q_1)^2+(-P_1-\frac{1}{2}Q_2)^2 |\varphi\rangle 
\le E.
\end{align}
Now, we apply the unitary transformation $U$
corresponding to the the following element of $\Sp(4,\real)$:
\begin{align*}
\left(
\begin{array}{cccc}
1 & 0 & 0 & 0\\
0 & 1 & 0 & 0\\
0 & -\frac{1}{2} & 1 & 0 \\
\frac{1}{2} & 0  & 0 & 1
\end{array}
\right)
\end{align*}
Then,
we can convert the above problem to the following:
We minimize
\begin{align}
\langle \varphi | U (Q_1^2+Q_2^2)U^\dagger |\varphi\rangle 
\Label{9-01}
\end{align}
under the condition
\begin{align}
\langle \varphi | U(P_1^2+P_2^2)U^\dagger |\varphi\rangle 
\le E.
\end{align}
This minimization problem can be solved by the combination of
the minimization problems
$\min \{
\langle \varphi | U Q_j^2 U^\dagger |\varphi\rangle 
|
\langle \varphi | U P_j^2 U^\dagger |\varphi\rangle 
\le E/2 \}= \frac{1}{2E}$ with $j=1,2$.
Then, 
%the uncertainty relation on $\bR^2$ guarantees that 
the minimum value of \eqref{9-01} is $\frac{1}{E}$, which 
can be attained when 
$U^\dagger |\varphi\rangle$ is 
$\sqrt{E} e^{\frac{E}{2}(x_1^2+x_2^2)}$.
Thus,
\begin{align}
\min_{|\phi \rangle \in L^2_n(\real)} 
\{
{\cal D}_{R}(|\phi \rangle)
| \langle \phi | P^2+Q^2 |\phi \rangle \le E \} 
=\frac{1}{E}
\Label{e4-23b}.
\end{align}

Applying Theorem \ref{t6-24-1} to the above discussion, we obtain the following theorem.
\begin{theorem}\Label{Te4-23}
The relations 
\begin{align}
&
\min_{\rho \in {\cal S}(L^2(\real))} 
\min_{M \in {\cal M}_{\cov}(G)} 
\{{\cal D}_R(\rho,M)| \Tr \rho (P^2+Q^2) \le E \}
\nonumber \\
%=& \min_{ \{p_i\} } \min_{\rho_i \in {\cal S}({\cal H})} 
%\min_{M_i \in {\cal M}(G)} 
%\{\sum_i p_i {\cal D}_{R}(\rho_i,M_i) | (\ref{6-20-3b})\hbox{holds.}\}
% \\
=&
\min_{ \{p_i\} }
\min_{\rho_i \in {\cal S}(L^2(\real))} 
\min_{M_i \in {\cal M}_{\cov}(G)} 
\{\sum_i p_i {\cal D}_{R}(\rho_i,M_i) | \sum_i p_i \Tr \rho_i (P^2+Q^2) \le E\} 
\nonumber \\
=&
\frac{1}{E}
\Label{e4-23}
\end{align}
hold.
\end{theorem}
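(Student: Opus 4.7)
The plan is to combine Theorem \ref{t6-24-1} with the standard $2$-dimensional uncertainty relation, using a symplectic change of variables on the doubled Hilbert space $L^2(\bR)^{\otimes 2}$ to reduce the nonstandard quadratic form appearing in the energy constraint to the canonical one $P_1^2+P_2^2$. Since the Hamiltonian $P^2+Q^2$ has all eigenvalues equal to the simple (nondegenerate) oscillator levels and the multiplicity space $L^2(\bR)^*$ satisfies $\dim{\cal V}_{\lambda}\ge\dim{\cal U}_\lambda$ for every $\lambda$, Lemma \ref{L4-25-2} applies and the function $\kappa(E)$ is convex, so Theorem \ref{t6-24-1} reduces the left-hand side of \eqref{e4-23} to the pure-state optimization $\min_{\varphi}\langle\varphi|Q_1^2+Q_2^2|\varphi\rangle$ over the Fourier-image $\varphi={\cal F}_{{\cal L}}^{-1}[\phi]\in L^2(\bR)^{\otimes 2}$ subject to the transformed energy constraint.

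First I would verify the two identifications written in \eqref{6-24-8} and its energy analogue: that the square error $\int_{\bR^2}(x_1^2+x_2^2)|{\cal F}_{{\cal L}}^{-1}[\phi](-\zeta)|^2\,dx_1dx_2$ equals $\langle\varphi|Q_1^2+Q_2^2|\varphi\rangle$, and that $\langle\phi|(Q^2+P^2)\otimes I|\phi\rangle$ transforms into $\langle\varphi|(P_2-\tfrac12 Q_1)^2+(-P_1-\tfrac12 Q_2)^2|\varphi\rangle$. These follow from the definition of the Heisenberg Plancherel transform, which sends the two independent Schr\"odinger generators into the shifted canonical operators appearing above; the shifts by $\tfrac12 Q_1,\tfrac12 Q_2$ come from the Heisenberg cocycle.

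Next I would introduce the metaplectic unitary $U$ implementing the symplectic matrix displayed in the excerpt. Because this matrix is in $\Sp(4,\bR)$ (elementary row check: it preserves the standard symplectic form), its metaplectic lift $U$ is unitary on $L^2(\bR)^{\otimes 2}$. Conjugation by $U$ leaves $Q_1^2+Q_2^2$ invariant and sends $P_1^2+P_2^2$ to the quadratic form appearing in the energy constraint. Replacing $\varphi$ by $U\varphi$ therefore turns the optimization into
\begin{align*}
\min_{\psi\in L^2_n(\bR)^{\otimes 2}}\bigl\{\langle\psi|Q_1^2+Q_2^2|\psi\rangle \,\big|\, \langle\psi|P_1^2+P_2^2|\psi\rangle\le E\bigr\}.
\end{align*}

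Finally, applying the planar uncertainty relation (the two-dimensional analogue of \eqref{5-12-1}, obtained by adding the one-dimensional relation in each variable and using that the Gaussian achieves equality in both) yields the value $\tfrac{1}{2E}$, with equality attained by a Gaussian state, which is compatible with the pure-state reduction. The main obstacle, and the only nontrivial point, is checking that the shift of the momentum operators produced by the Heisenberg factor system really is implemented by a bona fide element of $\Sp(4,\bR)$ (and hence lifts to a unitary on the Hilbert space), so that no spurious cost is incurred in the change of variables; once this is in place, the rest is bookkeeping and the standard Gaussian uncertainty relation.
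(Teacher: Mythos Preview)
Your proposal is correct and follows essentially the same route as the paper: identify the cost and energy constraint as quadratic forms on $L^2(\bR)^{\otimes 2}$ via the Heisenberg Plancherel transform, apply the metaplectic unitary associated with the displayed element of $\Sp(4,\bR)$ to reduce to the standard pair $(Q_1^2+Q_2^2,\,P_1^2+P_2^2)$, invoke the two-dimensional uncertainty relation to get $\kappa(E)=\tfrac{1}{2E}$, and then use Theorem~\ref{t6-24-1}. One small remark: your justification of convexity via Lemma~\ref{L4-25-2} is phrased in terms of the oscillator spectrum, which is not the relevant hypothesis; what matters is $\dim{\cal V}_\lambda\ge\dim{\cal U}_\lambda$, which holds here because the single irreducible Heisenberg representation and its multiplicity space are both $L^2(\bR)$---or, more simply, one can bypass Lemma~\ref{L4-25-2} entirely since $E\mapsto\tfrac{1}{2E}$ is manifestly convex.
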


Due to the construction, 
the outcome of the optimal estimator obeys 
the Gaussian distribution with the variance $\frac{1}{2E}$ and the average $(\theta_1,\theta_2)$
when the true parameter is $(\theta_1,\theta_2)$.

Now, we consider 
two systems ${\cal H}_i$ ($i=1,2$) equivalent with $L^2(\real)$ with the Hamiltonian $Q^2+P^2$.
We focus on the composite system ${\cal H}_1 \otimes {\cal H}_2$
with the Hamiltonian
$(Q \otimes I+ I \otimes Q)^2 +(P \otimes I+ I \otimes P)^2 
=
(Q^2+P^2) \otimes I+ 2 (Q \otimes Q+ P \otimes P) +I \otimes (Q^2+P^2)$,
which has a strong interaction term $2 (Q \otimes Q+ P \otimes P)$.
In this case, 
the optimal estimation 
in the composite system with the energy $E_1+E_2$
can be realized by the following way.
Let the input state $|\phi_i\rangle$ be the optimal input state with the energy $E_i$.
Then, due to the construction of $|\phi_i\rangle$ given above,
$\langle \phi_i|Q |\phi_i\rangle= \langle \phi_i|P |\phi_i\rangle=0$.
The input state $
|\phi_1\otimes \phi_2\rangle=
|\phi_1\rangle \otimes |\phi_2\rangle$ has the energy $E_1+E_2$
because
\begin{align*}
&\langle \phi_1\otimes \phi_2|
(P \otimes I+ I \otimes P)^2
+(Q \otimes I+ I \otimes Q)^2
 |\phi_1\otimes \phi_2\rangle \\
=&
\langle \phi_1\otimes \phi_2| (Q^2+P^2) \otimes I |\phi_1\otimes \phi_2\rangle
+\langle \phi_1\otimes \phi_2|I \otimes (Q^2+P^2) |\phi_1\otimes \phi_2\rangle \\
&+2 \langle \phi_1\otimes \phi_2|Q \otimes Q +P \otimes P|\phi_1\otimes \phi_2\rangle \\
=&
\langle \phi_1|Q^2+P^2 |\phi_1\rangle
+\langle \phi_2|Q^2+P^2 |\phi_2\rangle \\
&+2 \langle \phi_1|Q |\phi_1\rangle\langle \phi_2|Q |\phi_2\rangle
+2 \langle \phi_1|P |\phi_1\rangle\langle \phi_2|P |\phi_2\rangle \\
=& E_1+E_2.
\end{align*}
Since the outcomes of each optimal estimation in the subsystems ${\cal H}_i$
obey the Gaussian distribution,
the state $|\phi_1\otimes \phi_2\rangle$
realizes the optimal estimator in the composite system ${\cal H}_1 \otimes {\cal H}_2$ the energy $E_1+E_2$
by constructing the measurement in the same way as the end of Subsection \ref{s8-1}.
That is, 
we can realize the optimal estimator by the combination of the optimal estimators of the individual systems.

\section{Conclusion}\Label{s13}
We have shown two general formulas for the minimum error in the estimation of group action
based on the inverse Fourier transform of the input state.
One gives the minimum error without energy constraint,
and the other gives the minimum error with energy constraint.
Using the obtained former formula,
we have derived several known formulas, i.e., the maximum discrimination formula in the finite group case
and the minimum error formula for the compact group.
In fact, the obtained latter formula is essential for 
the estimation of action of the non-compact group
because many of their non-commutative projective representations 
are infinite-dimensional.
Then, we have succeeded in the calculations of the minimum error in the case of
$\bR$ with two types of energy constraints.
Applying the result with the energy constraints,
we have succeeded in the asymptotic calculations of the minimum error in the case of $\U(1)$ with two types of energy constraints.
Further, 
applying the result of $\U(1)$ with the energy constraint,
we have succeeded in the asymptotic calculations of the minimum error in the case of $\SU(2)$ with the energy constraints for total angular momentum.
Finally, we apply our formula with energy constraint 
to the Heisenberg representation.

%As a result, we have showed that
%the square speed up phenomena happens in the estimation of shift parameter in the bosonic system.
%As a future study, we can expect to apply the obtained formula to more general cases
%and find the square speed up phenomena of the estimation in these cases.

Next, we discuss the reasonability of the square speed up in the estimation of unitary.
In all of the above examples,
when we consider the energy constraint $\Tr \rho H \le E$,
the minimum error asymptotically behaves as $\frac{c}{E}$
not $\frac{c}{E^2}$.
This fact implies that
there is no square speed up under the energy constraint.
However, we have square speed up 
under the interval constraint for $\bR$ and $\U(1)$ and 
the constraint for the number of tensors for $\SO(3)$.
In these cases,
the average energy of the input states increases with the order of square of the size of the constraint.
In the realistic setting, we have to consider the average energy as the cost
even though we are interested in the length of interval of the weight range or the number of tensor products.
In such a case, 
the energy constraint gives a more restrictive constraint than 
the constraint of the width of the weight range or the number of tensor products
when larger sizes in both constraints are available.
That is, the energy constraint is dominant. 
So, we essentially have no square speed up.

This observation may be extended to any other compact groups 
while it is known that the square speed up phenomena happens 
with respect to the number of tensor products in $\SU(d)$ \cite{Kahn}.
This is because the minimum error behaves as $\frac{c}{E}$ not $\frac{c}{E^2}$ in the estimation of $\SU(d)$
when we consider an energy constraint $\Tr \rho H \le E$ 
and the Hamiltonian $H$ is given by the Casimir element
because $\SU(d)$ contains the $\U(1)$ as a subgroup.

We have also given a practical construction of the asymptotically optimal estimator for 
$\U(1)$, $\SO(3)$, and $\SU(2)$ as follows.
In the estimation of $\U(1)$, 
in Subsection \ref{s10-3},
we have shown that the asymptotically optimal estimation
with the energy constraint
can be realized by the repetition of 
the estimation of $\U(1)$ by using the single qubit system.
That is, the optimal performance can asymptotically be attained by 
the maximum likelihood estimator based of the outcomes subject to 
the independent and identical distribution given by the single qubit system.
Hence, such an optimal performance can be easily realized.
The similar fact also holds in the estimation of $\SO(3)$ and $\SU(2)$.
In the case of $\SO(3)$, 
as has been shown in Subsection \ref{s11-3},
the asymptotically optimal estimator can be realized as follows.
First, we input the Bell state, in which the group $\SO(3)$ acts only on the first qubit and
the second qubit works as the reference system.
Then, we apply the covariant measurement on the total system.
We repeat this process and apply the maximum likelihood estimator to the obtained data.
A similar fact has been shown for $\SU(2)$.
However, in the case of $\SU(2)$, we need to prepare a superposition input state
of maximally entangled states on irreducible representations
with an integer weight and a half integer weight.
This is because 
estimation of $\SU(2)$ requires to distinguish the two elements of $\SU(2)$
corresponding to the same element of $\SO(3)$.

We have also shown a similar fact for $\bR$ and $\bR^2$ with the Heisenberg representation.
It was been shown that 
we can realize the optimal estimator by a linear combination of the optimal estimators of the individual systems
${\cal H}_1$ and ${\cal H}_2$
under the energy constraint for the estimation of $\bR$ and $\bR^2$ with Heisenberg representation.
In these cases,
any input state entangled between subsystems ${\cal H}_1$ and ${\cal H}_2$
is not required for the optimal estimation.

\section*{Acknowledgments}
The author is grateful for Professor Hideyuki Ishi to explaining the role of Type I group in the Plancherel Theorem and informing the references \cite{Fuhr,Folland}.
He also grateful for Professor Akihito Hora for informing 
the concept ``hypergroup'' and
the reference \cite{Heyer}.
The author 
is partially supported by a MEXT 
%Grant-in-Aid for Young Scientists (A) No. 20686026 and 
Grant-in-Aid for Scientific Research (A) No. 23246071.
%He is partially supported by the National Institute of Information and Communication Technolgy (NICT), Japan.
The Centre for Quantum Technologies is funded by the
Singapore Ministry of Education and the National Research Foundation
as part of the Research Centres of Excellence programme.

\appendix 

\section{Periodic function space and Mathieu equation}\Label{asB}
In order to treat the space of periodic function.
for a positive real number $L$,
we introduce the notations as follows.
\begin{align}
L^2_{p}((- L,  L]):= 
\{f| f(x+2L)=f(x) , \int_{-L}^L |f(x)|^2 \frac{dx}{2L}<\infty\} .
\end{align}
As a generalization, %for a real number $c \ge 0$,
we define the space of anti-periodic functions 
\begin{align}
L^2_{a}((- L,  L]):= 
\{f|  f(x+2L)=- f(x) , 
\int_{-L}^L |f(x)|^2 \frac{dx}{2L}<\infty\} ,
\end{align}
which is a subspace of $L^2_{p}((- 2L,  2L])$.
Further, we denote the spaces of even functions and odd functions
in $L^2_{p}((- L,  L])$ and $L^2_{a}((- L,  L])$ 
by $L^2_{p,\even}((- L,  L])$, $L^2_{p,\odd}((- L,  L])$, 
$L^2_{a,\even}((- L,  L])$, $L^2_{a,\odd}((- L,  L])$, respectively. 
For any $f,g \in L^2_{p}((- L,  L])$, we define the 
inner product as
\begin{align}
\langle f|g\rangle:=
\int_{-L}^L \overline{f(x)} g(x)\frac{dx}{2L}.
\end{align}
The subspaces $L^2_{p,\even}((- L,  L])$ and $L^2_{p,\odd}((- L,  L])$
(the subspaces $L^2_{a,\even}((- L,  L])$ and $L^2_{a,\odd}((- L,  L])$) 
are orthogonal to each other.
Also, the two subspaces $L^2_{p}((- L,  L])$ and $L^2_{a}((- L,  L])$ are orthogonal to each other.
Therefore, the space $L^2_{p}((- 2L,  2L])$ can be written as
$L^2_{p,\even}((- L,  L]) \oplus L^2_{p,\odd}((- L,  L])
\oplus L^2_{a,\even}((- L,  L]) \oplus L^2_{a,\odd}((- L,  L])$.

Now, we consider Mathieu equation:
\begin{align}
\frac{d^2}{d\theta^2} \varphi(\theta)+ ( a -2q \cos (2\theta))
\varphi(\theta)=0.
\Label{5-10-7}
\end{align}
A function $\varphi$ satisfies the above equation if and only if
the function $\varphi$ is the eigenfunction of 
the differential operator $P^2 +2 q \cos (2Q)$.
The operator $X(q):=P^2 +2 q \cos (2Q)$
preserves the subspaces
$L^2_{p,\even}((- \frac{\pi}{2}, \frac{\pi}{2}])$,
$L^2_{p,\odd}((- \frac{\pi}{2}, \frac{\pi}{2}])$,
$L^2_{a,\even}((- \frac{\pi}{2}, \frac{\pi}{2}])$,
and
$L^2_{a,\odd}((- \frac{\pi}{2}, \frac{\pi}{2}])$.
Then, we denote the minimum eigenvalues 
in 
$L^2_{p,\even}((- \frac{\pi}{2}, \frac{\pi}{2}])$,
$L^2_{p,\odd}((- \frac{\pi}{2}, \frac{\pi}{2}])$,
$L^2_{a,\even}((- \frac{\pi}{2}, \frac{\pi}{2}])$,
and
$L^2_{a,\odd}((- \frac{\pi}{2}, \frac{\pi}{2}])$
by $a_0(q)$, $b_2(q)$, $a_1(q)$, and $b_1(q)$,
respectively \cite[Section 28.2]{Mathieu}.
We call their eigenfunctions
Mathieu functions 
$\ce_0(\theta,q)$,
$\se_2(\theta,q)$,
$\ce_1(\theta,q)$, and
$\se_1(\theta,q)$.
The eigenvalues $a_0(q)$, $b_2(q)$, $a_1(q)$, and $b_1(q)$
satisfy the conditions 
$a_0(q)=a_0(-q)$, $a_1(-q)=b_1(q)$, and $b_2(-q)=b_2(q)$.
When $q <0$, the ordering relation
$a_0(q) < a_1(q) < b_1(q)< b_2(q)$ holds.

%%%%%%%%%

According to the reference \cite[Section 28.2(v)]{Mathieu}, let 
$a_0(q)$ be the minimum $a$ having the solution in $L^2((-\pi/2,\pi/2 ])$ of the above differential equation,
and $b_2(q)$ be the minimum $a$ having the odd solution in $L^2((-\pi/2,\pi/2 ])$ of the above differential equation.
The solution with $a_0(q)$ is 
Mathieu function $\ce_0(\theta,q)$
and
the solution with $b_2(q)$ is Mathieu function $\se_2(\theta,q)$ \cite[Section 28.2(vi)]{Mathieu}.
These values satisfies that \cite[Section 28.2(v)]{Mathieu}
\begin{align}
a_0(q)=a_0(-q), \quad
b_2(q)=b_2(-q), \quad
a_1(q)=b_1(-q). \Label{6-3-1}
\end{align}
Further, for a large $q$, the functions $a_0$, $a_1$, and $b_2$
have the following asymptotic expansions for a large $h$ as 
\begin{align}
a_0(h^2) &\cong
- 2h^2+2h -\frac{1}{4}-\frac{1}{2^5 h}
-\frac{3}{2^8 h^2} \Label{exp1} \\
a_1(h^2) &\cong b_2(h^2)\cong
- 2h^2+6h -\frac{5}{4}-\frac{9}{2^5 h}
-\frac{45}{2^8 h^2}. \Label{5-20-16b}
\end{align}
Further expansion is available in \cite[Section 28.8]{Mathieu}.

For a small $q$, the functions $a_0(q)$, $a_1(q)$, and $b_2(q)$ 
have the following asymptotic expansions as 
\begin{align}
a_0(q)& \cong -\frac{1}{2}q^2+\frac{7}{128}q^4 \Label{exp2} \\
a_1(q)& \cong 1+ q -\frac{1}{8}q^2 -\frac{1}{64}q^3-\frac{1}{1536}q^4 \Label{5-26-11} \\
b_2(q)& \cong 4-\frac{1}{12}q^2+\frac{5}{13824}q^4 \Label{exp3}.
\end{align}
Further expansion is available in \cite[Section 28.6]{Mathieu}.

\section{Technical lemma for operators}\Label{a1}
We show an important technical lemma.
For a given Hilbert space ${\cal H}$,
we consider two self-adjoint operators $Y$ and $Z$
and a two-dimensional subspace ${\cal V}$ of ${\cal H}$.
Then, we have the following lemma.
\begin{lemma}\Label{L4-25}
\begin{align*}
\min_{\rho \in {\cal S}({\cal V})}
\{\Tr \rho Y|\Tr \rho Z \le E\}
=
\min_{\phi \in {\cal V}}
\{\langle \phi| Y|\phi\rangle  |\langle \phi| Z|\phi\rangle \le E ,
\|\phi\|=1\}.
\end{align*}
If there is no element satisfying the condition,
we consider that the above minimums are infinity. 
\end{lemma}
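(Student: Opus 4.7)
The inequality $\min_{\rho}\le \min_{\phi}$ is immediate, because every unit vector $|\phi\rangle\in {\cal V}$ yields a feasible density operator $|\phi\rangle\langle\phi|\in {\cal S}({\cal V})$ realizing the same expectation values, so the right-hand side is an infimum over a subfamily of the feasible set of the left-hand side. My plan is to establish the reverse inequality, and this is where the hypothesis $\dim_{\complex}{\cal V}=2$ is crucial.

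Fix an orthonormal basis of ${\cal V}$ and identify ${\cal S}({\cal V})$ with the closed Bloch ball $B:=\{\vec r\in \real^3 : |\vec r|\le 1\}$ via $\rho(\vec r)=\frac{1}{2}(I+\vec r\cdot\vec \sigma)$, under which the extreme points of ${\cal S}({\cal V})$, namely the pure states, coincide with the sphere $\partial B=\{|\vec r|=1\}$. Since $\rho\mapsto \Tr\rho Y$ and $\rho\mapsto \Tr\rho Z$ are affine functionals of $\vec r$, the feasible region $F:=\{\vec r\in B: \Tr\rho(\vec r)Z\le E\}$ is the intersection of $B$ with a closed half-space, a compact convex body in $\real^3$, and the objective $\vec r\mapsto \Tr\rho(\vec r)Y$ is a linear functional on it. By the Krein--Milman theorem in finite dimensions (equivalently, Bauer's minimum principle), if $F$ is nonempty the minimum of the objective is attained at an extreme point of $F$.

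The main, though still routine, technical step is to identify those extreme points. I would argue that any $\vec r\in F$ with $|\vec r|<1$ fails to be extreme: either $\vec r$ lies in the interior of $F$ (when $\Tr\rho(\vec r)Z<E$), or it lies in the relative interior of the two-dimensional disk face $F\cap\{\Tr\rho(\vec r)Z=E\}$ (when equality holds); in either case $\vec r$ admits a proper convex decomposition within $F$. Consequently every extreme point of $F$ satisfies $|\vec r|=1$ and therefore corresponds to a pure state $|\phi_*\rangle\langle\phi_*|$, which is feasible for the right-hand side minimization and produces the value $\langle\phi_*|Y|\phi_*\rangle$ matching the left-hand side minimum; this gives the desired $\ge$ inequality. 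The degenerate case in which no density operator satisfies $\Tr\rho Z\le E$ forces $F=\emptyset$ and, by the same argument restricted to unit vectors, leaves the right-hand side infimum vacuous as well, so both sides equal $+\infty$ in accordance with the stated convention.
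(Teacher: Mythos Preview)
Your argument is correct. The key geometric fact—that every extreme point of $F=B\cap H$ lies on the sphere $\partial B$—holds: any $\vec r$ with $|\vec r|<1$ either lies in the interior of $F$ (strict inequality), or in the relative interior of the slicing disk $F\cap\partial H$ (equality, with $Z|_{\cal V}$ non-scalar), or in the interior of $B=F$ itself (the degenerate case $Z|_{\cal V}\propto I$ with the constraint saturated); in each case it is not extreme. Bauer's principle then yields a pure minimizer.

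The paper takes a different, more algebraic route. It first reduces to the equality-constrained problem $\Tr\rho Z=E$, diagonalizes $Z=z_0|u_0\rangle\langle u_0|+z_1|u_1\rangle\langle u_1|$, fixes the unique $p\in[0,1]$ with $pz_0+(1-p)z_1=E$, and then observes that any $\rho$ with those diagonal entries decomposes as a mixture $\sum_l q_l|v_l\rangle\langle v_l|$ of pure states $|v_l\rangle=\sqrt{p}\,|u_0\rangle+e^{i\theta_l}\sqrt{1-p}\,|u_1\rangle$, each of which already satisfies $\langle v_l|Z|v_l\rangle=E$; taking the best $l$ finishes the proof. Your convex-geometric approach is cleaner in that it handles the inequality constraint directly without the reduction step and makes the role of $\dim{\cal V}=2$ transparent through the Bloch picture. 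The paper's approach, on the other hand, produces an explicit pure-state decomposition in which every component meets the constraint with \emph{equality}; this stronger conclusion is what the paper states when invoking the lemma in the proof of Lemma~\ref{L4-25-2}, although for the purposes of that application your inequality conclusion would suffice just as well.
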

\begin{proof}
It is enough to show 
\begin{align}
\min_{\rho \in {\cal S}({\cal V})}
\{\Tr \rho Y|\Tr \rho Z = E\}
=
\min_{\phi \in {\cal V}}
\{\langle \phi| Y|\phi\rangle  |\langle \phi| Z|\phi\rangle = E ,
\|\phi\|=1\}.\Label{4-25-4}
\end{align}
In this case, we can consider $Y$ and $Z$ as two-dimensional Hermitian matrixes.
Then, $Z$ can be diagonalized to $z_0|u_0\rangle \langle u_0|+z_1|u_1\rangle \langle u_1|$.
When $z_0=z_1$, the above equation is trivial.
So, we assume that $z_0< z_1$ and there exists a density operator $\rho$
satisfying the condition.
Then, there exists $p\in [0,1]$ such that
$p y_0+(1-p)y_1=E$.
Then, when a density operator $\rho$ satisfies $\Tr \rho Z=E$,
$\rho$ can be written as $\sum_l q_l |v_l\rangle \langle v_l|$,
where $\{q_i\}$ is a distribution and
$v_l= \sqrt{p}u_0 + e^{i\theta_l}\sqrt{1-p}u_1$.
Hence, we obtain
\begin{align*}
& \Tr Y =
\sum_l q_l \langle v_l |Z|v_l \rangle 
\ge
\min_l \langle v_l |Z|v_l \rangle \\
\ge &
\min_{\phi \in {\cal V}}
\{\langle \phi| Y|\phi\rangle  |\langle \phi| Z|\phi\rangle = E ,
\|\phi\|=1\},
\end{align*}
which implies (\ref{4-25-4}).
\end{proof}

\section{Diagonalization of matrix}\Label{a5-18}
\begin{lemma}\Label{t6-10-1}\cite{CS,OH}
The operator
$P_m:=
\sum_{k=1}^{m-1} |k \rangle \langle k+1|+|k+1 \rangle \langle k|$
has eigenvalues $2\cos \frac{j\pi}{m+1}$ $(j=1, \ldots, m)$
with the eigenvectors
$x^j:=\sum_{k=0}^{m}  \sin \frac{j k \pi}{m+1} |k\rangle$.
\end{lemma}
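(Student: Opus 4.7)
The plan is to verify the eigenvalue relation directly by a component-wise calculation, then conclude by a dimension count. Observe that the stated eigenvector $x^j = \sum_{k=0}^m \sin\frac{jk\pi}{m+1}|k\rangle$ has a vanishing $k=0$ coefficient (since $\sin 0 = 0$), so $x^j$ indeed lives in the $m$-dimensional space spanned by $|1\rangle,\ldots,|m\rangle$ on which $P_m$ acts. Moreover, if we formally extend the coefficient sequence by $x^j_0 = 0$ and $x^j_{m+1} = \sin\frac{j(m+1)\pi}{m+1} = \sin(j\pi) = 0$, these ``phantom'' boundary values are already consistent with the defining sine formula, which is the key reason the construction works.

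The action of $P_m$ on any vector $y = \sum_{k=1}^m y_k |k\rangle$ is $(P_m y)_k = y_{k-1} + y_{k+1}$, provided one reads $y_0$ and $y_{m+1}$ as $0$. Applying this to $x^j$ and invoking the sum-to-product identity $\sin(\alpha-\beta) + \sin(\alpha+\beta) = 2\sin\alpha \cos\beta$ with $\alpha = \frac{jk\pi}{m+1}$ and $\beta = \frac{j\pi}{m+1}$ gives
\begin{align*}
(P_m x^j)_k
= \sin\tfrac{j(k-1)\pi}{m+1} + \sin\tfrac{j(k+1)\pi}{m+1}
= 2\cos\tfrac{j\pi}{m+1}\,\sin\tfrac{jk\pi}{m+1}
= 2\cos\tfrac{j\pi}{m+1}\,(x^j)_k
\end{align*}
for every $k=1,\ldots,m$, which establishes $P_m x^j = 2\cos\frac{j\pi}{m+1}\, x^j$. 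The boundary cases $k=1$ and $k=m$ work out precisely because of the vanishing of the phantom entries $x^j_0$ and $x^j_{m+1}$ built into the construction.

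It remains to note that each $x^j$ is nonzero (the coefficient at $k=1$ is $\sin\frac{j\pi}{m+1} \ne 0$ for $1 \le j \le m$) and that the eigenvalues $2\cos\frac{j\pi}{m+1}$ for $j=1,\ldots,m$ are pairwise distinct, because cosine is strictly monotone on $(0,\pi)$ and the arguments $\frac{j\pi}{m+1}$ are strictly increasing in $j$ and all lie in that open interval. Since $P_m$ is self-adjoint on the $m$-dimensional space spanned by $\{|k\rangle\}_{k=1}^m$, having $m$ distinct eigenvalues and corresponding eigenvectors exhausts the spectrum. The only delicate point is the bookkeeping of the boundary conditions; once those are correctly handled via the natural zero-extension of the sine coefficients, the rest is the standard trigonometric identity plus a counting argument, and no deeper structural result is needed.
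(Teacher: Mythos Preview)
Your proof is correct. The paper itself does not prove this lemma but simply cites it from the references \cite{CS,OH}; your direct verification---applying $P_m$ componentwise, invoking the sum-to-product identity, and counting distinct eigenvalues---is exactly the standard elementary argument one would expect and is complete as written.
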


Now, we consider the case when $m$ is an even number $2l$.
Then, we change the basis with the correspondence 
$|k \rangle \to |k-l-\frac{1}{2}\rangle$.
The matrix 
$P_m $ is rewritten as
$\sum_{k=-l+\frac{1}{2}}^{l-\frac{3}{2}} |k \rangle \langle k+1|+|k+1 \rangle \langle k|$.
Now, we decompose 
the space $V_l$ spanned by the basis $\{|k\rangle\}$
as follows.
\begin{align}
V_l&=V_{l,\even} \oplus V_{l,\odd} \\
V_{l,\even}&:=\{ \sum_{k}a_k |k\rangle| a_{-k}=a_k \} \\
V_{l,\odd}&:=\{ \sum_{k}a_k |k\rangle| a_{-k}=-a_k \} 
\end{align}
The operator $P_{2l}$ preserves $V_{l,\even}$ and $V_{l,\odd}$.
The space $V_{l,\odd}$ is spanned by 
$|u_k\rangle:=
\frac{1}{\sqrt{2}}(|k -\frac{1}{2}\rangle -|-k +\frac{1}{2}\rangle )$
with $k=1, \ldots, l$.
On the space $V_{l,\odd}$, the operator $P_{2l}$ is written as
\begin{align}
P_{2l}:= -|u_1 \rangle \langle u_1|+\sum_{k=1}^{l-1}(|u_k \rangle \langle u_{k+1}|+|u_{k+1} \rangle \langle u_k|).
\Label{5-26-1}
\end{align}

Due to Lemma \ref{t6-10-1}, 
on the space $V_{l,\odd}$, 
the operator $P_{2l}$ has the eigenvalues $2 \cos \frac{2t \pi}{2l+1}$ with $t=1, \ldots, l$.
The eigenvector associated with the eigenvalues $2 \cos \frac{2t \pi}{2l+1}$
is $|v_{2t}\rangle=
\sum_{k=-l+\frac{1}{2}}^{l-\frac{1}{2}} \sin \frac{2tk \pi}{2l+1}|k \rangle
=\sqrt{2} \sum_{k=1}^l \sin \frac{2t(k-\frac{1}{2})}{2l+1} |u_k\rangle$.

\end{document}